\newcommand{\indep}{\raisebox{0.05em}{\rotatebox[origin=c]{90}{$\models$}}}
\numberwithin{equation}{section}
\theoremstyle{definition}
\newtheorem*{definition*}{Definition}
\newtheorem*{intuition*}{Intuition}
\newtheorem*{assumption*}{Assumption}
\newtheorem*{remark*}{Remark}
\newtheorem*{example*}{Example}
\newtheorem*{exercise*}{Exercise}
\theoremstyle{plain}
\newtheorem*{theorem*}{Theorem}
\newtheorem*{proposition*}{Proposition}
\newtheorem*{claim*}{Claim}
\newtheorem*{conjecture*}{Conjecture}
\theoremstyle{definition}
\newtheorem{asm}{Assumption}
\theoremstyle{plain}
\theoremstyle{definition}
\newtheorem{definition}{Definition}[section]
\newtheorem{assumption}{Assumption}[section]
\newtheorem{remark}{Remark}[section]
\newtheorem{example}{Example}[section]
\theoremstyle{plain}
\newtheorem{theorem}{Theorem}[section]
\newtheorem{lemma}{Lemma}[section]
\newtheorem{corollary}[theorem]{Corollary}
\title{Efficient Semiparametric Estimation of Average Treatment Effects Under
Covariate Adaptive Randomization}
\author{Ahnaf Rafi
\footnote{The author would like to thank Ivan Canay, Joel
Horowitz, Eric Auerbach and Federico Bugni. Their advice, suggestions and
comments have been invaluable in the development and writing of the
paper. Additional thanks go to Yong Cai, Deborah Kim, Filip Obradovic,
Amilcar Velez as well as the participants of the Spring 2021 and Spring 2022
Econometrics Reading Groups at Northwestern University for their suggestions and
comments.}
\\
Department of Economics \\
Northwestern University \\
\url{ahnafrafi2023@northwestern.edu}
}
\date{\today}
\begin{document}

\maketitle

\begin{abstract}
Experiments that use covariate adaptive randomization (CAR) are commonplace in
applied economics and other fields. In such experiments, the experimenter first
stratifies the sample according to observed baseline covariates and then assigns
treatment randomly within these strata so as to achieve balance according to
pre-specified stratum-specific target assignment proportions. In this paper, we
compute the semiparametric efficiency bound for estimating the average treatment
effect (ATE) in such experiments with binary treatments allowing for the class
of CAR procedures considered in
\citet{2018bugniInferenceCovariateAdaptiveRandomization,
2019bugniInferenceCovariateAdaptiveRandomization}. This is a broad class of
procedures and is motivated by those used in practice. The stratum-specific
target proportions play the role of the propensity score conditional on all
baseline covariates (and not just the strata) in these experiments. Thus, the
efficiency bound is a special case of the bound in
\citet{1998hahnRolePropensityScore}, but conditional on all baseline
covariates. Additionally, this efficiency bound is shown to be achievable under
the same conditions as those used to derive the bound by using a cross-fitted
Nadaraya-Watson kernel estimator to form nonparametric regression adjustments.
\end{abstract}

\noindent%
{\it Keywords:} Efficient semiparametric estimation, average treatment effect,
randomized experiments, covariate adaptive randomization

\hfill

\noindent%
{\it JEL Classification:} C14, C90

\newpage

% ! TEX root = ../2023_semipar_eff_car.tex

\section{Introduction}

Experiments that use covariate adaptive randomization (CAR) are commonplace in
applied economics and other fields. In such experiments, the experimenter first
groups sample units according to observed baseline covariates (stratification)
and then assigns treatment randomly to achieve \emph{balance} within these
groups (strata). The term balance here means that the experimenter additionally
specifies stratum-specific target treatment proportions and assigns treatment so
that the proportion of units assigned to treatment reaches the corresponding
target as the sample size grows across strata. A textbook treatment of covariate
adaptive and stratified randomization in clinical trials can be found in
\citet{2015rosenbergerRandomizationClinicalTrials}. For review articles on their
use in development economics, see \citet{2007dufloChapter61Using},
\citet{2009bruhnPursuitBalanceRandomization} and
\citet{2017atheyChapterEconometricsRandomized}. In experiments comparing
outcomes from binary treatments, the quantity of interest is often the average
treatment effect (ATE). In this paper, we are concerned with efficient
semiparametric estimation of the ATE while allowing for a broad class of CAR
procedures motivated by those used in practice. We have two main questions. The
first is: is there a well-defined semiparametric efficiency bound
(SPEB) for this class of procedures? The second is: does there exist a
semiparametric estimator that achieves this bound and under what conditions will
this happen? Our answers to both are affirmative. For the first, we show that
a version of the bound in \citet{1998hahnRolePropensityScore} is valid. For the
second, we show that under the same (weak) conditions used for derivation of the
bound, there is a feasible estimator that achieves the bound asymptotically.

In randomized experiments, correctly implemented randomization ensures that in
expectation, confounding factors are equally distributed across treatment arms
so that differences in outcomes are solely due to the differences in
treatment. Stratified randomization additionally aims to ensure that along
observable dimesions, this also remains true in practice. In CAR experiments,
both discrete and continuously distributed covariates are combined to form
strata. As shown in \citet[Section 7.2]{2017atheyChapterEconometricsRandomized},
the main statistical benefit of stratified randomization for ATE estimation is
improved precision. The standard recommendation for ATE estimation in stratified
experiments is to regress the observed outcomes on indicators for strata and
their interactions with treatment status in a linear regression equation (a
fully saturated linear regression model). The coefficients on the interaction
terms from this regression are then combined with sample stratum proportions to
construct the ATE estimate. The resulting estimator of the ATE is analogous to
the Horvitz-Thompson estimator
(\citet{1952horvitzGeneralizationSamplingReplacement}) and does not use
information beyond the strata. However, experimenters concerned with estimation
precision may want to use information in the baseline covariates not contained
in the strata.

An additional complication in the CAR context comes from the choice of treatment
assignment mechanism by the experimenter. Many popular treatment assignment
mechanisms used in practice aim for faster targeting of the target assignment
proportions than simple independent and identically distributed (i.i.d.)
assignment and in doing so, induce dependence in the observed outcomes through
dependence in the treatment assignments. One such example is stratified permuted
block randomization (SPBR,
\th\ref{eg--sbr}). \citet{2018bugniInferenceCovariateAdaptiveRandomization}
provide additional examples. The dependence induced by CAR designs can affect
the behavior of ATE estimators in surprising ways. Analyzing the case where
target assignment proportions are constant across strata,
\citet{2018bugniInferenceCovariateAdaptiveRandomization} show that the standard
difference in treatment and control group means can have a limit variance that
depends explicitly on the choice of treatment assignment mechanism. For example,
all else held equal, the same estimator exhibits a larger limit variance under
i.i.d. treatment assignments than when treatments are assigned according to
SPBR, even when both mechanisms have the same target
proportions. \citet{2018bugniInferenceCovariateAdaptiveRandomization} also show
that the same phenomenon holds true of the ``stratum fixed effects'' estimator
which is recommended by \citet{2009bruhnPursuitBalanceRandomization}.
\citet{2019bugniInferenceCovariateAdaptiveRandomization}
extends this work and consider both multiple treatments and target assignment
proportions that vary by strata. They show that the estimator of the ATE from a
fully saturated regression has a limit variance that depends on the target
proportions (among other things), but not on the particular choice of assignment
mechanism. They also show that for the stratum fixed effects estimator however,
the limit variance is still affected explicitly by the choice of assignment
mechanism. \citet{2018bugniInferenceCovariateAdaptiveRandomization,
2019bugniInferenceCovariateAdaptiveRandomization} do not consider efficiency
questions.

There is a literature that considers efficiency gains in experiments from using
baseline covariates via linear regression adjustments. However, whether there
are gains at all depends on the linear regression specification.
In the case without stratification,
\citet{2008freedmanRegressionAdjustmentsExperimental}
shows that linear regression adjustments (without interactions between treatment
status and baseline covariates) can hurt asymptotic precision of the ATE
estimates though the estimates remain consistent. However, work by
\citet{2001yangEfficiencyStudyEstimators} and
\citet{2013linAgnosticNotesRegression} show that appropriately formed regression
adjustments (i.e. with the correct interaction terms) cannot hurt (and indeed
can improve) asymptotic precision in ATE estimation. For CAR experiments,
\citet{2022maRegressionAnalysisCovariate} build on the results of
\citet{2018bugniInferenceCovariateAdaptiveRandomization,
2019bugniInferenceCovariateAdaptiveRandomization} and show that
correctly formed linear regression adjustments cannot hurt (and can improve)
asymptotic precision under CAR. These works do not treat semiparametrically
efficient estimation.

In this paper, we are concerned with the semiparametrically efficient estimation
of the ATE in the CAR framework under the minimal set of assumptions laid out by
\citet{2019bugniInferenceCovariateAdaptiveRandomization}. There is a large
statistics literature around semiparametric efficiency, starting with the
seminal work of \citet{1956steinEfficientNonparametricTesting}. Most of these
are developed assuming i.i.d. data.
\citet{1998bickelEfficientAdaptiveEstimation} offers a comprehensive textbook
treatment and \citet{1990neweySemiparametricEfficiencyBounds} provides an
approachable review. Notable examples with non-i.i.d. data can be found in
\citet{2001bickelInferenceSemiparametricModels},
\citet{2004greenwoodIntroductionEfficientEstimation} and
\citet{2010komunjerSemiparametricEfficiencyBound} as well in references
therein. For treatment effects, \citet{1998hahnRolePropensityScore} derives the
SPEB for the ATE in observational studies with i.i.d. data when treatment
assignment is ignorable conditional on observable covariates (ignorable as in
\citet[Section 1.3]{1983rosenbaumCentralRolePropensity}). The
\citet{1998hahnRolePropensityScore} bound does not apply immediately to our
setting since the treatment assignment rule is allowed to depend on the entire
profile of sample strata. For instance, it is not clear \emph{a priori} if the
choice of assignment mechanism will affect the efficiency bound since it can
clearly influence limit behavior of estimators as in
\citet{2018bugniInferenceCovariateAdaptiveRandomization,
2019bugniInferenceCovariateAdaptiveRandomization}. Furthermore, even when
treatment assignments are i.i.d. in a CAR context, the covariates being
conditioned on during assignment are the strata, so it is again not immediate
from the bound in \citet{1998hahnRolePropensityScore} what role the additional
baseline covariates can play in providing efficiency gains. This paper shows
that a version of the \citet{1998hahnRolePropensityScore} bound accounting for
both stratum-specific target proportions and all baseline covariates is the SPEB
across all CAR experimental procedures. The target proportions play the role of
the propensity score conditional on baseline covariates. Additionally, the
choice of assignment mechanism does not affect this bound, only the target
proportions do. We derive this by using the partial sums arguments of
\citet{2018bugniInferenceCovariateAdaptiveRandomization,
2019bugniInferenceCovariateAdaptiveRandomization} to show that the log
likelihood ratios under \(1 / \sqrt{n}\) local alternatives has the local
asymptotic normality (LAN) property of
\citet{1960lecamLocallyAsymptoticallyNormal}.

In concurrent work, \citet{2022armstrongAsymptoticEfficiencyBounds} proves the
LAN property under a larger class of experimental designs using martingale
methods. The experiments considered there include the CAR framework and
additionally allows for arbitrary dependence of the treatment rule on the
covariates as well as observations of past outcomes (to account for sequential
assignment). \citet{2022armstrongAsymptoticEfficiencyBounds}'s main goal is to
show that an optimized form of the \citet{1998hahnRolePropensityScore} bound is
a lower bound on limit variance of the ATE across all experimental designs in
that class. This is motivated by recent papers on optimal design of experiments
by using past waves to either allocate treatment sequentially
(e.g. \citet{2011hahnAdaptiveExperimentalDesign}) or to form optimal strata in a
main experiment (e.g. \citet{2022tabord-meehanStratificationTreesAdaptive},
\citet{2022baiOptimalityMatchedPairDesigns} and
\citet{2021cytrynbaumDesigningRepresentativeBalanced}). The optimized bound is
analogous to implementing a ``conditional on covariates'' Neyman allocation
(\citet{1934neymanTwoDifferentAspects}). We differ from
\citet{2022armstrongAsymptoticEfficiencyBounds} work on two counts. First, while
their lower bound holds across the designs the aforementioned class, they do not
provide efficiency bounds in specific instances within the class. We provide
efficiency bounds for a given fixed stratification scheme and we do
not consider the question of optimizing the bound. As a result our efficiency
bound as a variance lower bound is higher than the optimized one in
\citet{2022armstrongAsymptoticEfficiencyBounds} and hence sharper for a given
fixed stratification scheme. Second,
\citet{2022armstrongAsymptoticEfficiencyBounds} does not consider the question
of when the bounds are achievable. We do this for the CAR framework explicitly
as explained in the subsequent two paragraphs.

Once an efficiency bound is established, the question of its sharpness remains,
in the sense of whether or not it is achievable. A well known phenomenon in the
literature is that conditions under which a SPEB is achievable are typically
much stronger than those required to derive the
bound. \citet{1990ritovAchievingInformationBounds} provide counterexamples where
a finite and non-singular SPEB exists but in certain non-trivial submodels, even
consistent (let alone efficient) estimation of the parameter of interest is
impossible. One of their examples is the partially linear model
(\citet{1986engleSemiparametricEstimatesRelation},
\citet{1988robinsonRootNConsistent}) which is commonly used in applied
economics. A well known condition in the literature is that \(1 /
\sqrt{n}\)-consistent and asymptotically normal (\(1 / \sqrt{n}\)-CAN) two-step
semiparametric estimators require any first step infinite-dimensional
(henceforth nonparametric) nuisance parameters to be estimated at a rate faster
than \(n^{- \frac{1}{4}}\) where \(n\) is the sample size (see for instance
\citet{1994neweyAsymptoticVarianceSemiparametric} and
\citet{2003chenEstimationSemiparametricModels}). This is mainly because
nonparametric estimators exhibit considerable bias and this condition limits the
effect of this bias on the second estimation step. The \(n^{- \frac{1}{4}}\)
rate condition however is especially restrictive when the covariates are of
higher dimension, due to the curse of dimensionality. Achieving this rate
condition typically requires imposing smoothness and/or Donsker conditions
(often dimension dependent) on unknown nuisance parameters and the estimator in
question. For the ATE, \citet{1998hahnRolePropensityScore} shows that
semiparametrically efficient estimation can be done under regularity conditions
through nonparametric regression adjustments or imputation. In their first step,
series estimators of conditional means and the propensity score are
used. Estimation by kernel methods can also be done, see
\citet{2004imbensNonparametricEstimationAverage} for a review. In all of these,
the use of smoothness restrictions on nonparametric population unknowns is
ubiquitous.

An additional contribution of this paper is to show that the SPEB derived for
the ATE is achievable under the same conditions used for its derivation. We do
this by leveraging the \emph{efficient influence function} to form an estimating
equation (or moment condition) for the ATE. The resulting estimator is the
familiar (and famed) augmented inverse probability weighted (AIPW) estimators
due to \citet{1994robinsEstimationRegressionCoefficients},
\citet{1995robinsAnalysisSemiparametricRegression}, and
\citet{1999scharfsteinAdjustingNonignorableDrop}. We further note that since the
propensity score is known in these experiments, the estimating equation is
linear in the unknown nonparametric component and has the local robustness
property of \citet{2018chernozhukovDoubleDebiasedMachine} and
\citet{2022chernozhukovLocallyRobustSemiparametric}. This reduces
the first order impact of bias in the first stage nonparametric estimates on the
resulting ATE estimator considerably and allows for efficient estimation under
much weaker conditions. The nonparametric unknowns here are conditional means of
the potential outcomes given baseline covariates. We show first that for a
generic estimator of conditional means, a weak \(L_{2}\) consistency
requirement combined with cross-fitting as in
\citet{2018chernozhukovDoubleDebiasedMachine} is sufficient for efficient
estimation of the ATE under CAR. The conditions under which the particular
choice of estimator achieves this \(L_{2}\) consistency are first left abstract
- different conditions apply to kernel estimators, random forests, series
estimators and neural networks for instance. For a given choice of nonparametric
estimator, smoothness conditions or functional form restrictions may be
unavoidable in achieving the \(L_{2}\) consistency property. Next, we establish
that if the particular nonparametric estimator is a cross-fitted Nadaraya-Watson
kernel regression estimator, then efficient estimation is possible under no
additional restrictions on the semiparametric model. Our motivation for using
the cross-fitted Nadaraya-Watson estimator is two-fold. First, the results of
\citet{1980devroyeDistributionFreeConsistencyResults} and
\citet{1980spiegelmanConsistentWindowEstimation} show that the Nadaraya-Watson
estimator is universally \(L_{2}\) consistent if the outcome in the regression
has a finite second moment. Second, the algebra of the cross-fitted
Nadaraya-Watson estimator is convenient in showing negligiblility of remainder
terms.

We provide simulation evidence of the finite sample performance of the feasible
efficient estimator. Our simulations compare this estimator to an infeasible
efficient estimator as well as an estimator that uses information only from the
stratum labels but not the additional baseline covariates. A comparison is also
provided with an imputation estimator of
\citet{1994chengNonparametricEstimationMean} and
\citet{1998hahnRolePropensityScore}. Across all simulations designs, we find
that using the feasible efficient estimator produces an efficiency gain of
13\% in terms of mean squared error reduction in comparison to the estimator
that discards information from baseline covariates beyond strata.
The maximal gain in this comparison over all simulations is a 40\% reduction in
mean squared error. Compared to the imputation estimator, our estimator exhibits
considerably less bias across most simulation designs.

The remainder of the paper is organized as follows. Section
\ref{sec--preliminaries} sets up the assumptions about the underlying population
from which the sample is drawn as well as the sampling process and the treatment
assignment scheme. Section \ref{sec--speb} presents the main result concerning
the semiparametric efficiency bound (\th\ref{thm--speb-car}) after a discussion
about parametric submodels in CAR experiment context. Section
\ref{sec--efficient-estimation} shows that the efficiency bound is achievable
under the same assumptions as used for its derivation via a two-step
semiparametric estimator using the efficient influence function. Section
\ref{sec--sims} provides Monte Carlo evidence of the performance of the
efficient estimator in finite samples. Section \ref{sec--conclusion} concludes
the paper.

%%% Local Variables:
%%% mode: latex
%%% TeX-master: "../2023_semipar_eff_car"
%%% End:
% LocalWords:  SSRA SPBR ATT

% ! TEX root = ../2023_semipar_eff_car.tex

\section{Preliminaries}
\label{sec--preliminaries}

This section describes assumptions on the populations of interest as well as the
sampling process which produces the observed data. This is done in two
subsections. Subsection \ref{subsec--CAR-pop} first describes the unobserved
study population of interest through a Neyman-Rubin causal model. Then, a target
\emph{observed population} is described for all CAR experiments considered in
this paper. This target observed population is useful for deriving the
efficiency bound in Section \ref{sec--speb}. Next, Subsection
\ref{subsec--CAR-sample} describes sampling assumptions and provides the two
main examples of sampling schemes in this paper.

Throughout this paper, all random variables and vectors will be defined on a
sufficiently rich underlying probability space \((\Omega, \mathscr{F},
\Pr)\). Independence of random elements is denoted by \(\indep\) and
expectations computed against the probability measure \(\Pr\) are denoted by
\(\mathbb{E} [\cdot]\). We will seldom refer to the underlying probability space
but define it nonetheless for clarity. Furthermore, absent any subscripts,
\(\mathbb{E} [\cdot]\) will also denote expected values assuming the ``true''
distribution of underlying data to be defined later on. The \(d\)-dimensional
multivariate normal distribution with mean vector \(\mathbf{m} \in
\mathbb{R}^{d}\) and covariance matrix \(\Sigma \in \mathbb{R}^{d \times d}\) is
denoted \(\mathcal{N} \left( \mathbf{m}, \Sigma \right)\). We denote the
matrix/vector transpose operation by \((\cdot)^{\prime}\). The natural numbers
are denoted by \(\mathbb{N} = \{1, 2, 3, 4, \dots\}\) and for a given
\(\mathcal{S} \in \mathbb{N}\), denote \(\mathbb{N}_{\mathcal{S}} = \{1, \dots,
\mathcal{S}\} = \{s \in \mathbb{N} : 1 \leq s \leq \mathcal{S}\}\).

\subsection{A population framework for CAR experiments.}
\label{subsec--CAR-pop}

We employ a standard binary treatment potential outcomes framework and assume
that samples are drawn from an infinite super-population. The population of
interest is described by a random vector \(W\) taking values in \(\mathbb{R}^{2
+ k}\) for some \(k \in \mathbb{N}\), with \(W^{\prime} = \left(Y (0), Y (1),
Z^{\prime} \right)\). For each \(a \in \{0, 1\}\), \(Y (a)\) is a scalar random
variable representing the potential outcome from receiving treatment \(a\). The
treatment \(a = 1\) can be interpreted as an ``innovation'' whereas \(a = 0\) is
a ``status quo'' or ``control''. \(Z\) is a random \(\mathbb{R}^{k}\)-vector of
baseline covariates. We denote the true distribution of \(W\) by
\(Q_{0}\). Prior to the assignment of treatment, only baseline covariates are
observable. Furthermore, after treatment has been assigned, only the outcome
associated with received treatment is observable so that \((Y (0), Y (1))\) is
not jointly observable. We maintain the following assumptions about the
population distribution \(Q_{0}\).

\begin{assumption}
\th\label{asm--Q}
Let \(\mu_{Z}\) be a \(\sigma\)-finite measure on the Borel sets of
\(\mathbb{R}^{k}\) and for \(a \in \{0, 1\}\), let \(\mu_{a}\) be a
\(\sigma\)-finite measure on the Borel sets of \(\mathbb{R}\). Furthermore, let
\(\mu\) denote the product measure on the Borel sets of \(\mathbb{R}^{2 + k}\)
constructed from \(\mu_{0}\), \(\mu_{1}\) and \(\mu_{Z}\). The true population
distribution, \(Q_{0}\), belongs to a family \(\mathbf{Q}\) of distributions on
\(\mathbb{R}^{2 + k}\) such that for each \(Q \in \mathbf{Q}\),
\begin{enumerate}[label = (\alph*)]
\item \label{asm--dominance}
  \(Q\) is dominated by \(\mu\) with Radon-Nikodym density \(q \left( \cdot; Q
  \right) = \mathrm{d} Q / \mathrm{d} \mu\).
\item \label{asm--nontriv-var}
  The potential outcomes have finite second moments under \(Q\),
  i.e. \(\mathbb{E}_{Q} \left[ Y {(a)}^{2} \right] < \infty\) for each \(a \in
  \{0, 1\}\), where \(\mathbb{E}_{Q}\) denotes the expected value assuming data
  are distributed according to \(Q\).
\end{enumerate}
\end{assumption}

The dominance assumption is made for mathematical convenience and is standard in
the literature on semiparametric efficiency. Additionally, the choices of
dominating measures \(\mu_{0}, \mu_{1}\) and \(\mu_{Z}\) are irrelevant and do
not affect the derivation of the efficiency bound. The finite second moment
condition is used for deriving Gaussian limiting distributions in later
sections. The family of distributions \(\mathbf{Q}\) is a nonparametric family
since it is infinite dimensional. The parameter of interest is the average
treatment effect (ATE), \(\beta_{\ast} : \mathbf{Q} \to \mathbb{R}\) defined by
\begin{equation}
  \beta_{\ast} (Q) = \mathbb{E}_{Q} [Y (1) - Y (0)] = \int \left( y_{1} - y_{0}
  \right) \; Q (\mathrm{d} y_{0}, \mathrm{d} y_{1}, \mathrm{d} z).
  \label{eqn--ate-Q}
\end{equation}
The true value of the ATE will be denoted \(\beta_{0} = \beta_{\ast} \left(
Q_{0} \right)\).

In both observational and experimental studies on the average effect of
treatment, observed outcomes result from the receipt of treatment. That is,
observed outcomes are given by \(Y\) defined by
\begin{equation*}
  Y = Y (1) \cdot A +  Y (0) \cdot (1 - A)
\end{equation*}
where \(A\) is a Bernoulli random variable denoting the treatment received. In
experimental settings, the conditional distribution of \(A\) given the baseline
covariates \(Z\) is assumed to be fully known to the experimenter. The target
observed population in the special case of a CAR experiment is the distribution
of the vector \(X^{\prime} = \left( Y, A, Z^{\prime} \right)\) which satisfies
\th\ref{asm--car-pop-ssra} below.

\begin{assumption}[Simple Stratified Randomization Assignment]
\th\label{asm--car-pop-ssra}
There exists a \(\mathcal{S} \in \mathbb{N}\), a measurable function
\(\mathbb{S} : \mathbb{R}^{k} \to \mathbb{N}_{\mathcal{S}}\) and a vector \(\pi
= (\pi (1), \dots, \pi (\mathcal{S}))\) with \(\pi (s) \in (0, 1)\) for every
\(s \in \mathbb{N}_{\mathcal{S}}\), which are all known to the experimenter. The
observable random vector \(X^{\prime} = (Y, A, Z^{\prime})\) is constructed from
\(W\) and its components satisfy
\begin{align}
  & Y = Y (1) \cdot A +  Y (0) \cdot (1 - A),
  \label{eqn--pop-obsoutproc} \\
  & [(Y (0), Y (1), Z) \indep A] | \mathbb{S} (Z),
  \label{eqn--car-pop-A-exog} \\
  & [A | \mathbb{S} (Z) = s] \sim \mathrm{Bernoulli} (\pi (s)).
  \label{eqn--car-pop-A-cond-Bern}
\end{align}
\end{assumption}

The function \(\mathbb{S}\) is used to construct a (measurable) finite partition
of the support of the covariates \(Z\) and thus, \(\mathcal{S}\) is an upper
bound on the number of strata. The condition in \eqref{eqn--car-pop-A-exog}
requires that assignment to treatment be exogenous to both potential outcomes
and the remaining variation in covariates given stratum labels. The condition in
\eqref{eqn--car-pop-A-cond-Bern} requires that treatment assignment
proportions correspond to the pre-specified target probabilities in \(\pi\). In
finite samples, the covariate adaptive randomization schemes that will be
described in the next subsection all provide different ways to target the
experiment in \th\ref{asm--car-pop-ssra}.

\begin{remark}
\th\label{rem--identification-ate-P}
Let \(\mathbf{P}\) denote the family of distributions for the random vector
\(X\) that is determined by \th\ref{asm--Q} and \th\ref{asm--car-pop-ssra}. Let
\(P_{0}\) denote the distribution in \(\mathbf{P}\) that corresponds to the true
population distribution \(Q_{0} \in \mathbf{Q}\). It is straightforward to show
that the ATE in \eqref{eqn--ate-Q} is nonparametrically identifiable (see
\citet[Definition 3.2]{2007matzkinNonparametricIdentification}) via the map
\(\beta : \mathbf{P} \to \mathbb{R}\)
\begin{equation}
  \beta (P) = \mathbb{E}_{P} \left[ \frac{Y \cdot A}{\pi (\mathbb{S} (Z))} -
  \frac{Y \cdot (1 - A)}{1 - \pi (\mathbb{S} (Z))} \right].
  \label{eqn--ate-P}
\end{equation}
That is, for a given \(Q \in \mathbf{Q}\), if \(P (Q)\) denotes the distribution
in \(\mathbf{P}\) formed from \(Q\) and \th\ref{asm--car-pop-ssra}, then
\(\beta_{\ast} (Q) = \beta (P (Q))\). In particular, the true value of the
average treatment effect can be written as
\begin{equation}
  \beta_{0} = \beta_{\ast} \left( Q_{0} \right) = \beta \left( P_{0} \right).
  \label{eqn--true-ATE}
\end{equation}
\end{remark}

\subsection{Sampling framework for CAR experiments.}
\label{subsec--CAR-sample}

In this subsection, we describe the assumptions maintained for the sampling
process that produces observed data in a CAR experiment.

\begin{assumption}
\th\label{asm--iid-Q}
\(\mathbf{W} = \left\{ W_{i} : i \in \mathbb{N} \right\}\) is a sequence of
independent and identically distributed (i.i.d.) random \(\mathbb{R}^{2 +
k}\)-vectors with \(W_{i} \sim Q_{0}\) for every \(i \in \mathbb{N}\). The first
\(n \in \mathbb{N}\) elements of \(\mathbf{W}\), denoted by
\(\mathbf{W}_{n}^{\prime} = \left( W_{1}, \dots, W_{n} \right)\), are the
potential outcome and covariate values for observations in the sample.
\end{assumption}

\th\ref{asm--iid-Q} states that potential outcome and covariate values for
observations in the sample are drawn at random from the population distribution
\(Q_{0}\). That is, if both potential outcomes and covariates were all
observable, the experimenter would have a representative sample from the
underlying population. This assumption is maintained in the recent literature on
CAR experiments as well as optimal experimental designs, for instance in
\citet{2018bugniInferenceCovariateAdaptiveRandomization},
\citet{2019bugniInferenceCovariateAdaptiveRandomization},
\citet{2022tabord-meehanStratificationTreesAdaptive},
\citet{2022baiOptimalityMatchedPairDesigns} and
\citet{2021cytrynbaumDesigningRepresentativeBalanced}. Given the baseline
covariates and corresponding strata, the experimenter chooses a vector of
treatment assignments \(\mathbf{A}_{n}^{\prime} = \left( A_{n 1}, \dots, A_{nn}
\right)\) which is a random vector supported in \({\{0, 1\}}^{n}\). The
experimenter has full control over the distribution of treatment assignments and
treatment assignments do not have to be i.i.d. We introduce the following
notation for convenience. For each \(s \in \mathbb{N}_{\mathcal{S}}\) and \(a
\in \{0, 1\}\) the stratum size and stratum treatment group size are
respectively
\begin{equation}
  N_{n} (s) = \sum_{i = 1}^{n} \mathbb{I} \left\{ \mathbb{S} \left( Z_{i}
  \right) = s \right\} \quad \text{and} \quad N_{n} (a, s) = \sum_{i = 1}^{n}
  \mathbb{I} \left\{ A_{n i} = a, \mathbb{S} \left( Z_{i} \right) = s \right\}.
  \label{eqn--strat-treat-size}
\end{equation}
We maintain the following assumptions about the observed data in a covariate
adaptive randomized experiment.

\begin{assumption}
\th\label{asm--treat-strat}
Let \(\mathcal{S}\), \(\mathbb{S}\) and \(\pi\) be as in
\th\ref{asm--car-pop-ssra}. For a sample of size \(n \in \mathbb{N}\), the
observed data are \(\mathbf{X}_{n}^{\prime} = \left( X_{n1}, \dots, X_{nn}
\right)\) with individual observations \(X_{n i} = \left( Y_{n i}, A_{n i},
Z_{i} \right)\) for each \(i \in \mathbb{N}_{n}\). The observed outcomes and
treatment assignment mechanism satisfy the following.
\begin{enumerate}[label = (\alph*)]
\item \label{asm--outcomes}
  The observed outcomes are given by
  \begin{equation}
    Y_{n i} = Y_{i} (1) A_{n i} + Y_{i} (0) \left( 1 - A_{n i} \right).
    \label{eqn--obsoutproc}
  \end{equation}
\item \label{asm--treat-exog}
  For every \(n \in \mathbb{N}\), \(\left[\mathbf{W}_{n} \indep \mathbf{A}_{n}
  \right] | \mathbf{S}_{n}\) where \(\mathbf{S}_{n}^{\prime} = \left( S_{1},
  \dots, S_{n} \right)\) and \(S_{i} =
  \mathbb{S} \left( Z_{i} \right)\) for each \(i \in \mathbb{N}_{n}\).
\item \label{asm--cov-adapt}
  For every \(n \in \mathbb{N}\), the conditional distribution of treatment
  assignments given the profile of sample strata
  \begin{equation}
    \alpha_{n} \left( \mathbf{a}_{n} \middle| \mathbf{s}_{n} \right) := \Pr
    \left( \mathbf{A}_{n} = \mathbf{a}_{n} \middle| \mathbf{S}_{n} =
    \mathbf{s}_{n} \right) \qquad \forall \mathbf{a}_{n} \in \{0, 1\}^{n},
    \forall \mathbf{s}_{n} \in \mathbb{N}_{\mathcal{S}}^{n}
    \label{eqn--cond-dist-A-S-sample}
  \end{equation}
  is completely known to the experimenter and does not depend on the population
  distribution \(Q_{0}\).
\item \label{asm--treat-prop}
  Under \(Q_{0}\) defined in \th\ref{asm--Q} and \(\alpha_{n} (\cdot | \cdot)\)
  chosen by the experimenter above, for each \(s \in \mathbb{N}_{\mathcal{S}}\),
  \begin{equation}
    \frac{N_{n} (1, s)}{N_{n} (s)} \overset{\mathrm{p}}{\to} \pi (s).
    \label{eqn--treat-strat-prop-conv-target}
  \end{equation}
\end{enumerate}
\end{assumption}

Denote the distribution of \(\mathbf{X}_{n}\) by \(P_{0, n}\). Note that \(P_{0,
n}\) is determined by the population distribution \(Q_{0}\), the stratification
scheme \(\mathbb{S}\), equation \eqref{eqn--obsoutproc}, and the choice of
randomization scheme. \th\ref{asm--treat-strat} places the same set of
restrictions on \(P_{0, n}\) as in
\citet{2019bugniInferenceCovariateAdaptiveRandomization} on the relationship
between the randomization scheme, the underlying population and the
strata. \th\ref{asm--treat-strat} \ref{asm--outcomes} requires that the observed
outcome for a given observation is exactly the potential outcome associated with
the assigned treatment (as in equation \eqref{eqn--pop-obsoutproc} for the
target experiment population
\(\mathbf{P}\)). \th\ref{asm--treat-strat} \ref{asm--treat-exog} requires that
the treatment assignment be ignorable (or exogenous) given the strata. In
particular, the assignment scheme can only be a function of the stratum labels
and a randomization device exogenous to the information contained in the
potential outcomes and the baseline covariates beyond that afforded by the
strata. This assumption is analogous to \eqref{eqn--car-pop-A-exog} in
\th\ref{asm--car-pop-ssra}. Its main use is to guarantee identification of the
ATE within each stratum and it is additionally a sufficient condition to
identify the overall ATE. \th\ref{asm--treat-strat} \ref{asm--cov-adapt}
requires that the randomization procedure be fully known to the
experimenter. This assumption plays a role in providing a simple
characterization of the joint distribution of the observed data and allows us to
avoid mathematical complications when talking about parametric sub-models during
the discussion of semiparametric efficiency. Finally,
\th\ref{asm--treat-strat} \ref{asm--treat-prop} requires the randomization
procedure to reach the target treatment proportion within a given stratum at
least asymptotically in the sense of convergence in probability.
This is analogous to
\eqref{eqn--car-pop-A-cond-Bern} in \th\ref{asm--car-pop-ssra}. There are a
number of examples of randomization schemes which will satisfy the requirements
imposed by \th\ref{asm--treat-strat}. We provide two examples of such
randomization schemes.

\begin{example}[Simple Stratified Random Assignment (SSRA)]
\th\label{eg--sra}
Let the assignments, \(\left\{ A_{i} \right\}_{i = 1}^{n}\), be i.i.d. Bernoulli
random variables such that \(\Pr \left( A_{i} = 1 | S_{i} = s \right) = \pi
(s)\) and \([(Y_{i} (0), Y_{i} (1), Z_{i}) \indep A_{i}] | S_{i}\). The
observations \(X_{i}^{\prime} = \left( Y_{i}, A_{i}, Z_{i}^{\prime} \right)\)
generated by this process form an i.i.d. sample from \(P_{0}\) defined in
\th\ref{asm--car-pop-ssra}. The conditional mass function \(\alpha_{n} (\cdot |
\cdot)\) here is
\begin{equation}
  \alpha_{n} \left( \mathbf{a}_{n} \middle| \mathbf{s}_{n} \right) = \prod_{i =
  1}^{n} \pi \left( s_{i} \right)^{a_{i}} \left( 1 - \pi \left( s_{i} \right)
  \right)^{1 - a_{i}} \qquad \forall \mathbf{a}_{n} \in \{0, 1\}^{n}, \forall
  \mathbf{s}_{n} \in \mathbb{N}_{\mathcal{S}}^{n}, \forall n \in \mathbb{N}.
  \label{eqn--cond-dist-A-S-SSRA}
\end{equation}
\th\ref{asm--treat-strat} \ref{asm--treat-prop} can be verified by appealing to
the Strong Law of Large Numbers.
\end{example}

\begin{example}[Stratified Permuted Block Randomization (SPBR)]
\th\label{eg--sbr}
Denote the integer floor function by \(\lfloor \cdot \rfloor\). Within stratum
\(s\), let
\begin{equation*}
  N_{n} (1, s) = \left\lfloor \pi (s) N_{n} (s) \right\rfloor \quad \text{and}
  \quad \frac{1}{c_{s, n}} = \binom{N_{n} (s)}{N_{n} (1, s)}.
\end{equation*}
There are \(c_{s, n}^{- 1}\) distinct subsets (or blocks) of size \(N_{n} (1,
s)\) from the overall stratum which has size \(N_{n} (s)\). We can choose one of
these blocks uniformly at random, i.e. each distinct block meeting the size
requirements gets assigned a probability mass of \(c_{n
s}\). \th\ref{asm--treat-strat} \ref{asm--treat-exog} can be satisfied by
ensuring the randomization device used to choose the treatment block is
independent to any outcome and covariate information within strata. Since
\(|N_{n} (1, s) - \pi (s) N_{n} (s)| \leq 1\) almost surely for every
\(s \in \mathbb{N}_{\mathcal{S}}\) and \(N_{n} (s) \overset{\mathrm{a.s.}}{\to}
\infty\) if \(Q (\mathbb{S} (Z) = s) > 0\), \th\ref{asm--treat-strat}
\ref{asm--treat-prop} is also satisfied. Additionally, for each \(\mathbf{a}_{n}
\in \{0, 1\}^{n}\), \(\mathbf{s}_{n} \in \mathbb{N}_{\mathcal{S}}^{n}\) and \(n
\in \mathbb{N}\),
\begin{equation}
  \alpha_{n} \left( \mathbf{a}_{n} \middle| \mathbf{s}_{n} \right) = \prod_{s =
  1}^{\mathcal{S}} \binom{\sum_{i = 1}^{n} \mathbb{I} \left\{ s_{i} = s
  \right\}}{\sum_{i = 1}^{n} a_{i} \mathbb{I} \left\{ s_{i} = s \right\}}^{- 1}
  \mathbb{I} \left\{ \sum_{i = 1}^{n} a_{i} \mathbb{I} \left\{ s_{i} = s
  \right\} = \left\lfloor \pi (s) \sum_{i = 1}^{n} \mathbb{I} \left\{ s_{i} = s
  \right\} \right\rfloor \right\}.
  \label{eqn--cond-dist-A-S-SPBR}
\end{equation}
\end{example}

Both examples satisfy the requirement imposed by \th\ref{asm--treat-strat}
\ref{asm--treat-prop} for approaching the stratum-specific target proportions,
but they do so at different rates of convergence. The SSRA method provides
convergence to the stratum-specific targets at a \(n^{- \frac{1}{2}}\) rate due
to the Lindeberg-L\'evy Central Limit Theorem. That is,
\begin{equation*}
  \left( \frac{N_{n} (1, s)}{N_{n} (s)} - \pi (s) : s = 1, \dots, \mathcal{S}
  \right) = O_{\mathrm{p}} \left( \frac{1}{\sqrt{n}} \right) \text{ under SSRA.}
\end{equation*}
It can be shown using the Strong Law of Large Numbers that SPBR achieves the
same targeting property at a faster \(n^{- 1}\) rate, so that
\begin{equation*}
  \left( \frac{N_{n} (1, s)}{N_{n} (s)} - \pi (s) : s = 1, \dots, \mathcal{S}
  \right) = O_{\mathrm{p}} \left( \frac{1}{n} \right) = o_{\mathrm{p}} \left(
  \frac{1}{\sqrt{n}} \right) \text{ under SPBR.}
\end{equation*}
When a randomization scheme achieves this faster convergence property we say
that it achieves ``strong balance''. Numerous other kinds of stratified
randomization techniques have been developed and analyzed in the literature on
randomized control trials --- for instance Efron's biased coin design and Wei's
urn --- see \citet{2018bugniInferenceCovariateAdaptiveRandomization} and
references therein for detailed descriptions. Furthermore, the treatment
assignments are not required to be i.i.d. and hence, the observed outcomes are
all potentially non-i.i.d. For instance, with the SPBR procedure illustrated in
\th\ref{eg--sbr}, assignments are independent across strata, but they exhibit
dependence within strata. Indeed, most (if not all) assignment schemes that
achieve strong balance will result in treatment assignments and observed
outcomes that are non-i.i.d. However, we will show that the same efficiency
bound as that for the i.i.d. sampling scheme in \th\ref{eg--sra} holds for all
CAR schemes that satisfy \th\ref{asm--treat-strat}. Finally while the
assumptions are satisfied by a fairly broad class of randomization schemes,
there are ones that violate them that are used in practice, e.g. the
minimization method of \citet{1975pocockSequentialTreatmentAssignment}.
Developing efficiency theory for these methods may be of interest but they are
not covered in this work.

%%% Local Variables:
%%% mode: latex
%%% TeX-master: "../2023_semipar_eff_car"
%%% End:
% LocalWords:  SSRA SPBR ATT

% ! TEX root = ../2023_semipar_eff_car.tex

\section{Semiparametric efficiency bound}
\label{sec--speb}

In this section, we first state the semiparametric efficiency bound for the ATE
and discuss it briefly. We also discuss the derivation of the bound in two
subsections. As stated in the introduction, of the efficiency bounds established
in the literature, the closest one to our setting is that of
\citet{1998hahnRolePropensityScore}. This is derived under i.i.d. observational
data under the assumption of ignorable treatment assignment conditional on
observable covariates (\citet[Section 1.3]{1983rosenbaumCentralRolePropensity}).
The \citet{1998hahnRolePropensityScore} bound does not apply immediately to our
setting since the treatment assignment rule is allowed to depend on the entire
profile of sample strata. For instance, it is not clear \emph{a priori} if the
choice of assignment mechanism will affect the efficiency bound since it can
clearly influence limit behavior of estimators as in
\citet{2018bugniInferenceCovariateAdaptiveRandomization,
2019bugniInferenceCovariateAdaptiveRandomization}. Furthermore, even when
treatment assignments are i.i.d. in a CAR context, the covariates being
conditioned on during assignment are the strata, so it is again not immediate
from the bound in \citet{1998hahnRolePropensityScore} what role the additional
baseline covariates can play in providing efficiency gains. The SPEB derived
here provides an answer to these questions. Before stating our bound, let \(m
\left( a, z; Q \right) = \mathbb{E}_{Q} [Y (a) | Z = z]\) for each \(a
\in \{0, 1\}\), and for brevity, let \(m_{\ast} (a, z) = m \left( a, z;
Q_{0} \right)\) Furthermore, define
\begin{equation}
  \mathbb{V}_{\ast} = \mathbb{E} \left[ \frac{\mathrm{Var} [Y (1) | Z]}{\pi
  (\mathbb{S} (Z))} + \frac{\mathrm{Var} [Y (0) | Z]}{1 - \pi (\mathbb{S} (Z))}
  + \left\{ m_{\ast} (1, Z) - m_{\ast} (0, Z) - \beta_{0} \right\}^{2} \right].
  \label{eqn--speb-car}
\end{equation}
The following theorem establishes that \(\mathbb{V}_{\ast}\) above is the SPEB.

\begin{theorem}
\th\label{thm--speb-car}
Under \th\ref{asm--Q,asm--iid-Q,asm--treat-strat}, the semiparametric efficiency
bound for estimating the ATE from a CAR experiment is \(\mathbb{V}_{\ast}\) in
\eqref{eqn--speb-car}.
\end{theorem}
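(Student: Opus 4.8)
The plan is to obtain the bound from a local asymptotic normality (LAN) argument for the sequence of experiments $\{P_{0,n}\}$, in the spirit of \citet{1960lecamLocallyAsymptoticallyNormal}, and then to read off the bound as the variance of the efficient influence function via the convolution theorem. Since $\mathbf{Q}$ is nonparametric and is pinned down by the marginal law of $Z$ together with the conditional laws of $Y(1)\mid Z$ and $Y(0)\mid Z$, I would work with regular parametric submodels $\theta\mapsto Q_\theta$ through $Q_0=Q_{\theta_0}$ whose scores decompose orthogonally as $s(y_0,y_1,z)=s_Z(z)+b_1(y_1,z)+b_0(y_0,z)$, where $\mathbb{E}_{Q_0}[b_a(Y(a),Z)\mid Z]=0$ for $a\in\{0,1\}$. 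The key structural simplification is that, by \th\ref{asm--treat-strat} \ref{asm--cov-adapt}, the assignment law $\alpha_n(\cdot\mid\cdot)$ does not depend on $Q_0$ and therefore cancels from every likelihood ratio; consequently the observed-data score for unit $i$ is $g_i=s_Z(Z_i)+A_{ni}\,b_1(Y_{ni},Z_i)+(1-A_{ni})\,b_0(Y_{ni},Z_i)$, which is exactly the i.i.d.\ observed-data score one would write down for the population experiment $P_0$ of \th\ref{asm--car-pop-ssra}.

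The first substantive step is to establish the LAN expansion
\[
\log\frac{\mathrm{d}P_{\theta_0+h/\sqrt{n},\,n}}{\mathrm{d}P_{\theta_0,n}}=\frac{1}{\sqrt{n}}\sum_{i=1}^{n} h' g_i-\frac{1}{2}\,h'\Gamma\, h+o_{P_{\theta_0,n}}(1),
\]
with $\tfrac{1}{\sqrt{n}}\sum_i h'g_i\overset{\mathrm{d}}{\to}\mathcal{N}(0,h'\Gamma h)$. The quadratic/drift term is handled by a law of large numbers applied to the per-unit second derivatives together with the information identity for the conditional densities; because these are expectations of functions of a single $W_i$ weighted by $A_{ni}$, their limit is insensitive to the dependence in $\mathbf{A}_n$ and yields $-\tfrac12 h'\Gamma h$ with $\Gamma=\mathrm{Var}(s_Z(Z))+\mathbb{E}[\pi(\mathbb{S}(Z))\,b_1(Y(1),Z)^2]+\mathbb{E}[(1-\pi(\mathbb{S}(Z)))\,b_0(Y(0),Z)^2]$. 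The linear term requires a central limit theorem for the non-i.i.d.\ sum $\tfrac{1}{\sqrt{n}}\sum_i g_i$, for which I would use the partial-sum arguments of \citet{2018bugniInferenceCovariateAdaptiveRandomization,2019bugniInferenceCovariateAdaptiveRandomization}; one must also verify that the remainder in the per-unit Taylor expansion is $o_{P_{\theta_0,n}}(1)$, which follows from differentiability in quadratic mean of the submodels.

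I expect the main obstacle to be showing that the limiting variance of this linear term is invariant to the choice of assignment mechanism, so that it coincides with $\Gamma$ above and in particular with the SSRA case, since \emph{a priori} schemes achieving strong balance induce within-stratum negative dependence that is absent under SSRA. The resolution I would pursue is to condition on $(\mathbf{W}_n,\mathbf{S}_n)$ and isolate the only mechanism-dependent piece, $\tfrac{1}{\sqrt{n}}\sum_i (A_{ni}-\pi(S_i))\,(b_1(Y_i(1),Z_i)-b_0(Y_i(0),Z_i))$. Its conditional limit variance differs across mechanisms only through the squared within-stratum means $\mathbb{E}[b_1(Y(1),Z)-b_0(Y(0),Z)\mid \mathbb{S}(Z)=s]$; but since $b_a$ has conditional mean zero given $Z$ and $\mathbb{S}(Z)$ is $Z$-measurable, these means vanish, so the sampling-without-replacement and i.i.d.\ variances collapse to the same limit. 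This is precisely the point at which the orthogonal decomposition of the score, separating the $Z$-marginal direction $s_Z$ from the conditional directions $b_0,b_1$, does the essential work, and it is what makes $\Gamma$ mechanism-free.

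With LAN in hand, I would verify that $\beta$ is pathwise differentiable with derivative $\mathbb{E}_{Q_0}[(Y(1)-Y(0)-\beta_0)\,s]$ along the submodel with score $s$, and exhibit the candidate efficient influence function
\[
\psi(Y,A,Z)=\frac{A\,(Y-m_{\ast}(1,Z))}{\pi(\mathbb{S}(Z))}-\frac{(1-A)\,(Y-m_{\ast}(0,Z))}{1-\pi(\mathbb{S}(Z))}+m_{\ast}(1,Z)-m_{\ast}(0,Z)-\beta_0.
\]
Its three summands lie in the three orthogonal components $\{A\,b_1:\mathbb{E}[b_1\mid Z]=0\}$, $\{(1-A)\,b_0:\mathbb{E}[b_0\mid Z]=0\}$, and $\{s_Z(Z)\}$ of the tangent space, so $\psi$ belongs to the tangent space; checking that it reproduces the pathwise derivative then identifies it as the efficient influence function, exactly as in \citet{1998hahnRolePropensityScore} but with $\pi(\mathbb{S}(Z))$ playing the role of the propensity score. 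The convolution theorem for the LAN sequence $\{P_{0,n}\}$ gives the bound as $\mathrm{Var}(\psi)$, and a direct computation using $A\indep(Y(0),Y(1))\mid Z$ and the vanishing of the cross terms yields $\mathrm{Var}(\psi)=\mathbb{V}_{\ast}$, as claimed.
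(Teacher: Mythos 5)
Your proposal is correct and follows essentially the same route as the paper: the likelihood ratio is shown to be LAN using the partial-sum/coupling machinery of \citet{2018bugniInferenceCovariateAdaptiveRandomization, 2019bugniInferenceCovariateAdaptiveRandomization}, the assignment law cancels because it does not depend on the population parameters, the mechanism-dependent term dies precisely because the treatment-varying score components have mean zero within strata, and the bound is then read off from pathwise differentiability of \(\beta\) and the convolution theorem. The only cosmetic differences are that you center the conditional score directions at \(Z\) rather than at \(\mathbb{S}(Z)\) (a slightly finer orthogonal decomposition that yields the same cancellation) and that you verify the candidate efficient influence function directly as a tangent-space element representing the derivative, whereas the paper obtains it as the \(L_{2}(P_{0})\) projection of \(\varphi_{1}\).
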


The proof of \th\ref{thm--speb-car} provides formal justification for
\(\mathbb{V}_{\ast}\) as an efficiency bound by appealing to extensions of
the famed convolution (\citet{1970hajekCharacterizationLimitingDistributions})
and local asymptotic minimax theorems (\citet{1972hajekLocalAsymptoticMinimax})
to semiparametric problems. The efficiency bound in \eqref{eqn--speb-car} is
that of \citet{1998hahnRolePropensityScore} for i.i.d. data, if treatment is
assigned conditional on baseline covariates according to the propensity score
\(\pi (\mathbb{S} (\cdot))\). The intuition for this is that even though
randomization happens conditional on strata, the baseline covariates are
independent to treatment assignment after conditioning on strata
(\th\ref{asm--treat-strat} \ref{asm--treat-exog}). Thus, the minimum possible
limit variance in estimating the ATE can be achieved by utilizing any additional
information the baseline covariates may have about potential outcomes. This can
be better understood by considering what happens if only information from the
strata are used during estimation. For instance, consider the fully saturated
regression estimator of the ATE from
\citet{2019bugniInferenceCovariateAdaptiveRandomization}. This estimator is
constructed from regressing observed outcomes \(Y_{n i}\) on the stratum
indicators and their interactions with treatment status \(A_{n i}\). The
coefficients from this regression are combined to then form the ATE
estimate. The resulting estimator has a ``weighted difference in means''
form:
\begin{equation}
  \widehat{\beta}_{n, \mathrm{SAT}} = \sum_{s = 1}^{\mathcal{S}} \frac{N_{n}
  (s)}{n} \cdot \left[ \frac{\sum_{i = 1}^{n} Y_{n i} \cdot A_{n i} \cdot
  \mathbb{I} \left( S_{i} = s \right)}{N_{n} (1, s)} - \frac{\sum_{i = 1}^{n}
  Y_{n i} \cdot \left( 1 - A_{n i} \right) \cdot \mathbb{I} \left( S_{i} = s
  \right)}{N_{n} (0, s)} \right].
  \label{eqn--est-sat}
\end{equation}
\citet{2019bugniInferenceCovariateAdaptiveRandomization} show that
\(\widehat{\beta}_{n, \mathrm{SAT}}\) is asymptotically normal with mean-zero
and limit variance given by
\begin{equation}
  \mathbb{V}_{\mathrm{SAT}} = \mathbb{E} \left[ \frac{\mathrm{Var} [Y (1) |
  \mathbb{S} (Z)]}{\pi (\mathbb{S} (Z))} + \frac{\mathrm{Var} [Y (0) |
  \mathbb{S} (Z)]}{[1 - \pi (\mathbb{S} (Z))]} + \left\{ \mathbb{E} [Y (1) |
  \mathbb{S} (Z)] - \mathbb{E} [Y (0) | \mathbb{S} (Z)] - \beta_{0} \right\}^{2}
  \right].
  \label{eqn--speb-sat}
\end{equation}
Our results thus establish that the fully saturated regression estimator of the
ATE achieves the SPEB among all estimators that only use information from the
stratum labels. This follows from comparing \eqref{eqn--speb-car} and
\eqref{eqn--speb-sat} and noting that we have \(\mathbb{V}_{\mathrm{SAT}} =
\mathbb{V}_{\ast}\) if \(Z = \mathbb{S} (Z)\) (up to information preserving
relabelling of strata). The latter condition amounts to only using information
given in stratum labels. An additional implication is that when conditional
means or conditional variances of the potential outcomes given baseline
covariates exhibit variation within strata, \(\mathbb{V}_{\ast}\) can
be a strict improvement over \(\mathbb{V}_{\mathrm{SAT}}\) - an observation also
made by \citet{1998hahnRolePropensityScore}. This is confirmed by
Monte Carlo simulation evidence in Section \ref{sec--sims} in which
\(\mathbb{V}_{\ast}\) can be approximately half of \(\mathbb{V}_{\mathrm{SAT}}\)
in some simulation designs (i.e. there is up to a 50\% possible reduction in
asymptotic variance).

We provide a few additional comments on the SPEB in \eqref{eqn--speb-car}. Note
that the choice of assignment mechanism only affects the bound through the
choice of the target assignment proportions in \(\pi (\cdot)\). All else held
equal, the SPBR assignment mechanism in \th\ref{eg--sbr}, or any other procedure
satisfying \th\ref{asm--treat-strat}, will produce the same SPEB as i.i.d. draws
from \(\mathbf{P}\) (\th\ref{eg--sra}). Furthermore,
\citet{1998hahnRolePropensityScore} notes that knowledge of the propensity score
is ancillary to the SPEB for the ATE. In the context of CAR, we will show that
this knowledge is useful for achieving the SPEB during estimation. An important
intermediate product of our derivation for this purpose is the \emph{efficient
influence function}, \(\varphi_{0}\), which is defined by
\begin{equation}
  \varphi_{0} (y, a, z) = \frac{a}{\pi (\mathbb{S} (z))} \left[ y - m_{\ast} (1,
  z) \right] - \frac{1 - a}{1 - \pi (\mathbb{S} (z))} \left[ y - m_{\ast} (0, z)
  \right] + \left[ m_{\ast} (1, z) - m_{\ast} (0, z) - \beta_{0} \right].
  \label{eqn--eif}
\end{equation}
\(\varphi_{0}\) is called the efficient influence function because \(\mathbb{E}
\left[ \varphi_{0} (Y, A, Z) \right] = 0\) and its variance is the SPEB in
\eqref{eqn--speb-car}, i.e. \(\mathbb{E} \left[ \varphi_{0} (Y, A, Z)^{2}
\right] = \mathbb{V}_{\ast}\). This function is a key ingredient to achieving
the efficiency bound as illustrated in Section
\ref{sec--efficient-estimation}. In the subsequent two subsections, we provide
some additional discussion of how \eqref{eqn--speb-car} is derived, but
mathematical details are left to the appendix.

\subsection{A product characterization of distribution of observed data}

We establish here that the joint distribution of the observed data
\(\mathbf{X}_{n}\), \(P_{0, n}\), has a product structure even though outcomes
and treatment assignments are potentially non-i.i.d. This product structure will
then imply that \(P_{0, n}\) has a density that also has a product form. To that
end, we first define a dominating measure for a single observation. For Borel
sets \(\mathcal{Y} \subseteq \mathbb{R}\) and \(\mathcal{Z} \subseteq
\mathbb{R}^{k}\) and for \(\mathcal{A} \subseteq \{0, 1\}\), define the measures
\(\nu_{a}\) with \(a \in \{0, 1\}\) and \(\nu\) by
\begin{equation}
  \begin{split}
    \nu_{a} (\mathcal{Y} \times \mathcal{Z}) =
    & \ \mu_{a} (\mathcal{Y}) \cdot \mu_{Z} (\mathcal{Z}), \\
    \nu (\mathcal{Y} \times \mathcal{A} \times \mathcal{Z}) =
    & \ \nu_{1} (\mathcal{Y} \times \mathcal{Z}) \cdot \mathbb{I} (1 \in
      \mathcal{A}) + \nu_{0} (\mathcal{Y} \times \mathcal{Z}) \cdot \mathbb{I}
      (0 \in \mathcal{A}).
  \end{split}
  \label{eqn--nu}
\end{equation}

Let \(q_{a} (y , z; Q)\) denote the marginal density of \((Y (a), Z)\) if the
underlying population has distribution \(Q \in \mathbf{Q}\). Furthermore, let
\(P_{n} (\cdot; Q)\) denote the distribution of the observed data
\(\mathbf{X}_{n}\) if \(\mathbf{W}_{n}\) is an i.i.d. sample from \(Q \in
\mathbf{Q}\). Under this notation, we have \(P_{0, n} \equiv P_{n} \left( \cdot;
Q_{0} \right)\).

\begin{lemma}
\th\label{lem--prod-struct}
Let \(\mathbf{Y}_{n}^{\prime} = \left( Y_{n1}, \dots, Y_{nn} \right)\) and
\(\mathbf{Z}_{n}^{\prime} = \left( Z_{1}, \dots, Z_{n} \right)\). For each
\(\mathbf{y}_{n} \in \mathbb{R}^{n}\), \(\mathbf{a}_{n} \in {\{0, 1\}}^{n}\),
\(\mathbf{s}_{n} \in {\mathbb{N}_{\mathcal{S}}}^{n}\) and
\(\mathbf{z}_{n}^{\prime} = \left( z_{1}, \dots, z_{n} \right) \in \mathbb{R}^{k
\times n}\), denote the event
\begin{equation*}
  E_{n} \left( \mathbf{y}_{n}, \mathbf{a}_{n}, \mathbf{z}_{n}, \mathbf{s}_{n}
  \right) = \left\{ \mathbf{Y}_{n} \leq \mathbf{y}_{n}, \mathbf{A}_{n} =
  \mathbf{a}_{n}, \mathbf{Z}_{n} \leq \mathbf{z}_{n}, \mathbf{S}_{n} =
  \mathbf{s}_{n} \right\}.
\end{equation*}
where the inequalities are understood to be element-wise. Then under
\th\ref{asm--Q}, \th\ref{asm--car-pop-ssra}, \th\ref{asm--treat-strat} and
\eqref{eqn--obsoutproc}, if \(\mathbf{W}_{n}\) is an i.i.d. sample from \(Q \in
\mathbf{Q}\),
\begin{equation}
  P_{n} \left( E_{n} \left( \mathbf{y}_{n}, \mathbf{a}_{n}, \mathbf{z}_{n},
  \mathbf{s}_{n} \right); Q \right) = \ \alpha_{n} \left( \mathbf{a}_{n}
  \middle| \mathbf{s}_{n} \right) \times \prod_{i = 1}^{n} \left\{
  \begin{array}{l}
    Q \left( Y_{i} (1) \leq y_{i}, Z_{i} \leq z_{i}, S_{i} = s_{i}
    \right)^{a_{i}} \\
    \times Q \left( Y_{i} (0) \leq y_{i}, Z_{i} \leq z_{i}, S_{i} = s_{i}
    \right)^{1 - a_{i}}
  \end{array} \right\}.
  \label{eqn--data-joint-dist-prod}
\end{equation}
Thus, \(P_{n} (\cdot; Q)\) is absolutely continuous against the \(n\)-fold
product measure formed from \(\nu\) in \eqref{eqn--nu} with density
\begin{equation}
  p_{n} \left( \mathbf{y}_{n}, \mathbf{a}_{n}, \mathbf{z}_{n}; Q \right) =
  \alpha_{n} \left( \mathbf{a}_{n} \middle| \mathbf{s}_{n} \right) \prod_{i =
  1}^{n} q_{1} \left( y_{i}, z_{i}; Q \right)^{a_{i}} q_{0}
  \left(y_{i}, z_{i}; Q \right)^{1 - a_{i}}.
  \label{eqn--data-joint-density-prod}
\end{equation}
where in \eqref{eqn--data-joint-density-prod}, we restrict
\(\mathbf{s}_{n}^{\prime} = \left( \mathbb{S} \left( z_{1} \right), \dots,
\mathbb{S} \left( z_{n} \right) \right)\).
\end{lemma}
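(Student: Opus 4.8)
The plan is to compute the joint distribution function of the observed data on the left-hand side of \eqref{eqn--data-joint-dist-prod} directly, by conditioning on the strata profile $\mathbf{S}_n$ and then invoking the conditional independence of potential outcomes/covariates and treatment assignments in \th\ref{asm--treat-strat} \ref{asm--treat-exog}. The starting observation is that $\mathbf{S}_n = (\mathbb{S}(Z_1), \dots, \mathbb{S}(Z_n))$ is a deterministic function of $\mathbf{W}_n$ through $\mathbf{Z}_n$, so the event $\{\mathbf{S}_n = \mathbf{s}_n\}$ is $\sigma(\mathbf{W}_n)$-measurable. The crucial algebraic step is to eliminate the observed outcome in favor of the relevant potential outcome: on the event $\{\mathbf{A}_n = \mathbf{a}_n\}$, equation \eqref{eqn--obsoutproc} gives $Y_{ni} = Y_i(a_i)$ for each $i$, so that $E_n(\mathbf{y}_n, \mathbf{a}_n, \mathbf{z}_n, \mathbf{s}_n)$ coincides with $B \cap \{\mathbf{S}_n = \mathbf{s}_n\} \cap \{\mathbf{A}_n = \mathbf{a}_n\}$, where $B = \{Y_i(a_i) \leq y_i,\ Z_i \leq z_i \text{ for all } i\}$ depends only on $\mathbf{W}_n$.

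First I would condition on $\{\mathbf{S}_n = \mathbf{s}_n\}$ and factor. Since $B$ is $\sigma(\mathbf{W}_n)$-measurable and $\mathbf{W}_n \indep \mathbf{A}_n$ given $\mathbf{S}_n$ by \th\ref{asm--treat-strat} \ref{asm--treat-exog}, the conditional probability of $B \cap \{\mathbf{A}_n = \mathbf{a}_n\}$ given $\{\mathbf{S}_n = \mathbf{s}_n\}$ splits into the product of $\Pr(B \mid \mathbf{S}_n = \mathbf{s}_n)$ and $\Pr(\mathbf{A}_n = \mathbf{a}_n \mid \mathbf{S}_n = \mathbf{s}_n)$; the latter is exactly $\alpha_n(\mathbf{a}_n \mid \mathbf{s}_n)$ from \eqref{eqn--cond-dist-A-S-sample}, and it does not depend on $Q$ by \th\ref{asm--treat-strat} \ref{asm--cov-adapt}. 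Multiplying back by $\Pr(\mathbf{S}_n = \mathbf{s}_n)$ recombines the $\mathbf{W}_n$-events into $\Pr(B \cap \{\mathbf{S}_n = \mathbf{s}_n\}) = \Pr(Y_i(a_i) \leq y_i,\ Z_i \leq z_i,\ S_i = s_i \text{ for all } i)$, which by the i.i.d. assumption \th\ref{asm--iid-Q} factors across observations as $\prod_{i=1}^n Q(Y_i(a_i) \leq y_i, Z_i \leq z_i, S_i = s_i)$. The stated product form in \eqref{eqn--data-joint-dist-prod} then follows from the elementary identity $g_{a_i} = g_1^{a_i} g_0^{1 - a_i}$, valid for any $g_0, g_1$ whenever $a_i \in \{0, 1\}$, applied with $g_a = Q(Y_i(a) \leq y_i, Z_i \leq z_i, S_i = s_i)$.

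For the density statement, I would pass from this distribution function to the Radon--Nikodym derivative against the $n$-fold product of $\nu$ from \eqref{eqn--nu}. Each factor $Q(Y_i(a) \leq y_i, Z_i \leq z_i, S_i = s_i)$ is, by definition of the marginal density $q_a(\cdot, \cdot; Q)$ of $(Y(a), Z)$, the integral of $q_a$ against $\mu_a \times \mu_Z$ over the corresponding region, and the strata indicator $\{S_i = s_i\}$ is enforced automatically once we impose the restriction $\mathbf{s}_n = (\mathbb{S}(z_1), \dots, \mathbb{S}(z_n))$; assembling the $n$ factors against the product measure delivers \eqref{eqn--data-joint-density-prod}. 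I expect the main obstacle to be conceptual rather than computational: one must carry the observed-outcome substitution $Y_{ni} = Y_i(a_i)$ correctly, remembering that it is valid only on $\{\mathbf{A}_n = \mathbf{a}_n\}$, and one must invoke the conditional independence with the right conditioning $\sigma$-algebra---the strata profile $\mathbf{S}_n$, not the raw covariates---so that $\alpha_n$ separates cleanly from the population-dependent factors and the resulting density correctly splits into a known assignment part and a product of per-observation potential-outcome densities.
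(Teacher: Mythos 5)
Your proposal is correct and follows essentially the same route as the paper's proof: condition on the strata profile, substitute the potential outcome for the observed outcome on $\{\mathbf{A}_{n} = \mathbf{a}_{n}\}$, invoke the conditional independence in \th\ref{asm--treat-strat} \ref{asm--treat-exog} to separate $\alpha_{n}$, factor across observations by the i.i.d.\ assumption, and finish with the exponent identity for $a_{i} \in \{0, 1\}$. The only cosmetic difference is that you recombine the conditional $\mathbf{W}_{n}$-probability with $\Pr(\mathbf{S}_{n} = \mathbf{s}_{n})$ before factoring, whereas the paper factors conditionally and handles the strata marginal as a separate product, but the two computations are equivalent.
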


The main consequence of \th\ref{lem--prod-struct} is as follows. Let
\(\mathcal{P} = \{\mathbf{P}_{n} : n \in \mathbb{N}\}\) be the sequence of
families of distributions for the observed data \(\mathbf{X}_{n}\) such that for
each \(n \in \mathbb{N}\), every member of \(\mathbf{P}_{n}\) is determined by
an i.i.d. sample from some distribution in \(\mathbf{Q}\) (as defined in
\th\ref{asm--Q}), the observed outcome equation \eqref{eqn--obsoutproc} and a
treatment assignment mechanism that satisfies \th\ref{asm--treat-strat}. Any
member of \(\mathbf{P}_{n}\) must then satisfy the product structure in
\eqref{eqn--data-joint-dist-prod} and \eqref{eqn--data-joint-density-prod}.
Furthermore, different covariate adaptive randomization schemes give rise to
different sequences of families \(\mathcal{P}\) that only vary according to the
choice of the sequence of conditional treatment assignment distributions
\(\left\{ \alpha_{n} (\cdot) : n \in \mathbb{N} \right\}\).

\begin{remark}[Implications for Simple Stratified Random Assignment]
\th\label{rem--prod-struct-ssra}
Consider the family \(\mathbf{P}\) defined by \th\ref{asm--car-pop-ssra} in
\th\ref{rem--identification-ate-P}. An immediate additional consequence of
\th\ref{lem--prod-struct} is that any distribution \(P \in \mathbf{P}\) has a
density against \(\nu\) of the form
\begin{equation*}
  p (y, a, z; P) = \left[ q_{1} (y, z) \cdot \pi (\mathbb{S} (z)) \right]^{a}
  \left[ q_{0} (y, z) \cdot \pi (1 - \mathbb{S} (z)) \right]^{1 - a}.
\end{equation*}
Recall also that in \th\ref{eg--sra}, we consider a treatment assignment rule
that essentially produces an i.i.d. sample of size \(n\) from a distribution in
\(\mathbf{P}\). In this case, we would of course have that \(\mathbf{P}_{n} =
\mathbf{P}^{n}\) where the latter is the set of all \(n\)-fold product measures
formed from some measure in \(\mathbf{P}\).
\end{remark}

\subsection{Deriving the semiparametric efficiency bound}

In this subsection, we first introduce and discuss definitions of parametric
submodels and regular parametric submodels which are essential concepts to the
derivation of the semiparametric efficiency bound. It is then shown in
\th\ref{lem--parametric-submodel-lan} that the log likelihood ratios in regular
parametric submodels under \(1 / \sqrt{n}\) local alternatives exhibit the local
asymptotic normality (LAN) property of
\citet{1960lecamLocallyAsymptoticallyNormal}. An efficiency bound is
established for parametric submodels
(\th\ref{lem--parametric-submodel-efficiency}). Then, via a discussion on
differentiability of the ATE in regular parametric submodels we provide an
intuitive description of how the parametric efficiency bound is extended to the
semiparametric efficiency bound presented in \th\ref{thm--speb-car}.

\begin{definition}
\th\label{def--parametric-submodel}
A parametric submodel of \(\mathcal{P}\) is a sequence of families of
distributions, \(\mathcal{P}^{0} = \left\{ \mathbf{P}_{n}^{0} : n \in \mathbb{N}
\right\}\) that satisfies the following conditions.
\begin{enumerate}[label=(\alph*)]
\item \label{def--parametric-submodel-subset}
  For every \(n \in \mathbb{N}\), \(\mathbf{P}^{0}_{n} \subseteq
  \mathbf{P}_{n}\). That is, every member, \(P_{n} \in \mathbf{P}^{0}_{n}\) is
  determined by a population distribution from a subfamily \(\mathbf{Q}^{0}
  \subseteq \mathbf{Q}\) and satisfies \th\ref{asm--treat-strat}.
\item \label{def--parameteric-submodel-densities}
  There is a \(\Theta \subseteq \mathbb{R}^{d}\) (where \(d \in \mathbb{N}\)),
  and a map \(\theta \mapsto \left( q_{0} (\cdot; \theta), q_{1} (\cdot; \theta)
  \right)\) from \(\Theta\) to \((Y (a), Z)\)-marginal densities of
  \(\mathbf{Q}^{0}\) such that every member of \(\mathbf{P}^{0}_{n}\) has a
  density (as in \th\ref{lem--prod-struct}) of the form
  \begin{align*}
    p_{n} \left( \mathbf{y}_{n}, \mathbf{a}_{n}, \mathbf{z}_{n}; \theta \right)
    =
    & \ \alpha_{n} \left( \mathbf{a}_{n} | \mathbf{s}_{n} \right) \cdot \prod_{i
      = 1}^{n} q_{1} \left( y_{i}, z_{i}; \theta \right)^{a_{i}} q_{0} \left(
      y_{i}, z_{i}; \theta \right)^{1 - a_{i}}, \\
    \text{where } \mathbf{s}_{n}^{\prime} =
    & \ \left( \mathbb{S} \left( z_{1} \right), \dots, \mathbb{S} \left( z_{n}
      \right) \right)
  \end{align*}
  Additionally, the map \(\theta \mapsto \left( q_{0} (\cdot; \theta),
  q_{1} (\cdot; \theta) \right)\) satisfies the \(Z\)-marginal
  restriction: there exists a \(\mu_{Z}\)-density function \(g (\cdot; \theta)\)
  (i.e. \(g (z; \theta) \geq 0\) for all \(z \in \mathbb{R}^{k}\) and \(\int g
  (z; \theta) \mu_{Z} (\mathrm{d} z) = 1\)) such that
  \begin{equation*}
    \int_{\mathbb{R}} q_{0} (y, z; \theta) \; \mu_{0} \left( \mathrm{d} y
    \right) = \int_{\mathbb{R}} q_{1} (y, z; \theta) \; \mu_{1} \left(
    \mathrm{d} y \right) = g (z; \theta) \quad \forall \theta \in \Theta
    \text{ for } z \in \mathbb{R}^{k} \ \mu_{Z} \text{-a.e.}
  \end{equation*}
  We write \(P_{\theta, n}\) for the element of \(\mathbf{P}_{n}^{0}\) whose
  density is \(p_{n} (\cdot; \theta)\).
\item \label{def--parameteric-submodel-contains-true}
  The population subfamily \(\mathbf{Q}^{0}\) contains the true distribution
  \(Q_{0}\) so that \(P_{0, n} \in \mathbf{P}_{n}^{0}\). The parametrization
  \(\theta \mapsto p_{n} (\cdot; \theta)\) identifies the true distribution so
  that there is a unique \(\theta_{0} \in \Theta\) such that \(P_{0, n} =
  P_{\theta_{0}, n}\).
\end{enumerate}
\end{definition}

In \th\ref{def--parametric-submodel}, condition
\ref{def--parametric-submodel-subset} requires that a parametric submodel be a
subset of the overall semiparametric model. Condition
\ref{def--parameteric-submodel-densities} requires that the parametrization
\(\theta \mapsto p_{n} (\cdot; \theta)\) produces densities of the same form as
those in \th\ref{lem--prod-struct}. The additional \(Z\)-marginal restriction
ensures that both densities \(q_{0} (\cdot; \theta)\) and \(q_{1}
(\cdot; \theta)\) can be derived from a common distribution \(Q_{\theta} \in
\mathbf{Q}^{0}\). We do not require the submodel to specify the joint density of
\((Y (0), Y (1), Z)\) since only the marginals with respect to \((Y (0), Z)\)
and \((Y (1), Z)\) are identified by the randomized experiment (essentially due
to \eqref{eqn--data-joint-dist-prod} in \th\ref{lem--prod-struct}). Finally,
\ref{def--parameteric-submodel-contains-true} requires the parametric submodel
to contain the true distribution \(P_{0, n}\). Note that a parametric submodel
does not parameterize the conditional mass function of the treatment
assignments, \(\alpha_{n}\), since it is known and does not depend on any
population unknowns. This is different to the case of observational data
considered in \citet{1998hahnRolePropensityScore} where the propensity score
is unknown and has to be parameterized.

The idea behind the use of parametric submodels is as follows. Since \(\Theta\)
(and therefore \(\mathbf{P}_{n}\)) in \th\ref{def--parametric-submodel} is
finite dimensional, estimation of the ATE restricted to \(\mathcal{P}^{0}\)
cannot be more difficult than estimation within \(\mathcal{P}\). One can imagine
estimation of the ATE by first estimating the nuisance parameter \(\theta_{0}\)
by maximum likelihood and then forming a plug-in estimate of the ATE. If
\(\mathcal{P}^{0}\) is a parametric submodel with a well-defined efficiency
bound, any semiparametric estimator for the ATE that is consistent and
asymptotically normal under the distributions in \(\mathcal{P}\) cannot have
asymptotic variance lower than the efficiency bound under
\(\mathcal{P}^{0}\). Taking the supremum over all parametric submodels, we get a
variance lower bound for the semiparametric model. The statement about
parametric submodels having well-defined efficiency bounds requires some
qualification. Establishing efficiency requires restricting our analysis to
\emph{regular} parametric submodels (in which efficiency bounds are
well-defined) and to \emph{regular} estimators (to rule out the phenomenon of
super-efficiency). We provide the definition of a regular parametric submodel in
our case below. The definition of a regular estimator can be found in
\citet[p. 115 and p. 365]{1998vandervaartAsymptoticStatistics} or
\citet[p. 413]{1996vandervaartWeakConvergenceEmpirical}.

\begin{definition}
\th\label{def--qmd}
Let \(\mathcal{P}^{0}\) be a parametric subfamily of \(\mathcal{P}\) as in
\th\ref{def--parametric-submodel}. \(\mathcal{P}^{0}\) is called \emph{regular}
at \(\theta_{0}\) if the parametrization \(\theta \mapsto \left( q_{0} (\cdot;
\theta), q_{1} (\cdot; \theta) \right)\) satisfies the following.
\begin{enumerate}[label=(\alph*)]
\item \(\Theta\) is an bounded and open subset of \(\mathbb{R}^{d}\).
\item \label{def--qmd-differentiability}
  The maps \(\theta \to \sqrt{q_{a} (\cdot; \theta)}\) are differentiable in
  the quadratic mean at \(\theta_{0}\). That is, for each \(a \in \{0, 1\}\),
  there is a measurable function \(D_{a} : \mathbb{R}^{1 + k} \to
  \mathbb{R}^{d}\) such that \(\int D_{a}^{\prime} D_{a} \mathrm{d} \nu_{a} <
  \infty\) and
  \begin{equation*}
    \lim_{\|t\| \to 0} \|t\|^{- 2} \int \left( \sqrt{q_{a} (y, z; \theta_{0} +
    t)} - \sqrt{q_{a} \left( y, z; \theta_{0} \right)} - D_{a} (y, z)^{\prime}
    \left( \theta - \theta_{0} \right) \right)^{2} \nu_{a} (\mathrm{d} y,
    \mathrm{d} z) = 0.
  \end{equation*}
\item The information matrix at \(\theta_{0}\), \(\mathcal{I}_{a}\), is
  non-singular, where
  \begin{equation}
    \mathcal{I}_{a} = 4 \int_{\mathbb{R}^{1 + k}} D_{a} (y, z) D_{a} (y,
    z)^{\prime} \; \nu_{a} (\mathrm{d} y, \mathrm{d} z).
    \label{eqn--parametric-info-mat}
  \end{equation}
\end{enumerate}
\end{definition}

The log-likelihood in any parametric submodel is
\begin{equation}
  \begin{split}
    \ell_{n} (\theta) =
    & \ \log p_{n} \left( \mathbf{X}_{n}; \theta \right) = \log \alpha_{n}
      \left( \mathbf{A}_{n} \middle| \mathbf{S}_{n} \right) + \sum_{i = 1}^{n}
      \ell \left( Y_{n i}, A_{n i}, Z_{i}; \theta \right), \\
    \text{where } \ell \left( Y_{n i}, A_{n i}, Z_{i}; \theta \right) =
    & \ A_{n i} \log q_{1} \left( Y_{n i} , Z_{i}; \theta \right) +
      \left(1 - A_{n i} \right) \log q_{0} \left( Y_{n i}, Z_{i}; \theta
      \right).
  \end{split}
  \label{eqn--sample-loglhood}
\end{equation}
When the parametric submodel is regular, the associated score function at the
truth, \(\theta_{0}\), is
\begin{equation}
  \begin{split}
    \dot{\ell}_{n} =
    & \ \sum_{i = 1}^{n} \dot{\ell} \left( Y_{n i}, A_{n i}, Z_{i} \right), \\
    \text{where } \dot{\ell} \left( Y_{n i}, A_{n i}, Z_{i} \right) =
    & \ A_{n i} \dot{\ell}_{1} \left( Y_{n i}, Z_{i} \right) + \left( 1 - A_{n
      i} \right) \dot{\ell}_{0} \left( Y_{n i}, Z_{i} \right), \\
    \dot{\ell}_{a} (y, z) =
    & \ 2 \frac{D_{a} (y, z)}{\sqrt{q_{a} \left( y, z; Q_{0}
      \right)}} \cdot \mathbb{I} \left\{ q_{a} \left( y, z; Q_{0}
      \right) > 0 \right\}.
  \end{split}
  \label{eqn--sample-score}
\end{equation}
Note that the score in \eqref{eqn--sample-score} is expressed in ``root
density'' terms since it is technically defined by the quadratic mean
differentiability assumption in \th\ref{def--qmd}.\footnote{When the parametric
submodels are such that the densities are continuously differentiable with
respect to \(\theta\), then the usual log-derivative form and
\eqref{eqn--sample-score} coincide. That is, we can write \(\dot{\ell}_{a}
(\cdot) = \nabla_{\theta} \left[ \log q_{a} \right] \left( \cdot; \theta_{0}
\right)\).} Let \(X^{\prime} = (Y, A, Z)\) be as in the CAR target  experiment
defined by \th\ref{asm--car-pop-ssra}. Under CAR, the associated Fisher
information matrix will be shown to be
\begin{equation}
  \begin{split}
    \mathcal{I} =
    & \ \mathbb{E} \left[ \dot{\ell} (Y, A, Z) \dot{\ell} (Y, A, Z)^{\prime}
      \right] \\
    = & \ \sum_{s = 1}^{\mathcal{S}} Q_{0} (\mathbb{S} (Z) = s) \left\{
        \begin{array}{l}
          \pi (s) \mathbb{E} \left[ \dot{\ell}_{1} (Y (1), Z)
          \dot{\ell}_{1} (Y (1), Z)^{\prime} \middle| \mathbb{S}
          (Z) = s \right] \\
          + (1 - \pi (s)) \mathbb{E} \left[ \dot{\ell}_{0} (Y (0), Z)
          \dot{\ell}_{0} (Y (0), Z)^{\prime} \middle| \mathbb{S} (Z) = s \right]
        \end{array}
        \right\}.
  \end{split}
  \label{eqn--strata-info-mat}
\end{equation}
Note that the information matrix, \(\mathcal{I}\) in
\eqref{eqn--strata-info-mat} is related to the marginal information matrices
\(\mathcal{I}_{a}\) for \(a \in \{0, 1\}\) in \eqref{eqn--parametric-info-mat}
but adjusts these for stratification. Furthermore the score \(\dot{\ell}\) and
information matrix \(\mathcal{I}\) both depend on the choice of parametric
submodel since the quadratic mean derivatives \(D_{a}\) depend on the choice of
parametric submodel. \th\ref{lem--parametric-submodel-lan} below shows that the
main consequence of \th\ref{def--qmd} is that the likelihood ratio in any
regular parametric submodel has the usual quadratic expansion and exhibits the
local asymptotic normality (LAN) property of
\citet{1960lecamLocallyAsymptoticallyNormal} under general CAR procedures.

\begin{lemma}
\th\label{lem--parametric-submodel-lan}
Suppose the parametric submodel \(\mathcal{P}^{0}\) of \(\mathcal{P}\) is
regular at the true value \(\theta_{0}\). Then, \(\mathcal{P}^{0}\) has the
local asymptotic normality property of
\citet{1960lecamLocallyAsymptoticallyNormal} at \(\theta_{0} \in \Theta\). In
particular, for \(t \in \mathbb{R}^{d}\) such that \(\theta_{0} + n^{-
\frac{1}{2}} t \in \Theta\) for all \(n \in \mathbb{N}\), let the quadratic
remainder \(R_{n} \left( \theta_{0}, t \right)\) be defined by
 \begin{equation}
  \ell_{n} \left( \theta_{0} + n^{- \frac{1}{2}} t \right) - \ell_{n}
  \left( \theta_{0} \right) = t^{\prime} \frac{1}{\sqrt{n}} \dot{\ell}_{n} -
  \frac{1}{2} t^{\prime} \mathcal{I} t + R_{n} \left( \theta_{0}, t \right)
  \label{eqn--lr-quad-breakdown}
\end{equation}
where \(\dot{\ell}_{n}\) is as in \eqref{eqn--sample-score} and
\(\mathcal{I}\) is defined as in \eqref{eqn--strata-info-mat}. Let \(G
\sim \mathcal{N} \left( \mathbf{0}_{d}, \mathcal{I} \right)\). Then the
following hold.
\begin{enumerate}[label=(\alph*)]
\item\label{lem--parametric-submodel-lan-wc}
  Under \(P_{0, n}\), \(n^{- \frac{1}{2}} \dot{\ell}_{n}
  \overset{\mathrm{d}}{\to} G\).
\item\label{lem--parametric-submodel-lan-uan}
  For any given \(M > 0\),
  \(R_{n} \left( \theta_{0}, t \right) \overset{\mathrm{p}}{\to} 0\) uniformly
  over \(\|t\| \leq M\) under \(P_{0, n}\). That is, for any \(\varepsilon >
  0\),
  \begin{equation}
    \lim_{n \to \infty} \sup_{\|t\| \leq M} P_{0, n} \left( \left| R_{n} \left(
    \theta_{0}, t \right) \right| > \varepsilon \right) = 0.
  \end{equation}
\end{enumerate}
\end{lemma}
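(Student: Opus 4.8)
The plan is to prove both claims by first clearing away the treatment-assignment factor and then reducing everything to two primitive statistics. Since the conditional mass function $\alpha_{n}$ does not depend on $\theta$ (it enters \eqref{eqn--sample-loglhood} as a term constant in $\theta$), it cancels in the log-likelihood ratio on the left of \eqref{eqn--lr-quad-breakdown}; using $Y_{ni} = Y_{i}(1)$ on $\{A_{ni} = 1\}$ and $Y_{ni} = Y_{i}(0)$ on $\{A_{ni} = 0\}$, the ratio becomes a treated sum plus a control sum of per-observation log-density ratios in the i.i.d.\ potential outcomes $W_{i}$. The two workhorses I would establish at the outset are: (i) a \emph{treatment-weighted law of large numbers}, $\frac{1}{n} \sum_{i} A_{ni} h(W_{i}) \overset{\mathrm{p}}{\to} \mathbb{E}[\pi(\mathbb{S}(Z)) h(W)]$ for integrable $h$ (and its control analogue with $1 - \pi$), proved by conditioning on $(\mathbf{A}_{n}, \mathbf{S}_{n})$, taking the within-stratum i.i.d.\ mean, and invoking $N_{n}(1,s)/n \overset{\mathrm{p}}{\to} \pi(s) Q_{0}(\mathbb{S}(Z) = s)$ from \th\ref{asm--treat-strat} \ref{asm--treat-prop}; and (ii) the identity $\mathbb{E}[\dot{\ell}_{1}(Y(1), Z) \mid Z] = \mathbb{E}[\dot{\ell}_{0}(Y(0), Z) \mid Z]$, which I would derive from the $Z$-marginal restriction in \th\ref{def--parametric-submodel} by differentiating $\int q_{a}(y, z; \theta)\, \mu_{a}(\mathrm{d}y) = g(z; \theta)$ in the quadratic-mean sense of \th\ref{def--qmd}.

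For part \ref{lem--parametric-submodel-lan-wc}, by the Cram\'er--Wold device it suffices to treat a scalar linear combination of $n^{- 1/2} \dot{\ell}_{n}$. The key device is again to condition on $(\mathbf{A}_{n}, \mathbf{S}_{n})$: under \th\ref{asm--treat-strat} \ref{asm--treat-exog} the $W_{i}$ remain independent given $(\mathbf{A}_{n}, \mathbf{S}_{n})$, so the score summands $A_{ni} \dot{\ell}_{1}(Y_{i}(1), Z_{i}) + (1 - A_{ni}) \dot{\ell}_{0}(Y_{i}(0), Z_{i})$ become independent across $i$ and a conditional Lindeberg central limit theorem applies (finitely many strata, and finite score second moments from \th\ref{def--qmd}). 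I would show the conditional variance converges in probability to a nonrandom constant using only workhorse (i), while the conditional mean is $\mathbf{S}_{n}$-measurable and, by identity (ii), reduces to an i.i.d.\ sum of within-stratum marginal scores that is itself asymptotically Gaussian; a characteristic-function computation then combines the random centering with the conditional Gaussian fluctuation into a single normal limit. The final bookkeeping step is to verify that the resulting variance collapses exactly to $\mathcal{I}$ in \eqref{eqn--strata-info-mat}; identity (ii) is precisely what cancels the within-stratum cross terms and makes the limit free of the assignment mechanism.

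For part \ref{lem--parametric-submodel-lan-uan}, I would expand each per-observation log-ratio by quadratic-mean differentiability, writing it as $2 \log(1 + \tfrac{1}{2} g_{a,i})$ with $g_{a,i} = 2 \big( \sqrt{q_{a}(\cdot; \theta_{0} + n^{- 1/2} t)/q_{a}(\cdot; \theta_{0})} - 1 \big)$ and $2 \log(1 + \tfrac{1}{2} x) = x - \tfrac{1}{4} x^{2} + o(x^{2})$. This mirrors the i.i.d.\ derivation of the local asymptotic normality expansion in \citet{1998vandervaartAsymptoticStatistics}, except that every empirical average is replaced by a treatment-weighted average handled by workhorse (i). The treated sum then splits into a linear term producing $\frac{1}{\sqrt{n}} t^{\prime} \sum_{i} A_{ni} \dot{\ell}_{1}(Y_{i}(1), Z_{i})$, a $- \tfrac{1}{4} t^{\prime} \mathcal{I} t$ contribution from the Hellinger-type mean of $g_{1,i}$, a further $- \tfrac{1}{4} t^{\prime} \mathcal{I} t$ from $\tfrac{1}{4} \sum_{i} A_{ni} g_{1,i}^{2} \overset{\mathrm{p}}{\to} \tfrac{1}{4} \mathbb{E}[\pi(\mathbb{S}(Z)) (t^{\prime} \dot{\ell}_{1})^{2}]$, and a cubic remainder that is negligible because $\max_{i} |g_{a,i}| \overset{\mathrm{p}}{\to} 0$; adding the control sum yields \eqref{eqn--lr-quad-breakdown} with $R_{n}(\theta_{0}, t) \overset{\mathrm{p}}{\to} 0$ for each fixed $t$.

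To upgrade this to uniform convergence over $\|t\| \leq M$, I would establish asymptotic equicontinuity of $t \mapsto R_{n}(\theta_{0}, t)$ --- the increments being controlled in $L_{2}$ by $\|t - t^{\prime}\|$ through the quadratic-mean modulus --- and combine it with pointwise convergence on a finite cover of the compact $\{\|t\| \leq M\}$. I expect the main obstacle throughout to be the within-stratum dependence of $\mathbf{A}_{n}$, which rules out any off-the-shelf i.i.d.\ LAN theorem and is exactly what the partial-sum arguments of \citet{2018bugniInferenceCovariateAdaptiveRandomization, 2019bugniInferenceCovariateAdaptiveRandomization} are built to handle. The two ingredients that defuse it are the conditioning device, which restores independence of the $W_{i}$ given $(\mathbf{A}_{n}, \mathbf{S}_{n})$, and the marginal-score identity (ii), which annihilates the mechanism-dependent within-stratum covariances and thereby delivers a limit --- and hence a local asymptotic normality constant $\mathcal{I}$ --- that does not depend on the particular randomization scheme.
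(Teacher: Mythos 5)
Your part \ref{lem--parametric-submodel-lan-wc} argument is a sound alternative to the paper's: the paper couples the within-cell sums to i.i.d.\ arrays and applies a partial-sums FCLT (\th\ref{lem--coupling-lemma}, \th\ref{lem--car-vector-clt}), whereas you condition on \((\mathbf{A}_{n}, \mathbf{S}_{n})\), apply a conditional Lindeberg CLT, and fold in the \(\mathbf{S}_{n}\)-measurable centering by a characteristic-function computation. Both deliver the same limit, and your identity (ii) plays the same role as the paper's observation that \(r(s) = \mathbb{E}[\dot{\ell}_{a}(Y(a), Z) \mid \mathbb{S}(Z) = s]\) does not depend on \(a\).

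The gap is in part \ref{lem--parametric-submodel-lan-uan}, at the step where you assert that the ``Hellinger-type mean'' of the treatment-weighted \(g_{a,i}\)'s contributes a deterministic \(-\tfrac{1}{4} t^{\prime} \mathcal{I} t\). In the i.i.d.\ argument this step rests on the normalization \(\int q_{a}(\cdot\,; \theta)\, \mathrm{d}\nu_{a} = 1\), which converts the first-order integral \(\int \sqrt{q_{a,\theta_{n}} q_{a,0}}\) into a purely second-order Hellinger quantity controlled exactly by quadratic-mean differentiability. Once you weight by \(A_{ni}\), the relevant conditional mean is \(\sum_{s} N_{n}(1, s)\, \mathbb{E}[g_{1,i} \mid \mathbb{S}(Z_{i}) = s]\), and the stratum-restricted integral \(\int_{\mathbb{S}(z) = s} q_{1}(y, z; \theta)\, \nu_{1}(\mathrm{d} y, \mathrm{d} z) = \phi(s; \theta)\) is \emph{not} constant in \(\theta\); consequently \(\mathbb{E}[g_{1,i} \mid \mathbb{S}(Z_{i}) = s] = n^{-1/2} t^{\prime} r(s) + o(n^{-1/2})\) has a non-vanishing first-order part, and after multiplying by \(N_{n}(1,s) \asymp n\) the arm-specific remainder is only \(o(\sqrt{n})\), not \(o_{\mathrm{p}}(1)\). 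The divergence is removed only when (i) the leading \(n^{-1/2} t^{\prime} \sum_{i} r(S_{i})\) piece is matched against the conditional mean of the linear score term, (ii) the two arms are added so the weights collapse to \(N_{n}(s)\), and (iii) the global identities \(\sum_{s} \phi(s;\theta) = 1\) and \(\sum_{s}\phi(s) r(s) = 0\) are combined with \(N_{n}(s)/n - \phi(s) = O_{\mathrm{p}}(n^{-1/2})\) to kill the residual \(o(n^{-1/2})\) terms. None of this is in your proposal, and your workhorse (i) --- a law of large numbers for a fixed integrable \(h\) --- cannot supply it, nor does it cover the triangular-array quadratic term \(\sum_{i} A_{ni} g_{1,i}^{2}\). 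The paper avoids the problem entirely by reparametrizing into the stratum-conditional densities \(\widetilde{q}_{a}(\cdot \mid s; \theta)\), which do integrate to one on each stratum, proving they inherit differentiability in quadratic mean via a quotient rule (\th\ref{lem--qmd-strata-cond-density}), and then invoking the classical i.i.d.\ LAN remainder bound cell by cell along the random indices \(N_{n}(a,s)\); that reparametrization, or an explicit cross-arm cancellation argument of the sort sketched above, is the missing ingredient.
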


The proof of \th\ref{lem--parametric-submodel-lan} in the appendix of this paper
uses the tools developed
\citet{2018bugniInferenceCovariateAdaptiveRandomization} and
\citet{2019bugniInferenceCovariateAdaptiveRandomization} to adapt the proof
of Proposition 2.1.2 in the appendix of
\citet{1998bickelEfficientAdaptiveEstimation} to the context of CAR. In
particular, a partial-sums empirical process argument is used to establish
conclusion \ref{lem--parametric-submodel-lan-wc} in
\th\ref{lem--parametric-submodel-lan}. Using similar arguments, the sample
information matrix is shown to converge in probability to \(\mathcal{I}\). The
remainder \(R_{n} \left( \theta_{0}, t \right)\) summarizes both the error
replacing the sample information matrix with its population (or limiting)
analogue as well as the error inherent in a second-order Taylor expansion. The
latter is shown to be negligible as well. Under the LAN phenomenon,
\th\ref{lem--parametric-submodel-efficiency} below establishes an efficiency
result for estimation of the nuisance parameter \(\theta_{0}\).

\begin{lemma}
\th\label{lem--parametric-submodel-efficiency}
Suppose \(\mathcal{P}^{0}\) is a parametric submodel of
\(\mathcal{P}\) that is regular at \(\theta_{0}\). Let \(\mathcal{I}\) be the
Fisher information matrix defined in \eqref{eqn--strata-info-mat}. Then
\(\mathcal{I}\) is non-singular and \(\mathcal{I}^{- 1}\) is the efficiency
bound for estimation of \(\theta_{0}\). In particular, let \(G \sim \mathcal{N}
\left( \mathbf{0}_{d}, \mathcal{I}^{- 1} \right)\). Then the following hold.
\begin{enumerate}[label=(\alph*)]
\item \label{lem--parametric-submodel-efficiency-convolution}
  If \(\widehat{\theta}_{n}\) is a sequence of regular estimators of
  \(\theta_{0}\), then
  \begin{equation}
    \sqrt{n} \left( \widehat{\theta}_{n} - \theta_{0} - \frac{t}{\sqrt{n}}
    \right) \overset{\mathrm{d}}{\to} G + U.
    \label{eqn--convolution-regular}
  \end{equation}
  where \(U\) is a random \(\mathbb{R}^{d}\)-vector independent to \(G\) that is
  specific to the estimator sequence \(\widehat{\theta}_{n}\).
\item \label{lem--parametric-submodel-efficiency-lam}
  Let \(\mathcal{L} : \mathbb{R}^{d} \to \mathbb{R}\) be a bowl-shaped loss
  function, i.e. \(\mathcal{L}\) is non-negative, satisfies \(\mathcal{L} (y) =
  \mathcal{L} (- y)\) and has convex sublevel sets. If \(\widehat{\theta}_{n}\)
  is any sequence of estimators of \(\theta_{0}\), then
  \begin{equation}
    \lim_{M \to \infty} \liminf_{n \to \infty} \ \sup_{\theta_{n} \in \Theta,
    \sqrt{n} \left\| \theta_{n} - \theta_{0} \right\| \leq M}
    \mathbb{E}_{P_{\theta_{n}, n}} \left[ \mathcal{L} \left( \sqrt{n} \left(
    \widehat{\theta}_{n} - \theta_{n} \right) \right) \right] \geq \mathbb{E}
    \left[ \mathcal{L} \left( G \right) \right].
    \label{eqn--theta-lam}
  \end{equation}
\end{enumerate}
\end{lemma}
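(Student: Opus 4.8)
The plan is to derive the entire statement from the local asymptotic normality already proved in \th\ref{lem--parametric-submodel-lan}, feeding it into the classical convolution and local asymptotic minimax theorems of Le Cam--H\'ajek. Before that, I would settle the non-singularity of \(\mathcal{I}\) on its own, since \th\ref{def--qmd} only assumes non-singularity of the marginal information matrices \(\mathcal{I}_a\). The argument is a short positive-semidefiniteness computation: reading off the representation \eqref{eqn--strata-info-mat}, for any \(v \in \mathbb{R}^d\) the quadratic form \(v' \mathcal{I} v\) is a sum over strata of \(Q_0(\mathbb{S}(Z)=s)\) times \(\pi(s)\, v' \mathbb{E}[\dot{\ell}_1 \dot{\ell}_1' \mid \mathbb{S}(Z)=s]\, v + (1-\pi(s))\, v' \mathbb{E}[\dot{\ell}_0 \dot{\ell}_0' \mid \mathbb{S}(Z)=s]\, v\), each term non-negative. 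If \(v' \mathcal{I} v = 0\) then every summand vanishes; since \(\pi(s) \in (0,1)\) strictly by \th\ref{asm--car-pop-ssra}, this forces \(v' \mathbb{E}[\dot{\ell}_a \dot{\ell}_a' \mid \mathbb{S}(Z)=s]\, v = 0\) for each \(a\) and each stratum of positive probability, and averaging over strata recovers \(v' \mathcal{I}_a v = 0\) (using \(\mathcal{I}_a = \mathbb{E}[\dot{\ell}_a \dot{\ell}_a']\) from \eqref{eqn--sample-score}--\eqref{eqn--parametric-info-mat}), contradicting the assumed non-singularity unless \(v = 0\).

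With \(\mathcal{I}\) invertible, I would invoke \th\ref{lem--parametric-submodel-lan} directly: it supplies exactly the two hypotheses that define a LAN sequence at \(\theta_0\), namely \(n^{-1/2}\dot{\ell}_n \overset{\mathrm{d}}{\to} \mathcal{N}(\mathbf{0}_d, \mathcal{I})\) under \(P_{0,n}\) and the uniform-in-\(t\) negligibility of the quadratic remainder \(R_n(\theta_0, t)\). In Le Cam's language these say that the localized experiments converge to the Gaussian shift experiment with information \(\mathcal{I}\). Part \ref{lem--parametric-submodel-efficiency-convolution} then follows from the convolution theorem of \citet{1970hajekCharacterizationLimitingDistributions} in its LAN formulation (see also \citet[Chapter 8]{1998vandervaartAsymptoticStatistics}): every regular estimator sequence has a limiting law equal to the convolution of the \(\mathcal{N}(\mathbf{0}_d, \mathcal{I}^{-1})\) law of \(G\) with an estimator-specific \(U\) independent of \(G\). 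Part \ref{lem--parametric-submodel-efficiency-lam} follows identically from the local asymptotic minimax theorem of \citet{1972hajekLocalAsymptoticMinimax} (see also \citet[Chapter 8]{1998vandervaartAsymptoticStatistics}), which for any bowl-shaped loss bounds the local minimax risk below by \(\mathbb{E}[\mathcal{L}(G)]\). Both conclusions identify \(\mathcal{I}^{-1}\) as the efficiency bound for \(\theta_0\).

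The hard part is already behind us in \th\ref{lem--parametric-submodel-lan}; what remains is to confirm that the abstract hypotheses of the H\'ajek theorems are met verbatim. The point worth emphasizing is that these theorems are statements about the limiting Gaussian experiment and are insensitive to the dependence structure of the data: they require only the quadratic LAN expansion and the score's asymptotic normality, both of which hold here despite the non-i.i.d. treatment assignments induced by a general CAR mechanism. The one technical passage I would spell out is the move to the local alternatives \(P_{\theta_0 + t/\sqrt{n}, n}\) appearing in \eqref{eqn--convolution-regular} and \eqref{eqn--theta-lam}: the LAN expansion together with Le Cam's first and third lemmas delivers mutual contiguity of these local sequences relative to \(P_{0,n}\) and pins down the limiting behavior of the score under them, which is precisely the input the convolution and minimax theorems consume.
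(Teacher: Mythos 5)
Your proposal is correct and follows essentially the same route as the paper: the identical positive-semidefiniteness/contradiction argument for non-singularity of \(\mathcal{I}\) (reducing to singularity of the marginal \(\mathcal{I}_{a}\), contradicting regularity in \th\ref{def--qmd}), followed by a direct appeal to the LAN property from \th\ref{lem--parametric-submodel-lan} and the H\'ajek convolution and local asymptotic minimax theorems. The only cosmetic difference is ordering — the paper defers the non-singularity argument to the end — so no further comparison is needed.
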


\th\ref{lem--parametric-submodel-efficiency} establishes that the usual
Cram\'er-Rao parametric information bound holds for estimation of \(\theta_{0}\)
in regular parametric submodels. This is true even under general CAR procedures
that may result in dependence within observed data.
\th\ref{lem--parametric-submodel-efficiency} is a direct consequence of the LAN
property established in \th\ref{lem--parametric-submodel-lan}. In
\th\ref{lem--parametric-submodel-efficiency},
conclusion \ref{lem--parametric-submodel-efficiency-convolution}
follows from the celebrated convolution theorem of
\citet{1970hajekCharacterizationLimitingDistributions} and
conclusion \ref{lem--parametric-submodel-efficiency-lam} follows from the local
asymptotic minimax theorem of \citet{1972hajekLocalAsymptoticMinimax}.
These are standard tools used to justify the parametric Cram\'er-Rao bound. Note
that in \th\ref{lem--parametric-submodel-efficiency} the random vectors \(G\)
and \(U\) are again specific to the parametric submodel chosen and in the case
of \(U\), also specific to the choice of estimator sequence.

\th\ref{lem--parametric-submodel-efficiency} establishes an efficiency bound for
estimation of the nuisance parameter \(\theta\) whereas the content of
\th\ref{thm--speb-car} is about estimation of the ATE, \(\beta\) in
\eqref{eqn--ate-P}. To relate \th\ref{lem--parametric-submodel-efficiency} to a
version of \th\ref{thm--speb-car}, we have to first show that \(\beta\) is a
pathwise differentiable parameter (see \citet[Definition 3.3.1,
p. 57]{1998bickelEfficientAdaptiveEstimation} or \citet[Section
3]{1990neweySemiparametricEfficiencyBounds}). We do this by considering regular
parametric submodels of the target CAR experiment, \(\mathbf{P}\),
as defined by \th\ref{asm--car-pop-ssra} and
\th\ref{rem--identification-ate-P}. For a regular parametric submodel
\(\mathbf{P}_{0} = \left\{ P_{\theta} : \theta \in \Theta \right\}\) of
\(\mathbf{P}\), the average treatment effect parameter is
\begin{equation*}
  \gamma (\theta) := \beta \left( P_{\theta} \right) = \int_{\mathbb{R}^{1 + k}}
  y \cdot q_{1} (y, z; \theta) \; \nu_{1} (\mathrm{d} y, \mathrm{d} z) -
  \int_{\mathbb{R}^{1 + k}} y \cdot q_{0} (y, z; \theta) \; \nu_{0} (\mathrm{d}
  y, \mathrm{d} z).
\end{equation*}
Adapting the pathwise differentiability arguments in
\citet{1998hahnRolePropensityScore} to the context of a CAR experiment, the
derivative (gradient) of \(\gamma (\theta)\) at \(\theta_{0}\) in any given
regular parametric submodel can be written as
\begin{equation}
  \nabla \gamma \left( \theta_{0} \right) = \mathbb{E} \left[ \varphi_{0} (Y, A,
  Z) \cdot \dot{\ell} (Y, A, Z) \right],
  \label{eqn--path-derivative-beta}
\end{equation}
where \(\varphi_{0}\) is the efficient influence function from \eqref{eqn--eif}.
The full derivation of this adaptation is included the appendix for completeness
(\th\ref{lem--tangent-space-P,lem--pathwise-differentiability-ate}). Combining
\th\ref{lem--parametric-submodel-efficiency} and
\eqref{eqn--path-derivative-beta}, it follows via the usual ``delta method
style'' argument that the information bound for estimating the ATE in the
regular parametric submodel \(\mathcal{P}^{0}\) is
\begin{equation*}
  \mathbb{V} \left( \mathcal{P}^{0} \right) = \mathbb{E} \left[ \varphi_{0} (Y,
  A, Z) \dot{\ell} \left( Y, A, Z \right)^{\prime} \right] \mathbb{E} \left[
  \dot{\ell} \left( Y, A, Z \right) \dot{\ell} \left( Y, A, Z \right)^{\prime}
  \right]^{- 1} \mathbb{E} \left[\dot{\ell} \left( Y, A, Z \right) \varphi_{0}
  (Y, A, Z) \right].
\end{equation*}
The variant of Cauchy-Schwarz inequality for random vectors established in
\citet{1999tripathiMatrixExtensionCauchySchwarz} can be used to conclude that
\(\mathbb{V} \left( \mathcal{P}^{0} \right) \leq \mathbb{E} \left[\varphi_{0}
(Y, A, Z)^{2} \right] = \mathbb{V}_{\ast}\). This establishes that
\(\mathbb{V}_{\ast}\) in \eqref{eqn--speb-car} is an upper bound over all
parametric information bounds for estimation of the ATE. However, one still
needs to show that \(\mathbb{V}_{\ast}\) is in fact a supremum. This final step
is done by establishing that \(\varphi_{0}\) can be approximated by the score
functions of regular parametric submodels.

%%% Local Variables:
%%% mode: latex
%%% TeX-master: "../2023_semipar_eff_car"
%%% End:
% LocalWords:  SSRA SPBR ATT

% ! TEX root = ../2023_semipar_eff_car.tex

\section{Semiparametrically efficient estimation of the ATE under CAR}
\label{sec--efficient-estimation}

The efficiency bound derived in \th\ref{thm--speb-car} is a lower bound on
asymptotic variances of regular semiparametric estimators for the ATE. However,
the question remains as to whether it can be achieved. Typically, the conditions
under which estimators can achieve semiparametric efficiency bounds are stronger
than those required to derive the bounds.
\citet{1990ritovAchievingInformationBounds} provide examples for cases
when an efficiency bound can be established and shown to be finite as well as
non-singular, but even consistent estimation (let alone efficient) is impossible
without restricting the semiparametric model. One of their examples is the
partially linear model (\citet{1986engleSemiparametricEstimatesRelation},
\citet{1988robinsonRootNConsistent}). The restrictions they require are in the
form of additional smoothness conditions for the nonparametric nuisance
parameters that appear in the efficient influence function. These smoothness
conditions are to ensure that the nonparametric nuisance parameters can be
estimated consistently with a rate of convergence of at least \(n^{- 1 / 4}\)
(see also \citet{1994neweyAsymptoticVarianceSemiparametric}). Furthermore,
achieving \(n^{- 1 / 4}\)-consistency for a non-parametric estimator becomes
more difficult when there are many covariates due to the curse of
dimensionality, and thus higher order (i.e. stronger) smoothness conditions
are required.

In the context of estimating average treatment effects with the assumption of
selection on observables (or ignorability conditional on covariates) and
i.i.d. data, \citet{1998hahnRolePropensityScore} proposes nonparametric
imputation estimators. These estimators impute the potential outcome \(Y_{i}
(a)\) whenever it is unobserved by using an estimate of the predicted value \(m
\left( a, Z_{i} \right)\), say \(\widehat{m}_{n} \left( a, Z_{i} \right)\). The
difference of the imputed potential outcomes is then averaged to estimate the
ATE. \citet{1998hahnRolePropensityScore} suggests the use of series estimators
for the propensity score \(\Pr (A = 1 | Z = z)\) and the conditional
expectations \(\mathbb{E} [Y \mathbb{I} \{A = a\} | Z = z]\) which then get
combined to construct estimators for \(m_{\ast} (a, \cdot)\). Similar estimators
appear in various parts of the literature on semiparametric estimation of
treatment effects under treatment ignorability - see
\citet{2004imbensNonparametricEstimationAverage} for a review and examples with
kernel estimators for the aforementioned propensity score and conditional
expectations. The requirement of higher order smoothness conditions on the
propensity score and the conditional expectations \(m_{\ast} (a, \cdot)\) are
ubiquitous in these examples.

In our case, the propensity score does not need to be estimated since the target
proportions by strata, \((\pi (1), \dots, \pi (\mathcal{S}))\), are chosen by
the experimenter. Furthermore, we should be able to use sample treatment
proportions by strata in place of true target proportions without any issue
since these form a finite-dimensional additional nuisance parameter. This
feature makes the efficient influence function linear in the nonparametric
nuisance parameters \(m_{\ast} (a, \cdot)\). It is also straightforward
to verify that \(\varphi\) has the Neyman orthogonality or local robustness
property of \citet{2018chernozhukovDoubleDebiasedMachine} and
\citet{2020chernozhukovLocallyRobustSemiparametric} with respect to estimation
of \(m _{\ast}(a, \cdot)\). To see this, denote
\begin{equation}
  \begin{split}
    \varphi \left( y, a, z; \widetilde{m} (0, \cdot), \widetilde{m} (1, \cdot),
    b \right) =
    & \ \frac{a}{\pi (\mathbb{S} (z))} \left[ y - \widetilde{m} (1, z)
      \right] - \frac{1 - a}{1 - \pi (\mathbb{S} (z))} \left[ y - \widetilde{m}
      (0, z) \right] \\
    & + \left[ \widetilde{m} (1, z) - \widetilde{m} (0, z) - b \right]
  \end{split}
  \label{eqn--eif-moment-cond}
\end{equation}
so that the efficient influence function satisfies \(\varphi_{0} (\cdot) =
\varphi \left( \cdot; m_{\ast} (0, \cdot), m_{\ast} (1, \cdot), \beta_{0}
\right)\). If \(m_{\ast} (0, \cdot), m_{\ast} (1, \cdot)\) were known, then we
could estimate \(\beta\) by solving the sample equivalent of the moment
condition
\begin{equation*}
  \mathbb{E} \left[ \varphi \left( Y, A, Z; m_{\ast} (0, \cdot), m_{\ast} (1,
  \cdot), \beta_{0} \right) \right] = 0.
\end{equation*}
The local robustness property stems from noting that at \(\beta_{0}\),
\begin{equation*}
  \frac{d}{d \tau} \mathbb{E} \left[ \varphi \left( Y, A, Z; (1 - \tau) m_{\ast}
  (0, \cdot) + \tau \widetilde{m} (0, \cdot), (1 - \tau) m_{\ast} (1, \cdot) +
  \tau \widetilde{m} (1, \cdot), \beta_{0} \right) \middle] \right|_{\tau = 0} =
  0.
\end{equation*}
This local robustness property allows for \(n^{- \frac{1}{2}}\) consistent and
efficient estimation of the ATE by using plug-in estimates of \(m_{\ast} (a,
\cdot)\) under much weaker conditions. In particular, we do not require
smoothness conditions for \(m_{\ast} (a, \cdot)\) to achieve the efficiency
bound in \eqref{eqn--speb-car}.

In the i.i.d. case (\th\ref{eg--sra}), an ``ideal'' efficient estimator of the
ATE would be \(\widetilde{\beta}^{\ast}_{n} = \beta_{0} + n^{- 1} \sum_{i =
1}^{n} \varphi_{0} \left( Y_{n i}, A_{n i}, Z_{i} \right)\). Since adding
\(b\) to \(\varphi\) in \eqref{eqn--eif-moment-cond} removes it as an argument,
this estimator is
\begin{equation}
  \widetilde{\beta}^{\ast}_{n} = \frac{1}{n} \sum_{i = 1}^{n} \left\{
  \frac{A_{n i} \left[ Y_{n i} - m_{\ast} \left( 1, Z_{i} \right) \right]}{\pi
  \left( \mathbb{S} \left( Z_{i} \right) \right)} - \frac{\left( 1 - A_{n i}
  \right) \left[ Y_{n i} - m_{\ast} \left( 0, Z_{i} \right) \right]}{1 - \pi
  \left( \mathbb{S} \left( Z_{i} \right) \right)} + m_{\ast} \left( 1, Z_{i}
  \right) - m_{\ast} \left( 0, Z_{i} \right) \right\}.
  \label{eqn--infeasible-eif-estimator}
\end{equation}
\(\widetilde{\beta}_{n}^{\ast}\) is the familiar AIPW estimator of
\citet{1994robinsEstimationRegressionCoefficients},
\citet{1995robinsAnalysisSemiparametricRegression}, and
\citet{1999scharfsteinAdjustingNonignorableDrop} if \(m_{\ast} (a, \cdot)\) were
known. A feasible estimator must replace these with first stage estimates that
can be computed from the sample. Since it will be convenient later on, we allow
for the estimates of \(m_{\ast} (a, \cdot)\) to vary across observations. We
will also allow for the experimenter to use estimated treatment proportions
rather than the known true ones. That is, let \(\widehat{\pi}_{n} (s)\) be a
consistent estimator for \(\pi (s)\) for each \(s \in
\mathbb{N}_{\mathcal{S}}\). A feasible estimator for \(\beta_{0}\) would be
\begin{equation}
  \begin{split}
    \widehat{\beta}^{\ast}_{n} =
    & \ \frac{1}{n} \sum_{i = 1}^{n} \frac{A_{n i} \left[ Y_{n i} -
      \widehat{m}_{n i} \left( 1, Z_{i} \right) \right]}{\widehat{\pi}_{n}
      \left( \mathbb{S} \left( Z_{i} \right) \right)} - \frac{1}{n} \sum_{i =
      1}^{n} \frac{\left( 1 - A_{n i} \right) \left[ Y_{n i} - \widehat{m}_{n i}
      \left( 0, Z_{i} \right) \right]}{1 - \widehat{\pi}_{n} \left( \mathbb{S}
      \left( Z_{i} \right) \right)} \\
    & + \frac{1}{n} \sum_{i = 1}^{n} \left[ \widehat{m}_{n i} \left( 1, Z_{i}
      \right) - \widehat{m}_{n i} \left( 0, Z_{i} \right) \right].
  \end{split}
  \label{eqn--feasible-eif-estimator}
\end{equation}
It is straightforward to show that \(\widetilde{\beta}^{\ast}_{n}\) achieves the
efficiency bound \(\mathbb{V}_{\ast}\) in \eqref{eqn--speb-car} under the CAR
procedures satisfying our assumptions. Our plug-in estimators \(\widehat{m}_{n
i}\) will be cross-fitted estimators as described in
\citet{2018chernozhukovDoubleDebiasedMachine} (see their DML2 estimators). We
now provide conditions on the estimators \(\widehat{m}_{n i}\), and
\(\widehat{\pi}_{n, a} (s)\) as well as the overall model under which the
difference between \(\widehat{\beta}^{\ast}_{n}\) to
\(\widetilde{\beta}^{\ast}_{n}\) in \eqref{eqn--eif-feas-infeas-diff} are
asymptotically negligible after scaling by \(\sqrt{n}\). We start with
assumptions on a sequence of nonparametric estimators \(\widehat{m} (n, a,
\cdot; \cdot)\) from which \(\widehat{m}_{n i} (a, \cdot)\) will be constructed.

\begin{assumption}
\th\label{asm--mhat-L2}
For \(a \in \{0, 1\}\), \(\widehat{m}\left( n, a, z; y_{1} (a), z_{1}, \dots,
y_{n} (a), z_{n} \right)\) is an estimator sequence satisfying
\begin{equation}
  \lim_{n \to \infty} \mathbb{E}_{Q^{n + 1}} \left[ \left( \widehat{m} \left( n,
  a, Z; Y_{1} (a), Z_{1}, \dots, Y_{n} (a), Z_{n} \right) - m (a, Z; Q)
  \right)^{2} \right] = 0,
  \label{eqn--mhat-univ-L2-consistent}
\end{equation}
when \(Z\), \(\left\{ Y_{i} (0), Y_{i} (1), Z_{i} \right\}_{i = 1}^{n}\) are
i.i.d. with distribution \(Q \in \mathbf{Q} \left( \widehat{m} \right)\) with
\(\mathbf{Q} \left( \widehat{m} \right) \subseteq \mathbf{Q}\). In the above,
\(m (a, z; Q) = \mathbb{E}_{Q} \left[ Y (a) | Z = z \right]\) and
\(\mathbb{E}_{Q^{n + 1}}\) is expectation under the product measure \(Q^{n +
1}\). Furthermore, \(Q_{0} \in \mathbf{Q} \left( \widehat{m} \right)\).
\end{assumption}

The estimator \(\widehat{m} (n, a \cdot; \cdot)\) is allowed to use the entire
sample of potential outcomes for \(a\) as well as the covariates. Of course, all
of these are not available - \(\widehat{m} (n, a \cdot; \cdot)\) can only be
computed from the treatment group for treatment \(a\). This will be accounted
for later on when we describe the cross-fitting procedure that leads to
\(\widehat{m}_{n i}\). \th\ref{asm--mhat-L2} requires that \(\widehat{m}\) be
\(L_{2}\) consistent for the true conditional expectation \(m (\cdot; Q)\) when
the data are generated according to \(Q\) which belongs to an appropriately
chosen subset \(\mathbf{Q} \left( \widehat{m} \right) \subseteq
\mathbf{Q}\) which also contains \(Q_{0}\). Note that the \(L_{2}\) convergence
rate of \(\widehat{m}\) can be arbitrarily slow. For our purposes, simple
consistency of this kind is enough. However, \(\mathbf{Q} \left( \widehat{m}
\right)\) can be formed by restrictions that impose a rate condition on
\(\widehat{m}\). For instance, the experimenter might impose
smoothness conditions required in existing literature for the particular
\(\widehat{m}\) to obtain particular rates - e.g. a H\"older or Sobolev ball. If
\(\widehat{m}\) is a series estimator, these can be found for
example in \citet{1997neweyConvergenceRatesAsymptotic} and
\citet{2015belloniSomeNewAsymptotic}. For nonlinear sieves such as
neural networks, one can see for
\citet{1994barronApproximationEstimationBounds},
\citet{1999chenImprovedRatesAsymptotic},
\citet{2020schmidt-hieberNonparametricRegressionUsing} and
\citet{2021farrellDeepNeuralNetworks}. For kernel
estimators with uniform rates, see \citet{1996masryMultivariateLocalPolynomial}
and \citet{2008hansenUniformConvergenceRates}. Other restrictions may be
functional form restrictions, e.g., random forests are known to be \(L_{2}\)
consistent for additive models - see
\citet{2015scornetConsistencyRandomForests}. When \(\mathbf{Q} \left(
\widehat{m} \right) = \mathbf{Q}\), we will say that \(\widehat{m}\) is
\emph{universally \(L_{2}\) consistent}. This is a property enjoyed by the
Nadaraya-Watson Kernel estimator - see for instance
\citet{1980devroyeDistributionFreeConsistencyResults} and
\citet{1980spiegelmanConsistentWindowEstimation}. Next, we state assumptions
that describe the cross-fitting approach to be used here.

\begin{assumption}
\th\label{asm--sample-split}
Let \(J \in \mathbb{N} \setminus \{1\}\) be given. Let \(\mathbf{U}_{n} =
\left\{ U_{n} (a, s) : s \in \mathbb{N}_{\mathcal{S}}, a \in \{0, 1\} \right\}\)
be a set of independent \(\mathrm{Uniform} ([0, 1])\) random variables on
\((\Omega, \mathcal{F}, \Pr)\) such that \(\mathbf{U}_{n} \indep \left(
\mathbf{W}_{n}, \mathbf{A}_{n} \right)\). Then,
\begin{equation*}
  \left\{ \mathcal{G}_{n, j} (a, s) : j \in \mathbb{N}_{J}, a \in \{0, 1\}, s
  \in \mathbb{N}_{\mathcal{S}} \right\}
\end{equation*}
are random subsets of \(\mathbb{N}_{n}\) satisfying the following for each \(s
\in \mathbb{N}_{s}\) and \(a \in \{0, 1\}\).
\begin{enumerate}
\item \label{asm--sample-split-indep}
  \(\mathcal{G}_{n 1} (a, s), \dots, \mathcal{G}_{n J} (a, s)\) depend only on
  \(U_{n} (a, s)\) and \(N_{n} (a, s)\).\footnote{Formally, \(\mathcal{G}_{n 1}
  (a, s), \dots, \mathcal{G}_{n J} (a, s)\) are measureable with respect to the
  \(\sigma\)-algebra generated by the pair \(U_{n} (a, s), N_{n} (a, s)\)}
\item \label{asm--sample-split-size}
  For each \(s \in \mathbb{N}_{\mathcal{S}}\) and \(a \in \{0, 1\}\),
  \(\mathcal{G}_{n 1} (a, s), \dots, \mathcal{G}_{n J} (a, s)\) forms a
  partition of \(\left\{ i \in \mathbb{N}_{n} : A_{n i} = a, S_{i} = s
  \right\}\) satisfying
  \begin{equation}
    \begin{split}
      \left| \mathcal{G}_{n} (a, s, j) \right| =
      & \ \left\lfloor N_{n} (a, s) / J \right\rfloor \quad j = 1, \dots, J -
        1. \\
      \left| \mathcal{G}_{n} (a, s, J) \right| =
      & \ N_{n} (a, s) - (J - 1) \cdot \left\lfloor N_{n} (a, s) / J
        \right\rfloor.
    \end{split}
    \label{eqn--sample-split-size}
  \end{equation}
\end{enumerate}
For each \(s \in \mathbb{N}_{\mathcal{S}}\) and \(j \in
\mathbb{N}_{J}\), define the \(j\)\textsuperscript{th} fold within stratum \(s\)
by
\begin{equation}
  \mathcal{G}_{n} (s, j) = \mathcal{G}_{n} (1, s, j) \cup \mathcal{G}_{n} (0,
  s, j).
  \label{eqn--treat-control-combine-sample-split}
\end{equation}
\end{assumption}

\begin{assumption}
\th\label{asm--split-m}
Let \(\left\{ \mathcal{G}_{n j} (s) : j \in \mathbb{N}_{J}, s \in
\mathbb{N}_{\mathcal{S}} \right\}\) be as in
\eqref{eqn--treat-control-combine-sample-split} in
\th\ref{asm--sample-split} and let \(\widehat{m}\) be as in
\th\ref{asm--mhat-L2}. For each \(i \in \mathbb{N}_{n}\), denote
\begin{equation}
  \Gamma_{n} (a, i) = \bigcup \left\{ \mathcal{G}_{n j} \left( a, S_{i} \right)
  : j \in \mathbb{N}_{J} \text{ such that } i \notin \mathcal{G}_{n j} \left(
  S_{i} \right) \right\}.
  \label{eqn--estimation-set-m-for-i}
\end{equation}
The estimator \(\widehat{m}_{n i}\) is defined by
\begin{equation}
  \widehat{m}_{n i} (a, z) = \widehat{m} \left( \left| \Gamma_{n} (a, i)
  \right|, a, z; \left\{ Y_{\iota} (a), Z_{\iota} : \iota \in \Gamma_{n} (a, i)
  \right\} \right).
\end{equation}
\end{assumption}

\begin{assumption}
\th\label{asm--weak-balance}
\(\widehat{\pi}_{n} (s)\) is either \(\pi (s)\) or \(N_{n} (s)^{- 1} N_{n} (1,
s)\). Furthermore, if the latter is true, under \(Q_{0}\) defined in
\th\ref{asm--Q} and \(\alpha_{n} (\cdot | \cdot)\) in \th\ref{asm--treat-strat},
for each \(s \in \mathbb{N}_{\mathcal{S}}\),
\begin{equation}
  \sqrt{n} \left| \frac{N_{n} (1, s)}{N_{n} (s)} - \pi (s) \right| =
  O_{\mathrm{p}} (1).
  \label{eqn--weak-balance}
\end{equation}
\end{assumption}

\th\ref{asm--sample-split} first requires the experimenter to split each
treatment group within a given stratum into \(J\) folds in a manner that is
independent to the overall sample using information only about the group
size. These folds are further required to satisfy an ``equal size'' requirement
specific to the treatment group within the stratum. For a given stratum, the two
treatment groups corresponding to fold \(j\) are then combined which produces
\(J\) folds for the overall stratum. We first split treatment groups by strata
into folds to ensure that a given combined fold has units in both treatment and
control groups. \th\ref{asm--split-m} requires that the estimator
\(\widehat{m}_{n i}\) for \(i\)\textsuperscript{th} observation be
constructed from \(\widehat{m}\) in \th\ref{asm--mhat-L2} using observations in
the same stratum as \(i\) but not in the same fold as \(i\). Finally,
\th\ref{asm--weak-balance} requires that the treatment proportion estimator
\(\widehat{\pi}_{n} (s)\) will be either the true treatment proportions or the
sample treatment proportions. For the latter, \th\ref{asm--weak-balance}
requires require consistency at a \(1 / \sqrt{n}\) rate. This corresponds to
requiring that \(\alpha_{n} (\cdot | \cdot)\) in \th\ref{asm--treat-strat} at
least satisfy ``weak balance'' as defined in the paragraph after
\th\ref{eg--sbr}. The key result is \th\ref{thm--efficient-estimation-car}
below.

\begin{theorem}
\th\label{thm--efficient-estimation-car}
Let \th\ref{asm--Q,asm--iid-Q,asm--treat-strat} hold. Then
 \begin{equation}
  \sqrt{n} \left( \widetilde{\beta}^{\ast}_{n} - \beta_{0} \right) =
  \frac{1}{\sqrt{n}} \sum_{i = 1}^{n} \varphi_{0} \left( Y_{n i}, A_{n i}, Z_{i}
  \right) \overset{\mathrm{d}}{\to} \mathcal{N} \left( 0, \mathbb{V}_{\ast}
  \right).
 \label{eqn--eif-gives-speb}
\end{equation}
Suppose in addition that
 \th\ref{asm--mhat-L2,asm--sample-split,asm--split-m,asm--weak-balance}
hold. Then \(\widehat{\beta}^{\ast}_{n}\) in \eqref{eqn--feasible-eif-estimator}
satisfies
\begin{equation}
  \sqrt{n} \left( \widehat{\beta}^{\ast}_{n} - \beta_{0} \right) = \sqrt{n}
  \left( \widetilde{\beta}^{\ast}_{n} - \beta_{0} \right) + o_{p} (1)
  \label{eqn--beta-star-efficient}
\end{equation}
so that \(\widehat{\beta}^{\ast}_{n}\) achieves the semiparametric efficiency
bound \(\mathbb{V}_{\ast}\).
\end{theorem}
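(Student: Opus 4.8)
The plan is to prove the two displays separately: first the asymptotic linear representation and normality of the infeasible estimator \(\widetilde\beta^\ast_n\), then the asymptotic equivalence \(\sqrt n(\widehat\beta^\ast_n - \widetilde\beta^\ast_n) = o_p(1)\). Throughout write \(S_i = \mathbb{S}(Z_i)\) and \(R_i(a) = Y_i(a) - m_\ast(a, Z_i)\). For the first display I would begin with an algebraic reduction: substituting the observed-outcome identity \eqref{eqn--obsoutproc} into the efficient influence function \eqref{eqn--eif} and collecting terms yields the pointwise identity
\[
  \varphi_0(Y_{ni}, A_{ni}, Z_i) = \xi_i + \left( A_{ni} - \pi(S_i) \right) \eta_i,
\]
where \(\xi_i = Y_i(1) - Y_i(0) - \beta_0\) and \(\eta_i = R_i(1)/\pi(S_i) + R_i(0)/(1 - \pi(S_i))\). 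Hence \(\sqrt n(\widetilde\beta^\ast_n - \beta_0) = T_{1n} + T_{2n}\) with \(T_{1n} = n^{-1/2}\sum_i \xi_i\) and \(T_{2n} = n^{-1/2}\sum_i (A_{ni} - \pi(S_i))\eta_i\). The two structural facts that drive everything are \(\mathbb E[\xi_i] = 0\) and \(\mathbb E[\eta_i \mid S_i] = 0\), the latter from \(\mathbb E[R(a) \mid Z] = 0\). The term \(T_{1n}\) is a sum of i.i.d.\ mean-zero summands with finite variance by \th\ref{asm--Q} \ref{asm--nontriv-var}, so Lindeberg--L\'evy gives \(T_{1n} \overset{\mathrm d}{\to} \mathcal N(0, \mathrm{Var}(Y(1) - Y(0)))\); it depends only on \(\mathbf W_n\).

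For \(T_{2n}\) I would first replace \(\pi(S_i)\) by the realized proportion \(\widehat\pi_n(S_i) = N_n(1, S_i)/N_n(S_i)\); the error \(\sum_s (\widehat\pi_n(s) - \pi(s))\, n^{-1/2}\sum_{i : S_i = s}\eta_i\) is \(o_p(1)\) since \(\widehat\pi_n(s) - \pi(s) = o_p(1)\) by \th\ref{asm--treat-strat} \ref{asm--treat-prop} and \(n^{-1/2}\sum_{i:S_i=s}\eta_i = O_p(1)\) (as \(\mathbb E[\eta\mid S]=0\)). After this the recentred weights \(A_{ni} - \widehat\pi_n(S_i)\) sum to zero within each stratum. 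The key move is to condition on \((\mathbf A_n, \mathbf S_n)\) and treat the conditionally independent, conditionally mean-zero \(\eta_i\) as the source of randomness: after splitting \(\xi_i\) into its stratum mean \(\mathbb E[\xi\mid S_i]\) and a within-stratum deviation, a joint Lindeberg CLT applies to \((T_{1n}, T_{2n})\) as sums of conditionally independent summands, and the conditional covariance between the \(\xi\)- and \(\eta\)-parts is \emph{exactly} zero because the recentred weights sum to zero within strata. The conditional variance of \(T_{2n}\) equals \(n^{-1}\sum_s \mathbb E[\eta^2\mid S=s]\sum_{i:S_i=s}(A_{ni} - \widehat\pi_n(s))^2\), and the identity \(\sum_{i:S_i=s}(A_{ni} - \widehat\pi_n(s))^2 = N_n(1,s)(1 - \widehat\pi_n(s))\) makes this a function of stratum counts alone, with probability limit \(V_\eta = \mathbb E[\pi(\mathbb{S}(Z))(1-\pi(\mathbb{S}(Z)))\eta^2]\) that is the same for \emph{every} admissible mechanism --- this invariance is what the partial-sums arguments of \citet{2018bugniInferenceCovariateAdaptiveRandomization, 2019bugniInferenceCovariateAdaptiveRandomization} deliver and is the technical heart of the first claim. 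Integrating out the conditioning gives joint normality of \((T_{1n}, T_{2n})\) with independent limits, and a direct computation yields \(\mathrm{Var}(Y(1)-Y(0)) + V_\eta = \mathbb{V}_{\ast}\); reassuringly the unidentified cross-moment \(\mathbb E[R(1)R(0)]\) cancels, leaving exactly \(\mathbb{V}_{\ast}\).

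For the second display I would show \(\sqrt n(\widehat\beta^\ast_n - \widetilde\beta^\ast_n) = o_p(1)\) by the debiased-machine-learning decomposition of \citet{2018chernozhukovDoubleDebiasedMachine}. Writing \(\Delta_{ni}(a) = \widehat m_{ni}(a, Z_i) - m_\ast(a, Z_i)\) and holding \(\widehat\pi_n = \pi\), the imputation and inverse-probability pieces combine so that the outcome-model error contributes, for each \(a\), a term \(\mp n^{-1/2}\sum_i \big(\tfrac{\mathbb{I}\{A_{ni}=a\}}{\pi_a(S_i)} - 1\big)\Delta_{ni}(a)\), with \(\pi_1 = \pi\), \(\pi_0 = 1-\pi\). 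Cross-fitting (\th\ref{asm--sample-split}, \th\ref{asm--split-m}) controls this: conditioning on the folds other than the one containing \(i\) makes \(\widehat m_{ni}(a,\cdot)\) a fixed function, so the term splits into (i) a conditionally mean-zero stochastic part whose conditional variance is \(O_p(\|\widehat m_{ni}(a,\cdot) - m_\ast(a,\cdot)\|_{L_2}^2) \to 0\) --- the only place \th\ref{asm--mhat-L2} is used, and no rate is required --- and (ii) a bias part equal to its conditional mean, governed by the targeting error \(\mathbb E[\mathbb{I}\{A_{ni}=a\}\mid \mathbf S_n] - \pi_a(S_i)\), which is \(O_p(n^{-1/2})\) under weak balance (\th\ref{asm--weak-balance}) and, multiplied by \(\mathbb E[\Delta_{ni}(a)\mid S] \to 0\), is \(o_p(1)\). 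Finally, if sample proportions replace \(\pi\), a one-term Taylor expansion produces extra terms whose coefficients are sample analogues of \(\mathbb E[\tfrac{\mathbb{I}\{A=a\}}{\pi_a(S)^2}(Y(a) - m_\ast(a,Z))] = 0\) (using \(A \indep W \mid S\) from \th\ref{asm--treat-strat} \ref{asm--treat-exog} and consistency of \(\widehat m\)); since these coefficients vanish in probability while \(\sqrt n(\widehat\pi_n - \pi) = O_p(1)\), the extra terms are \(o_p(1)\). This is the Neyman orthogonality of \(\varphi\) in \eqref{eqn--eif-moment-cond} with respect to both nuisances.

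I expect the main obstacle to be the non-i.i.d.\ dependence that CAR induces within strata, which breaks both the classical CLT (first display) and the usual i.i.d.\ cross-fitting variance/bias bounds (second display). The device that neutralizes it in both places is recentring the assignment indicators at the realized within-stratum proportion \(N_n(1,s)/N_n(s)\): once this is done, all relevant second-order quantities reduce to functions of the stratum counts, the limiting variance becomes mechanism-invariant, and only the weak targeting of \th\ref{asm--treat-strat} \ref{asm--treat-prop} (strengthened to weak balance for the feasible estimator) is needed. The most delicate points are establishing the conditional CLT for \(T_{2n}\) uniformly over admissible mechanisms, and verifying that the within-stratum cross-fold dependence neither inflates the conditional variance of the stochastic part nor reintroduces a first-order bias in the feasible estimator --- which is precisely why the folds in \th\ref{asm--sample-split} are built from the counts \(N_n(a,s)\) and balanced within each treatment-group-by-stratum cell.
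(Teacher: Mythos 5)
Your argument is correct, and for the first display it takes a genuinely different technical route from the paper. You decompose \(\varphi_{0}\left(Y_{n i}, A_{n i}, Z_{i}\right) = \xi_{i} + \left(A_{n i} - \pi\left(S_{i}\right)\right)\eta_{i}\) with \(\xi_{i} = Y_{i}(1) - Y_{i}(0) - \beta_{0}\) and \(\eta_{i} = R_{i}(1)/\pi\left(S_{i}\right) + R_{i}(0)/\left(1 - \pi\left(S_{i}\right)\right)\) (the identity checks out, and the cancellation of the unidentified cross-moment \(\mathbb{E}[R(1)R(0)]\) in \(\mathrm{Var}(\xi) + V_{\eta} = \mathbb{V}_{\ast}\) is a nice confirmation), then recentre the assignment indicators at the realized proportions and apply a Lindeberg CLT conditional on \(\left(\mathbf{A}_{n}, \mathbf{S}_{n}\right)\), exploiting that the recentred weights sum to zero within strata so the conditional covariance with the \(\xi\)-part vanishes and the conditional variance is a function of stratum counts alone. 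The paper instead decomposes \(\varphi_{0}\) as \(h + \xi(\mathbb{S}(z))\) with \(\mathbb{E}[h \mid a, \mathbb{S}(Z)] = 0\) and invokes \th\ref{lem--car-vector-clt}, which is proved by the coupling construction of \th\ref{asm--coupling-construct} together with a partial-sums Donsker FCLT evaluated at random points \(N_{n}(a,s)/n\). Both routes isolate the same mechanism-invariance; your conditional-CLT version arguably makes the invariance more transparent (everything reduces to \(\sum_{i: S_{i} = s}\left(A_{n i} - \widehat{\pi}_{n}(s)\right)^{2} = N_{n}(1,s)\left(1 - \widehat{\pi}_{n}(s)\right)\)), at the cost of having to justify passage from a conditional CLT with random (but convergent-in-probability) conditional variance to the unconditional joint limit, whereas the paper's FCLT machinery handles that step wholesale and is reused for the laws of large numbers elsewhere. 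For the second display your argument is essentially the paper's: the same decomposition into the regression-adjustment remainder and the propensity-replacement remainder as in \eqref{eqn--eif-feas-infeas-diff}, the same stratum-by-fold splitting, the same conditional mean/variance bounds in which the conditional mean is controlled by the within-fold imbalance \(\widehat{N}_{n}(a,s,j)/\pi_{a}(s) - \widehat{N}_{n}(s,j)\) (this is exactly where \th\ref{asm--weak-balance} enters) and the conditional variance by the \(L_{2}\) rate-free consistency of \th\ref{asm--mhat-L2}; the only difference is that the paper makes the ``\(\widehat{m}_{n i}\) is a fixed function under conditioning'' step rigorous via an explicit independent copy of the estimation sample rather than by direct conditioning. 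One small quibble: the systematic part of the weights is governed by the realized counts \(N_{n}(a,s)\), not by \(\mathbb{E}\left[\mathbb{I}\{A_{n i} = a\} \mid \mathbf{S}_{n}\right]\) as you phrase it, but since you condition on \(\mathbf{A}_{n}\) anyway this does not affect the argument.
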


\th\ref{thm--efficient-estimation-car} first establishes in
\eqref{eqn--eif-gives-speb} that the infeasible ideal estimator
\(\widetilde{\beta}^{\ast}_{n}\) achieves the efficiency bound in
\eqref{eqn--speb-car} under general CAR procedures. This implies that any
estimator of the ATE that is asymptotically linear with influence function
\(\varphi_{0}\) in \eqref{eqn--eif} achieves the semiparametric efficiency
bound. \th\ref{thm--efficient-estimation-car} then establishes in
\eqref{eqn--beta-star-efficient} that \(\widehat{\beta}^{\ast}_{n}\) as defined
by \eqref{eqn--feasible-eif-estimator} has this aforementioned property (under
the additional assumption of this section). The only additional restrictions
placed on the overall semiparametric model \(\mathbf{Q}\) are ones defining
\(\mathbf{Q} \left( \widehat{m} \right)\) specific to the choice of estimator
\(\widehat{m}\) in \th\ref{asm--mhat-L2}. As a further corollary
(\th\ref{cor--efficient-estimation-car} below), we can show that \(\mathbf{Q}
\left( \widehat{m} \right) = \mathbf{Q}\) when \(\widehat{m}\) is the
Nadaraya-Watson kernel regression estimator. Note that the significance of
\(\mathbf{Q} \left( \widehat{m} \right) = \mathbf{Q}\) is that efficient
estimation is possible (pointwise) over the whole of \(\mathbf{Q}\). That is,
the efficiency bound can be achieved under the same conditions used for its
derivation. This is a novel result since the overall semiparametric problem is
non-trivial and involves estimation of (possibly) infinite-dimensional
conditional means. We now first describe the Nadaraya-Watson estimator, state
assumptions on its tuning parameters and then provide the result. In what
follows, we use the normalization that \(0/0 = 0\). Let \(\widehat{m}\)
in \th\ref{asm--mhat-L2} be defined by
\begin{equation}
  \widehat{m} \left( n, a, z; y_{1} (a), z_{1}, \dots, Y_{n} (a), z_{n}
  \right) = \frac{\sum_{j = 1}^{n} y_{j} \cdot \kappa \left( h_{n}^{- 1}
  \left( z_{j} - z \right) \right)}{\sum_{j = 1}^{n} \kappa \left(
  h_{n}^{- 1} \left( z_{j} - z \right) \right)}.
  \label{eqn--kernel-estimators}
\end{equation}
\th\ref{asm--kernel-bw} below states the assumptions required for the bandwidth
sequence \(\left\{ h_{n} : n \in \mathbb{N} \right\}\) and the kernel function
\(\kappa\).
\begin{assumption}
\th\label{asm--kernel-bw}
The sequence \(\left\{ h_{n} : n \in \mathbb{N} \right\}\) and the function
\(\kappa : \mathbb{R}^{k} \to \mathbb{R}\) satisfy the following.
\begin{enumerate}[label=(\alph*)]
\item
  \label{asm--bw}
  \(h_{n} > 0\) for each \(n \in \mathbb{N}\), \(\lim_{n \to \infty} h_{n} =
  0\) and \(\lim_{n \to \infty} n h_{n}^{k} = \infty\).
\item \label{asm--kernel}
  For some \(\underline{\kappa}, \overline{\kappa}, r, R \in (0, \infty)\),
  \(\underline{\kappa} \cdot \mathbb{I} (\| u \| \leq r) \leq \kappa (u) \leq
  \overline{\kappa} \cdot \mathbb{I} (\| u \| \leq R)\) for each \(u \in
  \mathbb{R}^{k}\).
\end{enumerate}
\end{assumption}

\th\ref{asm--kernel-bw} maintains exactly the same conditions required by
\citet{1980devroyeDistributionFreeConsistencyResults} and
\citet{1980spiegelmanConsistentWindowEstimation} for universal \(L_{2}\)
consistency. Note that \th\ref{asm--kernel-bw} \ref{asm--kernel} requires the
kernel function \(\kappa\) to be non-negative, to have compact support and to be
strictly positive in a closed neighborhood of the origin with non-empty
interior. The following result shows that efficient estimation is possible under
exactly the same conditions required to derive the efficiency bound.

\begin{corollary}
\th\label{cor--efficient-estimation-car}
In addition to the conditions of \th\ref{thm--efficient-estimation-car}, suppose
that \(\widehat{m}\) in \th\ref{asm--mhat-L2} is defined by
\eqref{eqn--kernel-estimators} and that \(\left\{ h_{n} : n \in \mathbb{N}
\right\}\) and \(\kappa\) satisfy \th\ref{asm--kernel-bw}. Then \(\mathbf{Q}
\left( \widehat{m} \right) = \mathbf{Q}\) and \(\widehat{\beta}^{\ast}_{n}\) in
\eqref{eqn--beta-star-efficient} achieves the efficiency bound over all of
\(\mathbf{Q}\).
\end{corollary}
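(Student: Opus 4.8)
The plan is to recognize that the corollary adds only one ingredient to \th\ref{thm--efficient-estimation-car}: the verification that the Nadaraya-Watson estimator in \eqref{eqn--kernel-estimators} satisfies \th\ref{asm--mhat-L2} with $\mathbf{Q}\left(\widehat{m}\right) = \mathbf{Q}$. Once this universal $L_2$ consistency is in hand, every hypothesis of \th\ref{thm--efficient-estimation-car} holds with no further restriction on the model, so \eqref{eqn--beta-star-efficient} delivers that $\widehat{\beta}^{\ast}_{n}$ attains $\mathbb{V}_{\ast}$ pointwise over the entire family $\mathbf{Q}$. The whole task therefore reduces to a distribution-free consistency statement for kernel regression.

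First I would fix an arbitrary $Q \in \mathbf{Q}$ and $a \in \{0, 1\}$ and observe that, under the hypotheses of \th\ref{asm--mhat-L2}, the data $\left\{ Y_{i} (a), Z_{i} \right\}_{i = 1}^{n}$ feeding into $\widehat{m} (n, a, \cdot; \cdot)$ are i.i.d.\ draws from the $(Y (a), Z)$-marginal of $Q$. By \th\ref{asm--Q} \ref{asm--nontriv-var}, $\mathbb{E}_{Q} \left[ Y (a)^{2} \right] < \infty$, so the regression function $m (a, z; Q) = \mathbb{E}_{Q} [Y (a) \mid Z = z]$ is well defined and the response has a finite second moment. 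This finite second moment is the sole distributional requirement in the universal consistency theorems, and it holds for every member of $\mathbf{Q}$ by construction.

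Next I would check that the tuning parameters satisfy exactly the hypotheses of the distribution-free results of \citet{1980devroyeDistributionFreeConsistencyResults} and \citet{1980spiegelmanConsistentWindowEstimation}. Indeed, \th\ref{asm--kernel-bw} \ref{asm--bw} supplies $h_{n} \to 0$ and $n h_{n}^{k} \to \infty$, while \th\ref{asm--kernel-bw} \ref{asm--kernel} forces $\kappa$ to be a nonnegative window bounded above by a compactly supported indicator and below by a positive multiple of an indicator on a neighborhood of the origin. These are precisely the window conditions under which those authors establish, with the convention $0/0 = 0$ handling empty windows, that
\[
\mathbb{E}_{Q^{n + 1}} \left[ \left( \widehat{m} \left( n, a, Z; Y_{1} (a), Z_{1}, \dots, Y_{n} (a), Z_{n} \right) - m (a, Z; Q) \right)^{2} \right] \to 0
\]
for every $Q$ with $\mathbb{E}_{Q} \left[ Y (a)^{2} \right] < \infty$. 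Since $Q \in \mathbf{Q}$ was arbitrary, this yields $\mathbf{Q} \left( \widehat{m} \right) = \mathbf{Q}$, and in particular $Q_{0} \in \mathbf{Q} \left( \widehat{m} \right)$.

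The main obstacle, such as it is, lies not in invoking the consistency theorems but in confirming that their i.i.d.\ formulation is compatible with the way $\widehat{m}$ enters the cross-fitted estimator through \th\ref{asm--split-m}, where $\widehat{m}_{n i}$ is computed on the random subsample $\Gamma_{n} (a, i)$ whose cardinality fluctuates. The reconciliation is that \th\ref{asm--mhat-L2}, and hence the universal consistency above, is stated for an arbitrary deterministic sample size, while the machinery already built into \th\ref{thm--efficient-estimation-car} absorbs both the randomness of the fold sizes and the restriction to treatment-$a$ observations in off-folds. Consequently, verifying $\mathbf{Q} \left( \widehat{m} \right) = \mathbf{Q}$ is the only genuine step; applying \th\ref{thm--efficient-estimation-car} then completes the proof, giving that $\widehat{\beta}^{\ast}_{n}$ achieves the efficiency bound over all of $\mathbf{Q}$.
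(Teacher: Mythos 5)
Your proposal is correct and follows essentially the same route as the paper: the corollary is proved by observing that \th\ref{asm--kernel-bw} reproduces exactly the window conditions of \citet{1980devroyeDistributionFreeConsistencyResults} and \citet{1980spiegelmanConsistentWindowEstimation}, whose universal \(L_{2}\) consistency theorems require only the finite second moment guaranteed for every \(Q \in \mathbf{Q}\) by \th\ref{asm--Q}, so that \(\mathbf{Q}(\widehat{m}) = \mathbf{Q}\) and \th\ref{thm--efficient-estimation-car} applies directly. Your remark that the random fold sizes and the restriction to off-fold treatment-\(a\) observations are already absorbed by the machinery of the theorem (the coupling construction and the randomly indexed subsequence lemmas) is accurate and matches how the paper's proof of the theorem handles that issue.
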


The only requirement for the phenomenon in
\th\ref{cor--efficient-estimation-car} to occur is the universal consistency
requirement, i.e. that \(\mathbf{Q} \left( \widehat{m} \right) =
\mathbf{Q}\). This is not special to the Nadaraya-Watson estimator - one can
replace this with a nearest neighbor estimator and replace the conditions in
\th\ref{asm--kernel-bw} with those of
\citet{1977stoneConsistentNonparametricRegression}. As for local polynomial
kernel estimators, these are not universally \(L_{2}\) consistent in their
typical form (see \citet[Section 5.4,
p. 80-81]{2002gyoerfiDistributionFreeTheory}), but adjusted versions can have
the universal \(L_{2}\) consistency property (see
\citet{2002kohlerUniversalConsistencyLocal}). These adjustments introduce a
number of additional tuning parameters beyond the bandwidth. For universal
\(L_{2}\) consistency conditions for series estimators, see \citet[Chapter
10]{2002gyoerfiDistributionFreeTheory}.

We now provide a brief sketch of the main arguments that lead to
\eqref{eqn--beta-star-efficient} in \th\ref{thm--efficient-estimation-car}.
The difference between \(\sqrt{n} \left( \widehat{\beta}_{n}^{\ast} - \beta_{0}
\right)\) and \(\sqrt{n} \left( \widetilde{\beta}_{n}^{\ast} - \beta_{0}
\right)\) and can be summarized as
\begin{equation}
  \begin{split}
    \sqrt{n} \left( \widehat{\beta}^{\ast}_{n} - \beta_{0} \right) =
    & \ \sqrt{n} \left( \widetilde{\beta}^{\ast}_{n} - \beta_{0} \right) + R_{1,
      n} - R_{0, n} + \widetilde{R}_{1 n} - \widetilde{R}_{0 n} \\
    \text{where} \quad R_{a, n} =
    & \ \frac{1}{\sqrt{n}} \sum_{i = 1}^{n} \frac{\mathbb{I} \left\{ A_{n i} = a
      \right\} - \widehat{\pi}_{n, a} \left( \mathbb{S} \left( Z_{i} \right)
      \right)}{\widehat{\pi}_{n, a} \left( \mathbb{S} \left( Z_{i} \right)
      \right)} \left[ \widehat{m}_{n i} \left(a, Z_{i} \right) - m_{\ast} \left(
      a, Z_{i} \right) \right]. \\
    \text{and} \quad \widetilde{R}_{a, n} =
    & \frac{1}{\sqrt{n}} \sum_{i = 1}^{n} \left[ \frac{1}{\widehat{\pi}_{n, a}
      \left( \mathbb{S} \left( Z_{i} \right) \right)} - \frac{1}{\pi_{a} \left(
      \mathbb{S} \left( Z_{i} \right) \right)} \right] \mathbb{I} \left\{ A_{n
      i} = a \right\} \left[ Y_{n i} - m_{\ast} \left( a, Z_{i} \right) \right].
  \end{split}
  \label{eqn--eif-feas-infeas-diff}
\end{equation}
In the above, \(\pi_{a} (s) = \pi (s)^{a} [1 - \pi (s)]^{1 - a}\) and
\(\widehat{\pi}_{n, a} (s)\) is defined analogously for each \(s \in
\mathbb{N}_{\mathcal{S}}\) and \(a \in \{0, 1\}\). Showing
\eqref{eqn--beta-star-efficient} is a matter of showing that \(R_{a, n}
= o_{\mathrm{p}} (1)\) and \(\widetilde{R}_{a, n} = o_{\mathrm{p}}
(1)\). The remainder term \(\widetilde{R}_{a, n}\) is straightforward to deal
with - a central limit theorem applies to
\begin{equation*}
  (1 / \sqrt{n}) \sum_{i = 1}^{n} \mathbb{I} \left\{ A_{n i} = a, \mathbb{S}
  \left( Z_{i} \right) = s \right\} \left[ Y_{n i} - m_{\ast} \left( a, Z_{i}
  \right) \right]
\end{equation*}
and the difference between estimated and true inverse treatment probabilities is
constant across observations within a given stratum and converging to zero. For
\(R_{a, n}\), we deal with these by first splitting them into a sum of
\(\mathcal{S} \cdot J\) terms each corresponding to a given stratum and a given
fold within that stratum. Each of these terms can be treated separately since we
are trying to prove convergence in probability to zero. The argument then
proceeds by noting that within the stratum-fold remainder each component of the
form \(\widehat{\pi}_{n, a} (s)^{- 1} \left( \mathbb{I} \left\{ A_{n
i} = a \right\} - \widehat{\pi}_{n, a} (s) \right)\) is ``approximately'' mean
zero and has finite variance. Furthermore, for the term \(\widehat{m}_{n i}
\left(a, Z_{i} \right) - m_{\ast} \left( a, Z_{i} \right)\) has second moment
tending to zero. By careful use of the independence between
the evaluation point \(Z_{i}\) and the estimation sample (observations outside
the corresponding fold), the stratum-fold specific version of \(R_{a, n}\) is
shown to be \(o_{\mathrm{p}} (1)\) via Chebychev's inequality. Since the
overall \(R_{a, n}\) is the sum of its stratum-fold specific counterparts and
there are \(\mathcal{S} \cdot J\) (i.e. finitely many) of these, the conclusion
follows from Slutsky's theorem. A rigorous version of these arguments is left to
the appendix.

The key finding here is that knowledge of the propensity score and its finite
support structure reduces the burden for efficient estimation considerably. The
fact that knowledge of the propensity score can reduce this burden was also
noticed by \citet{2021aronowNonparametricIdentificationNot} and
\citet{2018rotheFlexibleCovariateAdjustments}.
\citet{2021aronowNonparametricIdentificationNot}
show that potentially non-linear parametric regression adjustments can improve
estimation accuracy, generalizing \citet{2001yangEfficiencyStudyEstimators} and
\citet{2013linAgnosticNotesRegression} to the nonlinear
case. \citet{2018rotheFlexibleCovariateAdjustments} shows that
semiparametrically efficient estimation is possible under a fairly broad class
of estimators using Donsker conditions.
\citet{2018rotheFlexibleCovariateAdjustments} also shows that
efficient estimation is possible under slightly weaker conditions by using a
leave-one-out locally linear kernel regression estimator by leveraging the
results of \citet{2019rothePropertiesDoublyRobust}. However, both still require
dimension-dependent smoothness conditions on the conditional means to be
estimated. To the best of our knowledge, existing results using nonparametric
estimators in this context all require existence of densities for baseline
covariates that are bounded away from zero on their support and higher order
differentiability on conditional means and possibly also densities. Our results
do not require any of these for efficient estimation.

It should be noted that while \(\widehat{\beta}^{\ast}_{n}\) in
\eqref{eqn--feasible-eif-estimator} achieves the efficiency bound a few issues
prevent it from being usable immediately in the context of statistical
inference. In particular, we have not provided consistent estimators of the
asymptotic variance \(\mathbb{V}_{\ast}\) in \eqref{eqn--speb-car}. It may be
possible to construct these directly from the sample and the nonparametric
estimators \(\widehat{m}_{n i}\). Another possibility might be a bootstrap
procedure to produce confidence intervals for hypothesis tests. Another issue we
have not addressed is data-based choice of tuning parameters for the estimators
\(\widehat{m}_{n i}\) when these are based on the Nadaraya-Watson kernel
estimator. Since our main questions are the efficiency bound and the conditions
under which it can be achieved, we leave the construction of consistent
inference procedures from \(\widehat{\beta}^{\ast}_{n}\) and data-based tuning
parameter selection for future research.

%%% Local Variables:
%%% mode: latex
%%% TeX-master: "../2023_semipar_eff_car"
%%% End:
% LocalWords:  SSRA SPBR ATT

% ! TEX root = ../2023_semipar_eff_car.tex

\section{Monte Carlo evidence}
\label{sec--sims}

In this section, we examine the finite sample performance of feasible efficient
estimator \(\widehat{\beta}_{n}^{\ast}\) from
\eqref{eqn--feasible-eif-estimator}. We start by describing the data generating
processes (DGPs) for our simulations. We consider values of \(n \in \{500, 1000,
2000, 4000, 8000\}\) and \(k \in \{1, 5\}\). The DGPs are all of the form
\begin{equation}
  Y_{i} (a) = m_{\ast} \left( a, Z_{i} \right) + \sigma \left( a, Z_{i} \right)
  \cdot \varepsilon_{i}.
\end{equation}
In all cases, we have \(Z_{i} \sim \mathrm{Uniform} \left([-1, 1]^{k} \right)\)
and \(\varepsilon_{i} \sim \mathcal{N} (0, 1)\). For each DGP, we conduct 5000
Monte Carlo simulations. In each simulation round, we compute
\(\widetilde{\beta}_{n}^{\ast}\) in \eqref{eqn--infeasible-eif-estimator} and
\(\widehat{\beta}_{n}^{\ast}\) in \eqref{eqn--feasible-eif-estimator} as well as
additional estimators of the ATE to be described below. We report mean squared
errors as well as biases of these estimators relative to the true value of the
ATE. For stratification, we stratify on the basis of the first component of the
covariates \(Z_{i}\). We consider \(\mathcal{S} \in \{5, 20\}\) and for each
value of \(\mathcal{S}\), strata are constructed by splitting the interval
\([-1, 1]\) into adjacent segments of length \(2 / \mathcal{S}\). We report
results with constant assignment proportions across all strata as well as
proportions that vary by strata. We present results with SPBR here, and defer
results with SSRA to the appendix. Inspecting both, the reader can see that the
choice of assignment mechanism has no visible impact on the performance of any
of the estimators considered here. For constant proportions, we set assignment
proportions to 1/2 and for varying proportions we use
\begin{equation}
  \bm{\pi} =
  \begin{cases}
    (0.3, 0.4, 0.5, 0.6, 0.7) & \text{if } \mathcal{S} = 5, \\
    (0.325, 0.35, 0.375, \dots, 0.75, 0.775, 0.8) & \text{if } \mathcal{S} = 20.
  \end{cases}
\end{equation}
We use the boldfaced \(\bm{\pi}\) here to distinguish assignment probabilities
from the mathematical constant \(\pi = 3.14159\dots\), since our choice of
conditional mean functions involve the latter through trigonometric
functions. We consider four different DGP's by choice of \(k\), \(m_{\ast} (a,
\cdot)\) and \(\sigma (a, \cdot)\). In what follows, DGP's 1 and 2 have \(k =
1\) whereas DGP's 3 and 4 have \(k = 5\). In all DGP's, the true value of the
ATE is \(\beta_{0} = 0\). This is done for convenience.
\begin{enumerate}
\item DGP 1: \(m_{\ast} (0, z) = \sin (10 \pi z)\), \(m_{\ast} (1, z) = m_{\ast}
  (0, z) + 2 \cos (10 \pi z)\), \(\sigma (0, z) = 1 + |z|\) and \(\sigma (1, z)
  = \sqrt{2} \cdot \sigma (0, z)\)
\item DGP 2: \(m_{\ast} (0, z) = \mathrm{sign} (z) \cdot \lfloor 10 z \rfloor /
  10\), \(m_{\ast} (1, z) = m_{\ast} (0, z) + 2 m_{\ast} (0, z)^{3}\), \(\sigma
  (\cdot)\) as in DGP 1.
\item DGP 3: \(\sigma (0, z) = 1\), \(\sigma (1, z) = \sqrt{2}\), \(m_{\ast} (0,
  z) = \lambda \left( z_{5}, z_{4}, z_{3}, z_{2}, z_{1} \right)\) and \(m_{\ast}
  (1, z) = \lambda (z) + 2 \sin \left( 2 \pi z_{1} z_{2} \right)\)
  \begin{equation*}
    \lambda (z) = \cos \left( 2 \pi z_{1} z_{2} \right) + \left( z_{3} + z_{4} -
    1 \right)^{2} + \frac{z_{5}}{2}
  \end{equation*}
\item DGP 4: \(\sigma (\cdot)\) and \(m_{\ast} (\cdot)\) as in DGP 3, but with
  \(\lambda\) defined instead by
  \begin{equation*}
    \lambda (z) = \ \cos \left( 2 \pi z_{1} z_{2} \right) + \left( z_{3} + z_{4}
    - 1 \right)^{2} + \frac{z_{5}}{2} + \mathbb{I} \left( z_{1} \geq 0 \right).
  \end{equation*}
\end{enumerate}
In both DGP 1 and DGP 3, \(m_{\ast} (1, \cdot)\) and \(m_{\ast} (0, \cdot)\) are
analytic and exhibit considerable non-linear variation within strata with either
\(\mathcal{S} = 5\) or \(\mathcal{S} = 20\). In DGP 2, \(m_{\ast} (1, \cdot)\)
and \(m_{\ast} (0, \cdot)\) have 19 discontinuities and are exactly fully
saturated regressions with \(\mathcal{S} = 20\) but not \(\mathcal{S} = 5\). DGP
4 adds a jump discontinuity at \(z_{1} = 0\) to the conditional mean functions
of DGP 3.

For comparison, we also examine the finite sample performance of the infeasible
efficient estimator \(\widetilde{\beta}_{n}^{\ast}\), the fully saturated
regression estimator of the ATE \(\widehat{\beta}_{n, \mathrm{SAT}}\) in
\eqref{eqn--est-sat}, as well as the imputation estimator of
\citet{1998hahnRolePropensityScore} using the cross-fitted Nadaraya-Watson
estimator:
\begin{equation}
  \widehat{\beta}_{n, \mathrm{IMP}} = \frac{1}{n} \sum_{i = 1}^{n} \left[ Y_{n
  i} A_{n i} + \left( 1 - A_{n i} \right) \widehat{m}_{n, i} \left( 1, Z_{i}
  \right) \right] - \frac{1}{n} \sum_{i = 1}^{n} \left[ \left( 1 - A_{n i}
  \right) Y_{n i} + A_{n i} \widehat{m}_{n, i} \left( 0, Z_{i} \right) \right].
  \label{eqn--est-reg}
\end{equation}
The estimator above is an imputation estimator since it imputes the treatment
\(a\) potential outcome for observation \(i\) with the predicted value
\(\widehat{m}_{n i} \left( a, Z_{i} \right)\) whenever that potential outcome is
unobserved. This estimator is also asymptotically linear and with influence
function \(\varphi_{0}\) when the functions \(m_{\ast} (a, \cdot)\) are
sufficiently smooth - see for instance
\citet{1994chengNonparametricEstimationMean}. For our nonparametric estimators
\(\widehat{m}_{n, i} \left( a, \cdot \right)\), we use bandwidths of the form
\(c_{k} n^{- 1 / (4 + k)}\) and the uniform kernel so that
\th\ref{asm--kernel-bw} is satisfied. For DGP 1 and DGP 3, the bandwidths are
rate-optimal in terms of integrated mean-squared error (see
\citet{1982stoneOptimalGlobalRates}) since the conditional mean functions
\(m_{\ast} (a, \cdot)\) are twice continuously differentiable. For \(k = 1\), we
set \(c_{k} = 1 / \sqrt{3}\) and for \(k = 5\), we set \(c_{k} = 3\). In our
results, estimated mean squared errors for \(\widetilde{\beta}_{n}^{\ast}\) act
as an estimate of the semiparametric efficiency bound. The comparison between
\(\widehat{\beta}_{n}^{\ast}\) and \(\widetilde{\beta}_{n}^{\ast}\) illustrates
how far from optimal the feasible estimator can be due to noise in the
estimation of the conditional mean functions. Following the discussion in
Section \ref{sec--speb}, the fully saturated regression estimator
\(\widehat{\beta}_{n, \mathrm{SAT}}\) in \eqref{eqn--est-sat} achieves the
semiparametric efficiency bound among all estimators that use only information
contained in the strata. This is \(\mathbb{V}_{\mathrm{SAT}}\), defined in
\eqref{eqn--speb-sat}. Thus, comparing \(\widehat{\beta}_{n, \mathrm{SAT}}\) to
\(\widetilde{\beta}_{n}^{\ast}\) illustrates the magnitude of efficiency loss
from ignoring information from the covariates (except in DGP 2 with
\(\mathcal{S} = 20\)). Additionally, comparing \(\widehat{\beta}_{n,
\mathrm{SAT}}\) to \(\widehat{\beta}_{n}^{\ast}\) illustrates the possible
tradeoff between finite sample and asymptotic accuracy when estimates of the
nonparametric components are potentially far from the truth. The additional
nonparametric estimator \(\widehat{\beta}_{n, \mathrm{IMP}}\) has the same
influence function as \(\widehat{\beta}_{n}^{\ast}\) but does not directly use
the structure of the influence function. Thus, comparisons between these
estimators illustrates the debiasing effect of using the efficient influence
function directly.

Tables \ref{tab--cons5} and \ref{tab--cons20} present results from simulations
with constant target treatment proportions across strata with \(\mathcal{S} =
5\) and \(\mathcal{S} = 20\) respectively. Tables \ref{tab--var5} and
\ref{tab--var20} present results with varying target treatment proportions
across strata with \(\mathcal{S} = 5\) and \(\mathcal{S} = 20\)
respectively. Some common patterns appear across examples. In the univariate
case (DGP's 1 and 2), \(\widehat{\beta}^{\ast}_{n}\) is fairly close to the
infeasible estimator \(\widetilde{\beta}^{\ast}_{n}\) in terms of mean squared
error. Furthermore, while \(\widehat{\beta}_{n, \mathrm{SAT}}\) is far from
optimal under the nonlinearities in DGP 1 and \(\mathcal{S} = 5\), the distance
is greatly reduced by using finer strata (i.e. \(\mathcal{S} = 20\)).
However, in higher dimensions, both estimators suffer albeit in different
ways. Under DGP's 3 and 4 the feasible optimal estimator
\(\widehat{\beta}^{\ast}_{n}\), despite still being consistent and
asymptotically unbiased, is much slower in its tendency towards the SPEB.
This is mainly due to the curse of dimensionality. In simulations with \(n \in
\{10000, 20000\}\) for instance, \(\widehat{\beta}^{\ast}_{n}\) is closer to
achieving the SPEB, but these sample sizes may be unrealistic for most
experiments. As we can see, \(\widehat{\beta}_{n}^{\ast}\) is also sensitive to
the number of strata - having more strata reduces the effective estimation
sample used for estimation of the conditional mean. This is particularly stark
in the examples with varying treatment proportions - in these, some treatment
groups are going to have small sample sizes within strata by design and the
aforementioned phenomenon is exacerbated. In the multivariate case, increasing
the number of strata does not help reduce the MSE of \(\widehat{\beta}_{n,
\mathrm{SAT}}\) since stratification is done on the basis of a single component
of a continuously distributed covariate. On the other hand, though its
performance in some instances (e.g. \ref{tab--var20}, DGP 3 and 4) leaves much
to be desired, in most cases with DGP 3 and 4, \(\widehat{\beta}^{\ast}_{n}\)
picks up on the nonlinear variation in the conditional means across all
dimensions. To summarize the comparison between \(\widehat{\beta}^{\ast}_{n}\)
and \(\widehat{\beta}_{n, \mathrm{SAT}}\) we find that across all simulation
designs, \(\widehat{\beta}^{\ast}_{n}\) is on average 13\% more efficient
(averaging the quotient of the MSEs). The maximal gain from using
\(\widehat{\beta}^{\ast}_{n}\) is a 40\% reduction in MSE - Table
\ref{tab--var5},  DGP 4 with \(n = 8000\).

For \(\widehat{\beta}_{n, \mathrm{IMP}}\), while it seems to perform well under
\(k = 1\) and \(\mathcal{S} = 20\) the impact of the curse of dimensionality and
its interaction with the lack of debiasing is quite stark as can be seen from
the other cases. Furthermore, the need for undersmoothing to reduce bias for
this estimator is apparent since it seems to sometimes do worse with more
effective observations. For instance, comparing results for DGP 1 in Tables
\ref{tab--cons5} and \ref{tab--cons20}, we see that increasing the number of
strata (i.e. decreasing the effective number of observations per strata)
produces bias reductions without seriously affecting variance for
\(\widehat{\beta}_{n, \mathrm{SAT}}\) or \(\widehat{\beta}_{n,
\mathrm{IMP}}\). However, reducing bandwidth to undersmooth will reduce bias and
inflate the variance of this estimator.

\begin{table}[H]
  \centering
  \caption{\(\mathcal{S} = 5\) and constant target assignments across strata.}
    \begin{tabular}{ccccccccccccc}
    \toprule
    \toprule
          &       & \multicolumn{2}{c}{\(\sqrt{n} \cdot \widetilde{\beta}^{\ast}_{n}\)} &       & \multicolumn{2}{c}{\(\sqrt{n} \cdot \widehat{\beta}^{\ast}_{n}\)} &       & \multicolumn{2}{c}{\(\sqrt{n} \cdot \widehat{\beta}_{n, \mathrm{SAT}}\)} &       & \multicolumn{2}{c}{\(\sqrt{n} \cdot \widehat{\beta}_{n, \mathrm{IMP}}\)} \\
\cmidrule{3-4}\cmidrule{6-7}\cmidrule{9-10}\cmidrule{12-13}    DGP   & \(n\) & MSE   & Bias  &       & MSE   & Bias  &       & MSE   & Bias  &       & MSE   & Bias \\
    \midrule
\multirow{5}[2]{*}{DGP 1} & 500   & 16.442 & -0.041 &       & 19.883 & -0.078 &       & 20.931 & -0.071 &       & 22.536 & -1.471 \\
      & 1000  & 15.731 & 0.024 &       & 17.724 & -0.014 &       & 19.589 & -0.040 &       & 21.316 & -1.661 \\
      & 2000  & 16.380 & -0.064 &       & 17.888 & -0.087 &       & 20.631 & -0.082 &       & 22.458 & -1.908 \\
      & 4000  & 16.084 & -0.039 &       & 17.022 & -0.076 &       & 19.891 & -0.085 &       & 21.825 & -2.011 \\
      & 8000  & 15.749 & 0.025 &       & 16.466 & 0.033 &       & 20.167 & 0.058 &       & 20.990 & -1.955 \\
    \midrule
\multirow{5}[2]{*}{DGP 2} & 500   & 14.679 & -0.012 &       & 15.139 & -0.020 &       & 14.928 & -0.014 &       & 15.090 & -0.016 \\
      & 1000  & 14.000 & -0.012 &       & 14.274 & -0.009 &       & 14.235 & -0.005 &       & 14.266 & -0.010 \\
      & 2000  & 14.604 & -0.058 &       & 14.733 & -0.061 &       & 14.836 & -0.067 &       & 14.749 & -0.060 \\
      & 4000  & 14.672 & -0.077 &       & 14.786 & -0.087 &       & 14.912 & -0.078 &       & 14.802 & -0.085 \\
      & 8000  & 14.296 & 0.006 &       & 14.327 & 0.006 &       & 14.511 & 0.000 &       & 14.326 & 0.005 \\
\midrule
\multirow{5}[2]{*}{DGP 3} & 500   & 12.593 & -0.011 &       & 18.453 & -0.016 &       & 24.129 & -0.078 &       & 18.291 & -0.685 \\
      & 1000  & 13.148 & -0.034 &       & 17.397 & -0.007 &       & 24.508 & -0.070 &       & 18.098 & -0.792 \\
      & 2000  & 12.974 & -0.019 &       & 16.380 & -0.036 &       & 24.091 & -0.008 &       & 17.481 & -0.912 \\
      & 4000  & 12.813 & -0.044 &       & 15.290 & -0.047 &       & 23.656 & -0.005 &       & 16.836 & -1.015 \\
      & 8000  & 13.060 & -0.014 &       & 15.155 & -0.030 &       & 24.447 & -0.107 &       & 16.987 & -1.087 \\
\midrule
\multirow{5}[2]{*}{DGP 4} & 500   & 12.624 & -0.021 &       & 18.743 & -0.030 &       & 24.668 & -0.090 &       & 18.566 & -0.702 \\
      & 1000  & 13.120 & -0.040 &       & 17.667 & -0.012 &       & 25.168 & -0.070 &       & 18.351 & -0.795 \\
      & 2000  & 12.932 & 0.001 &       & 16.449 & -0.014 &       & 24.615 & 0.011 &       & 17.523 & -0.893 \\
      & 4000  & 12.762 & -0.037 &       & 15.431 & -0.034 &       & 24.227 & 0.018 &       & 16.968 & -1.000 \\
      & 8000  & 13.006 & -0.020 &       & 15.197 & -0.031 &       & 25.097 & -0.109 &       & 17.076 & -1.090 \\
    \bottomrule
    \bottomrule
    \end{tabular}%
  \label{tab--cons5}
\end{table}%

\begin{table}[H]
  \centering
  \caption{\(\mathcal{S} = 20\) and constant target assignments across strata.}
    \begin{tabular}{ccccccccccccc}
    \toprule
    \toprule
          &       & \multicolumn{2}{c}{\(\sqrt{n} \cdot \widetilde{\beta}^{\ast}_{n}\)} &       & \multicolumn{2}{c}{\(\sqrt{n} \cdot \widehat{\beta}^{\ast}_{n}\)} &       & \multicolumn{2}{c}{\(\sqrt{n} \cdot \widehat{\beta}_{n, \mathrm{SAT}}\)} &       & \multicolumn{2}{c}{\(\sqrt{n} \cdot \widehat{\beta}_{n, \mathrm{IMP}}\)} \\
\cmidrule{3-4}\cmidrule{6-7}\cmidrule{9-10}\cmidrule{12-13}    DGP   & \(n\) & MSE   & Bias  &       & MSE   & Bias  &       & MSE   & Bias  &       & MSE   & Bias \\
    \midrule
\multirow{5}[2]{*}{DGP 1} & 500   & 16.290 & -0.125 &       & 18.864 & -0.126 &       & 19.008 & -0.128 &       & 18.919 & -0.133 \\
      & 1000  & 16.171 & -0.077 &       & 17.921 & -0.060 &       & 18.268 & -0.058 &       & 18.042 & -0.064 \\
      & 2000  & 16.582 & -0.151 &       & 17.939 & -0.146 &       & 18.680 & -0.143 &       & 18.225 & -0.149 \\
      & 4000  & 15.854 & -0.015 &       & 17.104 & 0.002 &       & 18.390 & 0.002 &       & 17.512 & 0.001 \\
      & 8000  & 15.462 & -0.014 &       & 16.265 & -0.024 &       & 17.811 & -0.020 &       & 16.688 & -0.021 \\
\midrule
\multirow{5}[2]{*}{DGP 2} & 500   & 14.549 & -0.102 &       & 14.785 & -0.100 &       & 14.744 & -0.100 &       & 14.769 & -0.108 \\
      & 1000  & 14.298 & -0.104 &       & 14.436 & -0.104 &       & 14.399 & -0.105 &       & 14.444 & -0.109 \\
      & 2000  & 14.878 & -0.161 &       & 14.966 & -0.158 &       & 14.933 & -0.162 &       & 15.039 & -0.161 \\
      & 4000  & 14.354 & -0.035 &       & 14.395 & -0.033 &       & 14.378 & -0.036 &       & 14.530 & -0.034 \\
      & 8000  & 13.917 & 0.000 &       & 13.954 & -0.001 &       & 13.928 & 0.000 &       & 14.091 & 0.001 \\
\midrule
\multirow{5}[2]{*}{DGP 3} & 500   & 13.114 & -0.039 &       & 24.858 & -0.640 &       & 24.637 & -0.038 &       & 18.169 & -1.151 \\
      & 1000  & 12.669 & -0.049 &       & 21.884 & -0.296 &       & 24.590 & -0.086 &       & 19.145 & -1.079 \\
      & 2000  & 12.894 & 0.024 &       & 19.287 & -0.114 &       & 24.830 & -0.003 &       & 18.681 & -1.070 \\
      & 4000  & 12.471 & -0.011 &       & 16.999 & -0.086 &       & 24.358 & -0.046 &       & 17.859 & -1.169 \\
      & 8000  & 13.402 & 0.013 &       & 16.501 & 0.004 &       & 24.877 & 0.070 &       & 18.007 & -1.088 \\
\midrule
\multirow{5}[2]{*}{DGP 4} & 500   & 13.127 & -0.049 &       & 26.421 & -0.797 &       & 25.091 & -0.056 &       & 19.007 & -1.313 \\
      & 1000  & 12.711 & -0.052 &       & 22.815 & -0.349 &       & 25.092 & -0.086 &       & 19.735 & -1.142 \\
      & 2000  & 12.893 & 0.041 &       & 19.856 & -0.110 &       & 25.412 & 0.020 &       & 18.990 & -1.073 \\
      & 4000  & 12.446 & 0.003 &       & 17.231 & -0.080 &       & 24.891 & -0.024 &       & 18.045 & -1.166 \\
      & 8000  & 13.396 & 0.019 &       & 16.616 & 0.013 &       & 25.588 & 0.077 &       & 18.091 & -1.079 \\
    \bottomrule
    \bottomrule
    \end{tabular}%
    \label{tab--cons20}%
\end{table}%

\begin{table}[H]
  \centering
  \caption{\(\mathcal{S} = 5\) and varying target assignments across strata.}
    \begin{tabular}{ccccccccccccc}
    \toprule
    \toprule
          &       & \multicolumn{2}{c}{\(\sqrt{n} \cdot \widetilde{\beta}^{\ast}_{n}\)} &       & \multicolumn{2}{c}{\(\sqrt{n} \cdot \widehat{\beta}^{\ast}_{n}\)} &       & \multicolumn{2}{c}{\(\sqrt{n} \cdot \widehat{\beta}_{n, \mathrm{SAT}}\)} &       & \multicolumn{2}{c}{\(\sqrt{n} \cdot \widehat{\beta}_{n, \mathrm{IMP}}\)} \\
\cmidrule{3-4}\cmidrule{6-7}\cmidrule{9-10}\cmidrule{12-13}    DGP   & \(n\) & MSE   & Bias  &       & MSE   & Bias  &       & MSE   & Bias  &       & MSE   & Bias \\
    \midrule
\multirow{5}[2]{*}{DGP 1} & 500   & 18.224 & 0.050 &       & 22.458 & 0.031 &       & 22.951 & 0.059 &       & 24.469 & -1.342 \\
      & 1000  & 17.620 & -0.011 &       & 20.586 & -0.055 &       & 22.667 & -0.005 &       & 24.260 & -1.689 \\
      & 2000  & 17.951 & -0.036 &       & 19.725 & -0.040 &       & 22.685 & -0.034 &       & 24.258 & -1.866 \\
      & 4000  & 17.347 & -0.039 &       & 18.410 & -0.077 &       & 21.766 & -0.101 &       & 23.362 & -2.011 \\
      & 8000  & 17.519 & 0.029 &       & 18.221 & 0.003 &       & 22.196 & -0.027 &       & 23.027 & -1.996 \\
\midrule
\multirow{5}[2]{*}{DGP 2} & 500   & 16.578 & 0.035 &       & 17.596 & 0.003 &       & 16.942 & 0.047 &       & 17.289 & 0.019 \\
      & 1000  & 16.054 & -0.038 &       & 16.472 & -0.057 &       & 16.471 & -0.050 &       & 16.364 & -0.059 \\
      & 2000  & 16.054 & -0.031 &       & 16.248 & -0.021 &       & 16.359 & -0.027 &       & 16.245 & -0.025 \\
      & 4000  & 15.899 & -0.072 &       & 16.069 & -0.085 &       & 16.134 & -0.067 &       & 16.102 & -0.076 \\
      & 8000  & 16.179 & 0.046 &       & 16.208 & 0.049 &       & 16.379 & 0.048 &       & 16.244 & 0.049 \\
\midrule
\multirow{5}[2]{*}{DGP 3} & 500   & 13.967 & -0.004 &       & 21.786 & -0.046 &       & 27.781 & 0.023 &       & 20.290 & -0.527 \\
      & 1000  & 13.941 & 0.037 &       & 20.019 & 0.041 &       & 27.937 & 0.045 &       & 19.804 & -0.643 \\
      & 2000  & 13.738 & -0.049 &       & 18.115 & -0.097 &       & 27.328 & -0.098 &       & 19.272 & -0.955 \\
      & 4000  & 13.533 & -0.020 &       & 17.139 & -0.050 &       & 27.022 & -0.066 &       & 18.684 & -1.024 \\
      & 8000  & 13.858 & -0.023 &       & 16.250 & -0.001 &       & 26.544 & 0.013 &       & 18.027 & -1.028 \\
\midrule
\multirow{5}[2]{*}{DGP 4} & 500   & 14.112 & -0.011 &       & 22.398 & -0.059 &       & 28.679 & 0.019 &       & 20.594 & -0.392 \\
      & 1000  & 13.959 & 0.043 &       & 20.355 & 0.046 &       & 28.533 & 0.050 &       & 19.955 & -0.564 \\
      & 2000  & 13.741 & -0.037 &       & 18.338 & -0.079 &       & 27.870 & -0.081 &       & 19.413 & -0.913 \\
      & 4000  & 13.599 & -0.016 &       & 17.232 & -0.045 &       & 27.575 & -0.060 &       & 18.777 & -1.012 \\
      & 8000  & 13.847 & -0.025 &       & 16.363 & -0.002 &       & 27.180 & 0.024 &       & 18.175 & -1.029 \\
    \bottomrule
    \bottomrule
    \end{tabular}%
    \label{tab--var5}%
\end{table}%

\begin{table}[H]
  \centering
  \caption{\(\mathcal{S} = 20\) and varying target assignments across strata.}
    \begin{tabular}{ccccccccccccc}
    \toprule
    \toprule
          &       & \multicolumn{2}{c}{\(\sqrt{n} \cdot \widetilde{\beta}^{\ast}_{n}\)} &       & \multicolumn{2}{c}{\(\sqrt{n} \cdot \widehat{\beta}^{\ast}_{n}\)} &       & \multicolumn{2}{c}{\(\sqrt{n} \cdot \widehat{\beta}_{n, \mathrm{SAT}}\)} &       & \multicolumn{2}{c}{\(\sqrt{n} \cdot \widehat{\beta}_{n, \mathrm{IMP}}\)} \\
\cmidrule{3-7}\cmidrule{9-10}\cmidrule{12-13}    DGP   & \(n\) & MSE   & Bias  &       & MSE   & Bias  &       & MSE   & Bias  &       & MSE   & Bias \\
    \midrule
\multirow{5}[2]{*}{DGP 1} & 500   & 17.538 & -0.023 &       & 20.095 & -0.009 &       & 19.849 & -0.020 &       & 19.852 & -0.018 \\
      & 1000  & 17.204 & -0.124 &       & 19.118 & -0.129 &       & 19.446 & -0.135 &       & 19.259 & -0.112 \\
      & 2000  & 17.350 & -0.052 &       & 19.123 & -0.050 &       & 19.965 & -0.050 &       & 19.456 & 0.001 \\
      & 4000  & 17.194 & 0.062 &       & 18.189 & 0.052 &       & 19.246 & 0.043 &       & 18.695 & 0.142 \\
      & 8000  & 17.317 & 0.083 &       & 18.153 & 0.105 &       & 19.540 & 0.107 &       & 18.505 & 0.233 \\
\midrule
\multirow{5}[2]{*}{DGP 2} & 500   & 16.160 & -0.028 &       & 16.506 & -0.031 &       & 16.188 & -0.025 &       & 16.278 & -0.025 \\
      & 1000  & 15.856 & -0.149 &       & 15.956 & -0.145 &       & 15.874 & -0.149 &       & 16.040 & -0.153 \\
      & 2000  & 15.556 & -0.046 &       & 15.576 & -0.046 &       & 15.560 & -0.047 &       & 15.683 & -0.048 \\
      & 4000  & 15.868 & 0.020 &       & 15.916 & 0.022 &       & 15.869 & 0.020 &       & 16.205 & 0.020 \\
      & 8000  & 15.960 & 0.110 &       & 15.972 & 0.113 &       & 15.959 & 0.111 &       & 16.053 & 0.114 \\
\midrule
\multirow{5}[2]{*}{DGP 3} & 500   & 13.420 & -0.090 &       & 30.729 & -1.572 &       & 26.697 & -0.067 &       & 24.945 & 2.900 \\
      & 1000  & 13.443 & -0.104 &       & 27.120 & -0.862 &       & 26.943 & -0.080 &       & 30.147 & 3.458 \\
      & 2000  & 13.725 & 0.017 &       & 23.629 & -0.333 &       & 27.433 & -0.006 &       & 30.384 & 3.372 \\
      & 4000  & 13.581 & -0.092 &       & 20.450 & -0.160 &       & 27.035 & -0.097 &       & 23.979 & 2.354 \\
      & 8000  & 13.925 & -0.037 &       & 18.690 & -0.020 &       & 27.818 & 0.015 &       & 19.863 & 1.235 \\
\midrule
\multirow{5}[2]{*}{DGP 4} & 500   & 13.467 & -0.082 &       & 33.787 & -1.886 &       & 27.377 & -0.058 &       & 36.226 & 4.364 \\
      & 1000  & 13.486 & -0.122 &       & 28.984 & -1.029 &       & 27.681 & -0.104 &       & 43.866 & 5.022 \\
      & 2000  & 13.700 & 0.032 &       & 24.703 & -0.361 &       & 28.221 & 0.023 &       & 42.664 & 4.820 \\
      & 4000  & 13.649 & -0.076 &       & 21.122 & -0.157 &       & 27.870 & -0.084 &       & 30.253 & 3.392 \\
      & 8000  & 13.821 & -0.024 &       & 18.879 & -0.007 &       & 28.533 & 0.037 &       & 21.760 & 1.827 \\
    \bottomrule
    \bottomrule
    \end{tabular}
    \label{tab--var20}%
\end{table}

%%% Local Variables:
%%% mode: latex
%%% TeX-master: "../2023_semipar_eff_car"
%%% End:
% LocalWords:  SSRA SPBR ATT

% ! TEX root = ../2023_semipar_eff_car.tex

\section{Conclusion}
\label{sec--conclusion}

In this paper, we have characterized the maximal gains in efficiency from using
baseline covariates beyond strata in RCT's covariate adaptive randomization is
used. To do this, we have established a semiparametric efficiency bound under
CAR for such experiments. This is the minimum estimation variance achievable if
one uses all the relevant information the data have to offer about the parameter
of interest (the ATE here). If baseline covariates are used, we use the
information they contain about the ATE fully through differences in conditional
means, averaged over the support of the covariates. With continuous covariates,
or more covariates present than used in stratification, a strict improvement
over fully saturated regression estimates of the ATE is possible. The efficiency
bound established is shown to be achievable under the same conditions used for
its derivation by using a leave one out Nadaraya-Watson kernel regression
estimator. Simulation evidence presented shows that this estimator can indeed
reach the efficiency bound in large samples. However, such improvements are not
guaranteed in finite samples, and reaching the efficiency bound can be slow
especially in the presence of many covariates due to the curse of
dimensionality.

%%% Local Variables:
%%% mode: latex
%%% TeX-master: "../2023_semipar_eff_car"
%%% End:
% LocalWords:  SSRA SPBR ATT

{\printbibliography}

% ! TEX root = ../2023_semipar_eff_car.tex

\newpage

\appendix

\setcounter{page}{1}

\allowdisplaybreaks

\begin{center}
  \Large{Appendix for``Efficient Semiparametric Estimation of Average
  Treatment Effects Under Covariate Adaptive Randomization''} \\
  \large{Ahnaf Rafi}
\end{center}

\renewcommand{\theasm}{\Alph{asm}}

\begin{asm}
\th\label{asm--coupling-construct}
\(\mathbf{W}^{\ast} = \left\{ W_{i} (s) : i \in \mathbb{N},
s \in \mathbb{N}_{\mathcal{S}} \right\}\) is a sequence of random vectors
defined on \((\Omega, \mathcal{F}, \Pr)\) independent across \((i, s) \in
\mathbb{N} \times \mathbb{N}_{\mathcal{S}}\) such that \(W_{i}^{\prime} (s) =
\left( Y_{i} (0, s), Y_{i} (1, s), Z_{i}^{\prime} (s) \right)\) has the same
marginal distribution as \(W | \mathbb{S} (Z) = s\). Furthermore,
\(\mathbf{W}^{\ast}\) is independent to the collection \(\left\{ \mathbf{W}
\right\} \cup \left\{ \mathbf{A}_{n} : n \in \mathbb{N} \right\}\).
\end{asm}

% \input{appendix/01_main_theorems.tex}
% \input{appendix/02_main_lemmas.tex}
% \input{appendix/03_auxiliary.tex}

%%% Local Variables:
%%% mode: latex
%%% TeX-master: "../2023_semipar_eff_car"
%%% End:
% LocalWords:  SSRA SPBR ATT

% ! TEX root = ../2023_semipar_eff_car.tex

\section{Proofs of theorems in paper}

\subsection{Proof of Theorem \ref{thm--speb-car}}

\begin{proof}[Proof of \th\ref{thm--speb-car}]
By \th\ref{lem--pathwise-differentiability-ate}, the ATE parameter, \(\beta
(\cdot)\) defined by \eqref{eqn--ate-P} is pathwise differentiable with
efficient influence function \(\varphi_{0}\) defined in \eqref{eqn--eif}. This
means that \(\beta (\cdot)\) has a derivative operator \(\dot{\beta} (\cdot)\)
such that for any element of the tangent space in \th\ref{lem--tangent-space-P},
\(h \in \dot{\mathbf{P}}\),
\begin{equation*}
  \dot{\beta} (h) = \mathbb{E}_{P_{0}} \left[ \varphi_{0} (Y, A, Z) h (Y, A, Z)
  \right].
\end{equation*}
The adjoint operator is the unique linear operator \(\dot{\beta}^{\ast} :
\mathbb{R} \to L_{2} \left( P_{0} \right)\) defined by
\begin{equation*}
  c \dot{\beta} (h) = c \mathbb{E}_{P_{0}} \left[ \varphi_{0} (Y, A, Z) h (Y, A,
  Z) \right] = \mathbb{E}_{P_{0}} \left[ \left\{ \dot{\beta}^{\ast} (c) \right\}
  (Y, A, Z) h (Y, A, Z) \right].
\end{equation*}
Clearly, \(\left\{ \dot{\beta}^{\ast} (c) \right\} (y, a, z) = c
\varphi_{0} (y, a, z)\) for every \(c \in \mathbb{R}\). For a given \(h\), let
\(\mathbf{P}_{0 h}\) be any one-dimensional regular parametric submodel of
\(\mathbf{P}\) passing through \(P_{0}\) that has score \(h\) at \(P_{0}\). As
in \th\ref{rem--prod-struct-ssra} and \th\ref{def--parametric-submodel}, this
means that \(\mathbf{P}_{0 h} = \left\{ P_{\theta, h} \in \mathbf{P} : \theta
\in (-2, 2) \right\}\), each \(P_{\theta, h}\) has a density against \(\nu\) in
\eqref{eqn--nu} of the form
\begin{equation*}
  p (y, a, z; \theta, h) = \left[ q_{1} (y, z; \theta, h) \pi (\mathbb{S} (z))
  \right]^{a} \left[ q_{0} (y, z; \theta, h) \cdot \pi (1 - \mathbb{S} (z))
  \right]^{1 - a}.
\end{equation*}
The requirement that \(\mathbf{P}_{0 h}\) passes through \(P_{0}\) will be taken
to mean that \(P_{\theta, h} = P_{0}\) implies \(\theta = 0\) (i.e. \(P_{0}\) is
identified uniquely at \(\theta = 0\)). We set \(\theta_{0} = 0\) since we can
always use a translation to ensure this without changing any conclusions. The
choice \(\Theta = (-2, 2)\) is to ensure \(\Theta\) contains local alternatives
of the form \(\theta_{n} = 1 / \sqrt{n}\) - the conclusions are unaffected by
redefining \(\Theta\) to be an open interval centered at \(0\) and then
re-scaling. By \th\ref{def--qmd}, regularity of \(\mathbf{P}_{0 h}\) means that
the maps \(\theta \mapsto q_{a} (\cdot; \theta, h)\) has a derivative in
quadratic mean, \(D_{a, h}\) so that
\begin{equation*}
  \lim_{\theta \to 0} \theta^{- 2} \int \left(\sqrt{q_{a} (y, z; \theta, h)} -
  \sqrt{q_{a} \left( y, z; 0, h \right)} - D_{a, h} (y, z) \cdot \theta
  \right)^{2} \nu_{a} (\mathrm{d} y, \mathrm{d} z) = 0.
\end{equation*}
The requirement that \(h\) be the score function at \(P_{0}\) corresponds
to the requirement that (by \eqref{eqn--sample-score})
\begin{equation*}
  h (y, a, z) = 2 \frac{D_{a, h} (y, z)}{\sqrt{q_{a} \left( y, z; Q_{0}
  \right)}} \mathbb{I} \left\{ q_{a} \left( y, z; Q_{0} \right) > 0 \right\}.
\end{equation*}
Let \(\mathcal{P}_{h} = \left\{
\mathbf{P}_{n h} : n \in \mathbb{N} \right\}\) be a parametric submodel of
\(\mathcal{P}\) characterized by the maps \(\theta \mapsto q_{a} (\cdot; \theta,
h)\) for \(\theta \in (- 2, 2)\). That is, each element \(P_{n, \theta, h} \in
\mathbf{P}_{n h}\) for \(\theta \in (- 2, 2)\) has a density against the
\(n\)-fold product measure \(\nu^{n}\) of the form
\begin{align*}
  p_{n} \left( \mathbf{y}_{n}, \mathbf{a}_{n}, \mathbf{z}_{n}; \theta, h \right)
  =
  & \ \alpha_{n} \left( \mathbf{a}_{n} | \mathbf{s}_{n} \right) \cdot \prod_{i
    = 1}^{n} q_{1} \left( y_{i}, z_{i}; \theta, h \right)^{a_{i}} q_{0} \left(
    y_{i}, z_{i}; \theta, h \right)^{1 - a_{i}}, \\
  \text{where } \mathbf{s}_{n}^{\prime} =
  & \ \left( \mathbb{S} \left( z_{1} \right), \dots, \mathbb{S} \left( z_{n}
    \right) \right)
\end{align*}
and \(\alpha_{n} (\cdot | \cdot)\) satisfies \th\ref{asm--treat-strat}
\ref{asm--treat-prop}. By \th\ref{lem--parametric-submodel-lan}, every regular
parametric submodel of \(\mathcal{P}\) has the LAN property, and hence
\(\mathcal{P}_{h}\) necessarily exhibits the LAN property. In what follows, set
\(\beta_{n h} = \beta \left( P_{\theta_{n}, h} \right)\) for \(\theta_{n} = 1 /
\sqrt{n}\). Note that a direct implication of
\th\ref{lem--pathwise-differentiability-ate} is that the local parameters
\(\beta_{n h}\) are regular in the sense of
\citet[p. 413]{1996vandervaartWeakConvergenceEmpirical}.

By Theorem 3.11.2 of \citet[p. 414]{1996vandervaartWeakConvergenceEmpirical} if
\(\left\{ \widehat{\beta}_{n} \right\}\) is a sequence of regular estimators for
\(\beta (\cdot)\), then \(\sqrt{n} \left( \widehat{\beta}_{n} - \beta_{0}
\right) \overset{\mathrm{d}}{\to} G + W\) where \(G\) and \(W\) are independent
random variables and
\begin{equation}
  G \sim \mathcal{N} \left( 0, \mathbb{E}_{P_{0}} \left[ \varphi_{0} (Y, A,
  Z)^{2} \right] \right) = \mathcal{N} \left( 0, \mathbb{V}_{\ast} \right).
  \label{eqn--eff-dist-lim}
\end{equation}
This is the usual convolution theorem based justification of
\(\mathbb{V}_{\ast}\) as a semiparametric efficiency bound. Next, let
\(\mathcal{L} : \mathbb{R} \to \mathbb{R}\) be any bowl-shaped loss function,
i.e. \(\mathcal{L}\) is non-negative, satisfies \(\mathcal{L} (y) = \mathcal{L}
(- y)\) and has convex sublevel sets. By Theorem 3.11.5 of
\citet[p. 417]{1996vandervaartWeakConvergenceEmpirical}, for any (not
necessarily regular) estimator sequence \(\left\{ \widehat{\beta}_{n} \right\}\)
for \(\beta (\cdot)\),
\begin{equation*}
  \sup_{\substack{I \subseteq \dot{\mathbf{P}} \\ I \text{ is finite}}}
  \liminf_{n \to \infty} \sup_{h \in I} \mathbb{E}_{n h}
  \left[ \mathcal{L} \left( \sqrt{n} \left( \widehat{\beta}_{n} - \beta_{n h}
  \right) \right) \right] \geq \mathbb{E} \left[ \mathcal{L} (G) \right].
\end{equation*}
Here, \(G\) is as in \eqref{eqn--eff-dist-lim} and \(\mathbb{E}_{n h}\) denotes
expectations taken with respect to local perturbations of \(P_{0, n}\) of the
form \(P_{n, \theta_{n}, h}\) with \(\theta_{n} = 1 / \sqrt{n}\). This is the
justification of \(\mathbb{V}_{\ast}\) as the semiparametric efficiency bound
via the local asymptotic minimax theorem.
\end{proof}

\subsection{Proof of Theorem \ref{thm--efficient-estimation-car}}

For the purposes of the proof, it will be useful to define some additional
notation. For each observation \(i \in \mathbb{N}_{n}\), let the corresponding
stratum-fold label in \eqref{eqn--treat-control-combine-sample-split}
(\th\ref{asm--sample-split}) be
\begin{equation}
  \mathbb{J}_{i} = \sum_{s = 1}^{\mathcal{S}} \sum_{j = 1}^{J} j \cdot
  \mathbb{I} \left\{ \mathbb{S} \left( Z_{i} \right) = s, i \in \mathcal{G}_{n
  j} (s) \right\}.
  \label{eqn--split-indicator}
\end{equation}
Let
\begin{align}
  \widehat{N}_{n} (a, s, j) =
  & \ \sum_{i = 1}^{n} \mathbb{I} \left\{ A_{n i} = a, \mathbb{S} \left( Z_{i}
    \right) = s, \mathbb{J}_{i} = j \right\},
    \label{eqn--strat-treat-fold-size}
  \\
  \widehat{N}_{n} (s, j) =
  & \ \sum_{i = 1}^{n} \mathbb{I} \left\{ \mathbb{S} \left(Z_{i} \right) = s,
    \mathbb{J}_{i} = j \right\}
    \label{eqn--strat-fold-size} \\
  \widehat{N}_{n}^{\ast} (a, s, j) =
    & \ N_{n} (a, s) - \widehat{N}_{n} (a, s, j).
    \label{eqn--strat-fold-estimation-sampe-size}
\end{align}
\(\widehat{N} (a, s, j)\) is the size of the \(j\)\textsuperscript{th} fold for
treatment group \(a\) within stratum \(s\). Similarly, \(\widehat{N}_{n} (s,
j)\) is the size of the \(j\)\textsuperscript{th} fold for
stratum \(s\). Finally, \(\widehat{N}_{n}^{\ast} (a, s, j)\) in
\eqref{eqn--strat-fold-estimation-sampe-size} is the size of the estimation
sample used for estimation of \(m (a, \cdot)\) in fold \(j\) of stratum \(s\).

\begin{proof}[Proof of \th\ref{thm--efficient-estimation-car}]
In \th\ref{thm--efficient-estimation-car}, \eqref{eqn--eif-gives-speb} follows
from \th\ref{lem--car-vector-clt}. To see that the conditions are satisfied,
first decompose \(\varphi_{0}\) as
\begin{align*}
  \varphi_{0} (y, a, z) =
  & \ h (y, a, z) + \xi (\mathbb{S} (z)) \\
  h (y, a, z) =
  & \ \frac{a}{\pi (\mathbb{S} (z))} \left[ y - m_{\ast} (1, z) \right] -
    \frac{1 - a}{1 - \pi (\mathbb{S} (z))} \left[ y - m_{\ast} (0, z) \right] \\
  & + \left( m_{\ast} (1, z) - m_{\ast} (0, z) - \mathbb{E} \left[ m_{\ast} (1,
    Z) - m_{\ast} (0, Z) | \mathbb{S} (Z) = \mathbb{S} (z) \right] \right) \\
  \xi (s) =
  & \ \mathbb{E} \left[ m_{\ast} (1, Z) - m_{\ast} (0, Z) | \mathbb{S} (Z) = s
    \right] - \beta_{0}.
\end{align*}
Note that \(\mathbb{E} \left[ (\varphi_{0} (Y (a), a, Z))^{2} \right] < \infty\)
which immediately implies that both \(h\) and \(\xi\) have finite second
moments. Furthermore, the (conditional and unconditional) mean zero
requirements of \th\ref{lem--car-vector-clt} are satisfied by the Law of
Iterated Expectations. The fact that the corresponding variance terms
\(\mathbb{V}_{h}\) and \(\mathbb{V}_{\xi}\) from \eqref{eqn--car-vector-clt} in
\th\ref{lem--car-vector-clt} satisfy \(\mathbb{V}_{h} + \mathbb{V}_{\xi} =
\mathbb{V}_{\ast}\) follows from the Law of Total Variance:
\begin{align*}
  \mathbb{E} \left[ \left( m_{\ast} (1, z) - m_{\ast} (0, z) - \beta_{0}
  \right)^{2} \right] =
  & \ \mathrm{Var} \left[ m_{\ast} (1, z) - m_{\ast} (0, z) \right] \\
  =
  & \ \mathbb{E} \left[ \mathrm{Var} \left[ m_{\ast} (1, z) - m_{\ast} (0, z)
    \middle| \mathbb{S} (Z) \right] \right] + \mathrm{Var} \left[ \mathbb{E}
    \left[ m_{\ast} (1, z) - m_{\ast} (0, z) \middle| \mathbb{S} (Z) \right]
    \right] \\
  =
  & \
    \mathbb{E} \left[ \mathbb{E} \left[ \left( m_{\ast} (1, z) - m_{\ast} (0, z)
    - \mathbb{E} \left[ m_{\ast} (1, Z) - m_{\ast} (0, Z) \middle| \mathbb{S}
    (Z) \right] \right)^{2} \middle| \mathbb{S} (Z) \right] \right] \\
  & + \mathbb{E} \left[ \left( \mathbb{E} \left[ m_{\ast} (1, Z) - m_{\ast} (0,
    Z) | \mathbb{S} (Z) \right] - \beta_{0} \right)^{2} \right] \\
  =
  & \
    \mathbb{E} \left[ \left( m_{\ast} (1, z) - m_{\ast} (0, z) - \mathbb{E}
    \left[ m_{\ast} (1, Z) - m_{\ast} (0, Z) \middle| \mathbb{S} (Z) \right]
    \right)^{2} \right] \\
  & + \mathbb{E} \left[ \left( \mathbb{E} \left[ m_{\ast} (1, Z) - m_{\ast} (0,
    Z) \middle| \mathbb{S} (Z) \right] - \beta_{0} \right)^{2} \right].
\end{align*}

Next, we show that \eqref{eqn--beta-star-efficient} holds. By
\eqref{eqn--eif-feas-infeas-diff}, it suffices to show that \(\widetilde{R}_{a,
n} = o_{\mathrm{p}} (1)\) and \(R_{a, n} = o_{\mathrm{p}} (1)\) for each \(a \in
\{0, 1\}\). We start with \(\widetilde{R}_{a, n}\). Write
\begin{align*}
  \widetilde{R}_{a, n} =
  & \sum_{s = 1}^{\mathcal{S}} \widetilde{R}_{n} (a, s), \\
  \text{where} \quad \widetilde{R}_{n} (a, s) =
  & \ \left[ \frac{1}{\widehat{\pi}_{n, a} (s)} - \frac{1}{\pi_{a} (s)} \right]
    \cdot \frac{1}{\sqrt{n}} \sum_{i = 1}^{n} \mathbb{I} \left\{ A_{n i} = a,
    \mathbb{S} \left( Z_{i} \right) = s \right\} \left[ Y_{n i} - m_{\ast}
    \left( a, Z_{i} \right) \right].
\end{align*}
Note that \(Y_{i} (a) - m_{\ast} (a, Z_{i})\) has mean
zero conditional on any given stratum label (by the tower property of
conditional expectations) and of course has finite variance. Therefore, applying
\th\ref{lem--car-vector-clt}, we find immediately that for any given \(s \in
\mathbb{N}_{\mathcal{S}}\),
\begin{equation*}
  \frac{1}{\sqrt{n}} \sum_{i = 1}^{n} \mathbb{I} \left\{ A_{n i} = a,
  \mathbb{S} \left( Z_{i} \right) = s \right\} \left[ Y_{n i} - m_{\ast}
  \left( a, Z_{i} \right) \right]
\end{equation*}
has a limit normal distribution. Then, for each \(a \in \{0, 1\}\) and \(s \in
\mathbb{N}_{\mathcal{S}}\),
\begin{equation*}
  \widetilde{R}_{n} (a, s) = \left[ \frac{1}{\widehat{\pi}_{n, a} (s)}
  - \frac{1}{\pi_{a} (s)} \right] \cdot O_{\mathrm{p}} (1) = o_{\mathrm{p}} (1)
  \cdot O_{\mathrm{p}} (1) = o_{\mathrm{p}} (1).
\end{equation*}
Note that the first term is \(o_{\mathrm{p}} (1)\) since \(\widehat{\pi}_{n, a}
(s)\) is either \(\pi_{a} (s)\) (in which case this is trivially true), or
\(N_{n} (s)^{- 1} N_{n} (a, s)\) which is assumed to be consistent for \(\pi_{a}
(s) \in (0, 1)\). The conclusion follows for \(\sqrt{n} \widetilde{R}_{a, n}\)
using Slutsky's Theorem.

Next, we show that \(R_{a, n} = o_{\mathrm{p}} (1)\) for \(R_{a, n}\)
in \eqref{eqn--eif-feas-infeas-diff} for any \(a \in \{0, 1\}\).
To do this, we will use the sample-splitting structure of the estimators
\(\widehat{m}_{n i}\). Inspecting \eqref{eqn--eif-feas-infeas-diff}, write
\(R_{a, n}\) as
\begin{equation}
  \begin{split}
    R_{a, n} =
    & \ \sum_{s = 1}^{\mathcal{S}} \sum_{j = 1}^{J} R_{n} (a, s, j) \\
    \text{where} \quad R_{n} (a, s, j) =
    & \ \frac{1}{\sqrt{n}} \sum_{i = 1}^{n} \frac{\mathbb{I} \left\{ A_{n i} = a
      \right\} - \widehat{\pi}_{n, a} (s)}{\widehat{\pi}_{n, a} (s)} \left[
      \widehat{m}_{n i} \left(a, Z_{i} \right) - m_{\ast} \left( a, Z_{i}
      \right) \right] \mathbb{I} \left\{ \mathbb{S} \left( Z_{i} \right) = s,
      \mathbb{J}_{i} = j \right\}.
  \end{split}
\end{equation}
Since there are finitely many folds within finitely many strata, it is
sufficient to show that \(R_{n} (a, s, j) = o_{\mathrm{p}} (1)\) for each \(s
\in \mathbb{N}_{\mathcal{S}}\), \(j \in \mathbb{N}_{J}\) and \(a \in \{0,
1\}\). Since we are arguing convergence in probability, we can treat each of
these components separately. To that end let \(\varepsilon > 0\) be given, and
fix \(s \in \mathbb{N}_{\mathcal{S}}\) \(a \in \{0, 1\}\) and \(j \in
\mathbb{N}_{\mathcal{J}}\). Our argument will take the following form. Suppose
\(R_{n}^{\ast} (a, s, j)\) is an auxiliary random variable with the same
distribution as \(R_{n} (a, s, j)\). Furthermore, suppose that we have an
arbitrary sequence of conditioning sets \(\mathcal{F}_{n} (a, s, j)\)
constructed from observable data (formally, each \(\mathcal{F}_{n} (a, s, j)\)
is a sub \(\sigma\)-algebra of that generated by \(\mathbf{X}_{n}\)). Then
\begin{equation}
  \Pr \left( \left| R_{n} (a, s, j) \right| > \varepsilon \right) = \Pr \left(
  \left| R_{n}^{\ast} (a, s, j) \right| > \varepsilon \right) = \mathbb{E}
  \left[ \Pr \left( \left| R_{n}^{\ast} (a, s, j) \right| > \varepsilon \middle|
  \mathcal{F}_{n} (a, s, j) \right) \right]
  \label{eqn--coupling-and-lie}
\end{equation}
where the last equality is due to the Law of Iterated Expectations. Furthermore,
is (by definition) a bounded random variable since it takes values in \([0,
1]\). By the extension of the Dominated Convergence Theorem for convergence in
probability, it therefore suffices to argue that
\begin{equation}
  \Pr \left( \left| R_{n}^{\ast} (a, s, j) \right| > \varepsilon \middle|
  \mathcal{F}_{n} (a, s, j) \right) \overset{\mathrm{p}}{\to} 0
  \label{eqn--sufficient-conditional-convergence-in-prob}
\end{equation}
By Chebychev's Inequality, for an appropriately chosen \(R_{n}^{\ast} (a, s,
j)\) and \(\mathcal{F}_{n} (a, s, j)\), it is sufficient to argue that
\begin{equation}
  \mathbb{E} \left[ \left| R_{n}^{\ast} (a, s, j) \right|^{2} \middle|
  \mathcal{F}_{n} (a, s, j) \right] \overset{\mathrm{p}}{\to} 0.
  \label{eqn--sufficient-conditional-convergence-in-prob-chebychev}
\end{equation}
We now construct \(R_{n}^{\ast} (a, s, j)\) and \(\mathcal{F}_{n}\) for the
above argument.

Let \(\mathbf{W}_{\ast \ast} = \left\{ Y_{i}^{\ast} (1, s), Y_{i}^{\ast} (0, s),
Z_{i}^{\ast} (s) : (i, s) \in \mathbb{N} \times \mathbb{N}_{\mathcal{S}}
\right\}\) be an independent copy of \(\mathbf{W}_{\ast}\) in
\th\ref{asm--coupling-construct} satisfying the following
\begin{align}
  \mathbf{W}_{\ast \ast} \overset{\mathrm{d}}{=}
  & \ \mathbf{W}_{\ast} \\
  \mathbf{W}_{\ast \ast} \indep
  & \ \left( \mathbf{W}, \mathbf{W}_{\ast}, \left\{ \mathbf{A}_{n} : n \in
    \mathbb{N} \right\}, \mathbf{U}_{n} \right).
\end{align}
We will use \(\mathbf{W}_{\ast}\) to proxy for the evaluation points \(Z_{i}\)
in the estimates of \(\widehat{m}_{n i}\) and \(\mathbf{W}_{\ast \ast}\) to
proxy for the estimation sample used for \(\widehat{m}_{n i}\).
Define the following.
\begin{equation}
  \begin{split}
   R_{n}^{\ast} (a, s, j) =
    & \ \frac{1}{\sqrt{n}} \left\{ \sum_{i = 1}^{\widehat{N}_{n} (a, s, j)}
      \frac{1 - \widehat{\pi}_{n, a} (s)}{\widehat{\pi}_{n, a} (s)} \xi_{n i}
      (a, s, j) - \sum_{i =
      \widehat{N}_{n} (a, s, j) + 1}^{\widehat{N}_{n} (s, j)}
      \xi_{n i} (a, s, j) \right\}
    \\
    \text{where } \xi_{n i} (a, s j) =
    & \ \widehat{m} \left( \widehat{N}_{n}^{\ast} (a, s, j) , a, Z_{i} (s);
      \mathcal{E}_{n} \right) - m_{\ast} (Z_{i} (s)), \\
    \mathcal{E}_{n} (a, s, j) =
    & \ \left\{ Y_{\iota}^{\ast} (a, s), Z_{\iota}^{\ast} (s) : \iota = 1,
      \dots, \widehat{N}_{n}^{\ast} (a, s, j) \right\} \\
  \end{split}
\end{equation}
In the above, \(\mathcal{E}_{n} (a, s, j)\) is an independent copy of the
estimation sample used for estimation of \(m_{\ast} (a, \cdot)\) in fold \(j\)
and \(\widehat{N}_{n}^{\ast} (a, s, j)\) in
\eqref{eqn--strat-fold-estimation-sampe-size}
is the number of observations in
\(\mathcal{E}_{n} (a, s, j)\). Note that the sample splits \(\left\{
\mathcal{G}_{n} (s, j) : s \in \mathbb{N}_{\mathcal{S}}, j \in \mathbb{N}_{J}
\right\}\) are independent across
strata and depend only on stratum sizes and the sizes of the treatment groups
within a given stratum. Furthermore, beyond the stratum treatment group sizes,
the particular fold that an individual observation falls into is determined by
the exogenous random variables \(U_{n} (1, s), U_{n} (0, s)\) which are
independent to the overall sample. Furthermore, for any individual observation
in the \(j\)\textsuperscript{th} fold within stratum \(s\), only observations
within that stratum, but outside that fold are used in estimation of
\(\widehat{m}_{n i}\). By repeated applications of \th\ref{lem--coupling-lemma},
it follows that in terms of comparing marginal distributions,
\begin{align*}
  \left. R_{n} (a, s, j) \middle| \mathbf{A}_{n}, \mathbf{S}_{n}, \mathbf{U}_{n}
  \right. \overset{\mathrm{d}}{=}
  & \ \left. R_{n}^{\ast} (a, s, j) \middle|
  \mathbf{A}_{n}, \mathbf{S}_{n}, \mathbf{U}_{n} \right. \\
  \implies R_{n} (a, s, j) \overset{\mathrm{d}}{=}
  & \ R_{n}^{\ast} (a, s, j).
\end{align*}
Let \(\mathbf{N}_{n} (a, s, j) (a, s, j) = \left( N_{n} (s), N_{n} (a, s),
\widehat{N}_{n} (a, s, j) \right)\). In
\eqref{eqn--sufficient-conditional-convergence-in-prob} and
\eqref{eqn--sufficient-conditional-convergence-in-prob-chebychev},
\(\mathcal{F}_{n} (a, s, j)\) will be the \(\sigma\)-algebra generated by
\(\mathbb{N}_{n} (a, s, j)\). Since \(\left( \mathbf{W}_{\ast}, \mathbf{W}_{\ast
\ast} \right) \indep \mathbf{N}_{n} (a, s, j)\), taking
conditional expectations,
\begin{align*}
  \mathbb{E} \left[ R_{n}^{\ast} (a, s, j) \middle| \mathbf{N}_{n} (a, s, j),
  \mathcal{E}_{n} (a, s, j) \right] =
  & \ \frac{1}{\sqrt{n}} \sum_{i = 1}^{\widehat{N}_{n} (a, s, j)}
    \frac{1 - \widehat{\pi}_{n, a} (s)}{\widehat{\pi}_{n, a} (s)} \mathbb{E}
    \left[ \xi_{n i} (a, s, j) \middle| \mathbf{N}_{n} (a, s, j),
    \mathcal{E}_{n} (a, s, j) \right] \\
  & - \frac{1}{\sqrt{n}} \sum_{i = \widehat{N}_{n} (a, s, j) +
    1}^{\widehat{N}_{n} (s, j)} \mathbb{E} \left[ \xi_{n i} (a, s, j) \middle|
    \mathbf{N}_{n} (a, s, j), \mathcal{E}_{n} (a, s, j) \right] \\
  =
  & \ \frac{1}{\sqrt{n}} \left\{ \frac{1 - \widehat{\pi}_{n, a}
    (s)}{\widehat{\pi}_{n, a} (s)} \widehat{N}_{n} (a, s, j) - \widehat{N}_{n}
    (s, j) - \widehat{N}_{n} (a, s, j) \right\} \\
  & \ \times \mathbb{E} \left[ \xi_{n 1} (a, s, j) \middle|
    \mathbf{N}_{n} (a, s, j), \mathcal{E}_{n} (a, s, j) \right]
\end{align*}
Hence, by Jensen's inequality,
\begin{equation}
  \begin{split}
    \left| \mathbb{E} \left[ R_{n}^{\ast} (a, s, j) \middle| \mathbf{N}_{n} (a,
    s, j), \mathcal{E}_{n} (a, s, j) \right] \right| \leq
    & \ \frac{1}{\sqrt{n}} \left\{ \frac{1}{\widehat{\pi}_{n, a} (s)}
      \widehat{N}_{n} (a, s, j) - \widehat{N}_{n} (s, j) \right\} \\
    & \times \mathbb{E} \left[ \xi_{n 1} (a, s, j)^{2} \middle| \mathcal{E}_{n}
      (a, s, j) \right]^{\frac{1}{2}}.
  \end{split}
  \label{eqn--ERasj-cond-En}
\end{equation}
Furthermore, conditional on \(\mathbf{N}_{n} (a, s, j), \mathcal{E}_{n} (a, s,
j)\), \(\xi_{n i} (a, s, j)\) are independent and identically distributed across
\(i = 1, \dots, \widehat{N}_{n} (s, j)\).
\begin{align*}
  \mathrm{Var} \left[ R_{n}^{\ast} (a, s, j) \middle| \mathbf{N}_{n} (a, s, j),
  \mathcal{E}_{n} (a, s, j) \right] =
  & \ \frac{1}{n} \sum_{i = 1}^{\widehat{N}_{n} (a, s, j)} \left( \frac{1 -
    \widehat{\pi}_{n, a} (s)}{\widehat{\pi}_{n, a} (s)} \right)^{2} \mathrm{Var}
    \left[ \xi_{n i} (a, s, j) \middle| \mathbf{N}_{n} (a, s, j),
    \mathcal{E}_{n} \right] \\
  & + \frac{1}{n} \sum_{i = \widehat{N}_{n} (a, s, j) + 1}^{\widehat{N}_{n} (s,
    j)} \mathrm{Var} \left[ \xi_{n i} (a, s, j) \middle| \mathbf{N}_{n} (a, s,
    j), \mathcal{E}_{n} \right] \\
  =
  & \ \frac{1}{n} \left[ \left( \frac{1 - \widehat{\pi}_{n, a}
    (s)}{\widehat{\pi}_{n, a} (s)} \right)^{2} \widehat{N}_{n} (a, s, j) +
    \widehat{N}_{n} (s, j) - \widehat{N}_{n} (a, s, j) \right] \\
  & \times \mathrm{Var} \left[ \xi_{n 1} (a, s, j) \middle| \mathbf{N}_{n} (a,
    s, j), \mathcal{E}_{n} \right] \\
  =
  & \ \frac{1}{n} \left[ \frac{1 - 2 \widehat{\pi}_{n, a} (s)}{\widehat{\pi}_{n,
    a} (s)} \left( \frac{1 - \widehat{\pi}_{n, a} (s)}{\widehat{\pi}_{n, a}
    (s)^{2}} \right)^{2} \widehat{N}_{n} (a, s, j) + \widehat{N}_{n} (s, j)
    \right] \\
  & \times \mathrm{Var} \left[ \xi_{n 1} (a, s, j) \middle| \mathbf{N}_{n} (a,
    s, j), \mathcal{E}_{n} (a, s, j) \right]
\end{align*}
Since the variance is always bounded above by the second moment,
\begin{equation}
  \begin{split}
    \mathrm{Var} \left[ R_{n}^{\ast} (a, s, j) \middle| \mathbf{N}_{n} (a, s,
    j), \mathcal{E}_{n} (a, s, j) \right] \leq
    & \frac{1}{n} \left[ \frac{1 - 2 \widehat{\pi}_{n, a} (s)}{\widehat{\pi}_{n,
      a} (s)^{2}} \widehat{N}_{n} (a, s, j) + \widehat{N}_{n} (s, j) \right] \\
    & \times \mathbb{E} \left[ \xi_{n 1} (a, s, j)^{2} \middle| \mathbf{N}_{n}
      (a, s, j), \mathcal{E}_{n} (a, s, j) \right]
  \end{split}
  \label{eqn--VRasj-cond-En}
\end{equation}
Combining the two upper bounds \eqref{eqn--ERasj-cond-En} and
\eqref{eqn--VRasj-cond-En}, we get
\begin{align*}
  \mathbb{E} \left[ R_{n}^{\ast} (a, s, j)^{2} \middle| \mathbf{N}_{n} (a, s,
  j), \mathcal{E}_{n} (a, s, j) \right] =
  & \ \mathbb{E} \left[ R_{n}^{\ast} (a, s, j)^{2} \middle| \mathbf{N}_{n} (a,
    s, j), \mathcal{E}_{n} (a, s, j) \right] \\
  & + \mathrm{Var} \left[ R_{n}^{\ast} (a, s, j) \middle| \mathbf{N}_{n} (a, s,
    j), \mathcal{E}_{n} (a, s, j) \right] \\
  \leq
  & \frac{1}{n} \left[
    \begin{array}{l}
      \left\{ \frac{1}{\widehat{\pi}_{n, a} (s)} \widehat{N}_{n} (a, s, j) -
      \widehat{N}_{n} (s, j) \right\}^{2} \\
      + \frac{1 - 2 \widehat{\pi}_{n, a} (s)}{\widehat{\pi}_{n, a} (s)^{2}}
      \widehat{N}_{n} (a, s, j) + \widehat{N}_{n} (s, j)
    \end{array}
    \right] \\
  & \times \mathbb{E} \left[ \xi_{n 1} (a, s, j)^{2} \middle| \mathbf{N}_{n} (a,
    s, j), \mathcal{E}_{n} (a, s, j) \right].
\end{align*}
By monotonicity of expectations,
\begin{align*}
  \mathbb{E} \left[ R_{n}^{\ast} (a, s, j)^{2} \middle| \mathbf{N}_{n} (a, s,
  j) \right] =
  & \ \mathbb{E} \left[ \mathbb{E} \left[ R_{n}^{\ast} (a, s, j)^{2} \middle|
    \mathbf{N}_{n} (a, s, j), \mathcal{E}_{n} (a, s, j) \right] \middle|
    \mathbf{N}_{n} (a, s, j) \right] \\
  \leq
  & \ \frac{1}{n} \left[
    \begin{array}{l}
      \left\{ \frac{1}{\widehat{\pi}_{n, a} (s)} \widehat{N}_{n} (a, s, j) -
      \widehat{N}_{n} (s, j) \right\}^{2} \\
      + \frac{1 - 2 \widehat{\pi}_{n, a} (s)}{\widehat{\pi}_{n, a} (s)^{2}}
      \widehat{N}_{n} (a, s, j) + \widehat{N}_{n} (s, j)
    \end{array}
    \right] \\
  & \times \mathbb{E} \left[ \mathbb{E} \left[ \xi_{n 1} (a, s, j)^{2} \middle|
    \mathbf{N}_{n} (a, s, j), \mathcal{E}_{n} (a, s, j) \right] \middle|
    \mathbf{N}_{n} (a, s, j) \right] \\
  = & \ \frac{1}{n} \left[
    \begin{array}{l}
      \left\{ \frac{1}{\widehat{\pi}_{n, a} (s)} \widehat{N}_{n} (a, s, j) -
      \widehat{N}_{n} (s, j) \right\}^{2} \\
      + \frac{1 - 2 \widehat{\pi}_{n, a} (s)}{\widehat{\pi}_{n, a} (s)^{2}}
      \widehat{N}_{n} (a, s, j) + \widehat{N}_{n} (s, j)
    \end{array}
    \right] \\
  & \times \mathbb{E} \left[ \xi_{n 1} (a, s, j)^{2} \middle| \mathbf{N}_{n} (a,
    s, j) \right] \\
  =
  & \ \left[
    \begin{array}{l}
      \frac{\widehat{N}_{n} (s, j)^{2}}{n^{2} \widehat{\pi}_{n, a} (s)^{2}}
      \left\{ \sqrt{n} \left[ \frac{\widehat{N}_{n} (a, s, j)}{\widehat{N}_{n}
      (s, j)} - \pi_{a} (s) \right] + \sqrt{n} \left[ \pi_{a} (s) -
      \widehat{\pi}_{n, a} (s) \right] \right\}^{2} \\
      + \frac{1 - 2 \widehat{\pi}_{n, a} (s)}{\widehat{\pi}_{n, a} (s)^{2}}
      \frac{\widehat{N}_{n} (a, s, j)}{n} + \frac{\widehat{N}_{n} (s, j)}{n}
    \end{array}
    \right] \\
  & \times \mathbb{E} \left[ \xi_{n 1} (a, s, j)^{2} \middle| \mathbf{N}_{n} (a,
    s, j) \right]
\end{align*}
Now, we have by \th\ref{lem--sample-split-tends-correct} and our maintained
hypotheses about \(\widehat{\pi}_{n, a} (s)\), we have
\begin{align*}
  \left[
    \begin{array}{l}
      \frac{\widehat{N}_{n} (s, j)^{2}}{n^{2} \widehat{\pi}_{n, a} (s)^{2}}
      \left\{ \sqrt{n} \left[ \frac{\widehat{N}_{n} (a, s, j)}{\widehat{N}_{n}
      (s, j)} - \pi_{a} (s) \right] + \sqrt{n} \left[ \pi_{a} (s) -
      \widehat{\pi}_{n, a} (s) \right] \right\}^{2} \\
      + \frac{1 - 2 \widehat{\pi}_{n, a} (s)}{\widehat{\pi}_{n, a} (s)^{2}}
      \frac{\widehat{N}_{n} (a, s, j)}{n} + \frac{\widehat{N}_{n} (s, j)}{n}
    \end{array}
    \right] =
  & \ \left[
    \begin{array}{l}
      O_{\mathrm{p}} (1) \cdot \left\{ O_{\mathrm{p}} (1) + O_{\mathrm{p}} (1)
      \right\} \\
      + O_{\mathrm{p}} (1) \cdot O_{\mathrm{p}} (1) + O_{\mathrm{p}} (1)
    \end{array}
    \right] \\
  = & \ O_{\mathrm{p}} (1)
\end{align*}
so that
\begin{equation}
  \mathbb{E} \left[ R_{n}^{\ast} (a, s, j)^{2} \middle| \mathbf{N}_{n} (a, s,
  j) \right] \leq O_{\mathrm{p}} (1) \cdot \mathbb{E} \left[ \xi_{n 1} (a, s,
  j)^{2} \middle| \mathbf{N}_{n} (a, s, j) \right].
  \label{eqn--Rnasj2-L2-bound}
\end{equation}
Then, it remains to be shown that \(\mathbb{E} \left[ \xi_{n 1} (a, s, j)^{2}
\middle| \mathbf{N}_{n} (a, s, j) \right] = o_{\mathrm{p}} (1)\). Note that
\(\widehat{N}_{n}^{\ast} (s, j) \overset{\mathrm{p}}{\to} \infty\) in
the sense of \th\ref{def--stochastic-divergence} by
\th\ref{lem--sample-split-tends-correct} and
\th\ref{lem--ratio-conv-implies-div}.
\begin{equation*}
  \zeta_{n} (a, s) = \widehat{m} \left( n , a, Z_{1} (s); Y_{1}^{\ast} (a, s),
  Z_{1}^{\ast} (s), \dots, Y_{n}^{\ast} (a, s), Z_{n}^{\ast} (s) \right) -
  m_{\ast} \left( a, Z_{1} (s) \right).
\end{equation*}
By \th\ref{asm--mhat-L2},

\(\lim_{n \to \infty} \mathbb{E} \left[ \zeta_{n} (a,
s)^{2} \right] = 0\), i.e. \(\zeta_{n} \overset{\mathrm{L}_{2}}{\to} 0\). Note
that we can write
\begin{equation*}
  \xi_{n 1} (a, s, j) = \zeta_{\widehat{N}_{n}^{\ast} (a, s, j)} (a, s).
\end{equation*}
By \th\ref{lem--random-subseq-conv-Lr}, \(\mathbb{E} \left[ \xi_{n 1} (a, s,
j)^{2} | \mathbf{N}_{n} (a, s, j) \right] = o_{\mathrm{p}} (1)\). Therefore,
by \eqref{eqn--Rnasj2-L2-bound},
\begin{equation}
  \mathbb{E} \left[ R_{n}^{\ast} (a, s, j)^{2} \middle| \mathbf{N}_{n} (a, s,
  j) \right] \leq O_{\mathrm{p}} (1) \cdot \mathbb{E} \left[ \xi_{n 1} (a, s,
  j)^{2} \middle| \mathbf{N}_{n} (a, s, j) \right] = O_{\mathrm{p}} (1) \cdot
  o_{\mathrm{p}} (1) = o_{\mathrm{p}} (1).
  \label{eqn--Rnasj2-L2-op1}
\end{equation}
which implies by Chebychev's Inequality
\begin{equation*}
  \Pr \left( \left| R_{n}^{\ast} (a, s, j) \right| > \varepsilon \middle|
  \mathbf{N}_{n} (a, s, j) \right) \leq \varepsilon^{- 2} \mathbb{E} \left[
  R_{n}^{\ast} (a, s, j)^{2} \middle| \mathbf{N}_{n} (a, s, j) \right] =
  o_{\mathrm{p}} (1).
\end{equation*}
By the Law of Iterated Expectations and the extension of the Dominated
Convergence Theorem to convergence in probability, since the conditional
probability takes values in the unit interval \([0, 1]\),
\begin{equation*}
  \lim_{n \to \infty} \Pr \left( \left| R_{n}^{\ast} (a, s, j) \right| >
  \varepsilon \right) = \lim_{n \to \infty} \mathbb{E} \left[ \Pr \left( \left|
  R_{n}^{\ast} (a, s, j) \right| > \varepsilon \middle| \mathbf{N}_{n} (a, s, j)
  \right) \right] = 0.
\end{equation*}
This concludes the proof.
\end{proof}

%%% Local Variables:
%%% mode: latex
%%% TeX-master: "../2023_semipar_eff_car"
%%% End:
% LocalWords:  SSRA SPBR ATT
% ! TEX root = ../2023_semipar_eff_car.tex

\section{Proofs of additional lemmas in the paper}

\subsection{Proof of Lemma \ref{lem--prod-struct}}

\begin{proof}[Proof of \th\ref{lem--prod-struct}]
We start by noting that \eqref{eqn--data-joint-dist-prod} implies
\eqref{eqn--data-joint-density-prod}, and so we prove
\eqref{eqn--data-joint-dist-prod}. To that end, for a given \(Q \in
\mathbf{Q}\), by first conditioning \(\mathbf{S}_{n} = \mathbf{s}_{n}\), we have
\begin{equation*}
  P_{n} \left( E_{n} \left( \mathbf{y}_{n}, \mathbf{a}_{n}, \mathbf{z}_{n},
  \mathbf{s}_{n} \right); Q \right) = P_{n} \left( \mathbf{Y}_{n} \leq
  \mathbf{y}_{n}, \mathbf{A}_{n} = \mathbf{a}_{n}, \mathbf{Z}_{n} \leq
  \mathbf{z}_{n} \middle| \mathbf{S}_{n} = \mathbf{s}_{n}; Q \right) P_{n}
  \left( \mathbf{S}_{n} = \mathbf{s}_{n}; Q \right).
\end{equation*}
The equation for the observed outcomes, \eqref{eqn--obsoutproc}, implies that
after writing \(\mathbf{a}_{n} = \left( a_{1}, \dots, a_{n} \right)\),
\begin{equation*}
  P_{n} \left( \mathbf{Y}_{n} \leq \mathbf{y}_{n}, \mathbf{A}_{n} =
  \mathbf{a}_{n}, \mathbf{Z}_{n} \leq \mathbf{z}_{n} \middle| \mathbf{S}_{n} =
  \mathbf{s}_{n}; Q \right) = P_{n} \left( Y_{i} \left( a_{i} \right) \leq y_{i}
  \ \forall i \in \mathbb{N}_{n}, \mathbf{A}_{n} = \mathbf{a}_{n},
  \mathbf{Z}_{n} \leq \mathbf{z}_{n} \middle| \mathbf{S}_{n} = \mathbf{s}_{n}; Q
  \right).
\end{equation*}
Under \th\ref{asm--treat-strat}\ref{asm--treat-exog}, it follows that
\begin{align*}
  P_{n} \left( Y_{i} \left( a_{i} \right) \leq y_{i} \ \forall i \in
  \mathbb{N}_{n}, \mathbf{A}_{n} = \mathbf{a}_{n}, \mathbf{Z}_{n} \leq
  \mathbf{z}_{n} \middle| \mathbf{S}_{n} = \mathbf{s}_{n}; Q \right)
  =
  & \ P_{n} \left( Y_{i} \left( a_{i} \right) \leq y_{i} \ \forall i \in
    \mathbb{N}_{n}, \mathbf{Z}_{n} \leq \mathbf{z}_{n} \middle| \mathbf{S}_{n} =
    \mathbf{s}_{n}; Q \right) \\
  & \times P_{n} \left( \mathbf{A}_{n} = \mathbf{a}_{n} \middle| \mathbf{S}_{n}
    = \mathbf{s}_{n}; Q \right).
\end{align*}
The second term in the product is \(\alpha_{n} \left( \mathbf{a}_{n} \middle|
\mathbf{s}_{n} \right)\) by \th\ref{asm--treat-strat}\ref{asm--cov-adapt}.
Combining the above two displayed equations alongside the fact that \(\left\{
W_{i} \right\}_{i = 1}^{n}\) are i.i.d, we have
\begin{equation}
  \begin{split}
    & P_{n} \left( \mathbf{Y}_{n} \leq \mathbf{y}_{n}, \mathbf{A}_{n} =
      \mathbf{a}_{n}, \mathbf{Z}_{n} \leq \mathbf{z}_{n} \middle| \mathbf{S}_{n}
      = \mathbf{s}_{n}; Q \right) \\
    & = \ \alpha_{n} \left( \mathbf{a}_{n} \middle| \mathbf{s}_{n} \right)
      \times \prod_{s = 1}^{\mathcal{S}} \prod_{i \in \mathbb{N}_{n} : s_{i} =
      s} \left\{
      \begin{array}{l}
        Q \left( Y_{i} (1) \leq y_{i}, Z_{i} \leq z_{i} \middle| S_{i} = s
        \right)^{a_{i}} \\
        \times Q \left( Y_{i} (0) \leq y_{i}, Z_{i} \leq z_{i} \middle| S_{i} =
        s \right)^{1 - a_{i}} \\
      \end{array}
      \right\}.
  \end{split}
  \label{eqn--ycondzs-prod}
\end{equation}
Finally, the fact that \(\left\{ W_{i} \right\}_{i = 1}^{n}\) are i.i.d. in
conjunction with the fact that \(a_{i} + \left( 1 - a_{i} \right) = 1\) for each
\(i \in \mathbb{N}_{n}\) implies that
\begin{equation}
  P_{n} \left( \mathbf{S}_{n} = \mathbf{s}_{n}; Q \right) = \prod_{s =
  1}^{\mathcal{S}} \prod_{i \in \mathbb{N}_{n} : s_{i} = s} Q \left( S_{i} = s
  \right) = \prod_{s = 1}^{\mathcal{S}} \prod_{i \in \mathbb{N}_{n} : s_{i} = s}
  Q \left( S_{i} = s \right)^{a_{i}} Q \left( S_{i} = s \right)^{1 - a_{i}}.
  \label{eqn--zsprod}
\end{equation}
Combining \eqref{eqn--ycondzs-prod} and \eqref{eqn--zsprod} yields
\eqref{eqn--data-joint-dist-prod}.
\end{proof}

\subsection{Proof of Lemma \ref{lem--parametric-submodel-lan}}

Before moving to the proof of \th\ref{lem--parametric-submodel-lan} we will make
a few observations that will be useful during the course of the proof. For \(Q
\in \mathbf{Q}\) as defined in \th\ref{asm--Q}, let
\begin{equation}
  \phi (s; Q) = Q (\mathbb{S} (Z) = s) \quad \text{and} \quad \widetilde{q}_{a}
  (y, z | s; Q) = \frac{q_{a} \left( y, z; Q \right)}{\phi (s; Q)} \mathbb{I}
  \left\{ \mathbb{S} (z) = s \right\}
  \label{eqn--S-marginal-q-cond-S}
\end{equation}
for each \(s \in \mathbb{N}_{\mathcal{S}}\). \(\phi (\cdot; Q)\) is the marginal
probability mass function of \(\mathbb{S} (Z)\) under \(Q\) and
\(\widetilde{q}_{a} (\cdot | s; Q)\) is the conditional density of \((Y (a),
Z)\) given \(\mathbb{S} (Z) = s\) under \(Q\). Note that of course, \(\phi\)
must be common across \(a \in \{0, 1\}\).

Now, in any regular parametric submodel \(\mathcal{P}^{0}\), let the analogous
versions of \eqref{eqn--S-marginal-q-cond-S} be
\begin{equation}
  \phi (s; \theta) = \int \mathbb{I} \{\mathbb{S} (z) = s\} q_{a} (y, z; \theta)
  \nu_{a} (\mathrm{d} y, \mathrm{d} z) \quad \text{and} \quad
  \widetilde{q}_{a} (y, z | s; \theta) = \frac{q_{a} \left( y, z; \theta
  \right)}{\phi (s; \theta)} \mathbb{I} \left\{ \mathbb{S} (z) = s \right\}
  \label{eqn--S-marginal-q-cond-S-rpsm}
\end{equation}
The map \(\theta \mapsto \sqrt{\phi (\cdot; \theta)}\) is differentiable in
quadratic mean at \(\theta_{0}\) due to Proposition A.5.5 of
\citet[p. 461]{1998bickelEfficientAdaptiveEstimation}, with the score function
\begin{equation}
  r (s) = \mathbb{E}_{P_{0}} \left[ \dot{\ell}_{a} (Y (a), Z) \middle|
  \mathbb{S} (Z) = s \right] \quad \text{for any choice of } a \in \{0, 1\}.
  \label{eqn--score-in-strata}
\end{equation}
Of course \(r (\cdot)\) must have mean zero, i.e.
\begin{equation}
  \sum_{s = 1}^{\mathcal{S}} r (s) \phi \left( s; Q_{0} \right) =
  \mathbf{0}_{d}.
\end{equation}
The associated information matrix will be denoted
\begin{equation*}
  \mathcal{J} = \mathbb{E}_{P_{0}} \left[ r (\mathbb{S} (Z)) r (\mathbb{S}
  (Z))^{\prime} \right].
  \label{eqn--info-in-strata}
\end{equation*}

Define the ``implied conditional score'' for \(\widetilde{q}_{a} (y, z | s;
\theta)\) by
\begin{equation}
  \lambda_{a} (y, z | s) = \left[ \dot{\ell}_{a} (y, z) - r (s) \right]
  \mathbb{I} (\mathbb{S} (z) = s).
  \label{eqn--conditional-score-in-strata}
\end{equation}
By the Law of Iterated Expectations,
\begin{equation}
  \mathbb{E}_{Q_{0}} \left[ \lambda_{a} (Y (a), Z | \mathbb{S} (Z)) \middle|
  \mathbb{S} (Z) = s \right] = \mathbf{0}_{d} \quad \text{for all } s \in
  \mathbb{N}_{\mathcal{S}}.
  \label{eqn--strat-score-cond-mean-zero}
\end{equation}
Let the ``implied conditional information matrix'' be
\begin{equation}
  \mathcal{J} (a, s) = \mathbb{E}_{Q_{0}} \left[ \lambda_{a} (Y (a), Z |
  \mathbb{S} (Z)) \lambda_{a} (Y (a), Z | \mathbb{S} (Z))^{\prime} \middle|
  \mathbb{S} (Z) = s \right].
  \label{eqn--conditional-info-in-strata}
\end{equation}
Note that by \eqref{eqn--strat-score-cond-mean-zero},
\begin{equation*}
  \mathbb{E}_{Q_{0}} \left[ \lambda_{a} (Y (a), Z | \mathbb{S} (Z)) r
  (\mathbb{S} (Z))^{\prime} | \mathbb{S} (Z) = s \right] = \mathbf{0}_{d \times
  d} \quad \text{for all } s \in \mathbb{N}_{\mathcal{S}}.
\end{equation*}
An implication of this is that we can write \(\mathcal{I}\) in
\eqref{eqn--strata-info-mat} as
\begin{equation}
  \mathcal{I} = \mathcal{J} + \sum_{s = 1}^{\mathcal{S}} Q_{0} (\mathbb{S} (Z) =
  s) \sum_{a \in \{0, 1\}}^{n} \pi_{a} (s) \mathcal{J} (a, s)
  \label{eqn--strata-info-mat-breakdown}
\end{equation}
where \(\pi_{a} (s) = \pi (s)^{a} (1 - \pi (s))^{1 - a}\). It is intuitive but
not obvious that the maps \(\theta \mapsto \sqrt{q_{a} \left( \cdot | \cdot;
\theta \right)}\) inherit the differentiability in quadratic mean property from
\(\theta \mapsto \sqrt{q_{a} (\cdot; \theta)}\) and \(\theta \mapsto \sqrt{\phi
(\cdot \theta)}\). A statement to that effect is proven in
\th\ref{lem--qmd-strata-cond-density} via a version of the quotient rule for
differentiability in quadratic mean. The resulting score and information matrix
are as in \eqref{eqn--conditional-score-in-strata} and
\eqref{eqn--conditional-info-in-strata} respectively. We now proceed with the
proof of \th\ref{lem--parametric-submodel-lan}.

\begin{proof}[Proof of \th\ref{lem--parametric-submodel-lan}]
For \th\ref{lem--parametric-submodel-lan} \ref{lem--parametric-submodel-lan-wc},
denote
\begin{equation}
  \lambda (y, a, z | s) = \left[ a \lambda_{1} (y, z | s) + (1 - a) \lambda_{0}
  (y, z | s) \right] \mathbb{I} (\mathbb{S} (z) = s).
  \label{eqn--conditional-score-in-strata1}
\end{equation}
By \eqref{eqn--sample-score}, \eqref{eqn--score-in-strata} and
\eqref{eqn--conditional-score-in-strata},
\begin{align*}
  \dot{\ell} (y, a, z) =
  & \ \sum_{s = 1}^{\mathcal{S}} \left[ a \lambda_{1} (y, z | s) + (1 - a)
    \lambda_{0} (y, z | s) + r (s) \right] \mathbb{I} (\mathbb{S} (z) = s) \\
  =
  & \ a \lambda_{1} (y, z | \mathbb{S} (z)) + (1 - a) \lambda_{0} (y, z |
    \mathbb{S} (z)) + r (\mathbb{S} (z)).
\end{align*}
By \eqref{eqn--sample-score} again,
\begin{equation*}
  \frac{1}{\sqrt{n}} \dot{\ell}_{n} = \frac{1}{\sqrt{n}} \sum_{i = 1}^{n} r
  \left( \mathbb{S} \left( Z_{i} \right)\right) + \frac{1}{\sqrt{n}} \sum_{i =
  1}^{n} \lambda \left( Y_{n i}, A_{n i}, Z_{i} \middle| \mathbb{S} \left( Z_{i}
  \right) \right).
\end{equation*}
Applying \th\ref{lem--car-vector-clt} to the right hand side of the above
expression,
\begin{equation*}
  \frac{1}{\sqrt{n}} \dot{\ell}_{n} \overset{\mathrm{d}}{\to}
  \mathcal{N} \left( \mathbf{0}_{d}, \mathcal{J} + \sum_{s =
  1}^{\mathcal{S}} Q_{0} (\mathbb{S} (Z) = s) \sum_{a \in \{0, 1\}}^{n} \pi_{a}
  (s) \mathcal{J} (a, s) \right) = \mathcal{N} \left( \mathbf{0}_{d},
  \mathcal{I} \right)
\end{equation*}
where the last equality follows by \eqref{eqn--strata-info-mat-breakdown}.

We now prove \th\ref{lem--parametric-submodel-lan}
\ref{lem--parametric-submodel-lan-uan}. As before, \(S_{i} = \mathbb{S} \left(
Z_{i} \right)\) for all \(i \in \mathbb{N}_{n}\). For any given \(\theta \in
\Theta\), we can rewrite the log-likelihood in \eqref{eqn--sample-loglhood} as
\begin{align*}
  \ell_{n} (\theta) =
  & \ \sum_{i = 1}^{n} \left[ A_{n i} \log \widetilde{q}_{1} \left( Y_{n i},
    Z_{i} \middle| S_{i}; \theta \right) + \left( 1 - A_{n i} \right) \log
    \widetilde{q}_{0} \left( Y_{n i}, Z_{i} \middle| S_{i}; \theta \right) +
    \sum_{i = 1}^{n} \log \phi \left( S_{i}; \theta \right) \right] \\
  & + \log \alpha_{n} \left( \mathbf{A}_{n} \middle| \mathbf{S}_{n} \right).
\end{align*}
Thus, the log-likelihood ratio under the \(t / \sqrt{n}\) alternatives in
\eqref{eqn--lr-quad-breakdown} becomes
\begin{align*}
  \ell_{n} \left( \theta_{0} + (t / \sqrt{n}) \right) - \ell_{n}
  \left(\theta_{0} \right) =
  & \ \sum_{i = 1}^{n} A_{n i} \log \frac{\widetilde{q}_{1} \left( Y_{i} (1),
    Z_{i} \middle| S_{i}; \theta_{0} + (t / \sqrt{n}) \right)}{\widetilde{q}_{1}
    \left( Y_{i} (1), Z_{i} \middle| S_{i}; \theta_{0} \right)} \\
  & + \sum_{i = 1}^{n} \left( 1 - A_{n i} \right) \log \frac{\widetilde{q}_{0}
    \left( Y_{i} (0), Z_{i} \middle| S_{i}; \theta_{0} + (t / \sqrt{n})
    \right)}{\widetilde{q}_{0} \left( Y_{i} (0), Z_{i} \middle| S_{i};
    \theta_{0} \right)} \\
  & + \sum_{i = 1}^{n} \log \frac{\phi \left( S_{i}; \theta_{0} + (t / \sqrt{n})
    \right)}{\phi \left( S_{i}; \theta_{0} \right)}
\end{align*}
where we have replaced \(Y_{n i}\) with the appropriate potential outcome
following \eqref{eqn--obsoutproc}. As a result, the remainder term \(R_{n}
\left( \theta_{0}, t \right)\) in \eqref{eqn--lr-quad-breakdown} can be written
as
\begin{align*}
  R_{n} \left( \theta_{0}, t \right) =
  & \ \ell_{n} \left( \theta_{0} + (t / \sqrt{n}) \right) - \ell_{n}
    \left(\theta_{0} \right) - t^{\prime} \frac{1}{\sqrt{n}} \dot{\ell}_{n} +
    \frac{1}{2} t^{\prime} \mathcal{I} t \\
  =
  & \ \sum_{i = 1}^{n} A_{n i} \left[ \log \frac{\widetilde{q}_{1} \left(Y_{i}
    (1), Z_{i} \middle| S_{i}; \theta_{0} + (t / \sqrt{n})
    \right)}{\widetilde{q}_{1} \left( Y_{i} (1), Z_{i} \middle| S_{i};
    \theta_{0} \right)} - t^{\prime} \frac{1}{\sqrt{n}} \lambda_{1} \left( Y_{i}
    (1), Z_{i} \middle| S_{i} \right) + \frac{1}{2 n} t^{\prime} \mathcal{J}
    \left( 1, S_{i} \right) t \right] \\
  & + \sum_{i = 1}^{n} \left( 1 - A_{n i} \right) \left[ \log
    \frac{\widetilde{q}_{0} \left( Y_{i} (0), Z_{i} \middle| S_{i}; \theta_{0} +
    (t / \sqrt{n}) \right)}{\widetilde{q}_{0} \left( Y_{i} (0), Z_{i} \middle|
    S_{i}; \theta_{0} \right)} - t^{\prime} \frac{1}{\sqrt{n}} \lambda_{0}
    \left( Y_{i} (0), Z_{i} \middle| S_{i} \right) + \frac{1}{2 n} t^{\prime}
    \mathcal{J} \left( 0, S_{i} \right) t \right] \\
  & + \sum_{i = 1}^{n} \left[ \log \frac{\phi \left( S_{i}; \theta_{0} + (t /
    \sqrt{n}) \right)}{\phi \left( S_{i}; \theta_{0} \right)} - t^{\prime}
    \frac{1}{\sqrt{n}} r \left( S_{i} \right) + \frac{1}{2 n} t^{\prime}
    \mathcal{J} t \right] \\
  & + \sum_{s = 1}^{\mathcal{S}} \sum_{a \in \{0, 1\}} \left[ \frac{N_{n} (a,
    s)}{n} - \pi_{a} (s) Q_{0} (\mathbb{S} (Z) = s) \right] t^{\prime}
    \mathcal{J} (a, s) t
\end{align*}
By rearranging terms, we can write
\begin{align}
  R_{n} \left( \theta_{0}, t \right) =
  & \ R_{n, 1} \left( \theta_{0}, t \right) + R_{n, 2} \left( \theta_{0}, t
    \right) + \sum_{s = 1}^{\mathcal{S}} \sum_{a \in \{0, 1\}} R_{n, 3} \left(
    a, s; \theta_{0}, t \right)
    \label{eqn--Rn-breakdown} \\
  \text{where } R_{n, 1} \left( \theta_{0}, t \right) =
  & \ \frac{1}{2} \sum_{s = 1}^{\mathcal{S}} \sum_{a \in \{0, 1\}} \left[
    \pi_{a} (s) Q_{0} (\mathbb{S} (Z) = s) - \frac{N_{n} (a, s)}{n} \right]
    t^{\prime} \mathcal{J} (a, s) t
  \label{eqn--Rn1}
  \\
  R_{n, 2} \left( \theta_{0}, t \right) =
  & \ \sum_{i = 1}^{n} \left[ \log \frac{\phi \left( S_{i}; \theta_{0} + (t /
    \sqrt{n}) \right)}{\phi \left( S_{i}; \theta_{0} \right)} - t^{\prime}
    \frac{1}{\sqrt{n}} r \left( S_{i} \right) + \frac{1}{2 n} t^{\prime}
    \mathcal{J} t \right]
  \label{eqn--Rn2}
  \\
  R_{n, 3} \left( a, s; \theta_{0}, t \right) =
  & \ \sum_{i = 1}^{n} \mathbb{I} \left\{ A_{n i} = a, S_{i} = s \right\} \left[
    \begin{array}{l}
      \log \frac{\widetilde{q}_{a} \left( Y_{i} (a), Z_{i} \middle| s;
      \theta_{0} + (t / \sqrt{n}) \right)}{\widetilde{q}_{a} \left( Y_{i} (a),
      Z_{i} \middle| s; \theta_{0} \right)} \\
      - t^{\prime} \frac{1}{\sqrt{n}} \lambda_{a} \left( Y_{i} (a), Z_{i}
      \middle| s \right) + \frac{1}{2 n} t^{\prime} \mathcal{J} \left( a, s
      \right) t
    \end{array}
    \right].
  \label{eqn--Rn3}
\end{align}
For \th\ref{lem--parametric-submodel-lan}, it suffices to show each of the above
terms satisfies
\begin{equation}
  \begin{split}
    & \text{for all } \rho_{n} \in \left\{ R_{n,1}, R_{n, 2} \right\} \cup
      \left\{ R_{n, 3} (a, s; \cdot) : a \in \{0, 1\}, s \in
      \mathbb{N}_{\mathcal{S}} \right\} \text{ and all } \varepsilon, M \in (0,
      \infty) \\
    & \lim_{n \to \infty} \sup_{\|t\| \leq M} P_{0 n} \left( \left| \rho_{n}
      \left( \theta_{0}, t \right) \right| > \varepsilon \right) = 0.
  \end{split}
  \label{eqn--sufficient-parametric-submodel-lan-uan}
\end{equation}

We start by showing that \eqref{eqn--sufficient-parametric-submodel-lan-uan}
holds for \eqref{eqn--Rn1}. For every \(\|t\| \leq M\),
\begin{equation}
  \left| R_{n, 1} \left( \theta_{0}, t \right) \right| \leq \frac{1}{2}
  \mathcal{S} M^{2} G \max_{a \in \{0, 1\}, s \in \mathbb{N}_{\mathcal{S}}}
  \left| \pi_{a} (s) Q_{0} (\mathbb{S} (Z) = s) - \frac{N_{n} (a, s)}{n} \right|
\end{equation}
for \(G = \max_{a \in \{0, 1\}, s \in \mathbb{N}_{\mathcal{S}}} \left\|
\mathcal{J} (a, s) \right\|\). Thus,
\begin{equation*}
  P_{0 n} \left( \left| R_{n, 1} \left( \theta_{0}, t \right) \right| >
  \varepsilon \right) \leq P_{0 n} \left( \max_{a \in \{0, 1\}, s \in
  \mathbb{N}_{\mathcal{S}}} \left| \pi_{a} (s) Q_{0} (\mathbb{S} (Z) = s) -
  \frac{N_{n} (a, s)}{n} \right| > \frac{2 \varepsilon}{\mathcal{S} M^{2} G}
  \right)
\end{equation*}
which implies
\begin{equation*}
  \sup_{\|t\| \leq M}P_{0 n} \left( \left| R_{n, 1} \left( \theta_{0}, t
  \right) \right| > \varepsilon \right) \leq P_{0 n} \left( \max_{a \in \{0,
  1\}, s \in \mathbb{N}_{\mathcal{S}}} \left| \pi_{a} (s) Q_{0} (\mathbb{S} (Z)
  = s) - \frac{N (a, s)}{n} \right| > \frac{2 \varepsilon}{\mathcal{S} M^{2} G}
  \right).
\end{equation*}
The right hand side of the above tends to zero as \(n \to \infty\) by
\th\ref{lem--treat-strat-sample-prop-conv}.
This shows that \eqref{eqn--sufficient-parametric-submodel-lan-uan} holds for
\eqref{eqn--Rn1}.

Next, to see that \eqref{eqn--sufficient-parametric-submodel-lan-uan} holds for
\eqref{eqn--Rn2}, we have already argued that \(\theta \mapsto \sqrt{\phi
(\cdot; \theta)}\) is differentiable in quadratic mean with score \(r
(\cdot)\) \eqref{eqn--score-in-strata} and information matrix \(\mathcal{J}\) in
\eqref{eqn--info-in-strata} due to Proposition A.5.5 of
\citet[p. 461]{1998bickelEfficientAdaptiveEstimation}. Then, an application of
Proposition 2.1.2 of \citet[p. 16]{1998bickelEfficientAdaptiveEstimation} yields
the desired conclusion. Therefore, it remains to show that
\eqref{eqn--sufficient-parametric-submodel-lan-uan} for \eqref{eqn--Rn3}. To
that end, note that by the coupling result in \th\ref{lem--coupling-lemma} and
the fact that the distribution of \(R_{n, 3} \left( \theta_{0}, t \right)\) is
invariant to permutation of the indices of the summands,
\begin{equation}
  R_{n, 3} \left( a, s; \theta_{0}, t \right) \overset{\mathrm{d}}{=} R_{n,
  3}^{\ast} \left( a, s; \theta_{0}, t \right) := \sum_{i = 1}^{N_{n} (a, s)}
  \left[
  \begin{array}{l}
    \log \frac{\widetilde{q}_{a} \left( Y_{i} (a, s), Z_{i} (s) \middle| s;
    \theta_{0} + (t / \sqrt{n}) \right)}{\widetilde{q}_{a} \left( Y_{i} (a, s),
    Z_{i} (s) \middle| s; \theta_{0} \right)} \\
    - t^{\prime} \frac{1}{\sqrt{n}} \lambda_{a} \left( Y_{i} (a, s), Z_{i} (s)
    \middle| s \right) + \frac{1}{2 n} t^{\prime} \mathcal{J} \left( a, s
    \right) t
  \end{array}
  \right]
\end{equation}
where \(\left\{ Y_{i} (a, s), Z_{i} (s) \right\}\) are defined by
\th\ref{asm--coupling-construct} and are independent to \(\mathbf{A}_{n},
\mathbf{W}_{n}\). By \th\ref{lem--qmd-strata-cond-density}, the map \(\theta
\mapsto \sqrt{\widetilde{q}_{a} (\cdot | s; \theta)}\) is differentiable in
quadratic mean at \(\theta_{0}\) with score \(\lambda_{a} (\cdot | s)\) in
\eqref{eqn--conditional-score-in-strata} and information matrix \(\mathcal{J}
(a, s)\) in \eqref{eqn--conditional-info-in-strata}. By Proposition 2.1.1 of
\citet[p. 16]{1998bickelEfficientAdaptiveEstimation},
\begin{equation}
  \sum_{i = 1}^{m}
  \left[
  \begin{array}{l}
    \log \frac{\widetilde{q}_{a} \left( Y_{i} (a, s), Z_{i} (s) \middle| s;
    \theta_{0} + (t / \sqrt{n}) \right)}{\widetilde{q}_{a} \left( Y_{i} (a, s),
    Z_{i} (s) \middle| s; \theta_{0} \right)} \\
    - t^{\prime} \frac{1}{\sqrt{n}} \lambda_{a} \left( Y_{i} (a, s), Z_{i} (s)
    \middle| s \right) + \frac{1}{2 n} t^{\prime} \mathcal{J} \left( a, s
    \right) t
  \end{array}
  \right] \overset{\mathrm{p}}{\to} 0
\end{equation}
uniformly over \(\|t\| \leq M\) in the sense of
\eqref{eqn--sufficient-parametric-submodel-lan-uan} as \(m \to
\infty\). Invariant over choices of \(\|t\| \leq M\),
\th\ref{lem--treat-strat-sample-prop-conv} and
\th\ref{lem--ratio-conv-implies-div} imply that \(N_{n} (a, s)
\overset{\mathrm{p}}{\to} \infty\). Thus, by \th\ref{lem--random-subseq-conv},
we have \eqref{eqn--sufficient-parametric-submodel-lan-uan} for
\eqref{eqn--Rn2}.
\end{proof}

\subsection{Proof of Lemma \ref{lem--parametric-submodel-efficiency}}

\begin{proof}[Proof of \th\ref{lem--parametric-submodel-efficiency}]
By \th\ref{lem--parametric-submodel-lan}, \(\mathcal{P}^{0}\) has the local
asymptotic normality (LAN) property of
\citet{1960lecamLocallyAsymptoticallyNormal} with information matrix
\(\mathcal{I}\) as in \eqref{eqn--strata-info-mat}. Now, assuming for the moment
that \(\mathcal{I}\) is indeed nonsingular,
\th\ref{lem--parametric-submodel-efficiency}
\ref{lem--parametric-submodel-efficiency-convolution} follows by
\citet{1970hajekCharacterizationLimitingDistributions}. This result only
requires the LAN property with a non-singular information matrix and regularity
of the estimator sequence. \th\ref{lem--parametric-submodel-efficiency}
\ref{lem--parametric-submodel-efficiency-lam} follows by the local asymptotic
minimax theorem (\citet[Theorem 4.2]{1972hajekLocalAsymptoticMinimax}), and its
extension to the vector-valued parameters (see for instance
\citet[p. 27]{1998bickelEfficientAdaptiveEstimation} or \citet[Theorem II.12.1
and Remark II.12.2]{1981ibragimovStatisticalEstimation}). The local asymptotic
minimax theorems only require the LAN property with a non-singular information
matrix to hold.

Thus it remains to show that \(\mathcal{I}\) is non-singular. Recall from
\eqref{eqn--strata-info-mat} that
\begin{align*}
  \mathcal{I} = \mathbb{E} \left[ \dot{\ell} (Y, A, Z) \dot{\ell} (Y, A,
  Z)^{\prime} \right] = \sum_{s = 1}^{\mathcal{S}} Q_{0} (\mathbb{S} (Z) = s)
  \sum_{a \in \{0, 1\}} \pi_{a} (s) \mathbb{E} \left[ \dot{\ell}_{a} (Y (a), Z)
  \dot{\ell}_{a} (Y (a), Z)^{\prime} \middle| \mathbb{S} (Z) = s \right].
\end{align*}
where as before, we set \(\pi_{a} (s) = \pi (s)^{a} (1 - \pi (s))^{1 - a}\).
First note that \(\mathcal{I}\) is positive semi-definite: for any vector \(v
\in \mathbb{R}^{d}\),
\begin{align}
  v^{\prime} \mathcal{I} v =
  & \ \sum_{s = 1}^{\mathcal{S}} Q_{0} (\mathbb{S} (Z) = s) \sum_{a \in \{0,
    1\}} \pi_{a} (s) v^{\prime} \mathbb{E} \left[ \dot{\ell}_{a} (Y (a), Z)
    \dot{\ell}_{a} (Y (a), Z)^{\prime} \middle| \mathbb{S} (Z) = s \right] v
    \nonumber \\
  =
  & \ \sum_{s = 1}^{\mathcal{S}} Q_{0} (\mathbb{S} (Z) = s) \sum_{a \in \{0,
    1\}} \pi_{a} (s)  \mathbb{E} \left[ \left( v^{\prime} \dot{\ell}_{a} (Y (a),
    Z) \right)^{2} \middle| \mathbb{S} (Z) = s \right]
    \label{eqn--treat-strat-quad-form}
  \\
  \geq
  & \ 0. \nonumber
\end{align}
To show non-singularity of \(\mathcal{I}\), it therefore suffices to show that
\(\mathcal{I}\) is in fact positive definite. We will argue this by
contradiction. Suppose that there exists a \(v \in \mathbb{R}^{d}\) such that
\eqref{eqn--treat-strat-quad-form} is equal to zero. Now,
\eqref{eqn--treat-strat-quad-form} is non-negatively weighted sum of
non-negative terms and we know that
\begin{equation*}
  \sum_{s = 1}^{\mathcal{S}} Q_{0} (\mathbb{S} (Z) = s) \sum_{a \in \{0, 1\}}
  \pi_{a} (s) = 1,
\end{equation*}
so there is at least one positive weight. Furthermore, since \(\mathbb{S} (Z)\)
is a discrete random variable, we can discard any \(s\)'s for which \(Q_{0}
(\mathbb{S} (Z) = s) = 0\) and retain only \(s\)'s for which \(Q_{0} (\mathbb{S}
(Z) = s) > 0\) without any issues since these latter probabilities will still
sum to one. So, let us assume \(Q_{0} (\mathbb{S} (Z) = s) > 0\) for all \(s \in
\mathbb{N}_{\mathcal{S}}\). In addition, \(\pi (s) \in (0, 1)\) by
assumption. So, \eqref{eqn--treat-strat-quad-form} is in fact a positively
weighted sum of non-negative terms with weights \(Q_{0} (\mathbb{S} (Z) = s)
\pi_{a} (s)\) across \(s \in \mathbb{N}_{\mathcal{S}}\) and \(a \in \{0,
1\}\). Thus having \eqref{eqn--treat-strat-quad-form} equal to zero means that
\begin{equation*}
  \left[ \left( v^{\prime} \dot{\ell}_{a} (Y (a), Z) \right)^{2} \middle|
  \mathbb{S} (Z) = s \right] = 0 \qquad \text{for all } s \in
  \mathbb{N}_{\mathcal{S}} \text{ and } a \in \{0, 1\}.
\end{equation*}
This then means that for each \(a \in \{0, 1\}\), \(\mathcal{I}_{a}\) in
\eqref{eqn--parametric-info-mat} is singular since for \(v\) satisfying the
above, we have
\begin{align*}
  v^{\prime} \mathcal{I}_{a} v =
  & \ v^{\prime} \mathbb{E} \left[ \dot{\ell}_{a} (Y (a), Z) \dot{\ell}_{a} (Y
    (a), Z)^{\prime} \right] v = \mathbb{E} \left[ \left( v^{\prime}
    \dot{\ell}_{a} (Y (a), Z) \right)^{2} \right] \\
  =
  & \ \sum_{s = 1}^{\mathcal{S}} Q_{0} (\mathbb{S} (Z) = s) \mathbb{E} \left[
    \left( v^{\prime} \dot{\ell}_{a} (Y (a), Z) \right)^{2} \middle| \mathbb{S}
    (Z) = s \right] \\
  = & \ \sum_{s = 1}^{\mathcal{S}} Q_{0} (\mathbb{S} (Z) = s) \cdot 0 = 0
\end{align*}
This is a contradiction to the hypothesis of regularity of \(\mathcal{P}^{0}\)
as in \th\ref{def--qmd}. Therefore, \(\mathcal{I}\) must necessarily be strictly
positive definite and hence, non-singular.
\end{proof}

%%% Local Variables:
%%% mode: latex
%%% TeX-master: "../2023_semipar_eff_car"
%%% End:
% LocalWords:  SSRA SPBR ATT
% ! TEX root = ../2023_semipar_eff_car.tex

\section{Auxiliary Results}

\begin{remark}[Notation for this section]
\th\label{rem--added-notation}
In what follows, recall that \(N_{n} (s) = \sum_{i = 1}^{n} \mathbb{I} \left\{
\mathbb{S} \left( Z_{i} \right) = s \right\}\) and \(N_{n} (a, s) = \sum_{i =
1}^{n} \mathbb{I} \left\{ A_{n i} = a, \mathbb{S} \left( Z_{i} \right)
= s \right\}\). Additionally, let \(F (1) = 0\), \(\widetilde{N}_{n} (1) = 0\)
and for \(s \geq 2\), \(F (s) = \sum_{t = 1}^{s - 1} Q_{0} (\mathbb{S} (Z) =
t)\), \(\widetilde{N}_{n} (s) = \sum_{t = 1}^{s - 1} N_{n} (t)\).
\end{remark}

\subsection{Pathwise differentiability of the ATE}

\begin{lemma}
\th\label{lem--tangent-space-P}
Let \(\mathbf{P}\) be as defined in \th\ref{asm--car-pop-ssra} and
\th\ref{rem--identification-ate-P}. The tangent space for \(\mathbf{P}\) at
\(P_{0}\) is \(\dot{\mathbf{P}} \subseteq L_{2} \left( P_{0} \right)\) such that
for every \(h \in \dot{\mathbf{P}}\),
\begin{enumerate}[label=(\alph*)]
\item \label{lem--tangent-space-P-marginal-mean-zero}
  \(\forall a \in \{0, 1\}\), \(\int h (y, a, z) q_{a} \left( y, z; Q_{0}
  \right) \nu_{a} (\mathrm{d} y, \mathrm{d} z) = 0\).
\item \label{lem--tangent-space-P-Z-marginal-equal}
  \(\int_{\mathbb{R}} h (y, 1, z) q_{1} (y, z; Q_{0}) \mu_{1} (\mathrm{d} y)
  = \int h (y, 0, z) q_{0} \left( y, z; Q_{0} \right) \mu_{0} (\mathrm{d} y)\)
  holds for \(z\) \(\mu_{Z}\)-a.e.
\end{enumerate}
\end{lemma}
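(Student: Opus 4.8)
The plan is to establish the two inclusions behind this characterization: that every score arising from a regular parametric submodel of $\mathbf{P}$ must satisfy the mean-zero condition \ref{lem--tangent-space-P-marginal-mean-zero} and the matched-$Z$-marginal condition \ref{lem--tangent-space-P-Z-marginal-equal}, and conversely that every $h \in L_2(P_0)$ obeying \ref{lem--tangent-space-P-marginal-mean-zero} and \ref{lem--tangent-space-P-Z-marginal-equal} is realized as such a score. Throughout I would work from the density in \th\ref{rem--prod-struct-ssra}, namely $p(y,a,z;P)=[q_1(y,z)\,\pi(\mathbb{S}(z))]^{a}[q_0(y,z)\,(1-\pi(\mathbb{S}(z)))]^{1-a}$, and exploit that the propensity factor $\pi(\mathbb{S}(z))$ is fixed and known, so it carries no $\theta$-dependence and contributes nothing to any score. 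Since $\pi(s)\in(0,1)$ over the finitely many strata, the $L_2(P_0)$ norm is equivalent to the sum over $a$ of the $L_2(q_a\,\nu_a)$ norms, so membership in $L_2(P_0)$ amounts to each arm-slice $h(\cdot,a,\cdot)$ being square-integrable against $q_a\,\nu_a$.

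For the forward inclusion I would take a regular parametric submodel as in \th\ref{def--parametric-submodel} and \th\ref{def--qmd}. By \eqref{eqn--sample-score} (with $n=1$) its score is $h(y,a,z)=a\,\dot\ell_1(y,z)+(1-a)\,\dot\ell_0(y,z)$ with $\dot\ell_a=2 D_a/\sqrt{q_a(\cdot;Q_0)}$. Condition \ref{lem--tangent-space-P-marginal-mean-zero} then follows by differentiating the normalization $\int q_a(\cdot;\theta)\,d\nu_a=1$ at $\theta_0$, since quadratic-mean differentiability forces $\int \dot\ell_a\, q_a(\cdot;Q_0)\,d\nu_a=0$. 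Condition \ref{lem--tangent-space-P-Z-marginal-equal} follows by differentiating the $Z$-marginal restriction of \th\ref{def--parametric-submodel}\ref{def--parameteric-submodel-densities}, $\int q_1(y,z;\theta)\mu_1(dy)=\int q_0(y,z;\theta)\mu_0(dy)=g(z;\theta)$, at $\theta_0$: the common derivative $\dot g(z;\theta_0)$ equals both $\int \dot\ell_1\,q_1(\cdot;Q_0)\mu_1(dy)$ and $\int \dot\ell_0\,q_0(\cdot;Q_0)\mu_0(dy)$, which is \ref{lem--tangent-space-P-Z-marginal-equal} for $\mu_Z$-a.e.\ $z$.

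For the reverse inclusion I would first use \ref{lem--tangent-space-P-Z-marginal-equal} to split $h$ into a common $Z$-marginal score and arm-specific conditional scores: set $s_Z(z)=g(z;Q_0)^{-1}\int h(y,1,z)q_1(y,z;Q_0)\mu_1(dy)$, which by \ref{lem--tangent-space-P-Z-marginal-equal} is independent of the arm used to define it, so that $h(y,a,z)-s_Z(z)$ integrates to zero against the conditional law of $Y(a)$ given $Z=z$ for each $a$, while $s_Z$ has mean zero by \ref{lem--tangent-space-P-marginal-mean-zero}. I would then construct a one-parameter path realizing $h$ as its score. After reducing to bounded $h$ by an $L_2(P_0)$-approximation that preserves both constraints (truncating and recentering the conditional and marginal parts separately), the linear perturbation $q_a(y,z;\theta)=q_a(y,z;Q_0)\bigl(1+\theta\,h(y,a,z)\bigr)$ is nonnegative for $|\theta|$ small, integrates to $1$ by \ref{lem--tangent-space-P-marginal-mean-zero}, and has the common $Z$-marginal $g(z;Q_0)+\theta\,s_Z(z)g(z;Q_0)$ by \ref{lem--tangent-space-P-Z-marginal-equal}; it is differentiable in quadratic mean at $\theta_0=0$ with score $h(\cdot,a,\cdot)$ and nonsingular information $\mathbb{E}_{P_0}[h^2]>0$ whenever $h\neq0$ (the case $h=0$ being trivial). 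Hence $h\in\dot{\mathbf{P}}$.

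The main obstacle is the reverse inclusion, specifically forcing the constructed submodel to be simultaneously a genuine member of $\mathbf{P}$—which demands that the two $y$-marginals share one common $Z$-marginal—and to carry the prescribed score. Condition \ref{lem--tangent-space-P-Z-marginal-equal} is exactly what reconciles these two requirements, so the real work lies in checking that the bounded-score reduction can be carried out without destroying either \ref{lem--tangent-space-P-marginal-mean-zero} or \ref{lem--tangent-space-P-Z-marginal-equal} and in verifying quadratic-mean differentiability of the path; both are standard once the decomposition into conditional and marginal scores above is in place.
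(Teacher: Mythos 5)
Your forward inclusion is essentially the paper's argument: differentiate the normalization $\int q_a(\cdot;\theta)\,d\nu_a=1$ to get condition \ref{lem--tangent-space-P-marginal-mean-zero}, and differentiate the common $Z$-marginal restriction to get \ref{lem--tangent-space-P-Z-marginal-equal} (the paper makes the ``differentiate the marginal'' step rigorous by invoking Proposition A.5.5 of \citet{1998bickelEfficientAdaptiveEstimation}, which guarantees that marginalization preserves differentiability in quadratic mean --- you should cite something of that form rather than differentiating under the integral sign informally). Where you genuinely diverge is the reverse inclusion. The paper realizes \emph{every} $h\in\dot{\mathbf{P}}$ directly as the score of a regular one-dimensional submodel by using a smooth bounded multiplicative tilt $q_a(\cdot;\theta)=q_a(\cdot;Q_0)\,\Psi(\theta h)/\int\Psi(\theta h)\,q_a\,d\nu_a$ with $\Psi(t)=2/(1+e^{-2t})$; because $\Psi$ is bounded with $\psi/\Psi$ bounded, no truncation of $h$ is needed, the path is automatically a valid density for all small $\theta$, the score at $\theta=0$ is exactly $h$ (since $\int h\,q_a\,d\nu_a=0$ kills the normalizing term), and regularity follows from Proposition 2.1.1 of the same reference. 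Your route --- truncate and recenter the conditional and marginal pieces of $h$ separately so as to preserve both constraints, use the linear path $q_a(1+\theta h_m)$, and then pass to the limit in $L_2(P_0)$ --- is also correct, and your decomposition $h=s_Z+(h-s_Z)$ is exactly the right device for keeping condition \ref{lem--tangent-space-P-Z-marginal-equal} intact under truncation. The tradeoff is that your construction only places $h$ in the \emph{closed linear span} of the tangent set rather than in the tangent set itself; since the tangent space is by definition that closure, this suffices for the lemma, but it costs you the extra approximation step and the verification that truncation-plus-recentering converges in $L_2(P_0)$, both of which the paper's bounded-link construction avoids entirely. One small slip: your closing ``Hence $h\in\dot{\mathbf{P}}$'' should read ``hence $h$ lies in the tangent space,'' since membership in $\dot{\mathbf{P}}$ is the hypothesis of the reverse inclusion, not its conclusion.
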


\begin{proof}
Let \(\mathbf{P}_{0} = \left\{ P_{\theta} : \theta \in \Theta \right\} \subseteq
\mathbf{P}\) be a (one-dimensional) regular parametric submodel of
\(\mathbf{P}\) identifying \(P_{0}\) uniquely at \(\theta_{0}\). Following
\th\ref{rem--prod-struct-ssra}, each \(P_{\theta} \in \mathbf{P}_{0}\) has a
density against \(\nu\) in \eqref{eqn--nu} of the form
\begin{equation*}
  p (y, a, z; \theta) = \left[ q_{1} (y, z; \theta) \cdot \pi (\mathbb{S} (z))
  \right]^{a} \left[ q_{0} (y, z; \theta) \cdot (1 - \pi (\mathbb{S} (z)))
  \right]^{1 - a}.
\end{equation*}
Let \(D_{0} \left( \cdot; \mathbf{P}_{0} \right), D_{1} \left( \cdot;
\mathbf{P}_{0} \right)\) be the quadratic mean derivatives of \(\sqrt{q_{0}
(\cdot; \theta)}\) and \(\sqrt{q_{1} (\cdot; \theta)}\) respectively at
\(\theta_{0}\) (as in \th\ref{def--qmd}). The associated score function is then
\begin{equation}
  h \left( y, a, z; \mathbf{P}_{0} \right) \equiv 2 \frac{D_{a} \left( y,
  z; \mathbf{P}_{0} \right)}{\sqrt{q_{a} \left( y, z; Q_{0} \right)}} \mathbb{I}
  \left\{ q_{a} \left( y, z; Q_{0} \right) > 0 \right\}.
  \label{eqn--score-tangent-set}
\end{equation}
Consider the collection of the above functions indexed by any regular parametric
submodel, i.e.
\begin{equation*}
  \mathcal{H} = \left\{ h \left( \cdot; \mathbf{P}_{0} \right) : \mathbf{P}_{0}
  \text{ is a regular parametric submodel of } \mathbf{P} \right\}.
\end{equation*}
\(\mathcal{H}\) is the tangent set for \(\mathbf{P}\) at \(P_{0}\) and its
closed linear span, denoted by \(\mathrm{clspan} (\mathcal{H})\) is the
tangent space. Our task is then to prove that both \(\mathrm{clspan}
(\mathcal{H}) \subseteq \dot{\mathbf{P}}\) and \(\dot{\mathbf{P}} \subseteq
\mathrm{clspan} (\mathcal{H})\).

We start with proving \(\mathrm{clspan} (\mathcal{H}) \subseteq
\dot{\mathbf{P}}\). Note that \(\dot{\mathbf{P}}\) is a closed linear subspace
of \(L_{2} \left( P_{0} \right)\) and so it suffices to prove that \(\mathcal{H}
\subseteq \dot{\mathbf{P}}\). Take any regular parametric submodel
\(\mathbf{P}_{0}\) and set \(D_{a} \equiv D_{a} \left( \cdot; \mathbf{P}_{0}
\right)\) and \(h \equiv h \left( \cdot; \mathbf{P}_{0} \right)\). Clearly, \(h
\in L_{2} \left( P_{0} \right)\) since by \eqref{eqn--score-tangent-set}, for
fixed \(a \in \{0, 1\}\),
\begin{align*}
  \int |h (y, a, z)|^{2} q_{a} \left( y, z; Q_{0} \right) \nu_{a} (\mathrm{d} y,
  \mathrm{d} z) =
  & \ 4 \int |D_{a} (y, z)|^{2} \mathbb{I} \left\{ q_{a} \left( y,
  z; Q_{0} \right) > 0 \right\} \nu_{a} (\mathrm{d} y, \mathrm{d} z) \\
  \leq
  & \ 4 \int |D_{a} (y, z)|^{2} \nu_{a} (\mathrm{d} y, \mathrm{d} z) < \infty
\end{align*}
where finiteness follows since \(D_{a} \in L_{2} \left( \nu_{a} \right)\) by
definition. Next, denote \(\langle f, g \rangle_{\nu_{a}} = \int f \cdot g \;
\mathrm{d} \nu_{a}\). Set \(\xi_{a} (\cdot; \theta) = \sqrt{q_{a} (\cdot;
\theta)}\) and note that \(\left\langle \xi_{a} (\cdot; \theta), \xi_{a} (\cdot;
\theta) \right\rangle_{\nu_{a}} = 1\) for every \(\theta \in
\Theta\). Applying the chain rule and differentiating both sides at \(\theta =
\theta_{0}\) yields \(2 \left\langle D_{a}, \xi \left( \cdot; \theta_{0} \right)
\right\rangle_{\nu_{a}} = 0\), which can be rewritten as
\begin{equation*}
  0 = \int D_{a} (y, z) \sqrt{q_{a} \left( y, z; Q_{0} \right)} \; \nu_{a}
  \left( \mathrm{d} y, \mathrm{d} z \right) = \frac{1}{2} \int h (y, a, z) q_{a}
  \left( y, a; Q_{0} \right) \nu_{a} \left( \mathrm{d} y, \mathrm{d} z \right).
\end{equation*}
Since the choice of \(a \in \{0, 1\}\) in the above is arbitrary, this proves
that \(h\) satisfies condition \ref{lem--tangent-space-P-marginal-mean-zero} in
the definition of \(\dot{\mathbf{P}}\). Consider the coordinate projection \(T :
\mathbb{R}^{1 + k} \to \mathbb{R}^{k}\) defined by \(T (y, z) \equiv z\). The
marginal density of \(Z\) under \(q_{a} (\cdot; \theta)\) is equal to \(g (z;
\theta) = \int q_{a} (y, z; \theta) \mu_{a} (\mathrm{d} y)\) regardless of \(a
\in \{0, 1\}\) as required by \th\ref{def--parametric-submodel}. By Proposition
A.5.5 of \citet[p. 461]{1998bickelEfficientAdaptiveEstimation}, it
follows that \(\eta (z; \theta) = \sqrt{g (z; \theta)}\) is
differentiable in quadratic mean at \(\theta_{0}\) with derivative
\begin{equation*}
  \dot{\eta} \left( z \right) = \frac{1}{2} \sqrt{g \left( z;
  Q_{0} \right)} \int h (y, a, z) \frac{q_{a} \left( y, z; Q_{0} \right)}{g
  \left( z; Q_{0} \right)} \mu_{a} (\mathrm{d} y).
\end{equation*}
Note that \(\dot{\eta}\) must be common to any choice of \(a \in \{0, 1\}\)
since it is the quadratic mean derivative of \(\theta \mapsto \sqrt{g (\cdot;
\theta)}\) which does not depend on \(a\). Rearranging terms, we get that
\begin{equation*}
  \int h (y, a, z) q_{a} \left( y, z; Q_{0} \right) \mu_{1} (\mathrm{d} y)
  = 2 \dot{\eta} \left( z \right) \sqrt{g \left( z;
  Q_{0} \right)} = \int h (y, a, z) q_{0} \left( y, z; Q_{0} \right) \mu_{0}
  (\mathrm{d} y),
\end{equation*}
which shows that \(h\) satisfies \ref{lem--tangent-space-P-Z-marginal-equal}.
Thus, \(h \in \dot{\mathbf{P}}\) and hence \(\mathrm{clspan} (\mathcal{H})
\subseteq \dot{\mathbf{P}}\).

Next, we show that \(\dot{\mathbf{P}} \subseteq \mathrm{clspan} (\mathcal{H})\)
following the construction in
\citet[p. 52]{1998bickelEfficientAdaptiveEstimation}. Take any \(h \in
\dot{\mathbf{P}}\). If \(h = 0\) \(P_{0}\) almost everywhere, then it is
immediate that \(h \in \mathrm{clspan} (\mathcal{H})\) so assume that this is
not the case. Let \(\Psi : \mathbb{R} \to \mathbb{R}\) be a bounded
continuously differentiable positive-valued function with bounded derivative
\(\psi\), such that \(\psi / \Psi\) is bounded and \(\Psi (0) = \psi (0) =
1\). For example, we can take \(\Psi (t) = 2 / \left( 1 + e^{- 2 t}
\right)\). Let \(p_{0} = p \left( \cdot; Q_{0} \right)\) be the density of
\(P_{0} = P \left( \cdot; Q_{0} \right)\). Then for \(\varepsilon > 0\) and
\(\theta \in \Theta = (- \varepsilon, \varepsilon)\), let
\begin{align*}
  q_{a} (y, z; \theta) =
  & \ q_{a} \left( y, z; Q_{0} \right) \frac{\Psi (\theta
    \cdot h (y, a, z))}{\int \Psi \left( \theta \cdot h \left( \widetilde{y}, a,
    \widetilde{z} \right) \right) q_{a} \left( \widetilde{y}, \widetilde{z};
    Q_{0} \right) \nu_{a} \left( \mathrm{d} \widetilde{y}, \mathrm{d}
    \widetilde{z} \right)}
  \\
  p (y, a, z; \theta) =
  & \ \left[ q_{1} (y, z; \theta) \cdot \pi (\mathbb{S} (z)) \right]^{a} \left[
    q_{0} (y, z; \theta) \cdot (1 - \pi (\mathbb{S} (z))) \right]^{1 - a} \\
  =
  & \ \pi (\mathbb{S} (z))^{a} (1 - \pi (\mathbb{S} (z)))^{1 - a} q_{a} (y, z;
    \theta)
\end{align*}
Clearly, each \(p (\cdot; \theta)\) is a \(\nu\)-density and the corresponding
probability measures \(P_{\theta}\) all meet the conditions of
\th\ref{asm--car-pop-ssra} and \th\ref{def--parametric-submodel}. In addition,
\(p (\cdot; 0) = p_{0}\) uniquely so that \(\mathbf{P}_{0}\) is a parametric
submodel of \(\mathbf{P}\) indexed by \(\Theta\) that identifies
\(P_{0}\). Taking the logarithm of the density, we get
\begin{equation*}
  \log p \left( y, a, z; \theta \right) = a \log \pi (\mathbb{S} (z)) + (1 - a)
  \log (1 - \pi (\mathbb{S} (z))) + \log q_{a} (y, z; \theta),
\end{equation*}
so that
\begin{align*}
  \frac{\partial}{\partial \theta} \log p (y, a, z; \theta) =
  & \ \frac{\partial}{\partial \theta} \log q_{a} (y, z; \theta) \\
  =
  & \ \frac{h (y, a, z) \psi (\theta \cdot h (y, a, z))}{\Psi (\theta \cdot h
    (y, a, z))} - \frac{\int h \left( \widetilde{y}, a, \widetilde{z} \right)
    \psi \left( \theta \cdot h \left( \widetilde{y}, a, \widetilde{z} \right)
    \right) q_{a} \left( \widetilde{y}, \widetilde{z}; Q_{0} \right) \nu_{a}
    \left( \mathrm{d} \widetilde{y}, \mathrm{d} \widetilde{z} \right)}{\int \Psi
    \left( \theta \cdot h \left( \widetilde{y}, a, \widetilde{z} \right) \right)
    q_{a} \left( \widetilde{y}, \widetilde{z}; Q_{0} \right) \nu_{a} \left(
    \mathrm{d} \widetilde{y}, \mathrm{d} \widetilde{z} \right)}
\end{align*}
Since \(\int h (y, a, z) q_{a} (y, z) (\mathrm{d} y, \mathrm{d} z) = 0\) by
condition \ref{lem--tangent-space-P-marginal-mean-zero}, we have by substituting
\(\theta = 0\) into the above that
\begin{equation*}
  \left. \frac{\partial}{\partial \theta} \log p (y, a, z; \theta)
  \right|_{\theta = 0} = h (y, a, z).
\end{equation*}
Therefore, \(h\) is the score function for \(\mathbf{P}_{0}\) in the classic
sense (i.e. log-derivatives of the density). In addition, the
map \(\theta \mapsto p (\cdot; \theta)\) is continuously differentiable in
\(\theta\) for \(\nu\) almost all \((y, a, z)\) -- set \((\partial/\partial
\theta) p (\cdot; \theta) = p (\cdot; \theta) \cdot (\partial/\partial
\theta) \log p (\cdot; \theta)\) and note that the right hand side is a product
of continuous functions of \(\theta\) for \(\nu\) almost all \((y, a,
z)\). \(h\) is square integrable by assumption and is not the zero function in
\(L_{2} \left( P_{0} \right)\) and so \(\int h^{2} \mathrm{d} P_{0} >
0\). Applying Proposition 2.1.1 in
\citet[p. 13]{1998bickelEfficientAdaptiveEstimation}, we get that
\(\mathbf{P}_{0}\) is a parametric submodel of \(\mathbf{P}\) that is regular at
\(P_{0}\) (i.e. it is regular at \(\theta = \theta_{0}\)). This proves that
\(h \in \mathcal{H}\) and hence \(\dot{\mathbf{P}_{0}} \subseteq \mathrm{clspan}
(\mathcal{H})\).
\end{proof}

\begin{lemma}
\th\label{lem--pathwise-differentiability-ate}
The ATE parameter \(\beta (\cdot)\) in \eqref{eqn--ate-P} is pathwise
differentiable in the sense of \citet[Definition 3.3.1,
p. 57]{1998bickelEfficientAdaptiveEstimation}. In particular, let
\(\mathbf{P}_{0} = \left\{ P_{\theta} : \theta \in \Theta \right\}\) be a
regular parametric submodel of \(\mathbf{P}\) parameterized by an open and
bounded set \(\Theta \subseteq \mathbb{R}^{d}\) such that \(P_{0} =
P_{\theta_{0}}\) for a unique \(\theta_{0} \in \Theta\). Assume
\(\mathbf{P}_{0}\) is chosen to satisfy
\begin{equation}
  \max_{a \in \{0, 1\}} \sup_{\theta \in \Theta} \int y^{2} \cdot q_{a} (y,
  z; \theta) \; \nu_{a} (\mathrm{d} y, \mathrm{d} z) < \infty.
  \label{eqn--parametric-submodel-ate-square-integrablity}
\end{equation}
Let the map \(\theta \mapsto \gamma (\theta)\) be defined by \(\gamma (\theta) =
\beta \left( P_{\theta} \right)\). Then the map \(\theta \mapsto \gamma
(\theta)\) is differentiable at \(\theta_{0}\) with gradient
\begin{equation}
  \nabla \gamma \left( \theta_{0} \right) = \mathbb{E}_{P_{0}} \left[
  \varphi_{1} (Y, A, Z) \cdot \dot{\ell} \left( Y, A, Z; \theta_{0} \right)
  \right].
  \label{eqn--path-derivative-gamma}
\end{equation}
where
\begin{equation}
  \varphi_{1} (y, a, z) = \frac{y \cdot a}{\pi (\mathbb{S} (z))} - \frac{y \cdot
  (1 - a)}{1 - \pi (\mathbb{S} (z))}.
  \label{eqn--eif-precursor}
\end{equation}
Furthermore, the \(L_{2} \left( P_{0} \right)\) projection of \(\varphi_{1}\)
onto \(\dot{\mathbf{P}}\) is the efficient influence function \(\varphi_{0}\)
defined in \eqref{eqn--eif}.
\end{lemma}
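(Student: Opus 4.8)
The plan is to prove the two assertions in turn: that $\theta \mapsto \gamma(\theta) = \beta(P_{\theta})$ is differentiable at $\theta_{0}$ with gradient \eqref{eqn--path-derivative-gamma}, and then that $\varphi_{0}$ of \eqref{eqn--eif} is the orthogonal projection in $L_{2}(P_{0})$ of $\varphi_{1}$ of \eqref{eqn--eif-precursor} onto the tangent space $\dot{\mathbf{P}}$ of \th\ref{lem--tangent-space-P}. For the first, write $\gamma(\theta) = \int y\, q_{1}(y,z;\theta)\,\nu_{1}(\mathrm{d}y,\mathrm{d}z) - \int y\, q_{0}(y,z;\theta)\,\nu_{0}(\mathrm{d}y,\mathrm{d}z)$ and treat each term separately. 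Setting $\xi_{a}(\cdot;\theta) = \sqrt{q_{a}(\cdot;\theta)}$ and $h = \theta-\theta_{0}$, I would use the factorization $q_{a}(\theta) - q_{a}(\theta_{0}) = 2\xi_{a}(\theta_{0})(\xi_{a}(\theta)-\xi_{a}(\theta_{0})) + (\xi_{a}(\theta)-\xi_{a}(\theta_{0}))^{2}$. By \th\ref{def--qmd} the quadratic-mean derivative $D_{a}$ replaces $\xi_{a}(\theta)-\xi_{a}(\theta_{0})$ by $D_{a}'h$ up to an $L_{2}(\nu_{a})$ error of order $o(\|h\|)$, so the linear part of the leading summand is $h'\cdot 2\int y\, D_{a}(y,z)\sqrt{q_{a}(y,z;Q_{0})}\,\nu_{a}(\mathrm{d}y,\mathrm{d}z)$; recalling from \eqref{eqn--sample-score} that $\dot{\ell}_{a} = 2D_{a}/\sqrt{q_{a}(\cdot;Q_{0})}$ on $\{q_{a}>0\}$, this is $\mathbb{E}[Y(a)\dot{\ell}_{a}(Y(a),Z)]$, giving $\nabla\gamma(\theta_{0}) = \mathbb{E}[Y(1)\dot{\ell}_{1}(Y(1),Z)] - \mathbb{E}[Y(0)\dot{\ell}_{0}(Y(0),Z)]$.

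Then I would reconcile this with \eqref{eqn--path-derivative-gamma} by computing $\mathbb{E}_{P_{0}}[\varphi_{1}(Y,A,Z)\dot{\ell}(Y,A,Z)]$ directly. Using $A^{2}=A$, $A(1-A)=0$, and $Y=Y(1)$ on $\{A=1\}$, $Y=Y(0)$ on $\{A=0\}$, the product collapses to $\frac{A}{\pi(\mathbb{S}(Z))}Y(1)\dot{\ell}_{1}(Y(1),Z) - \frac{1-A}{1-\pi(\mathbb{S}(Z))}Y(0)\dot{\ell}_{0}(Y(0),Z)$; taking expectations and invoking $(Y(0),Y(1),Z)\indep A\mid\mathbb{S}(Z)$ with $\mathbb{E}[A\mid\mathbb{S}(Z)] = \pi(\mathbb{S}(Z))$ from \th\ref{asm--car-pop-ssra}, the inverse-probability weights average to one and reproduce exactly the gradient above.

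For the projection I would use that $\varphi_{0}$ is the projection of $\varphi_{1}$ if and only if $\varphi_{0}\in\dot{\mathbf{P}}$ and $\varphi_{1}-\varphi_{0}\perp\dot{\mathbf{P}}$; since every regular score $\dot{\ell}$ lies in $\dot{\mathbf{P}}$, this also recovers the $\varphi_{0}$ form of the derivative in \eqref{eqn--path-derivative-beta}. Membership is checked against the two conditions of \th\ref{lem--tangent-space-P}: the $y$-integral of $\frac{a}{\pi(\mathbb{S}(z))}[y-m_{\ast}(1,z)]$ (and its control counterpart) vanishes because $m_{\ast}(a,z) = \mathbb{E}[Y(a)\mid Z=z]$, while the leftover constant integrates to $\mathbb{E}[m_{\ast}(1,Z)-m_{\ast}(0,Z)] - \beta_{0} = 0$, and the common $Z$-marginal of $q_{0}$ and $q_{1}$ makes the equal-marginal condition an identity. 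For orthogonality the key point is that $\varphi_{1}-\varphi_{0} = (a-\pi(\mathbb{S}(z)))\big[\frac{m_{\ast}(1,z)}{\pi(\mathbb{S}(z))} + \frac{m_{\ast}(0,z)}{1-\pi(\mathbb{S}(z))}\big] + \beta_{0}$ does not depend on $y$; for any $h\in\dot{\mathbf{P}}$ I can pull it out of the inner $y$-integral, replace $\int h(y,1,z)q_{1}\,\mu_{1}(\mathrm{d}y)$ and $\int h(y,0,z)q_{0}\,\mu_{0}(\mathrm{d}y)$ by their common value through the equal-$Z$-marginal condition, observe that the $\pi(\mathbb{S}(z))$-weighting cancels the bracketed factor, and conclude that the survivor $\beta_{0}\int h(y,1,z)q_{1}(y,z;Q_{0})\,\nu_{1}$ is zero by the marginal mean-zero condition.

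The step I expect to be the main obstacle is controlling the quadratic remainder $\int y\,(\xi_{a}(\theta)-\xi_{a}(\theta_{0}))^{2}\,\nu_{a}$ from the differentiability argument: a direct Cauchy--Schwarz bound yields only $O(\|h\|)$, and upgrading it to $o(\|h\|)$ amounts to showing $\int y^{2}(\xi_{a}(\theta)-\xi_{a}(\theta_{0}))^{2}\,\nu_{a}\to 0$, i.e. $y\,\xi_{a}(\theta)\to y\,\xi_{a}(\theta_{0})$ in $L_{2}(\nu_{a})$. I would establish this by combining convergence in measure (from quadratic-mean differentiability) with the near-$\theta_{0}$ integrability of $\{y^{2}q_{a}(\cdot;\theta)\}$ guaranteed by the square-integrability hypothesis \eqref{eqn--parametric-submodel-ate-square-integrablity}; the cross term is handled more easily by Cauchy--Schwarz using $\int y^{2}q_{a}(\cdot;Q_{0})\,\nu_{a}<\infty$. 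Everything else reduces to routine conditional-expectation algebra.
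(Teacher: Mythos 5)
Your proposal is correct and its architecture matches the paper's: the same reduction of \(\gamma(\theta)\) to \(\int y\,q_{1}\,\mathrm{d}\nu_{1} - \int y\,q_{0}\,\mathrm{d}\nu_{0}\), the same identification of the linear term \(2\int y\,D_{a}'h\,\sqrt{q_{a}(\cdot;Q_{0})}\,\mathrm{d}\nu_{a} = h'\mathbb{E}[Y(a)\dot{\ell}_{a}(Y(a),Z)]\), and the same orthogonality computation for the projection (your observation that \(\pi(\mathbb{S}(z))\widetilde{\varphi}(1,z) + (1-\pi(\mathbb{S}(z)))\widetilde{\varphi}(0,z) = \beta_{0}\) after pulling out the common \(Z\)-marginal is exactly the paper's ``\(\mathbb{E}[\widetilde{h}(Z)]\cdot\beta_{0} = 0\)'' step in different bookkeeping). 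The one place you genuinely diverge is the differentiability of \(\theta \mapsto \int y\,q_{a}(y,z;\theta)\,\nu_{a}(\mathrm{d}y,\mathrm{d}z)\): the paper disposes of this in one line by citing Lemma 7.2 of \citet[p.~67]{1981ibragimovStatisticalEstimation}, which is precisely the statement that quadratic-mean differentiability plus the uniform bound \eqref{eqn--parametric-submodel-ate-square-integrablity} yields differentiability of \(\theta\mapsto\int\psi\,q_{\theta}\,\mathrm{d}\nu\) with the expected gradient, whereas you propose to re-derive it via the factorization \(q_{\theta}-q_{\theta_{0}} = (\xi_{\theta}+\xi_{\theta_{0}})(\xi_{\theta}-\xi_{\theta_{0}})\). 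You have correctly isolated the hard step --- showing \(\int y^{2}(\xi_{\theta}-\xi_{\theta_{0}})^{2}\,\mathrm{d}\nu_{a}\to 0\) --- but be aware that the sup-bound alone gives only \emph{uniform boundedness} of \(\int y^{2}q_{a}(\cdot;\theta)\,\mathrm{d}\nu_{a}\), and the passage from \(L_{2}\)-convergence of \(\xi_{\theta}\) to \(L_{2}\)-convergence of \(y\,\xi_{\theta}\) requires the truncation \(\int_{|y|\le M}+\int_{|y|>M}\) together with a uniform-integrability statement for the tail; this is exactly what the cited lemma packages, so if you go the hands-on route you should reproduce that truncation argument explicitly rather than appeal to ``convergence in measure plus near-\(\theta_{0}\) integrability.'' On the other side of the ledger, your explicit verification that \(\varphi_{0}\) satisfies conditions (a) and (b) of \th\ref{lem--tangent-space-P} (so that \(\varphi_{0}\in\dot{\mathbf{P}}\) and the orthogonality of \(\varphi_{1}-\varphi_{0}\) really does identify it as the projection) is a step the paper leaves implicit, and including it makes the projection claim airtight.
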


\begin{proof}
Inspecting \eqref{eqn--ate-P}, it follows that
\begin{equation*}
  \beta \left( P \right) = \beta_{1} (P) - \beta_{0} (P) \text{ where }
  \beta_{1} (P) = \mathbb{E}_{P} \left[ \frac{Y A}{\pi (\mathbb{S} (Z))}
  \right] \text{ and } \beta_{0} (P) = \mathbb{E}_{P} \left[ \frac{Y
  (1 - A)}{1 - \pi (\mathbb{S} (Z))} \right].
\end{equation*}
Passing to the submodel \(\mathbf{P}_{0}\), we get \(\gamma (\theta) =
\gamma_{1} (\theta) - \gamma_{0} (\theta)\) with \(\gamma_{a} (\theta) =
\beta_{a} \left( P_{\theta} \right)\). Note then that for \(c \in \{0, 1\}\)
\begin{align*}
  \gamma_{c} (\theta) =
  & \ \mathbb{E}_{P_{\theta}} \left[ \frac{Y A}{\pi (\mathbb{S} (Z))} \cdot c
    + \frac{Y (1 - A)}{1 - \pi (\mathbb{S} (Z))} (1 - c)
    \right] \\
  =
  & \ c \int \frac{y a}{\pi (\mathbb{S} (z))} \left[ q_{1} (y, z; \theta) \cdot
    \pi (\mathbb{S} (z)) \right]^{a} \left[ q_{0} (y, z; \theta) \cdot (1 - \pi
    (\mathbb{S} (z))) \right]^{1 - a} \nu (\mathrm{d} y, \mathrm{d} a,
    \mathrm{d} z) \\
  & + (1 - c) \int \frac{y (1 - a)}{1 - \pi (\mathbb{S} (z))} \left[ q_{1} (y,
    z; \theta) \cdot \pi (\mathbb{S} (z)) \right]^{a} \left[ q_{0} (y, z;
    \theta) \cdot (1 - \pi (\mathbb{S} (z))) \right]^{1 - a} \nu (\mathrm{d} y,
    \mathrm{d} a, \mathrm{d} z) \\
  =
  & \ c \int y q_{1} (y, z) \nu_{1} (\mathrm{d} y, \mathrm{d} z) + (1 - c) \int
    y q_{0} (y, z) \nu_{0} (\mathrm{d} y, \mathrm{d} z)
\end{align*}
Therefore,
\begin{equation}
  \gamma_{c} (\theta) = \int y q_{c} (y, z) \nu_{c} (\mathrm{d} y, \mathrm{d}
  z) \quad \forall c \in \{0, 1\}.
  \label{eqn--parametric-submodel-individual-components}
\end{equation}
Using \eqref{eqn--parametric-submodel-ate-square-integrablity} and
\eqref{eqn--parametric-submodel-individual-components} in conjunction
with Lemma 7.2 of \citet[p. 67]{1981ibragimovStatisticalEstimation}, we get
differentiability of the maps \(\theta \mapsto \gamma_{c} (\theta)\) at
\(\theta_{0}\) for each \(c \in \{0, 1\}\) with the gradients given by
\begin{align*}
  \nabla \gamma_{c} \left( \theta_{0} \right) =
  & \ 2 \int y D_{c} (y, z) \sqrt{q_{c} \left( y, z; Q_{0} \right)}
    \; \nu_{c} (\mathrm{d} y, \mathrm{d} z) \\
  =
  & \ \int y \dot{\ell}_{c} (y, z) q_{c} \left( y, z; Q_{0} \right) \; \nu_{c}
    (\mathrm{d} y, \mathrm{d} z) \\
  =
  & \ \int \frac{y}{\pi (\mathbb{S} (z))^{c} (1 - \pi (\mathbb{S} (z)))^{1 - c}}
    \dot{\ell}_{c} (y, z) \left[ q_{c} \left( y, z; Q_{0} \right) \pi
    (\mathbb{S} (z))^{c} (1 - \pi (\mathbb{S} (z)))^{1 - c} \right] \; \nu_{c}
    (\mathrm{d} y, \mathrm{d} z)
\end{align*}
Using the fact that \(\dot{\ell} (y, a, z) = a \dot{\ell}_{1} (y, z) + (1 - a)
\dot{\ell}_{0} (y, z)\) and the definition of \(\nu\), we can rewrite the above
as
\begin{align*}
  \nabla \gamma_{c} \left( \theta_{0} \right) =
  & \ c \int \frac{y a}{\pi (\mathbb{S} (z))} \dot{\ell} (y, z) \left[ q_{1}
    \left( y, z; Q_{0} \right) \pi (\mathbb{S} (z)) \right]^{a} \left[ q_{0}
    \left( y, z; Q_{0} \right) (1 - \pi (\mathbb{S} (z))) \right]^{1 - a} \; \nu
    (\mathrm{d} y, \mathrm{d} z) \\
  & + (1 - c) \int \frac{y (1 - a)}{1 - \pi (\mathbb{S} (z))} \dot{\ell}
    (y, z) \left[ q_{1} \left( y, z; Q_{0} \right) \pi (\mathbb{S} (z))
    \right]^{a} \left[ q_{0} \left( y, z; Q_{0} \right) (1 - \pi (\mathbb{S}
    (z))) \right]^{1 - a} \; \nu (\mathrm{d} y, \mathrm{d} z).
\end{align*}
Hence,
\begin{equation*}
  \nabla \gamma_{c} \left( \theta_{0} \right) = \mathbb{E}_{P_{0}} \left[ \left(
  \frac{Y A}{\pi (\mathbb{S} (Z))} c + \frac{Y (1 - A)}{1 - \pi
  (\mathbb{S} (Z))} (1 - c) \right) \dot{\ell} (Y, A, Z) \right]
\end{equation*}
Leveraging the fact that \(\gamma (\theta) = \gamma_{1} (\theta) - \gamma_{0}
(\theta)\), we get that
\begin{equation}
  \nabla \gamma \left( \theta_{0} \right) = \nabla \gamma_{1} \left( \theta_{0}
  \right) - \nabla \gamma_{0} \left( \theta_{0} \right) = \mathbb{E}_{P_{0}}
  \left[ \left( \frac{Y A}{\pi (\mathbb{S} (Z))} - \frac{Y (1 - A)}{1 - \pi
  (\mathbb{S} (Z))} \right) \dot{\ell} (Y, A, Z) \right].
  \label{eqn--gamma-derivatives}
\end{equation}
which proves \eqref{eqn--path-derivative-gamma}.

We now show that the projection of \(\varphi_{1}\) in \(L_{2} \left( P_{0}
\right)\) onto \(\dot{\mathbf{P}}\) is \(\varphi_{0}\). To that end, it is
sufficient to show that for any given \(h \in \dot{\mathbf{P}}\),
\begin{equation}
  \mathbb{E}_{P_{0}} \left[ h (Y, A, Z) \cdot \left\{ \varphi_{1} (Y, A, Z) -
  \varphi_{0} (Y, A, Z) \right\} \right] = 0.
  \label{eqn--orthogonality}
\end{equation}
We first note that comparing \eqref{eqn--eif-precursor} and \eqref{eqn--eif},
\begin{equation}
  \varphi_{1} (y, a, z) - \varphi_{0} (y, a, z) = \ \frac{a \cdot m_{\ast} (1,
  z)}{\pi (\mathbb{S} (Z))} - \frac{(1 - a) \cdot m_{\ast} (0, z)}{1 - \pi
  (\mathbb{S} (z))} - \left\{ m_{\ast} (1, z) - m_{\ast} (0, z) - \beta_{0}
  \right\} =: \widetilde{\varphi} (a, z).
  \label{eqn--phitilde-def}
\end{equation}
By writing \(h_{a} (y, z) = h (y, a, z)\) for brevity, we get
\begin{equation*}
  \mathbb{E}_{P_{0}} \left[ h (Y, A, Z) \middle| A = a, Z = z \right]
  = \mathbb{E}_{P_{0}} \left[ h_{a} (Y, Z) \middle| A = a, Z = z \right] =
  \frac{\int h_{a} (y, z) q_{a} \left( y, z; Q_{0} \right) \nu_{a} (\mathrm{d}
  y)}{g \left( z; Q_{0} \right)}
\end{equation*}
Additionally, the above displayed equation and
\th\ref{lem--tangent-space-P} \ref{lem--tangent-space-P-Z-marginal-equal}
together imply that there is together imply that there is some measurable
function \(\widetilde{h} (z)\) square-integrable against \(P_{0}\) such that for
any \(a \in \{0, 1\}\),
\begin{equation}
  \mathbb{E}_{P_{0}} \left[ h (Y, A, Z) \middle| A = a, Z = z \right]
  = \frac{\int h_{a} (y, z) q_{a} \left( y, z; Q_{0} \right) \nu_{a} (\mathrm{d}
  y)}{g \left( z; Q_{0} \right)} = \widetilde{h} (z).
  \label{eqn--g-partial-out-a}
\end{equation}
Furthermore, by the Law of Iterated Expectations (LIE) and
\th\ref{lem--tangent-space-P} \ref{lem--tangent-space-P-marginal-mean-zero},
it is necessary that \(\mathbb{E} \left[\widetilde{h} (Z) \right] = 0\). Hence,
\begin{align}
  \mathbb{E}_{P_{0}} \left[ h (Y, A, Z) \cdot \widetilde{\varphi} (A, Z) \right]
  =
  & \ \mathbb{E}_{P_{0}} \left[ \mathbb{E} [h (Y, A, Z) | A, Z] \cdot
    \widetilde{\varphi} (A, Z) \right]
  & \text{by LIE}
    \nonumber \\
  =
  & \ \mathbb{E}_{P_{0}} \left[ \widetilde{h} (Z) \cdot \widetilde{\varphi} (A,
    Z) \right] \nonumber \\
  =
  & \ \mathbb{E}_{P_{0}} \left[ \widetilde{h} (Z) \cdot \left(
    \begin{array}{l}
      \frac{A \cdot m_{\ast} (1, Z)}{\pi (\mathbb{S} (Z))} - \frac{(1 - A) \cdot
      m_{\ast} (0, Z)}{1 - \pi (\mathbb{S} (Z))} \\
      - \left\{ m_{\ast} (1, Z) - m_{\ast} (0, Z) - \beta_{0} \right\}
    \end{array} \right) \right]
    \label{eqn--hphitilde-expanded}
\end{align}
Note that for any measurable function \(\delta (z)\) that is
\(P_{0}\)-integrable, since \(\mathbb{E}_{P_{0}} [A
| Z] = \pi (\mathbb{S} (Z))\), it is necessarily true that
\begin{equation}
  \mathbb{E}_{P_{0}} \left[ A \delta (Z) \right] = \mathbb{E}_{P_{0}} \left[ \pi
  (\mathbb{S} (Z)) \delta (Z) \right].
  \label{eqn--adelta-lie}
\end{equation}
Combining \eqref{eqn--hphitilde-expanded} and \eqref{eqn--adelta-lie},
we get
\begin{equation}
  \begin{split}
    \mathbb{E}_{P_{0}} \left[ h (Y, A, Z) \cdot \widetilde{\varphi} (A, Z)
    \right]
    =
    & \ \mathbb{E}_{P_{0}} \left[ \widetilde{h} (Z) \cdot \left( m_{\ast} (1, Z)
      - m_{\ast} (0, Z) - \left\{ m_{\ast} (1, Z) - m_{\ast} (0, Z) - \beta_{0}
      \right\} \right) \right] \\
    =
    & \ \mathbb{E}_{P_{0}} \left[ \widetilde{h} (Z) \right] \beta = 0 \cdot
      \beta = 0
  \end{split}
  \label{eqn--hphitilde-0}
\end{equation}
Combining \eqref{eqn--phitilde-def} and \eqref{eqn--hphitilde-0} proves
\eqref{eqn--orthogonality}.
\end{proof}

\subsection{A quotient rule result for differentiability in quadratic mean}

In the following, we show a version of the quotient rule for differentiability
in quadratic mean. This result is required for the proof of
\th\ref{lem--parametric-submodel-lan}. Before that we do some algebra that will
be useful later on. Let \(u (\theta), v (\theta)\) be real-valued functions,
both differentiable at an interior point of their common domain, \(\theta_{0}\),
with derivatives \(\dot{u}, \dot{v}\) respectively at \(\theta_{0}\). If
\(\theta \in \Theta \subseteq \mathbb{R}^{d}\), define \(\dot{u}\) and
\(\dot{v}\) as row vectors. Differentiability of \(v\) at \(\theta_{0}\) means
that \(v\) is continuous at \(\theta_{0}\). Thus if \(v \left( \theta_{0}
\right) > 0\), it follows that \(v (\theta) > 0\) for a neighborhood around
\(\theta_{0}\). Suppose also that for all \(\theta\) in a neighborhood of
\(\theta_{0}\), \(v (\theta) > 0\). For brevity, define \(u_{0} = u
\left(\theta_{0} \right)\), \(v_{0} = v \left( \theta_{0} \right)\) and \(u_{h}
= u \left( \theta_{0} + h \right)\), \(v_{h} = v \left( \theta_{0} + h
\right)\). The proof of the quotient rule follows from a decomposition of the
remainder in linear approximation as follows.
\begin{align}
  \mathrm{Rem}_{h} =
  & \ \frac{u_{h}}{v_{h}} - \frac{u_{0}}{v_{0}} - \left[\frac{\dot{u}}{v_{0}} -
    \frac{u_{0} \dot{v}}{v_{0}^{2}} \right] h \nonumber \\
  =
  & \ v_{h}^{- 1} \left[ u_{h} - u_{0} - \dot{u} h \right] - v_{h}^{- 1}
    v_{0}^{- 1} u_{0} \left[ v_{h} - v_{0} - \dot{v} h \right] + \left[ v_{h}^{-
    1} - v_{0}^{- 1} \right] \left[ \dot{u} - v_{0}^{- 1} u_{0} \dot{v} \right]
    h
    \label{eqn--quotient-rule-remainder-expansion}
\end{align}
Since differentiability of \(v\) at \(\theta_{0}\) implies continuity at
\(\theta_{0}\), \(v_{h} \to v_{0}\) as \(\|h\| \to 0\). Since \(v_{0} > 0\) by
assumption, we must have \(v_{h}^{- 1} \to v_{0}^{- 1}\) as \(\|h\| \to 0\). The
quotient rule follows from division of the above by \(\|h\|\) for \(\|h\| > 0\)
and then taking limits as \(\|h\| \to 0\).

\begin{lemma}
\th\label{lem--qmd-strata-cond-density}
If \(\theta \mapsto \sqrt{q_{a} \left( \cdot; \theta \right)}\) differentiable
in quadratic mean at \(\theta_{0}\) in the sense of \th\ref{def--qmd}
\ref{def--qmd-differentiability}, then the maps \(\theta \mapsto
\sqrt{\widetilde{q}_{a} \left( \cdot | s; \theta \right)}\) in
\eqref{eqn--S-marginal-q-cond-S} are all differentiable in the quadratic mean at
\(\theta_{0}\) with score \(\lambda_{a} (\cdot | s)\) as in
\eqref{eqn--conditional-score-in-strata} and information matrix \(\mathcal{J}
(a, s)\) as in \eqref{eqn--conditional-info-in-strata} across all \(a \in \{0,
1\}\) and \(s \in \mathbb{N}_{\mathcal{S}}\).
\end{lemma}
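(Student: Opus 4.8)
The plan is to realize $\theta \mapsto \sqrt{\widetilde{q}_{a}(\cdot \mid s; \theta)}$ as a quotient of the numerator $u(\theta) = \sqrt{q_{a}(\cdot; \theta)}$, which is differentiable in quadratic mean at $\theta_{0}$ by hypothesis, and the scalar denominator $v(\theta) = \sqrt{\phi(s; \theta)}$, and then to push the pointwise quotient-rule remainder expansion \eqref{eqn--quotient-rule-remainder-expansion} through the $L_{2}(\nu_{a})$ norm. Throughout I restrict to strata with $\phi(s; \theta_{0}) = Q_{0}(\mathbb{S}(Z) = s) > 0$ (the claim is vacuous otherwise), and via the indicator $\mathbb{I}\{\mathbb{S}(z) = s\}$ in \eqref{eqn--S-marginal-q-cond-S-rpsm} I work on $\{z : \mathbb{S}(z) = s\}$, where $\sqrt{\widetilde{q}_{a}(y, z \mid s; \theta)} = u(\theta)/v(\theta)$; off this set every term below vanishes. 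As recalled in the discussion preceding the lemma, differentiability in quadratic mean over the finite stratum variable reduces to componentwise differentiability, so $v(\theta)$ is an ordinary (totally) differentiable scalar function of $\theta$ at $\theta_{0}$, with $v_{0} := v(\theta_{0}) > 0$ and gradient $\dot{v} = \tfrac{1}{2}\sqrt{\phi(s; \theta_{0})}\, r(s)$ for $r(s)$ as in \eqref{eqn--score-in-strata}. I write $u_{0} = \sqrt{q_{a}(y, z; \theta_{0})}$ and let $D_{a}$ be the quadratic-mean derivative of $u$ from \th\ref{def--qmd}.

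First I would read off the candidate quadratic-mean derivative $\widetilde{D}_{a}(y, z \mid s) = v_{0}^{-1} D_{a}(y, z) - v_{0}^{-2} u_{0}\, \dot{v}$ and confirm it yields the right score and information matrix. Dividing by $\sqrt{\widetilde{q}_{a}(\cdot \mid s; \theta_{0})} = u_{0}/v_{0}$ and multiplying by $2$ gives, on $\{q_{a} > 0\}$, the score $2 D_{a}/u_{0} - r(s) = \dot{\ell}_{a}(y, z) - r(s)$, using $\dot{\ell}_{a} = 2 D_{a}/\sqrt{q_{a}(\cdot; \theta_{0})}$ from \eqref{eqn--sample-score} and that $r(s)$ is the score attached to $\dot{v}$; this is exactly $\lambda_{a}(\cdot \mid s)$ in \eqref{eqn--conditional-score-in-strata}, and the associated information $4 \int \widetilde{D}_{a} \widetilde{D}_{a}^{\prime}\, d\nu_{a}$ is the conditional second moment $\mathcal{J}(a, s)$ in \eqref{eqn--conditional-info-in-strata}. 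It then remains to verify $\|t\|^{-2} \int (\mathrm{Rem}_{t})^{2}\, d\nu_{a} \to 0$ as $\|t\| \to 0$, where $\mathrm{Rem}_{t}$ is the remainder of \eqref{eqn--quotient-rule-remainder-expansion} with the scalar increment $h$ replaced by the vector $t$, the numerator increment $u_{h} - u_{0} - \dot{u}\,h$ replaced by $u_{t} - u_{0} - D_{a}^{\prime} t$, and $\dot{v}\,h$, $\dot{u}\,h$ read as inner products.

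The core of the argument is a three-term triangle-inequality bound in $L_{2}(\nu_{a})$, writing $\mathrm{Rem}_{t} = T_{1} + T_{2} + T_{3}$ from the three summands of \eqref{eqn--quotient-rule-remainder-expansion}. For $T_{1} = v_{t}^{-1}(u_{t} - u_{0} - D_{a}^{\prime} t)$, continuity of $v$ makes $v_{t}^{-1}$ bounded near $\theta_{0}$, so $\|t\|^{-1}\|T_{1}\|_{L_{2}} \le (\sup_{t} v_{t}^{-1})\, \|t\|^{-1}\|u_{t} - u_{0} - D_{a}^{\prime} t\|_{L_{2}} \to 0$ by quadratic-mean differentiability of the numerator. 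For $T_{2} = -v_{t}^{-1} v_{0}^{-1} u_{0} (v_{t} - v_{0} - \dot{v}^{\prime} t)$, the scalar factor $(v_{t} - v_{0} - \dot{v}^{\prime} t)$ is $o(\|t\|)$ by ordinary differentiability of $v$, while $\|u_{0}\|_{L_{2}}^{2} = \int_{\{\mathbb{S}(z) = s\}} q_{a}(\cdot; \theta_{0})\, d\nu_{a} = \phi(s; \theta_{0}) < \infty$, so $\|t\|^{-1}\|T_{2}\|_{L_{2}} \to 0$. For the cross term $T_{3} = (v_{t}^{-1} - v_{0}^{-1})(D_{a}^{\prime} t - v_{0}^{-1} u_{0}\, \dot{v}^{\prime} t)$, I bound $\|t\|^{-1}\|D_{a}^{\prime} t - v_{0}^{-1} u_{0}\, \dot{v}^{\prime} t\|_{L_{2}}$ uniformly in small $t$ using $\int D_{a}^{\prime} D_{a}\, d\nu_{a} < \infty$ (from \th\ref{def--qmd}) and $\|u_{0}\|_{L_{2}}^{2} = \phi(s; \theta_{0}) < \infty$, and multiply by $v_{t}^{-1} - v_{0}^{-1} \to 0$. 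Summing the three bounds gives the conclusion for every $a \in \{0, 1\}$ and every $s$ with positive mass.

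The step I expect to be the main obstacle is the careful transfer of the pointwise quotient rule to the quadratic-mean setting: the numerator is differentiable only in the integrated $L_{2}(\nu_{a})$ sense while the denominator is differentiable pointwise, so the decomposition must be organized so that the genuinely $L_{2}$ remainder $T_{1}$ is isolated and the other two terms are controlled purely by square-integrability of $D_{a}$ and of $\sqrt{q_{a}(\cdot; \theta_{0})}$ together with boundedness of $v_{t}^{-1}$ near $\theta_{0}$. The indispensable structural facts making this work are $\phi(s; \theta_{0}) > 0$, which keeps $v_{t}^{-1}$ and $v_{0}^{-1}$ bounded, and the finiteness of $\int D_{a}^{\prime} D_{a}\, d\nu_{a}$ and $\int_{\{\mathbb{S}(z) = s\}} q_{a}(\cdot; \theta_{0})\, d\nu_{a}$.
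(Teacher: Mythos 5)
Your proposal is correct and follows essentially the same route as the paper: the same quotient-rule remainder decomposition \eqref{eqn--quotient-rule-remainder-expansion} with $u = \sqrt{q_{a}}$ and $v = \sqrt{\phi(s;\cdot)}$, the same candidate derivative (the paper's $\zeta_{a}$), and the same three-term control using quadratic-mean differentiability of the numerator, differentiability of the scalar denominator with $\phi(s;\theta_{0}) > 0$, and square-integrability of $D_{a}$ and $\sqrt{q_{a}(\cdot;\theta_{0})}$. The only cosmetic differences are that you bound via the triangle inequality in $L_{2}(\nu_{a})$ where the paper squares and applies the $C_{r}$-inequality, and you invoke ordinary differentiability of $\sqrt{\phi(s;\cdot)}$ where the paper phrases the same fact as quadratic-mean differentiability of the finitely supported marginal.
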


\begin{proof}
Let \(s \in \mathbb{N}_{\mathcal{S}}\) satisfy \(\phi \left( s; Q_{0} \right) >
0\) and \(a \in \{0, 1\}\) be given. Define \(\zeta_{a} (\cdot | s) =
\frac{1}{2} \lambda_{a} (\cdot | s) \cdot \sqrt{\widetilde{q}_{a} \left( \cdot |
s; Q_{0} \right)}\). Since the form of the information matrix \(\mathcal{J} (a,
s)\) follows readily from \(\lambda_{a} (\cdot | s)\) being the score, we need
to show that \(\zeta_{a} (\cdot | s)\) is the derivative in quadratic mean of
\(\theta \mapsto \sqrt{\widetilde{q}_{a} \left( \cdot | s; \theta
\right)}\). Up to taking integrals, we will assume that all \((y, z)\) pairs in
the subsequent arguments satisfy \(\mathbb{S} (z) = s\). Given any \(h \in
\mathbb{R}^{d}\) such that \(\theta_{0} + h \in \Theta\), define the remainder
in a first order linear approximation (Taylor expansion) around \(\theta_{0}\)
term by
\begin{align}
  \rho_{a, s} \left( \theta_{0}, h \right) =
  & \ \sqrt{\widetilde{q}_{a} \left( y, z | s; \theta_{0} + h \right)} -
    \sqrt{\widetilde{q}_{a} \left( y, z | s; \theta_{0} \right)} - \zeta_{a} (y,
    z | s)^{\prime} h \nonumber \\
  =
  & \ \sqrt{\frac{q_{a} \left( y, z; \theta_{0} + h \right)}{\phi \left( s;
    \theta_{0} + h \right)}} - \sqrt{\frac{q_{a} \left( y, z ;
    \theta_{0} \right)}{\phi \left( s; \theta_{0} \right)}}
    - \zeta_{a} (y, z | s)^{\prime} h
    \label{eqn--1st-order-rem}
\end{align}
Let \(D_{a}\) be as in \th\ref{def--qmd} \ref{def--qmd-differentiability} and
define the derivative in quadratic mean of \(\phi (\cdot; \theta)\) by \(\Delta
\equiv \frac{1}{2} r \sqrt{\phi}\). Using \eqref{eqn--sample-score},
\eqref{eqn--score-in-strata} and \eqref{eqn--conditional-score-in-strata}, it
follows that
\begin{align*}
  \zeta_{a} (y, z | s) =
  & \ \frac{1}{2} \left\{ \dot{\ell}_{a} (y, z) - r (s) \right\}
    \sqrt{\frac{q_{a} \left( y, z ; Q_{0} \right)}{\phi \left( s; Q_{0}
    \right)}} \\
  =
  & \ \left\{ \phi \left( s; Q_{0} \right)^{- \frac{1}{2}}
    D_{a} (y, z) - \frac{1}{2} r (s) \sqrt{\frac{q_{a} \left( y, z ; Q_{0}
    \right)}{\phi \left( s; Q_{0} \right)}} \right\} \\
  =
  & \ \left\{ \phi \left( s; Q_{0} \right)^{- \frac{1}{2}}
    D_{a} (y, z) - \frac{\Delta (s) \sqrt{q_{a} \left( y, z; Q_{0}
    \right)}}{\phi \left( s; Q_{0} \right)} \right\}.
\end{align*}
Note that the last displayed equation is akin to the usual \(\frac{\mathrm{d}
u}{v} - \frac{u \mathrm{d} v}{v^{2}}\) form of the derivative for the quotient
rule applied to \(u / v\) with \(u = \sqrt{q_{a}}\) and \(v = \sqrt{\phi}\).
Substituting the above into
\eqref{eqn--1st-order-rem}, and applying
\eqref{eqn--quotient-rule-remainder-expansion},
\begin{align*}
  \rho_{a, s} \left( y, z; \theta_{0}, h \right) =
  & \ \sqrt{\frac{q_{a} \left( y, z; \theta_{0} + h \right)}{\phi \left( s;
    \theta_{0} + h \right)}} - \sqrt{\frac{q_{a} \left( y, z ;
    \theta_{0} \right)}{\phi \left( s; \theta_{0} \right)}}
    - \zeta_{a} (y, z | s)^{\prime} h \\
  =
  & \ \phi \left( s; \theta_{0} + h \right)^{- \frac{1}{2}}
    \left[ \sqrt{q_{a} \left( y, z; \theta_{0} + h \right)} - \sqrt{q_{a} \left(
    y, z; \theta_{0} \right)} - D_{a} (y, z)^{\prime} h \right] \\
  & - \sqrt{\frac{q_{a} \left( y, z; \theta_{0} \right)}{\phi \left( s;
    \theta_{0} + h \right) \phi \left( s; \theta_{0} \right)}} \left[ \sqrt{\phi
    \left( s; \theta_{0} + h \right)} - \sqrt{\phi \left( s; \theta_{0} \right)}
    - \Delta (s)^{\prime} h \right] \\
  & + \left[ \phi \left( s; \theta_{0} + h \right)^{- \frac{1}{2}} - \phi \left(
    s; \theta_{0} \right)^{- \frac{1}{2}} \right] \left[ D_{a} (y, z) -
    \Delta (s) \sqrt{\frac{q_{a} \left( y, z; \theta_{0} \right)}{\phi \left( s;
    \theta_{0} \right)}} \right]^{\prime} h
\end{align*}
Squaring the above and using the \(C_{r}\)-inequality with parameter \(2\), we
get
\begin{align*}
  \rho_{a, s} \left( y, z; \theta_{0}, h \right)^{2} \leq
  & \ 3 \phi \left( s; \theta_{0} + h \right)^{- 1} \left[ \sqrt{q_{a} \left( y,
    z; \theta_{0} + h \right)} - \sqrt{q_{a} \left( y, z; \theta_{0} \right)} -
    D_{a} (y, z)^{\prime} h \right]^{2} \\
  & + 3 \frac{q_{a} \left( y, z; \theta_{0} \right)}{\phi \left( s;
    \theta_{0} + h \right) \phi \left( s; \theta_{0} \right)} \left[ \sqrt{\phi
    \left( s; \theta_{0} + h \right)} - \sqrt{\phi \left( s; \theta_{0} \right)}
    - \Delta (s)^{\prime} h \right]^{2} \\
  & + 3 \left[ \phi \left( s; \theta_{0} + h \right)^{- \frac{1}{2}} - \phi
    \left( s; \theta_{0} \right)^{- \frac{1}{2}} \right]^{2} \left( \left[ D_{a}
    (y, z) - \Delta (s) \sqrt{\frac{q_{a} \left( y, z; \theta_{0} \right)}{\phi
    \left( s; \theta_{0} \right)}} \right]^{\prime} h \right)^{2} \\
  \leq
  & \ 3 \phi \left( s; \theta_{0} + h \right)^{- 1} \left[ \sqrt{q_{a} \left( y,
    z; \theta_{0} + h \right)} - \sqrt{q_{a} \left( y, z; \theta_{0} \right)} -
    D_{a} (y, z)^{\prime} h \right]^{2} \\
  & + 3 \frac{q_{a} \left( y, z; \theta_{0} \right)}{\phi \left( s;
    \theta_{0} + h \right) \phi \left( s; \theta_{0} \right)} \left[ \sqrt{\phi
    \left( s; \theta_{0} + h \right)} - \sqrt{\phi \left( s; \theta_{0} \right)}
    - \Delta (s)^{\prime} h \right]^{2} \\
  & + 3 \left[ \phi \left( s; \theta_{0} + h \right)^{- \frac{1}{2}} - \phi
    \left( s; \theta_{0} \right)^{- \frac{1}{2}} \right]^{2} \left\| D_{a}
    (y, z) - \Delta (s) \sqrt{\widetilde{q}_{a} \left( y, z | s; \theta_{0}
    \right)} \right\|^{2} \|h\|^{2},
\end{align*}
where the last inequality follows from Cauchy-Schwarz. Dividing by \(\|h\|^{2}\)
and integrating over the region \(\left\{ y, z : \mathbb{S} (z) = s \right\}\),
\begin{align*}
  & \|h\|^{- 2} \int \rho_{a, s} \left( y, z; \theta_{0}, h \right)^{2}
    \mathbb{I} \{ \mathbb{S} (Z) = s \} \nu_{a} (\mathrm{d} y, \mathrm{d} z) \\
  \leq
  & \|h\|^{- 2} \int \rho_{a, s} \left( y, z; \theta_{0}, h \right)^{2} \nu_{a}
  (\mathrm{d} y, \mathrm{d} z) \\
  \leq
  & \ \frac{3}{\phi \left( s; \theta_{0} + h \right) \|h\|^{2}} \int \left[
    \sqrt{q_{a} \left( y, z; \theta_{0} + h \right)} - \sqrt{q_{a} \left( y, z;
    \theta_{0} \right)} - D_{a} (y, z)^{\prime} h \right]^{2} \nu_{a}
    (\mathrm{d} y, \mathrm{d} z) \\
  & + \frac{3 \int q_{a} \left( y, z; \theta_{0} \right) \nu_{a}
    (\mathrm{d} y, \mathrm{d} z)}{\phi \left( s; \theta_{0} + h \right) \phi
    \left( s; \theta_{0} \right)} \frac{1}{\|h\|^{2}} \left[ \sqrt{\phi \left(
    s; \theta_{0} + h \right)} - \sqrt{\phi \left( s; \theta_{0} \right)} -
    \Delta (s)^{\prime} h \right]^{2} \\
  & + 3 \left[ \phi \left( s; \theta_{0} + h \right)^{- \frac{1}{2}} - \phi
    \left( s; \theta_{0} \right)^{- \frac{1}{2}} \right]^{2} \int \left\| D_{a}
    (y, z) - \Delta (s) \sqrt{\widetilde{q}_{a} \left( y, z | s; \theta_{0}
    \right)} \right\|^{2} \nu_{a} (\mathrm{d} y, \mathrm{d} z) \\
  \to & \ 0.
\end{align*}
To see the limit claim at the end, the first term tends to \(0\) by
differentiability in quadratic mean of \(\theta \mapsto \sqrt{q_{a} (\cdot;
\theta)}\). The second term tends to zero by the differentiability in quadratic
mean of \(\theta \mapsto \sqrt{\phi (\cdot; \theta)}\) and the fact that \(\phi
\left( s; \theta_{0} + h \right) \to \phi \left( s; \theta_{0} \right) > 0\) as
\(\|h\| \to 0\), which is also implied by the former observation. The final term
also tends to zero by this last argument. It follows from the Sandwich Theorem
that \(\theta \mapsto \sqrt{\widetilde{q}_{a} (\cdot | s; \theta)}\) is
differentiable in quadratic mean at \(\theta_{0}\) with derivative \(\zeta_{a}
(\cdot | s)\). Then, by definition, the score function is \(\lambda_{a} (\cdot |
s) = 2 \widetilde{q}_{a} \left( \cdot | s; \theta_{0} \right)^{- \frac{1}{2}}
\zeta_{a} (\cdot | s)\).
\end{proof}

\subsection{Useful coupling and limit theorems for covariate adaptive
randomization}

\begin{lemma}
\th\label{lem--coupling-lemma}
Let \(\mathbf{W}^{\ast} = \left\{ W_{i} (s) : i \in \mathbb{N}, s \in
\mathbb{N}_{\mathcal{S}} \right\}\) be as in
\th\ref{asm--coupling-construct}. For \(s \in \mathbb{N}_{\mathcal{S}}\), let
\(\sigma_{n, s} : \mathbb{N}_{n} \to \mathbb{N}_{n}\) be any set of permutations
(bijection). Define
\begin{align}
  Y_{n i}^{\ast} =
  & \ \sum_{s = 1}^{\mathcal{S}} \sum_{a \in \{0, 1\}} Y_{\sigma_{n, s} (i)} (a,
    s) \mathbb{I} \left\{ A_{n i} = a, \mathbb{S} \left( Z_{i} \right) = s
    \right\}
  \\
  Z_{n i}^{\ast} =
  & \ \sum_{s = 1}^{\mathcal{S}} Z_{\sigma_{n, s} (i)} (s) \mathbb{I} \left\{
    \mathbb{S} \left( Z_{i} \right) = s \right\} \\
  X_{n i}^{\ast} =
  & \ \left( Y_{n i}^{\ast}, A_{n i}, Z_{n i}^{\ast} \right) \qquad
    \mathbf{X}_{n}^{\ast} = \left( X_{n 1}, \dots, X_{n n} \right)^{\prime}.
\end{align}
As before, let sample strata be \(\mathbf{S}_{n}^{\prime} = \left(
\mathbb{S} \left( Z_{1} \right), \dots, \mathbb{S} \left( Z_{n} \right)
\right)\). Then the joint distribution of \(\left( \mathbf{X}_{n}^{\ast},
\mathbf{S}_{n} \right)\) is equal to that of \(\left( \mathbf{X}_{n},
\mathbf{S}_{n} \right)\). In addition, \(\left[ \mathbf{X}_{n} \indep
\mathbf{X}_{n}^{\ast} \middle] \right| \mathbf{A}_{n}, \mathbf{S}_{n}\).
\end{lemma}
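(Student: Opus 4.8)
The plan is to prove both assertions by disintegrating with respect to $(\mathbf{A}_n, \mathbf{S}_n)$ and comparing the resulting conditional laws of $\mathbf{X}_n$ and $\mathbf{X}_n^{\ast}$. The single structural fact doing all the work is that, by \th\ref{asm--coupling-construct}, $\mathbf{W}^{\ast}$ is independent of the collection $(\mathbf{W}, \{\mathbf{A}_n : n \in \mathbb{N}\})$; since $\mathbf{S}_n$ is a deterministic function of $\mathbf{W}$, this yields $\mathbf{W}^{\ast} \indep (\mathbf{W}_n, \mathbf{A}_n, \mathbf{S}_n)$. Using the elementary fact that $U \indep (V, G)$ implies $U \indep V \mid G$, with the conditional law of $U$ given $G$ equal to its unconditional law, I would extract the conditional independence of $\mathbf{W}_n$ and $\mathbf{W}^{\ast}$ given $(\mathbf{A}_n, \mathbf{S}_n)$, and also that the conditional law of $\mathbf{X}_n^{\ast}$ on $\{\mathbf{A}_n = \mathbf{a}_n, \mathbf{S}_n = \mathbf{s}_n\}$ depends only on $(\mathbf{a}_n, \mathbf{s}_n)$.

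First I would settle the conditional-independence claim $[\mathbf{X}_n \indep \mathbf{X}_n^{\ast}] \mid \mathbf{A}_n, \mathbf{S}_n$. On the event $\{\mathbf{A}_n = \mathbf{a}_n, \mathbf{S}_n = \mathbf{s}_n\}$, the observed vector $\mathbf{X}_n$ is a fixed measurable function of $\mathbf{W}_n$ alone (through \eqref{eqn--obsoutproc}, with $Y_{ni} = Y_i(a_i)$ and recorded covariate $Z_i$), while $\mathbf{X}_n^{\ast}$ is a fixed measurable function of $\mathbf{W}^{\ast}$ alone (through the displayed definitions with $a_i, s_i$ and the permutations plugged in). The conditional independence of $\mathbf{W}_n$ and $\mathbf{W}^{\ast}$ given $(\mathbf{A}_n, \mathbf{S}_n)$ then transfers directly to $\mathbf{X}_n$ and $\mathbf{X}_n^{\ast}$.

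For the distributional equality, I would show that conditional on $\{\mathbf{A}_n = \mathbf{a}_n, \mathbf{S}_n = \mathbf{s}_n\}$ both vectors have coordinates that are independent across $i$ with matching marginals. On the unstarred side, exogeneity (\th\ref{asm--treat-strat} \ref{asm--treat-exog}) makes the conditional law of $\mathbf{W}_n$ given $(\mathbf{A}_n, \mathbf{S}_n)$ equal to its law given $\mathbf{S}_n$ alone; since the $W_i$ are i.i.d., the coordinates $(Y_i(a_i), Z_i)$ are then conditionally independent with $(Y_i(a_i), Z_i) \mid S_i = s_i$ distributed as $(Y(a_i), Z) \mid \mathbb{S}(Z) = s_i$. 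On the starred side, the $i$th coordinate picks out $(Y_{\sigma_{n, s_i}(i)}(a_i, s_i), Z_{\sigma_{n, s_i}(i)}(s_i))$, and by \th\ref{asm--coupling-construct} the vector $W_j(s)$ has exactly the law of $W \mid \mathbb{S}(Z) = s$, so this coordinate has the identical marginal; independence across $i$ follows because the $\mathbf{W}^{\ast}$ family is independent across pairs $(j, s)$ and, crucially, $\sigma_{n, s}$ is a bijection, so two observations sharing a stratum $s$ receive distinct indices and hence independent copies, while observations in different strata draw on independent sub-families. Matching conditional laws together with the shared marginal law of $(\mathbf{A}_n, \mathbf{S}_n)$ then give, upon summing over $\mathbf{a}_n$, the equality $(\mathbf{X}_n^{\ast}, \mathbf{S}_n) \overset{\mathrm{d}}{=} (\mathbf{X}_n, \mathbf{S}_n)$, with the $\mathbf{S}_n$-coordinate riding along for free since it is common to both sides.

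The hard part will be bookkeeping rather than conceptual. One must verify carefully that the per-coordinate marginal of the \emph{full} covariate $Z_i$ (not merely its stratum label) is reproduced — which is precisely why \th\ref{asm--coupling-construct} prescribes the law of $W_i(s)$ to be that of $W \mid \mathbb{S}(Z) = s$ — and that bijectivity of $\sigma_{n, s}$ delivers within-stratum independence of the starred coordinates. A secondary point to pin down is the status of the permutations: for deterministic $\sigma_{n, s}$ the disintegration is immediate, and if one allows them to be $(\mathbf{A}_n, \mathbf{S}_n)$-measurable the same argument applies, since they are then fixed on each conditioning event and do not couple $\mathbf{W}_n$ with $\mathbf{W}^{\ast}$.
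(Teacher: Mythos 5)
Your proposal is correct and follows essentially the same route as the paper's proof: condition on $(\mathbf{A}_{n}, \mathbf{S}_{n})$, use exogeneity and the i.i.d. structure of $\mathbf{W}_{n}$ to factor the conditional law of $\mathbf{X}_{n}$ across $i$ into stratum-conditional marginals, match these against the coordinates of $\mathbf{X}_{n}^{\ast}$ built from \th\ref{asm--coupling-construct}, and recombine with the common law of $(\mathbf{A}_{n}, \mathbf{S}_{n})$. The only difference is presentational: the paper writes the comparison via explicit densities (reusing \th\ref{lem--prod-struct}) and reduces to the identity permutation up front, whereas you handle general permutations directly through bijectivity and argue with conditional laws — the substance is identical.
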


\begin{proof}
The independence claim at the end is an immediate consequence of
\th\ref{asm--coupling-construct}. The fact that the conclusions hold across all
permutations \(\sigma_{n, s} : \mathbb{N}_{n} \to \mathbb{N}_{n}\) is a
consequence of \(\mathbf{W}^{\ast}\) being independent to
\(\mathbf{A}_{n}, \mathbf{W}_{n}\) for all \(n \in \mathbb{N}\) and having
i.i.d. components across all \((i, s) \in \mathbb{N} \times
\mathbb{N}_{\mathcal{S}}\). As such, it suffices to prove the claims for the
identity permutation, \(\sigma_{n, s} (i) = i\). As before, let
\begin{equation}
  \phi (s; Q) = Q (\mathbb{S} (Z) = s) \quad \text{and} \quad \widetilde{q}_{a}
  (y, z | s; Q) = \frac{q_{a} \left( y, z; Q \right)}{\phi (s; Q)} \mathbb{I}
  \left\{ \mathbb{S} (z) = s \right\}
  \label{eqn--S-marginal-q-cond-S-coupling}
\end{equation}
for each \(s \in \mathbb{N}_{\mathcal{S}}\). \(\phi (\cdot; Q)\) is the marginal
probability mass function of \(\mathbb{S} (Z)\) under \(Q\) and
\(\widetilde{q}_{a} (\cdot | s; Q)\) is the conditional density of \((Y (a),
Z)\) given \(\mathbb{S} (Z) = s\) under \(Q\). Let \(\mathbf{s}_{n}^{\prime} =
\left( s_{1}, \dots, s_{n} \right) \in \mathbb{N}_{\mathcal{S}}^{n}\) be
given. In view of \th\ref{lem--prod-struct}, the joint density of
\(\mathbf{X}_{n}, \mathbf{S}_{n}\) in \eqref{eqn--data-joint-density-prod}
against the \(n\)-fold product measure \(\nu^{n}\) can be factored using
\eqref{eqn--S-marginal-q-cond-S-coupling} as
\begin{align}
  p_{n} \left( \mathbf{y}_{n}, \mathbf{a}_{n}, \mathbf{z}_{n}; Q \right) =
  & \ \alpha_{n} \left( \mathbf{a}_{n} \middle| \mathbf{s}_{n} \right) \prod_{i
    = 1}^{n} q_{1} \left( y_{i}, z_{i}; Q \right)^{a_{i}} q_{0} \left( y_{i},
    z_{i}; Q \right)^{1 - a_{i}} \mathbb{I} \left( \mathbf{s}_{n} = \left(
    \mathbb{S} \left( z_{1} \right), \dots, \mathbb{S} \left( z_{n} \right)
    \right)^{\prime} \right) \nonumber \\
  =
  & \
    \alpha_{n} \left( \mathbf{a}_{n} \middle| \mathbf{s}_{n} \right) \prod_{i =
    1}^{n} \phi \left( s_{i}; Q \right) \prod_{i = 1}^{n} \widetilde{q}_{1}
    \left( y_{i}, z_{i} \middle| s_{i}; Q \right)^{a_{i}} \widetilde{q}_{0}
    \left(y_{i}, z_{i} \middle| s_{i}; Q \right)^{1 - a_{i}}.
    \label{eqn--prod-struct-equi-dist-main}
\end{align}
Note \eqref{eqn--prod-struct-equi-dist-main} is the joint density of \(\left(
\mathbf{X}_{n}, \mathbf{S}_{n} \right)\) against product measure formed from
\(\nu^{n}\) and the \(n\)-fold product counting measure on
\(\mathbb{N}_{\mathcal{S}}^{n}\). Let \(\mathbf{a}_{n}^{\prime} = \left( a_{1},
\dots, a_{n} \right) \in \{0, 1\}^{n}\) and \(\mathbf{s}_{n}^{\prime} = \left(
s_{1}, \dots, s_{n} \right) \in \mathbb{N}_{\mathcal{S}}^{n}\) be
given. Following the same arguments as those in \th\ref{lem--prod-struct}, the
joint density of \(\mathbf{X}_{n}^{\ast} | \left( \mathbf{A}_{n} =
\mathbf{a}_{n}, \mathbf{S}_{n} = \mathbf{s}_{n} \right)\) against \(\nu^{n}\) is
\begin{equation}
  \prod_{i = 1}^{n} \widetilde{q}_{1} \left( y_{i}, z_{i}; Q \right)^{a_{i}}
  \widetilde{q}_{0} \left(y_{i}, z_{i}; Q \right)^{1 - a_{i}}
  \label{eqn--prod-struct-equi-dist-coupling-step}
\end{equation}
By independence of \(\left\{ W_{i} (s) : i \in \mathbb{N}_{n}, s \in
\mathbb{N}_{\mathcal{S}} \right\}\) and \(\mathbf{S}_{n}, \mathbf{A}_{n}\), the
joint density of \(\left( \mathbf{X}_{n}^{\ast}, \mathbf{S}_{n} \right)\)
is the product of \eqref{eqn--prod-struct-equi-dist-coupling-step} and
\(\alpha_{n} \left( \mathbf{a}_{n} \middle| \mathbf{s}_{n} \right) \prod_{i =
1}^{n} \phi \left( s_{i}; Q \right)\) which yields
\eqref{eqn--prod-struct-equi-dist-main}.
\end{proof}

\begin{lemma}
\th\label{lem--treat-strat-sample-prop-conv}
Let \(\pi_{a} (s) = \pi (s)^{a} (1 - \pi (s))^{1 - a}\). Under
\th\ref{asm--iid-Q}, \th\ref{asm--treat-strat} \ref{asm--treat-exog} and
\th\ref{asm--treat-strat} \ref{asm--treat-prop},
\begin{equation}
  \frac{N_{n} (a, s)}{n} \overset{\mathrm{p}}{\to} \pi_{a} (s) Q_{0} (\mathbb{S}
  (Z) = s).
  \label{eqn--treat-strat-sample-prop-conv}
\end{equation}
\end{lemma}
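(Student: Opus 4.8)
The plan is to factor the sample proportion as a within-stratum treatment share times the stratum's share of the sample, and then control each factor by a separate assumption. Concretely, on the event $\{N_{n} (s) > 0\}$ I would write
\[
  \frac{N_{n} (a, s)}{n} = \frac{N_{n} (a, s)}{N_{n} (s)} \cdot \frac{N_{n} (s)}{n},
\]
treating the first factor via \th\ref{asm--treat-strat} \ref{asm--treat-prop} and the second via the law of large numbers for the i.i.d.\ covariates, while dealing with the null event $\{N_{n} (s) = 0\}$ separately.

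First I would establish that $N_{n} (s) / n \overset{\mathrm{p}}{\to} Q_{0} (\mathbb{S} (Z) = s)$. Under \th\ref{asm--iid-Q} the indicators $\mathbb{I} \{ \mathbb{S} (Z_{i}) = s \}$ are i.i.d.\ Bernoulli with mean $Q_{0} (\mathbb{S} (Z) = s)$, so this follows immediately from the (weak, indeed strong) law of large numbers. Next I would pin down the within-stratum share: for $a = 1$, \th\ref{asm--treat-strat} \ref{asm--treat-prop} gives $N_{n} (1, s) / N_{n} (s) \overset{\mathrm{p}}{\to} \pi (s) = \pi_{1} (s)$ directly, and for $a = 0$ I would use the complementary identity $N_{n} (0, s) / N_{n} (s) = 1 - N_{n} (1, s) / N_{n} (s) \overset{\mathrm{p}}{\to} 1 - \pi (s) = \pi_{0} (s)$, so that in either case $N_{n} (a, s) / N_{n} (s) \overset{\mathrm{p}}{\to} \pi_{a} (s)$.

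To combine the two factors when $Q_{0} (\mathbb{S} (Z) = s) > 0$, I would note that $N_{n} (s) / n \overset{\mathrm{p}}{\to} Q_{0} (\mathbb{S} (Z) = s) > 0$ forces $N_{n} (s) \overset{\mathrm{p}}{\to} \infty$, so the displayed factorization holds with probability tending to one; an application of Slutsky's theorem (convergence in probability of a product of factors with constant limits) then yields
\[
  \frac{N_{n} (a, s)}{n} \overset{\mathrm{p}}{\to} \pi_{a} (s) \, Q_{0} (\mathbb{S} (Z) = s),
\]
which is the claim.

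The only nonroutine point — and hence the main obstacle — is the degenerate stratum $Q_{0} (\mathbb{S} (Z) = s) = 0$, where the ratio $N_{n} (a, s) / N_{n} (s)$ is a $0/0$ expression and \th\ref{asm--treat-strat} \ref{asm--treat-prop} cannot be invoked directly. Here I would instead use the deterministic bound $0 \leq N_{n} (a, s) \leq N_{n} (s)$ together with $N_{n} (s) / n \overset{\mathrm{p}}{\to} 0$ and the squeeze theorem to conclude $N_{n} (a, s) / n \overset{\mathrm{p}}{\to} 0 = \pi_{a} (s) \, Q_{0} (\mathbb{S} (Z) = s)$, which again matches the stated limit. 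This case split, plus the remark that $\Pr (N_{n} (s) = 0) \to 0$ whenever $Q_{0} (\mathbb{S} (Z) = s) > 0$, is all that is needed beyond the elementary law-of-large-numbers and Slutsky computation.
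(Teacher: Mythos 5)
Your proposal is correct and follows essentially the same route as the paper: factor \(N_{n}(a,s)/n\) into the within-stratum treatment share times the stratum's sample share, apply \th\ref{asm--treat-strat} \ref{asm--treat-prop} to the first factor and the strong law of large numbers to the second, and reduce \(a=0\) to \(a=1\) by complementation. The only cosmetic difference is that the paper's additive error decomposition (with the convention \(0/0=0\) and the bound \(0 \leq N_{n}(1,s)/N_{n}(s) \leq 1\)) absorbs the degenerate stratum \(Q_{0}(\mathbb{S}(Z)=s)=0\) automatically, whereas you handle it by an explicit (and equally valid) squeeze-theorem case split.
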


\begin{proof}
Note that to show \eqref{eqn--treat-strat-sample-prop-conv}, it suffices to
prove it for the case \(a = 1\). Let \(s \in \mathbb{N}_{\mathcal{S}}\) be
given. Notice that \(N_{n} (1, s) = 0\) whenever \(N_{n} (s) = 0\). Therefore,
using the convention \(N_{n} (s)^{- 1} N_{n} (1, s) = 0\) if \(N_{n} (s) = 0\),
we can write \(N_{n} (1, s) / n = \left( N_{n} (1, s) / N_{n} (s) \right) \cdot
\left( N_{n} (s) / n \right)\) so that
\begin{align*}
  \frac{N_{n} (1, s)}{n} - \pi (s) Q_{0} (\mathbb{S} (Z) = s) =
  & \ \frac{N_{n} (1, s)}{N_{n} (s)} \cdot \frac{N_{n} (s)}{n} - \pi (s)
    Q_{0} (\mathbb{S} (Z) = s) \\
  =
  & \ \left[ \frac{N_{n} (1, s)}{N_{n} (s)} - \pi (s) \right] Q_{0}
    (\mathbb{S} = s) \\
  & + \frac{N_{n} (1, s)}{N_{n} (s)} \left[ \frac{N_{n} (s)}{n} - Q_{0}
    (\mathbb{S} (Z) = s) \right].
\end{align*}
The first tends to zero in probability by \th\ref{asm--treat-strat}
\ref{asm--treat-prop}. The second tends to zero almost surely, since \(0
\leq N_{n} (1, s) / N_{n} (s) \leq 1\) and the subsequent term in the product
converges almost surely to zero by Kolmogorov's Second Strong Law of Large
Numbers.
\end{proof}

\begin{lemma}
\th\label{lem--car-vector-clt}
Let \(J \in \mathbb{N}\), \(\xi : \mathbb{N}_{\mathcal{S}} \to
\mathbb{R}^{J}\) satisfy
\begin{equation*}
  \mathbb{E} [\xi (\mathbb{S} (Z))] = \mathbf{0}_{J}
  \quad \text{and} \quad \mathbb{E} \left[ \|\xi (\mathbb{S} (Z))\|^{2} \right]
  < \infty.
\end{equation*}
In addition, let \(h : \mathbb{R} \times \{0, 1\} \times \mathbb{R}^{k} \times
\mathbb{N}_{\mathcal{S}} \to \mathbb{R}^{J}\) be a Borel measurable function
such that for all \(s \in \mathbb{N}_{\mathcal{S}}\) and \(a \in \{0, 1\}\),
\begin{equation*}
  \mathbb{E} [h (Y (a), a, Z) | \mathbb{S} (Z) = s] =
  \mathbf{0}_{J} \quad \text{and} \quad \mathbb{E} \left[ \| h (Y (a), a, Z
  \|_{2}^{2} \middle| \mathbb{S} (Z) = s \right] < \infty
\end{equation*}
Define \(\eta_{n i} = h \left( Y_{n i}, A_{n i}, Z_{i}, \mathbb{S} \left( Z_{i}
\right) \right)\), \(\xi_{i} = \xi \left( \mathbb{S} \left( Z_{i} \right)
\right)\), \(T_{n} = \frac{1}{\sqrt{n}} \sum_{i = 1}^{n} \eta_{n i}\),
\(\zeta_{n} = \frac{1}{\sqrt{n}} \sum_{i = 1}^{n} \xi_{i}\). Under
\th\ref{asm--Q,asm--iid-Q,asm--treat-strat},
\begin{equation}
  \begin{split}
    T_{n} + \zeta_{n} \overset{d}{\to}
    & \ \mathcal{N} \left( \mathbf{0}_{J}, \mathbb{V}_{h} (\pi) +
      \mathbb{V}_{\xi} \right) \\
    \text{where} \quad \mathbb{V}_{h} (\pi) =
    & \ \sum_{s = 1}^{\mathcal{S}} Q_{0} (\mathbb{S} (Z) = s) \left\{
      \begin{array}{l}
        \pi (s) \mathrm{Var} [h (Y (1), 1, Z) | \mathbb{S} (Z) = s] \\
        + (1 - \pi (s)) \mathrm{Var} [h (Y (0), 0, Z) | \mathbb{S} (Z) = s]
      \end{array}
      \right\} \\
    =
    & \ \mathbb{E} \left[ \pi (\mathbb{S} (Z)) \mathrm{Var} [h (Y (1), 1, Z) |
      \mathbb{S} (Z)] \right] \\
    & + \mathbb{E} \left[ (1 - \pi (\mathbb{S} (Z))) \mathrm{Var} [h (Y (0), 0,
      Z) | \mathbb{S} (Z)] \right] \\
    \text{and} \quad \mathbb{V}_{\xi} =
    & \ \mathrm{Var} [\xi (\mathbb{S} (Z))].
  \end{split}
  \label{eqn--car-vector-clt}
\end{equation}
\end{lemma}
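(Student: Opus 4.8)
The plan is to establish convergence of characteristic functions and invoke L\'evy's continuity theorem, using the coupling construction to neutralize the dependence that the CAR scheme introduces into the treatment assignments. Write $\lambda (a, s) = \pi_{a} (s) Q_{0} (\mathbb{S} (Z) = s)$, $\Sigma (a, s) = \mathrm{Var} [h (Y (a), a, Z) \mid \mathbb{S} (Z) = s]$, and for $u \in \mathbb{R}^{J}$ let $g_{a, s} (u) = \mathbb{E} [\exp (i u^{\prime} h (Y (a), a, Z)) \mid \mathbb{S} (Z) = s]$ denote the within-cell conditional characteristic function. It suffices to show that for each fixed $t \in \mathbb{R}^{J}$ one has $\mathbb{E} [\exp (i t^{\prime} (T_{n} + \zeta_{n}))] \to \exp (- \tfrac{1}{2} t^{\prime} (\mathbb{V}_{h} (\pi) + \mathbb{V}_{\xi}) t)$, since the right-hand side is the characteristic function of the claimed Gaussian limit.

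First I would apply the coupling result in \th\ref{lem--coupling-lemma}: replacing $\mathbf{X}_{n}$ by the equidistributed array $\mathbf{X}_{n}^{\ast}$ built from the independent copies $\mathbf{W}^{\ast}$ of \th\ref{asm--coupling-construct} leaves $T_{n}$ unchanged in distribution (and leaves $\zeta_{n}$ literally unchanged, as it is a function of $\mathbf{S}_{n}$ alone). After this substitution, the summands of $T_{n}$ falling into cell $(a, s)$ become i.i.d.\ copies of $h (Y (a), a, Z)$ conditional on $\mathbb{S} (Z) = s$, and crucially these copies are independent of $\mathcal{F}_{n} := \sigma (\mathbf{A}_{n}, \mathbf{S}_{n})$ because $\mathbf{W}^{\ast} \indep (\mathbf{W}, \mathbf{A}_{n})$ while the cell counts $N_{n} (a, s)$ are $\mathcal{F}_{n}$-measurable. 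Conditioning on $\mathcal{F}_{n}$ therefore fixes the counts and factorizes the characteristic function across cells, giving \[ \mathbb{E} [\exp (i t^{\prime} (T_{n} + \zeta_{n})) \mid \mathcal{F}_{n}] = \exp (i t^{\prime} \zeta_{n}) \cdot \Pi_{n}, \qquad \Pi_{n} := \prod_{s = 1}^{\mathcal{S}} \prod_{a \in \{0, 1\}} g_{a, s} (n^{- 1/2} t)^{N_{n} (a, s)}, \] where $\exp (i t^{\prime} \zeta_{n})$ pulls out of the conditional expectation by $\mathcal{F}_{n}$-measurability.

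Next I would show $\Pi_{n} \overset{\mathrm{p}}{\to} \exp (- \tfrac{1}{2} t^{\prime} \mathbb{V}_{h} (\pi) t)$. Since each $g_{a, s}$ is continuous with $g_{a, s} (0) = 1$ and, by the conditional mean-zero and finite-second-moment hypotheses, $\log g_{a, s} (u) = - \tfrac{1}{2} u^{\prime} \Sigma (a, s) u + o (\| u \|^{2})$ as $u \to 0$, taking logarithms (well defined for large $n$ because $g_{a, s} (n^{- 1/2} t) \to 1$) yields $\log \Pi_{n} = - \tfrac{1}{2} \sum_{s, a} \tfrac{N_{n} (a, s)}{n} t^{\prime} \Sigma (a, s) t + \sum_{s, a} N_{n} (a, s) \, o (1 / n)$. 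By \th\ref{lem--treat-strat-sample-prop-conv} one has $N_{n} (a, s) / n \overset{\mathrm{p}}{\to} \lambda (a, s)$, so the leading term converges in probability to $- \tfrac{1}{2} t^{\prime} \mathbb{V}_{h} (\pi) t$ and the remainder is $O_{\mathrm{p}} (1) \cdot o (1) = o_{\mathrm{p}} (1)$, there being only finitely many ($2 \mathcal{S}$) cells. Because $| \Pi_{n} | \le 1$ and $\Pi_{n}$ converges in probability to the constant $c := \exp (- \tfrac{1}{2} t^{\prime} \mathbb{V}_{h} (\pi) t)$, bounded convergence gives $\mathbb{E} [\exp (i t^{\prime} \zeta_{n}) (\Pi_{n} - c)] \to 0$, while the ordinary multivariate Lindeberg--L\'evy CLT applied to the i.i.d.\ sequence $\xi (S_{i})$ gives $\mathbb{E} [\exp (i t^{\prime} \zeta_{n})] \to \exp (- \tfrac{1}{2} t^{\prime} \mathbb{V}_{\xi} t)$. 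Combining, $\mathbb{E} [\exp (i t^{\prime} (T_{n} + \zeta_{n}))] = c \, \mathbb{E} [\exp (i t^{\prime} \zeta_{n})] + o (1) \to \exp (- \tfrac{1}{2} t^{\prime} (\mathbb{V}_{h} (\pi) + \mathbb{V}_{\xi}) t)$, as required. The absence of any cross-covariance term is automatic: the conditional mean-zero property of $h$ makes $T_{n}$ conditionally centered given $\mathcal{F}_{n} \supseteq \sigma (\zeta_{n})$, so $\Pi_{n}$ carries no first-order dependence on $\zeta_{n}$.

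The main obstacle is precisely the CAR-induced dependence among the $A_{n i}$ (and hence among the $\eta_{n i}$), which rules out a direct i.i.d.\ CLT; the coupling lemma is what resolves it, converting each stratum-treatment cell into an i.i.d.\ block that is independent of the (random) cell counts. The only remaining delicacy is that these blocks have random lengths $N_{n} (a, s)$, but conditioning on $\mathcal{F}_{n}$ reduces each to a fixed-length i.i.d.\ sum, after which the convergence of the counts from \th\ref{lem--treat-strat-sample-prop-conv} supplies the limit, so that no Anscombe-type randomly-stopped CLT is needed.
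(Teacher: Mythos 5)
Your proposal is correct, but it takes a genuinely different route from the paper. Both arguments hinge on the same first step --- the coupling of \th\ref{lem--coupling-lemma}, which converts each stratum--treatment cell into an i.i.d.\ block independent of \(\left( \mathbf{A}_{n}, \mathbf{S}_{n} \right)\) --- but they diverge afterwards. The paper writes the coupled statistic as a linear combination of evaluations of a \(2 \mathcal{S} k\)-dimensional partial-sums process at the random points \(\widetilde{N}_{n} (s) / n\), \(N_{n} (a, s) / n\), invokes Donsker's functional CLT in \(\ell^{\infty} ([0, 1])^{2 \mathcal{S} k}\), and then uses the almost-sure uniform continuity of Brownian sample paths together with the convergence in probability of the evaluation points (via \th\ref{lem--treat-strat-sample-prop-conv}) and the continuous mapping machinery of van der Vaart's Theorems 18.10--18.11; the variance is read off from the independent-increments property. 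You instead condition on \(\mathcal{F}_{n} = \sigma \left( \mathbf{A}_{n}, \mathbf{S}_{n} \right)\), which freezes the cell counts and factorizes the conditional characteristic function into \(\prod_{s, a} g_{a, s} (n^{- 1/2} t)^{N_{n} (a, s)}\); the second-order expansion \(\log g_{a, s} (u) = - \tfrac{1}{2} u^{\prime} \Sigma (a, s) u + o (\| u \|^{2})\) (valid under the conditional mean-zero and finite-second-moment hypotheses), the deterministic bound \(N_{n} (a, s) \le n\) on the remainder, and \(N_{n} (a, s) / n \overset{\mathrm{p}}{\to} \pi_{a} (s) Q_{0} (\mathbb{S} (Z) = s)\) give \(\Pi_{n} \overset{\mathrm{p}}{\to} c\), after which bounded convergence and the ordinary i.i.d.\ CLT for \(\zeta_{n}\) finish the job. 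Your approach is more elementary and self-contained: it avoids weak convergence in function spaces and handles the random block lengths by conditioning rather than by a random time-change, so no Anscombe-type device or path-continuity argument is needed. What the paper's functional formulation buys in exchange is a process-level limit that is reusable when joint convergence of several such functionals is required (as in the LAN proof); your pointwise characteristic-function computation delivers only the marginal limit of the particular linear combination at hand, though that is all this lemma asserts. Two small points worth making explicit in a final write-up: the principal branch of the logarithm of \(g_{a, s} (n^{- 1/2} t)\) is well defined only for \(n\) large (which you note), and the identity \(z^{N} = \exp (N \log z)\) for integer \(N\) is branch-independent, so the factorization of \(\log \Pi_{n}\) is legitimate.
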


\begin{proof}
Note that \(T_{n}\) is numerically invariant to permutations of the sample
indices \(i \in \mathbb{N}_{n}\). Hence, we can take any permutation that places
observations in increasing order with respect to stratum labels and for each \(s
\in \mathbb{N}_{\mathcal{S}}\) places \(\left\{ i \in \mathbb{N}_{n} : A_{n i} =
1, \mathbb{S} \left( Z_{i} \right) = s \right\}\) before \(\left\{ i \in
\mathbb{N}_{n} : A_{n i} = 0, \mathbb{S} \left( Z_{i} \right) = s
\right\}\). Denote \(\eta_{i} (a, s) = h \left( Y_{i} (a, s), a, Z_{i} (s)
\right)\). Then, defining
\begin{equation}
  \widetilde{T}_{n} = \sum_{s = 1}^{\mathcal{S}} \left\{ \frac{1}{\sqrt{n}}
  \sum_{i = \widetilde{N}_{n} (s) + 1}^{\widetilde{N}_{n} (s) + N_{n} (1, s)}
  \eta_{i} (1, s) + \frac{1}{\sqrt{n}} \sum_{i = \widetilde{N}_{n} (s) + N_{n}
  (1, s) + 1}^{\widetilde{N}_{n} (s) + N_{n} (s)} \eta_{i} (0, s) \right\},
  \label{eqn--car-indep-rep}
\end{equation}
it follows from \th\ref{lem--coupling-lemma} that \(T_{n} + \zeta_{n} |
\mathbf{A}_{n}, \mathbf{S}_{n} \overset{\mathrm{d}}{=} \widetilde{T}_{n} +
\zeta_{n} | \mathbf{A}_{n}, \mathbf{S}_{n}\) which further implies \(T_{n} +
\zeta_{n} \overset{\mathrm{d}}{=} \widetilde{T}_{n} + \zeta_{n}\). Thus, finding
a limit distribution for \(T_{n} + \zeta_{n}\) is equivalent to finding one for
\(\widetilde{T}_{n} + \zeta_{n}\).

\(\widetilde{T}_{n}\) can be expressed in terms of a partial sums process
which allows us to derive its limit distribution from weak convergence of the
partial sums process. Indeed, for \(u \in [0, 1]\), define
\begin{equation}
  \begin{split}
    T_{n}^{\ast} (u, a, s) =
    & \ \frac{1}{\sqrt{n}} \sum_{i = 1}^{\lfloor n u \rfloor} \eta_{i} (a, s),
    \\
    \mathbf{T}_{n}^{\ast} (u)^{\prime} =
    & \ \left( T_{n}^{\ast} (u, 0, 1)^{\prime}, T_{n}^{\ast} (u, 1, 1)^{\prime},
      \dots, T_{n}^{\ast} (u, 0, \mathcal{S})^{\prime}, T_{n}^{\ast} (u, 1,
      \mathcal{S})^{\prime} \right).
  \end{split}
  \label{eqn--car-partial-sum}
\end{equation}
Thus, \(\mathbf{T}_{n}^{\ast} (\cdot)\) is a \(2 \mathcal{S} k\)-vector valued
process with sample paths in the space \(\ell^{\infty} ([0, 1])^{2 \mathcal{S}
k}\). Furthermore, it follows from \eqref{eqn--car-indep-rep} and
\eqref{eqn--car-partial-sum} that we can write \(\widetilde{T}_{n}\) as
\begin{equation}
  \widetilde{T}_{n} = \sum_{s = 1}^{\mathcal{S}} \left\{
  \begin{array}{l}
    T_{n}^{\ast} \left( \frac{\widetilde{N}_{n} (s)}{n} + \frac{N_{n} (1,
    s)}{n}, 1, s \right) - T_{n}^{\ast} \left( \frac{\widetilde{N}_{n} (s)}{n},
    1, s \right) \\
    + T_{n}^{\ast} \left( \frac{\widetilde{N}_{n} (s)}{n} + \frac{N_{n} (s)}{n},
    0, s \right) - T_{n}^{\ast} \left( \frac{\widetilde{N}_{n} (s)}{n} +
    \frac{N_{n} (1, s)}{n}, 0, s \right)
  \end{array}
  \right\}.
  \label{eqn--car-partial-sum-rep}
\end{equation}
Thus, \(\widetilde{T}_{n}\) is a linear combination of ``random
point-evaluations'' of the partial-sums process \(\mathbf{T}_{n}^{\ast}
(\cdot)\). We proceed by first deriving a weak limit for \(\left(
\mathbf{T}_{n}^{\ast \prime} (\cdot), \zeta_{n}^{\prime}
\right)^{\prime}\). Note that by \th\ref{asm--coupling-construct},
\(\mathbf{T}_{n}^{\ast}\) and \(\zeta_{n}\) are independent for every \(n \in
\mathbb{N}\). So, we can argue their weak limits separately (see for instance
Example 1.4.6 of \citet[p. 31]{1996vandervaartWeakConvergenceEmpirical}). Then
we combine this weak limit with the asymptotic behavior of the evaluation points
defining \(\widetilde{T}_{n}\) via the Continuous Mapping Theorem to derive the
limit distribution of \(\widetilde{T}_{n} + \zeta_{n}\).

The limit distribution of \(\zeta_{n}\) is straightforward from the
Lindeberg-L\'evy Central Limit Theorem:
\begin{equation}
  \zeta_{n} \overset{\mathrm{d}}{\to} \zeta \sim \mathcal{N} \left(
  \mathbf{0}_{J}, \mathbb{V}_{\xi} \right).
  \label{eqn--car-vector-clt-zeta}
\end{equation}
Next, we consider \(\mathbf{T}_{n}^{\ast} (\cdot)\). By hypothesis, \(\left(
\eta_{i} (0, s), \eta_{i} (1, s) \right)\) are mean zero, finite variance and
independent across \(s \in \mathbb{N}_{\mathcal{S}}\). By a \((2 \mathcal{S}
k)\)-dimensional variant of Donsker's Functional Central Limit Theorem (see for
instance \citet[Theorem 4.3.5, p. 106]{2002whittStochasticProcessLimits}),
\(\mathbf{T}_{n}^{\ast} (\cdot)\) converges weakly in \(\ell^{\infty} ([0,
1])^{2 \mathcal{S} k}\) to a \((2 \mathcal{S} k)\)-vector valued scaled Brownian
motion \(\mathbf{T}^{\ast} (\cdot)\)
\begin{align*}
  \mathbf{T}^{\ast} (\cdot)^{\prime} =
  & \ \left( T^{\ast} (\cdot, 0, 1)^{\prime}, T^{\ast} (\cdot, 1, 1)^{\prime},
    \dots, T^{\ast} (\cdot, 0, \mathcal{S})^{\prime}, T^{\ast} (\cdot, 1,
    \mathcal{S})^{\prime} \right) \\
  \left(
  \begin{array}{c}
    T^{\ast} (\cdot, 0, s) \\
    T^{\ast} (\cdot, 1, s)
  \end{array} \right) =
  & \ \Sigma_{h, s}^{\frac{1}{2}} B_{s} (\cdot)
\end{align*}
where \(\left( B_{1} (\cdot)^{\prime}, \dots, B_{\mathcal{S}} (\cdot)^{\prime}
\right)^{\prime}\) is a \((2 \mathcal{S} k)\)-vector valued standard Brownian
motion and
\begin{equation*}
  \Sigma_{h, s} = \ \left(
  \begin{array}{cc}
    \Sigma_{h, s} (0, 0) & \Sigma_{h, s} (0, 1) \\
    \Sigma_{h, s} (1, 0) & \Sigma_{h, s} (1, 1)
  \end{array}
  \right) = \left(
  \begin{array}{cc}
    \mathbb{E} \left[ \eta_{i} (0, s) \eta_{i} (0, s)^{\prime} \right]
    & \mathbb{E} \left[ \eta_{i} (0, s) \eta_{i} (1, s)^{\prime} \right] \\
    \mathbb{E} \left[ \eta_{i} (1, s) \eta_{i} (0, s)^{\prime} \right]
    & \mathbb{E} \left[ \eta_{i} (1, s) \eta_{i} (1, s)^{\prime} \right]
  \end{array}
  \right).
\end{equation*}
Denoting weak convergence in the product space \(\ell^{\infty} ([0, 1])^{2
\mathcal{S} J} \times \mathbb{R}^{J}\) by \(\rightsquigarrow\), we have
\begin{equation*}
  \left(
  \begin{array}{c}
    \mathbf{T}_{n}^{\ast} (\cdot) \\
    \zeta_{n}
  \end{array}
  \right) \rightsquigarrow \left(
  \begin{array}{c}
    \mathbf{T}^{\ast} \\
    \zeta
  \end{array}
  \right).
\end{equation*}

Next, we deal with asymptotic behavior of the evaluation points defining
\(\widetilde{T}_{n}\) in terms of \(\mathbf{T}^{\ast}_{n} (\cdot)\) in
\eqref{eqn--car-partial-sum-rep}. By Kolmogorov's Second Strong Law of Large
Numbers, \(n^{- 1} N_{n} (s) \overset{\mathrm{a.s.}}{\to} Q_{0} (\mathbb{S} (Z)
= s)\) and hence also in probability for each \(s \in
\mathbb{N}_{\mathcal{S}}\). So by the Continuous Mapping Theorem, it is
straightforward that \(n^{- 1} \widetilde{N}_{n} (s)
\overset{\mathrm{a.s.}}{\to} F (s)\) and hence also in probability. For each
\((a, s) \in \{0, 1\} \times \mathbb{N}_{\mathcal{S}}\), letting \(\pi_{a} (s) =
\pi (s)^{a} \cdot (1 - \pi (s))^{1 - a} \), by
\th\ref{lem--treat-strat-sample-prop-conv},
\begin{equation*}
  \frac{N_{n} (a, s)}{n} = \frac{N_{n} (a, s)}{N_{n} (s)} \cdot \frac{N_{n}
  (s)}{n} \overset{\mathrm{p}}{\to} \pi_{a} (s) \cdot Q_{0} (\mathbb{S} (Z) = s)
  := \zeta (a, s) > 0.
\end{equation*}
Thus the evaluation points defining \(\widetilde{T}_{n}\) in terms of
\(\mathbf{T}^{\ast}_{n} (\cdot)\) all converge in probability to non-stochastic
limits. A standard property of Brownian motions on \([0, 1]\) is that their
sample paths are almost-surely uniformly continuous. Then,
denoting weak convergence in the product space \(\ell^{\infty} ([0, 1])^{2
\mathcal{S} J} \times \mathbb{R}^{J + 2 \mathcal{S}}\) again by
\(\rightsquigarrow\), (v) of Theorem 18.10 in
\citet[p. 259]{1998vandervaartAsymptoticStatistics} implies that
\begin{equation*}
  \left(
  \begin{array}{c}
    \mathbf{T}_{n}^{\ast} (\cdot) \\
    \zeta_{n} \\
    \left( \frac{N_{n} (a, s)}{n} : s \in \mathbb{N}_{\mathcal{S}}, a \in \{0,
    1\} \right)
  \end{array}
  \right) \rightsquigarrow \left(
  \begin{array}{c}
    \mathbf{T}^{\ast} \\
    \zeta \\
    \left( \pi_{a} (s) Q_{0} (\mathbb{S} (Z) = s) : s \in
    \mathbb{N}_{\mathcal{S}}, a \in \{0, 1\} \right)
  \end{array}
  \right)
  \label{eqn--car-vector-clt-zeta-T}
\end{equation*}
with \(\mathbf{T}^{\ast}\) and \(\zeta\) independent, and the remaining
components all non-stochastic. Therefore, by Theorem 18.11 in
\citet[p. 259]{1998vandervaartAsymptoticStatistics},
\begin{align}
  \widetilde{T}_{n} + \zeta_{n} =
  & \ \sum_{s = 1}^{\mathcal{S}} \left\{
    \begin{array}{l}
      T_{n}^{\ast} \left( \frac{\widetilde{N}_{n} (s)}{n} + \frac{N_{n} (1,
      s)}{n}, 1, s \right) - T_{n}^{\ast} \left( \frac{\widetilde{N}_{n}
      (s)}{n}, 1, s \right) \\
      + T_{n}^{\ast} \left( \frac{\widetilde{N}_{n} (s)}{n} + \frac{N_{n}
      (s)}{n}, 0, s \right) - T_{n}^{\ast} \left( \frac{\widetilde{N}_{n}
      (s)}{n} + \frac{N_{n} (1, s)}{n}, 0, s \right)
    \end{array}
    \right\} + \zeta_{n} \nonumber \\
  \overset{\mathrm{d}}{\to}
  & \ \sum_{s = 1}^{\mathcal{S}} \left\{
    \begin{array}{l}
      T^{\ast} (F (s) + \pi (s) Q_{0} (\mathbb{S} (Z) = s), 1, s) - T^{\ast} (F
      (s), 1, s) \\
      + T^{\ast} (F (s) + Q_{0} (\mathbb{S} (Z) = s), 0, s) - T^{\ast} (F (s) +
      \pi (s) Q_{0} (\mathbb{S} (Z) = s), 0, s)
    \end{array}
    \right\} + \zeta \nonumber \\
  \equiv
  & \ \mathcal{T} + \zeta.
    \label{eqn--car-vector-clt-limit}
\end{align}

Note that the limit random vector \(\mathcal{T}\) in
\eqref{eqn--car-vector-clt-limit} is a linear combination of random vectors that
are all jointly Gaussian. The limit in \eqref{eqn--car-vector-clt-limit} is mean
zero since \(\mathbf{T}^{\ast}\) is mean zero. The final step is then showing
that the variance matrix of \(\mathcal{T}\) is equal to
\eqref{eqn--car-vector-clt}. For that, we can use the independent increments
property of Brownian motions. Independently of each other,
\begin{align*}
  & T^{\ast} (F (s) + \pi (s) Q_{0} (\mathbb{S} (Z) = s), 1, s) - T^{\ast} (F
    (s), 1, s) \sim \mathcal{N} \left( \mathbf{0}_{J}, \mathbb{V}_{1 h} (s)
    \right), \\
  & \text{where } \mathbb{V}_{1 h} (s) = \pi (s) Q_{0} (\mathbb{S} (Z) = s)
    \Sigma_{h, s} (1, 1), \\
  & T^{\ast} (F (s) + Q_{0} (\mathbb{S} (Z) = s), 0, s) - T^{\ast} (F (s) + \pi
    (s) Q_{0} (\mathbb{S} (Z) = s), 0, s) \sim \mathcal{N} \left(
    \mathbf{0}_{J}, \mathbb{V}_{2 h} (s) \right), \\
  & \text{where } \mathbb{V}_{2 h} (s) = (1 - \pi (s)) Q_{0} (\mathbb{S} (Z) =
    s) \Sigma_{h, s} (0, 0).
\end{align*}
By the independent increments property of Brownian motions, the variance of
\(\mathcal{T}\) is the sum of the above variances. Since \(\zeta\) is
independent to \(\mathbf{T}^{\ast} (\cdot)\), it is necessarily independent to
\(\mathcal{T}\). Therefore, putting everything together,
\begin{equation*}
  T_{n} + \zeta_{n} \overset{\mathrm{d}}{=} \widetilde{T}_{n} + \zeta_{n}
  \overset{\mathrm{d}}{\to} \mathcal{T} \sim \mathcal{N} \left( \mathbf{0}_{J},
  \mathbb{V}_{h} (\pi) + \mathbb{V}_{\xi} \right).
\end{equation*}
\end{proof}

\begin{lemma}
\th\label{lem--car-vector-lln}
Let \(J \in \mathbb{N}\) and \(h : \mathbb{R} \times \{0, 1\} \times
\mathbb{R}^{k} \to \mathbb{R}^{J}\) be a Borel measurable function such that for
any \(a \in \{0, 1\}\), \(\mathbb{E} \left[ \Vert h (Y (a), a, Z) \Vert_{1}
\right] < \infty\). Define \(\eta_{n i} = h \left( Y_{n i}, A_{n i}, Z_{i}
\right)\) and \(\overline{\eta}_{n} = \frac{1}{n} \sum_{i = 1}^{n} \eta_{n
i}\). Then under \th\ref{asm--Q,asm--iid-Q,asm--treat-strat},
\(\overline{\eta}_{n} \overset{\mathrm{p}}{\to} \mathbb{E} \left[ \pi
(\mathbb{S} (Z)) h (Y (1), 1, Z) + (1 - \pi (\mathbb{S} (Z))) h (Y (0), 0, Z)
\right]\).
\end{lemma}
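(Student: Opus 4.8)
The plan is to prove the weak law of large numbers by reusing the permutation/coupling structure behind the central limit theorem in \th\ref{lem--car-vector-clt}, but replacing its functional CLT step with an elementary random-index law of large numbers. First I would note that $\overline{\eta}_n$ is numerically invariant to permutations of the sample indices, so I may regroup the sum by stratum and treatment arm. Writing $Y_{ni}=Y_i(A_{ni})$ via \eqref{eqn--obsoutproc}, each summand equals $h(Y_i(a),a,Z_i)$ on $\{A_{ni}=a,\mathbb{S}(Z_i)=s\}$, and I may invoke the coupling result of \th\ref{lem--coupling-lemma} to replace the (within-stratum dependent) observations by the independent array $\mathbf{W}^{\ast}$ of \th\ref{asm--coupling-construct}. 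This produces an auxiliary statistic $\widetilde{\eta}_n$ with $\overline{\eta}_n \overset{\mathrm{d}}{=}\widetilde{\eta}_n$, of the form
\begin{equation*}
\widetilde{\eta}_n = \sum_{s=1}^{\mathcal{S}} \sum_{a\in\{0,1\}} \frac{N_n(a,s)}{n}\,\overline{\eta}_n(a,s), \qquad \overline{\eta}_n(a,s) = \frac{1}{N_n(a,s)} \sum_{i=1}^{N_n(a,s)} \eta_i(a,s),
\end{equation*}
where $\eta_i(a,s)=h(Y_i(a,s),a,Z_i(s))$ are i.i.d.\ across $i$ with mean $m(a,s):=\mathbb{E}[h(Y(a),a,Z)\mid\mathbb{S}(Z)=s]$ (finite by the $L_1$ hypothesis) and, crucially, are independent of $\mathbf{A}_n,\mathbf{S}_n$ and hence of the counts $N_n(a,s)$. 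Since convergence in distribution to a constant is equivalent to convergence in probability, it suffices to analyze $\widetilde{\eta}_n$.

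Next I would treat each of the finitely many $(a,s)$ blocks separately. By \th\ref{lem--treat-strat-sample-prop-conv}, $N_n(a,s)/n \overset{\mathrm{p}}{\to} \pi_a(s)\,Q_0(\mathbb{S}(Z)=s)$, which is strictly positive whenever $Q_0(\mathbb{S}(Z)=s)>0$; in particular $N_n(a,s)\overset{\mathrm{p}}{\to}\infty$ and the $0/0$ convention is asymptotically irrelevant. The core step is then $\overline{\eta}_n(a,s)\overset{\mathrm{p}}{\to} m(a,s)$: because $\{\eta_i(a,s)\}_i$ is i.i.d.\ with finite first moment and independent of the random index $N_n(a,s)$, one conditions on $N_n(a,s)$ and applies a random-index law of large numbers (the $L_1$ analogue of the random-subsequence convergence arguments used elsewhere in the appendix), built on Kolmogorov's strong law for the underlying partial sums. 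Slutsky's theorem then yields $\tfrac{N_n(a,s)}{n}\,\overline{\eta}_n(a,s)\overset{\mathrm{p}}{\to} \pi_a(s)\,Q_0(\mathbb{S}(Z)=s)\,m(a,s)$.

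Finally, summing the finitely many blocks and applying the law of iterated expectations (pulling the $\mathbb{S}(Z)$-measurable factor $\pi(\mathbb{S}(Z))$ inside the conditional expectation) collapses $\sum_{s} Q_0(\mathbb{S}(Z)=s)\,[\pi(s)m(1,s)+(1-\pi(s))m(0,s)]$ to $\mathbb{E}[\pi(\mathbb{S}(Z))h(Y(1),1,Z)+(1-\pi(\mathbb{S}(Z)))h(Y(0),0,Z)]$, which is the asserted limit; transferring back through $\overline{\eta}_n\overset{\mathrm{d}}{=}\widetilde{\eta}_n$ completes the argument. The main obstacle is the core step $\overline{\eta}_n(a,s)\overset{\mathrm{p}}{\to} m(a,s)$: the index $N_n(a,s)$ is random and generated by the (possibly non-i.i.d.) assignment mechanism, so the limit cannot be read off directly from the strong law and instead hinges on the independence between the coupled outcomes and the treatment counts together with a random-index convergence result. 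An equivalent route, closer in spirit to \th\ref{lem--car-vector-clt}, would establish a uniform (functional) law of large numbers for the partial-sums processes $n^{-1}\sum_{i\le \lfloor nu\rfloor}\eta_i(a,s)$ and evaluate it at the random endpoints $\widetilde{N}_n(s)/n$ and $N_n(a,s)/n$, which converge to deterministic limits, invoking continuity of the (now deterministic, linear) limiting path.
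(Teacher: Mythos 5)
Your proposal is correct and follows essentially the same route as the paper's own proof: permutation invariance plus the coupling construction of \th\ref{lem--coupling-lemma} to reduce to stratum--treatment blocks of i.i.d.\ terms, \th\ref{lem--treat-strat-sample-prop-conv} and \th\ref{lem--ratio-conv-implies-div} for the counts, the random-index convergence lemma built on Kolmogorov's strong law for the block averages, and the continuous mapping theorem with the law of iterated expectations to assemble the limit. No gaps.
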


\begin{proof}
Notice that by \th\ref{lem--coupling-lemma}
\begin{equation*}
  \overline{\eta}_{n} \overset{\mathrm{d}}{=} \sum_{s = 1}^{\mathcal{S}} \sum_{a
  = 0}^{1} \frac{1}{n} \sum_{i = 1}^{N_{n} (a, s)} \eta_{i} (a, s) = \sum_{s =
  1}^{\mathcal{S}} \sum_{a = 0}^{1} \frac{N_{n} (a, s)}{n} \frac{1}{N_{n} (a,
  s)} \sum_{i = 1}^{N_{n} (a, s)} \eta_{i} (a, s).
\end{equation*}
By Kolmogorov's Second Strong Law of Large Numbers, it follows that for
each \((a, s) \in \{0, 1\} \times \mathbb{N}_{\mathcal{S}}\),
\begin{equation*}
  \frac{1}{m} \sum_{i = 1}^{m} \eta_{i} (a, s) \overset{\mathrm{a.s.}}{\to}
  \mathbb{E} [h (Y (a), a, Z) | \mathbb{S} (Z) = s] \qquad \text{as } m \to
  \infty
\end{equation*}
and hence also in probability. By Kolmogorov's Second Strong Law of Large
Numbers, for each \(s \in \mathbb{N}_{\mathcal{S}}\), \(n^{- 1} N_{n} (s)
\overset{\mathrm{a.s.}}{\to} Q_{0} (\mathbb{S} (Z) = s)\) and hence also in
probability. For each \((a, s) \in \{0, 1\} \times \mathbb{N}_{\mathcal{S}}\),
by \th\ref{lem--treat-strat-sample-prop-conv},
\begin{equation*}
  \frac{N_{n} (a, s)}{n} = \frac{N_{n} (a, s)}{N_{n} (s)} \cdot \frac{N_{n}
  (s)}{n} \overset{\mathrm{p}}{\to} \pi (s)^{a} \cdot (1 - \pi (s))^{1 - a}
  \cdot Q_{0} (\mathbb{S} (Z) = s) := \zeta (a, s) > 0.
\end{equation*}
By \th\ref{lem--ratio-conv-implies-div}, \(N_{n} (a, s)
\overset{\mathrm{p}}{\to} \infty\) for every \((a, s) \in \{0, 1\} \times
\mathbb{N}_{\mathcal{S}}\) in the sense of
\th\ref{def--stochastic-divergence}. By \th\ref{lem--random-subseq-conv}, we
have
\begin{equation*}
  \frac{1}{N_{n} (a, s)} \sum_{i = 1}^{N_{n} (a, s)} \eta_{i} (a, s)
  \overset{\mathrm{p}}{\to} \mathbb{E} [h (Y (a), a, Z) | \mathbb{S} (Z) = s].
\end{equation*}
Applying the Continuous Mapping Theorem, we have
\begin{align*}
  \overline{\eta}_{n} \overset{\mathrm{p}}{\to}
  & \ \sum_{s = 1}^{\mathcal{S}} Q_{0} (\mathbb{S} (Z) = s) \sum_{a = 0}^{1} \pi
    (s)^{a} \cdot (1 - \pi (s))^{1 - a} \cdot \mathbb{E} [h (Y (a), a, Z) |
    \mathbb{S} (Z) = s] \\
  =
  & \ \mathbb{E} [\pi (\mathbb{S} (Z)) h (Y (1), 1, Z) + (1 - \pi (\mathbb{S}
    (Z))) h (Y (0), 0, Z)]
\end{align*}
where the last equality follows by the Law of Iterated Expectations.
\end{proof}

\begin{lemma}
\th\label{lem--sample-split-tends-correct}
Let \th\ref{asm--Q,asm--iid-Q,asm--treat-strat} hold. Under
\th\ref{asm--sample-split}, for any given \(s \in \mathbb{N}_{\mathcal{S}}\)
\begin{align}
  \frac{\widehat{N}_{n} (a, s, j)}{n} \overset{\mathrm{p}}{\to}
  & \ \frac{\pi_{a} (s)}{J} Q_{0} (\mathbb{S} (Z) = s)
    \label{eqn--strat-treat-fold-size-correct}
  \\
  \frac{\widehat{N}_{n} (s, j)}{n} \overset{\mathrm{p}}{\to}
  & \ \frac{1}{J} Q_{0} (\mathbb{S} (Z) = s)
    \label{eqn--strat-fold-size-correct}
\end{align}
with \(\widehat{N}_{n} (a, s, j)\) and \(\widehat{N}_{n} (s, j)\) defined in
\eqref{eqn--strat-treat-fold-size} and \eqref{eqn--strat-fold-size}
respectively. If in addition, \th\ref{asm--weak-balance} holds, then
\begin{equation}
  \left| \frac{\widehat{N}_{n} (a, s, j)}{n} - \frac{\pi_{a} (s)}{J} Q_{0}
  (\mathbb{S} (Z) = s) \right| = O_{\mathrm{p}} \left( 1 / \sqrt{n} \right).
  \label{eqn--strat-treat-fold-weak-balance}
\end{equation}
Furthermore, under \th\ref{asm--weak-balance}, for each \(s \in
\mathbb{N}_{\mathcal{S}}\) such that \(Q_{0} (\mathbb{S} (Z) = s) > 0\), the
above implies that
\begin{equation}
  \left| \frac{\widehat{N}_{n} (a, s, j)}{\widehat{N}_{n} (s, j)} - \pi_{a}
  (s) \right| = O_{\mathrm{p}} \left( 1 / \sqrt{n} \right).
  \label{eqn--strat-treat-within-fold-weak-balance}
\end{equation}
\end{lemma}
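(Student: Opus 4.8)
The plan is to reduce every claim to the behavior of the stratum treatment-group counts $N_n(a,s)$, whose limit is already pinned down by \th\ref{lem--treat-strat-sample-prop-conv}, and then to track how the deterministic fold-splitting of \th\ref{asm--sample-split} perturbs these counts. The first observation I would record is that the size constraints \eqref{eqn--sample-split-size} force each fold count to satisfy $\bigl| \widehat{N}_n(a,s,j) - N_n(a,s)/J \bigr| \le J - 1$ for every $j \in \mathbb{N}_J$: for $j < J$ the gap is $N_n(a,s)/J - \lfloor N_n(a,s)/J \rfloor < 1$, and for $j = J$ one checks that $\widehat{N}_n(a,s,J) - N_n(a,s)/J = (J-1)\bigl( N_n(a,s)/J - \lfloor N_n(a,s)/J \rfloor \bigr) \in [0, J-1)$. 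Consequently $\widehat{N}_n(a,s,j)/n = N_n(a,s)/(Jn) + O(1/n)$ with a deterministic, uniformly bounded remainder.

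With this in hand, \eqref{eqn--strat-treat-fold-size-correct} is immediate: \th\ref{lem--treat-strat-sample-prop-conv} gives $N_n(a,s)/n \overset{\mathrm{p}}{\to} \pi_a(s) Q_0(\mathbb{S}(Z)=s)$, so dividing by $J$ and absorbing the $O(1/n)$ term yields the stated limit. For \eqref{eqn--strat-fold-size-correct} I would sum over $a \in \{0,1\}$, using $\widehat{N}_n(s,j) = \widehat{N}_n(0,s,j) + \widehat{N}_n(1,s,j)$ together with $\pi_0(s) + \pi_1(s) = 1$, so the two fold limits combine to $Q_0(\mathbb{S}(Z)=s)/J$. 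Note that this part needs only \th\ref{asm--Q,asm--iid-Q,asm--treat-strat} and \th\ref{asm--sample-split}, not weak balance.

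For the rate statement \eqref{eqn--strat-treat-fold-weak-balance}, the key is to upgrade the convergence of $N_n(a,s)/n$ to a $\sqrt{n}$-rate bound. Writing $\widehat p_s = N_n(s)/n$ and $\widehat\pi_{a,s} = N_n(a,s)/N_n(s)$, I would use the product decomposition
\begin{equation*}
  \sqrt{n}\bigl( \widehat p_s \widehat\pi_{a,s} - Q_0(\mathbb{S}(Z)=s)\,\pi_a(s) \bigr)
  = \widehat p_s\, \sqrt{n}\bigl( \widehat\pi_{a,s} - \pi_a(s) \bigr)
  + \pi_a(s)\, \sqrt{n}\bigl( \widehat p_s - Q_0(\mathbb{S}(Z)=s) \bigr).
\end{equation*}
Here $\sqrt{n}(\widehat\pi_{a,s} - \pi_a(s)) = O_{\mathrm p}(1)$ by \th\ref{asm--weak-balance} (the $a=0$ case follows from $\widehat\pi_{0,s} = 1 - \widehat\pi_{1,s}$ and $\pi_0(s) = 1 - \pi_1(s)$), while $\sqrt{n}(\widehat p_s - Q_0(\mathbb{S}(Z)=s)) = O_{\mathrm p}(1)$ by the Lindeberg--L\'evy CLT applied to the i.i.d.\ indicators $\mathbb{I}\{\mathbb{S}(Z_i)=s\}$; since $\widehat p_s = O_{\mathrm p}(1)$, both terms are $O_{\mathrm p}(1)$. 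Transferring through the $O(1/n)$ discretization error from the first paragraph gives $\bigl| \widehat{N}_n(a,s,j)/n - \pi_a(s) Q_0(\mathbb{S}(Z)=s)/J \bigr| = O_{\mathrm p}(1/\sqrt{n})$.

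Finally, \eqref{eqn--strat-treat-within-fold-weak-balance} follows by writing
\begin{equation*}
  \frac{\widehat{N}_n(a,s,j)}{\widehat{N}_n(s,j)} - \pi_a(s)
  = \frac{ \widehat{N}_n(a,s,j)/n - \pi_a(s)\,\widehat{N}_n(s,j)/n }{ \widehat{N}_n(s,j)/n }
\end{equation*}
and noting that the numerator equals $\bigl[ \widehat{N}_n(a,s,j)/n - \pi_a(s)Q_0(\mathbb{S}(Z)=s)/J \bigr] - \pi_a(s)\bigl[ \widehat{N}_n(s,j)/n - Q_0(\mathbb{S}(Z)=s)/J \bigr]$, which is $O_{\mathrm p}(1/\sqrt{n})$ by applying \eqref{eqn--strat-treat-fold-weak-balance} to each fold count. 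Since $Q_0(\mathbb{S}(Z)=s) > 0$, the denominator converges in probability to $Q_0(\mathbb{S}(Z)=s)/J > 0$ by \eqref{eqn--strat-fold-size-correct}, so its reciprocal is $O_{\mathrm p}(1)$ and Slutsky's theorem delivers the bound. I expect the main obstacle to be the rate step: obtaining the $\sqrt{n}$ bound on $N_n(a,s)/n$ forces one to combine the weak-balance control of the conditional treatment proportion with an independent CLT for the stratum sizes, and keeping the fold discretization error genuinely $O(1/n)$ rather than merely $o_{\mathrm p}(1)$ is what makes the product decomposition clean enough to preserve the rate.
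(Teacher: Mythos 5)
Your proposal is correct and follows essentially the same route as the paper's proof: reduce the fold counts to $N_{n}(a,s)/J$ up to a bounded integer-rounding error, invoke \th\ref{lem--treat-strat-sample-prop-conv} for the consistency claims, combine \th\ref{asm--weak-balance} with the Lindeberg--L\'evy CLT for the stratum frequencies to get the $\sqrt{n}$-rate, and finish the within-fold ratio claim with the same numerator decomposition and a Slutsky argument on the denominator. The only difference is organizational --- you absorb the $j=J$ case into a single uniform bound $\bigl|\widehat{N}_{n}(a,s,j) - N_{n}(a,s)/J\bigr| \leq J-1$ up front, whereas the paper treats $j<J$ and $j=J$ separately at each step --- which is a mild streamlining, not a different argument.
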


\begin{proof}[Proof of \th\ref{lem--sample-split-tends-correct}]
We start by proving \eqref{eqn--strat-treat-fold-size-correct}. First,
\begin{equation}
  \begin{split}
    \frac{1}{n} \left\lfloor \frac{N_{n} (a, s)}{J} \right\rfloor -
    \frac{\pi_{a} (s) Q_{0} (\mathbb{S} (Z) = s)}{J} =
    & \ \frac{1}{J} \left[ \frac{N_{n} (a, s)}{n} - \pi_{a} (s) Q_{0}
      (\mathbb{S} (Z) = s) \right] \\
    & + \frac{N_{n} (a, s)}{n} \left\{ \frac{1}{N_{n} (a, s)} \left\lfloor
      \frac{N_{n} (a, s)}{J} \right\rfloor - \frac{1}{J} \right\}
  \end{split}
  \label{eqn--sample-split-tends-correct-1}
\end{equation}
By definition of the integer floor function \(\lfloor \cdot \rfloor\), we have
\begin{equation*}
  \left\lfloor \frac{N_{n} (a, s)}{J} \right\rfloor \leq \frac{N_{n} (a, s)}{J}
  < \left\lfloor \frac{N_{n} (a, s)}{J} \right\rfloor + 1.
\end{equation*}
and so, taking absolute values on both sides of
\eqref{eqn--sample-split-tends-correct-1} and applying the triangle inequality,
and using the above,
\begin{equation}
  \left| \frac{1}{n} \left\lfloor \frac{N_{n} (a, s)}{J} \right\rfloor -
  \frac{\pi_{a} (s) Q_{0} (\mathbb{S} (Z) = s)}{J} \right| \leq
  \frac{1}{J} \left| \frac{N_{n} (a, s)}{n} - \pi_{a} (s) Q_{0} (\mathbb{S} (Z)
  = s) \right| + \frac{1}{n}
  \label{eqn--sample-split-treat-prop-breakdown}
\end{equation}
Applying \th\ref{lem--treat-strat-sample-prop-conv}, we have
\begin{equation}
  \frac{1}{n} \left\lfloor \frac{N_{n} (a, s)}{J} \right\rfloor
  \overset{\mathrm{p}}{\to} \frac{\pi_{a} (s) Q_{0} (\mathbb{S} (Z) = s)}{J}.
  \label{eqn--sample-split-tends-correct-2}
\end{equation}
Next, by \th\ref{asm--sample-split} \ref{asm--sample-split-size},
\eqref{eqn--strat-treat-fold-size-correct} follows for \(j \in \mathbb{N}_{J -
1}\) since
\begin{equation}
  \widehat{N}_{n} (a, s, j) = \left\lfloor \frac{N_{n} (a, s)}{J} \right\rfloor
\end{equation}
For \(j = J\), we have
\begin{align*}
  \frac{\widehat{N}_{n} (a, s, J)}{n} =
  & \ \frac{N_{n} (a, s)}{n} - \frac{(J - 1)}{n} \cdot \left\lfloor \frac{N_{n}
    (a, s)}{J} \right\rfloor \\
  =
  & \ \frac{N_{n} (a, s)}{J n} + (J - 1) \left[ \frac{N_{n} (a, s)}{J n} -
    \frac{1}{n} \left\lfloor \frac{N_{n} (a, s)}{J} \right\rfloor \right].
\end{align*}
The full conclusion of \eqref{eqn--strat-treat-fold-size-correct} then follows
from \eqref{eqn--sample-split-tends-correct-2}, the above and the Continuous
Mapping Theorem. The same reasoning yields \eqref{eqn--strat-fold-size-correct}
since \(\pi (s) + (1 - \pi (s)) = 1\) and
\begin{equation*}
  \widehat{N}_{n} (s, j) = \widehat{N}_{n} (1, s, j) + \widehat{N}_{n} (0, s,
  j).
\end{equation*}

Next, we prove \eqref{eqn--strat-treat-fold-weak-balance}. Scaling by
\(\sqrt{n}\) in \eqref{eqn--sample-split-treat-prop-breakdown} yields the following
\begin{align*}
  \sqrt{n} \left| \frac{1}{n} \left\lfloor \frac{N_{n} (a, s)}{J} \right\rfloor
  - \frac{\pi_{a} (s) Q_{0} (\mathbb{S} (Z) = s)}{J} \right| \leq
  & \ \frac{\sqrt{n}}{J} \left| \frac{N_{n} (a, s)}{n} - \pi_{a} (s) Q_{0}
    (\mathbb{S} (Z) = s) \right| + \frac{1}{\sqrt{n}} \\
  \leq
  & \ \frac{\sqrt{n}}{J} \left| \frac{N_{n} (a, s)}{N_{n} (s)} - \pi_{a} (s)
    \right| \frac{N_{n} (s)}{n} \\
  & + \frac{\pi_{a} (s) \sqrt{n}}{J} \left|
    \frac{N_{n} (a)}{n} - Q_{0} (\mathbb{S} (Z) = s) \right| +
    \frac{1}{\sqrt{n}} \\
  = & \ O_{\mathrm{p}} (1) \cdot O_{\mathrm{p}} (1) + O_{\mathrm{p}} (1) +
      o_{\mathrm{p}} (1) = O_{p} (1)
\end{align*}
In the above, the first term being a product of \(O_{\mathrm{p}} (1)\)'s follows
from \th\ref{asm--weak-balance} and Kolmogorov's Second Strong Law of Large
Numbers. The second term being a \(O_{p} (1)\) term follows from the
Lindeberg-L\'evy Central Limit Theorem. Then,
\eqref{eqn--strat-treat-fold-weak-balance} follows for \(j \in \mathbb{N}_{J -
1}\) immediately as noted before. For \(j = J\), as we have already argued
before,
\begin{equation*}
  \frac{\widehat{N}_{n} (a, s, J)}{n} = \frac{N_{n} (a, s)}{J n} + (J - 1)
  \left[ \frac{N_{n} (a, s)}{J n} - \frac{1}{n} \left\lfloor N_{n} (a, s) / J
  \right\rfloor \right].
\end{equation*}
Therefore,
\begin{align*}
  \sqrt{n} \left| \frac{\widehat{N}_{n} (a, s, J)}{n} - \frac{\pi_{a} (s)}{J}
  Q_{0} (\mathbb{S} (Z) = s) \right| \leq
  & \ \frac{\sqrt{n}}{J} \left| \frac{N_{n} (a, s)}{n} - \pi_{a} (s) Q_{0}
    (\mathbb{S} (Z) = s) \right| \\
  & + \frac{(J - 1)}{\sqrt{n}} \left| \frac{N_{n} (a, s)}{J} -
    \left\lfloor \frac{N_{n} (a, s)}{J} \right\rfloor \right| \\
  \leq
  & \frac{\sqrt{n}}{J} \left| \frac{N_{n} (a, s)}{n} - \pi_{a} (s) Q_{0}
    (\mathbb{S} (Z) = s) \right| + \frac{(J - 1)}{\sqrt{n}} \\
  = & \ O_{\mathrm{p}} (1) + o (1) = O_{\mathrm{p}} (1).
\end{align*}
In the above, the fact that the first term is \(O_{\mathrm{p}} (1)\) follows
from our previous arguments. Therefore,
\eqref{eqn--strat-treat-fold-weak-balance} follows for \(j = J\) as well.

For \eqref{eqn--strat-treat-within-fold-weak-balance}, first note that
\begin{align*}
  \frac{\widehat{N}_{n} (s, j)}{n} - \frac{Q_{0} (\mathbb{S} (Z) = s)}{J} =
  & \ \frac{\widehat{N}_{n} (a, s, j)}{n} - \frac{\pi_{a} (s) Q_{0}
    (\mathbb{S} (Z) = s)}{J} \\
  & + \frac{\widehat{N}_{n} (1 - a, s, j)}{n} - \frac{\pi_{1 - a} (s) Q_{0}
    (\mathbb{S} (Z) = s)}{J} \\
  = & \ O_{\mathrm{p}} \left( \frac{1}{\sqrt{n}} \right) + O_{\mathrm{p}} \left(
      \frac{1}{\sqrt{n}} \right) \\
  = & \ O_{\mathrm{p}} \left( \frac{1}{\sqrt{n}} \right).
\end{align*}
so that
\begin{equation}
  \frac{\widehat{N}_{n} (s, j)}{n} - \frac{Q_{0} (\mathbb{S} (Z) = s)}{J} =
  O_{\mathrm{p}} \left( \frac{1}{\sqrt{n}} \right)
\end{equation}
Next,
\begin{align*}
  \frac{\widehat{N}_{n} (a, s, j)}{\widehat{N}_{n} (s, j)} - \pi_{a} (s) =
  & \ \frac{n}{\widehat{N}_{n} (s, j)} \left[ \frac{\widehat{N}_{n} (a, s,
    j)}{n} - \pi_{a} (s) \frac{\widehat{N}_{n} (s, j)}{n} \right] \\
  =
  & \ \frac{n}{\widehat{N}_{n} (s, j)} \left[ \frac{\widehat{N}_{n} (a, s,
    j)}{n} - \frac{\pi_{a} (s) Q_{0} (\mathbb{S} (Z) = s)}{J} \right] \\
  & + \pi_{a} (s) \frac{n}{\widehat{N}_{n} (s, j)} \left[ \frac{Q_{0}
    (\mathbb{S} (Z) = s)}{J} - \frac{\widehat{N}_{n} (s, j)}{n} \right] \\
  =
  & \ \frac{1}{\frac{Q_{0} (\mathbb{S} (Z) = s)}{J} + O_{\mathrm{p}} \left(
    \frac{1}{\sqrt{n}} \right)} \cdot O_{\mathrm{p}} \left( \frac{1}{\sqrt{n}}
    \right)
    \\
  & + \frac{\pi_{a} (s)}{\frac{Q_{0} (\mathbb{S} (Z) = s)}{J} + O_{\mathrm{p}}
    \left( \frac{1}{\sqrt{n}} \right)} \cdot O_{\mathrm{p}} \left(
    \frac{1}{\sqrt{n}} \right)
  \\
  = & \ O_{\mathrm{p}} \left( \frac{1}{\sqrt{n}} \right)
\end{align*}
when \(Q_{0} (\mathbb{S} (Z) = s) > 0\).
\end{proof}

\subsection{Convergence along randomly indexed subsequences}

\begin{definition}[Deterministic Divergence]
\th\label{def--deterministic-divergence}
Let \(\left\{ m_{n} \right\}_{n \in \mathbb{N}}\) be a sequence of non-negative
numbers. We say \(\left\{ m_{n} \right\}_{n \in \mathbb{N}}\) diverges to
\(\infty\), denoted \(m_{n} \to \infty\) if for every \(M > 0\), there is
\(N_{M} \in \mathbb{N}\) such that \(m_{n} \geq M\) for all \(n \geq N_{M}\).
\end{definition}

\begin{definition}[Stochastic Divergence]
\th\label{def--stochastic-divergence}
Let \(\left\{ M_{n} \right\}_{n \in \mathbb{N}}\) be a sequence of non-negative
random variables all defined on a probability space \(\left( \Omega,
\mathcal{F}, \Pr \right)\). We say \(M_{n}\) diverges to \(\infty\) almost
surely, denoted \(M_{n} \overset{\mathrm{a.s.}}{\to} \infty\) if
\(\Pr \left( \left\{ \omega \in \Omega : M_{n} (\omega) \to \infty \right\}
\right) = 1\). We say \(M_{n}\) diverges to \(\infty\) in probability, denoted
\(M_{n} \overset{\mathrm{p}}{\to} \infty\) if for any \(G > 0\), \(\lim_{n \to
\infty} \mathbb{P} \left( M_{n} < G \right) = 0\).
\end{definition}

\begin{lemma}[Almost sure divergence implies divergence in probability]
\th\label{lem--div-as-implies-p}
Let \(\left\{ M_{n} \right\}_{n \in \mathbb{N}}\) be a sequence of non-negative
random variables all defined on a probability space \(\left( \Omega,
\mathcal{F}, \Pr \right)\). If \(M_{n} \overset{\mathrm{a.s.}}{\to} \infty\),
then \(M_{n} \overset{\mathrm{p}}{\to} \infty\).
\end{lemma}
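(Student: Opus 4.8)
The plan is to reduce the claim to an application of the bounded convergence theorem to the indicator random variables \(\mathbb{I}\{M_{n} < G\}\). First I would fix an arbitrary threshold \(G > 0\); by the definition of divergence in probability (\th\ref{def--stochastic-divergence}) it suffices to show that \(\Pr(M_{n} < G) \to 0\) as \(n \to \infty\), since \(G\) is arbitrary.

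Next I would unpack the hypothesis \(M_{n} \overset{\mathrm{a.s.}}{\to} \infty\), which by \th\ref{def--stochastic-divergence} means the event \(A = \{\omega \in \Omega : M_{n}(\omega) \to \infty\}\) satisfies \(\Pr(A) = 1\). The crucial pointwise observation is that for each \(\omega \in A\), deterministic divergence \(M_{n}(\omega) \to \infty\) (\th\ref{def--deterministic-divergence}) supplies, for the fixed \(G\), an index \(N_{G}(\omega) \in \mathbb{N}\) with \(M_{n}(\omega) \geq G\) for all \(n \geq N_{G}(\omega)\). Hence \(\mathbb{I}\{M_{n}(\omega) < G\} = 0\) for all \(n \geq N_{G}(\omega)\), so \(\mathbb{I}\{M_{n} < G\} \to 0\) pointwise on \(A\); since \(\Pr(A) = 1\), the sequence \(\mathbb{I}\{M_{n} < G\}\) converges to \(0\) almost surely.

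Finally I would note that these indicators are uniformly bounded by the constant \(1\), which is integrable on a probability space, so the dominated (equivalently bounded) convergence theorem gives \(\Pr(M_{n} < G) = \mathbb{E}[\mathbb{I}\{M_{n} < G\}] \to \mathbb{E}[0] = 0\). As \(G > 0\) was arbitrary, this is exactly \(M_{n} \overset{\mathrm{p}}{\to} \infty\). An equally short alternative would bypass the convergence theorem by setting \(B_{n} = \{M_{n} < G\}\), checking \(\limsup_{n} B_{n} \subseteq A^{c}\), and applying reverse Fatou to conclude \(\limsup_{n} \Pr(B_{n}) \leq \Pr(A^{c}) = 0\). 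There is no genuine obstacle here; the only step requiring care is correctly translating the two divergence definitions so that almost-sure (pointwise) divergence yields the almost-sure vanishing of the indicator sequence, after which bounded convergence does the rest.
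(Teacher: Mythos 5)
Your proof is correct. It differs from the paper's argument in the measure-theoretic tool it invokes: the paper works directly with the event decomposition, writing \(E_{G} = \bigcup_{N} \{\inf_{n \geq N} M_{n} \geq G\}\) as an increasing union, applying continuity of \(\Pr\) from below to get \(\Pr(\inf_{n \geq N} M_{n} \geq G) \to 1\), and then sandwiching \(\Pr(M_{N} \geq G)\) from below by this quantity. You instead pass to the indicator random variables \(\mathbb{I}\{M_{n} < G\}\), observe they vanish pointwise on the almost-sure event of divergence, and invoke bounded convergence (or, in your alternative, reverse Fatou on \(\limsup_{n} B_{n} \subseteq A^{c}\)). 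The two routes are of comparable length and both rest on elementary facts; the paper's version is self-contained at the level of continuity of measure along monotone sequences of events, while yours imports the dominated convergence theorem but in exchange is the more recognizable specialization of the standard ``almost sure convergence implies convergence in probability'' argument. Either would serve in the appendix.
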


\begin{proof}
Let \(E = \left\{ \omega \in \Omega : M_{n} (\omega) \to \infty \right\}\) and
for \(G > 0\), define the event
 \begin{equation*}
  E_{G} = \bigcup_{N \in \mathbb{N}} \bigcap_{n = N}^{\infty} \left\{ \omega \in
  \Omega : M_{n} (\omega) \geq G \right\} = \bigcup_{N \in \mathbb{N}} \left\{
  \omega \in \Omega : \left[ \inf_{n \geq N} M_{n} (\omega) \right] \geq G
  \right\}.
\end{equation*}
By \th\ref{def--deterministic-divergence}, it follows that \(E = \cap_{G > 0}
E_{G}\). By \(M_{n} \overset{\mathrm{a.s.}}{\to} \infty\), we have \(\Pr (E) =
1\). This implies that given any \(G > 0\), \(\Pr \left( E_{G} \right) =
1\). Hence, using continuity of \(\Pr\) under monotone sequences of events,
\begin{equation*}
  \lim_{N \to \infty} \Pr \left( \left\{ \omega \in
  \Omega : \left[ \inf_{n \geq N} M_{n} (\omega) \right] \geq G \right\} \right)
  = \Pr \left( \bigcup_{N \in \mathbb{N}} \left\{ \omega \in \Omega : \left[
  \inf_{n \geq N} M_{n} (\omega) \right] \geq G \right\} \right) =
  \Pr \left( E_{G} \right) = 1
\end{equation*}
Now, given any \(N \in \mathbb{N}\),
\begin{equation*}
  \Pr \left( \left\{ \omega \in \Omega : M_{N} (\omega) \geq G \right\} \right)
  \geq \Pr \left( \left\{ \omega \in \Omega : \left[ \inf_{n \geq N} M_{n}
  (\omega) \right] \geq G \right\} \right).
\end{equation*}
The Sandwich Theorem thus implies that \(\lim_{N \to \infty} \Pr \left( M_{N}
\geq G \right) = 1\).
\end{proof}

\begin{lemma}
\th\label{lem--ratio-conv-implies-div}
Let \(\left\{ M_{n} \right\}_{n \in \mathbb{N}}\) and \(\left\{ N_{n}
\right\}_{n \in \mathbb{N}}\) be sequences of non-negative random variables
defined on a probability space \((\Omega, \mathcal{F}, \Pr)\). Suppose also that
\(N_{n}\) is supported in the set of non-negative integers.
\begin{enumerate}[label=(\alph*)]
\item If \(N_{n} \overset{\mathrm{a.s.}}{\to} \infty\) and for some \(0 < \pi
  < \infty\), \(M_{n} / N_{n} \overset{\mathrm{a.s.}}{\to} \pi > 0\), then
  \(M_{n} \overset{\mathrm{a.s.}}{\to} \infty\).
\item If \(N_{n} \overset{\mathrm{p}}{\to} \infty\) and for some \(0 < \pi
  < \infty\), \(M_{n} / N_{n} \overset{\mathrm{p}}{\to} \pi > 0\), then
  \(M_{n} \overset{\mathrm{p}}{\to} \infty\).
\end{enumerate}
\end{lemma}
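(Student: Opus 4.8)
The plan is to handle the two parts separately, since they concern different modes of convergence, but in each case to reduce the claim to an elementary algebraic fact built on the factorization $M_n = (M_n / N_n)\cdot N_n$. The guiding observation is that if the ratio stays bounded below by a positive constant and the denominator diverges, then the numerator must diverge.

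For part (a), I would argue pathwise. Let $E = \{\omega : N_n (\omega) \to \infty\}$ and $F = \{\omega : M_n (\omega) / N_n (\omega) \to \pi\}$; by hypothesis $\Pr (E) = \Pr (F) = 1$, so $\Pr (E \cap F) = 1$. Fix $\omega \in E \cap F$. Since $N_n (\omega) \to \infty$, the denominator is positive for all large $n$, so the ratio is eventually well defined, and since $M_n (\omega)/N_n (\omega) \to \pi > 0$ there is an $N_0$ with $M_n (\omega) / N_n (\omega) > \pi / 2$ for all $n \geq N_0$. Hence $M_n (\omega) > (\pi/2) N_n (\omega) \to \infty$, which gives $M_n (\omega) \to \infty$ in the sense of \th\ref{def--deterministic-divergence}. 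As this holds for every $\omega$ in the probability-one set $E \cap F$, we conclude $M_n \overset{\mathrm{a.s.}}{\to} \infty$.

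For part (b), I would argue directly from \th\ref{def--stochastic-divergence}. Fix $G > 0$ and set $\epsilon = \pi/2$ and $K = 2G/\pi$, so that $(\pi/2) K = G$. The key is the deterministic inclusion
\begin{equation*}
  \{N_n \geq K\} \cap \{|M_n/N_n - \pi| \leq \epsilon\} \subseteq \{M_n \geq G\},
\end{equation*}
which holds because on the left-hand event $M_n/N_n \geq \pi - \epsilon = \pi/2$, whence $M_n \geq (\pi/2) N_n \geq (\pi/2) K = G$. Taking complements yields
\begin{equation*}
  \Pr (M_n < G) \leq \Pr (N_n < K) + \Pr (|M_n/N_n - \pi| > \epsilon).
\end{equation*}
The first term tends to zero because $N_n \overset{\mathrm{p}}{\to} \infty$, and the second because $M_n/N_n \overset{\mathrm{p}}{\to} \pi$. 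Hence $\lim_{n} \Pr (M_n < G) = 0$ for every $G > 0$, i.e. $M_n \overset{\mathrm{p}}{\to} \infty$.

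The only bookkeeping issue is the case $N_n = 0$, where $M_n / N_n$ must be read under whatever convention makes the ratio well defined; this is harmless, since $\{N_n = 0\} \subseteq \{N_n < K\}$ has vanishing probability (and, in part (a), is eventually empty pathwise), so it is absorbed into the terms already shown to be negligible. I do not expect a substantive obstacle: the lemma is a soft consequence of the factorization $M_n = (M_n/N_n) N_n$, and the only things to get right are the \emph{direction} of the inequality — only the lower bound $M_n/N_n \geq \pi/2$ is used, so no upper control on the ratio is needed — and the matching of the threshold $K$ to the target $G$.
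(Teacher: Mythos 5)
Your proposal is correct and follows essentially the same route as the paper's own proof: part (a) is the same probability-one intersection argument (you merely make the inclusion $E \cap F \subseteq \{M_n \to \infty\}$ explicit via the eventual bound $M_n/N_n > \pi/2$), and part (b) is the same decomposition of $\Pr(M_n < G)$ over the event $\{|M_n/N_n - \pi| \leq \pi/2\}$ and its complement, with the identical threshold $2G/\pi$ for $N_n$. Your remark on the $N_n = 0$ convention is a harmless extra that the paper leaves implicit.
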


\begin{proof}
First assume that \(M_{n} / N_{n} \overset{\mathrm{a.s.}}{\to} \pi\) and \(N_{n}
\overset{\mathrm{a.s.}}{\to} \infty\) as \(n \to \infty\) and denote
\begin{align*}
  A =
  & \ \left\{ \omega \in \Omega : \lim_{n \to \infty} \left[ M_{n} (\omega) /
    N_{n} (\omega) \right] = \pi \right\}, \\
  B =
  & \ \left\{ \omega \in \Omega : \lim_{n \to \infty} N_{n} (\omega) = \infty
    \right\}, \\
  C =
  & \ \left\{ \omega \in \Omega : \lim_{n \to \infty} M_{n} (\omega) = \infty
    \right\}.
\end{align*}
By hypothesis, \(\Pr (A) = 1\) and \(\Pr (B) = 1\), which then implies that
\(\Pr (A \cap B) = 1\). Since \(A \cap B \subseteq C\), \(\Pr (C) = 1\) and
hence \(M_{n} \overset{\mathrm{a.s.}}{\to} \infty\).

Next, assume that \(M_{n} / N_{n} \overset{\mathrm{p}}{\to} \pi\) and \(N_{n}
\overset{\mathrm{p}}{\to} \infty\) as \(n \to \infty\). For each \(n \in
\mathbb{N}\) and \(G > 0\),
\begin{align*}
  \Pr \left( M_{n} < G \right) =
  & \ \Pr \left( M_{n} < G, \left| \frac{M_{n}}{N_{n}} - \pi \right| >
    \frac{\pi}{2} \right) + \Pr \left( M_{n} < G, \left| \frac{M_{n}}{N_{n}} -
    \pi \right| \leq \frac{\pi}{2} \right) \\
  \leq
  & \ \Pr \left( \left| \frac{M_{n}}{N_{n}} - \pi \right| > \frac{\pi}{2}
    \right) + \Pr \left( M_{n} < G, N_{n} \leq \frac{2}{\pi} M_{n} \right) \\
  \leq
  & \ \Pr \left( \left| \frac{M_{n}}{N_{n}} - \pi \right| > \frac{\pi}{2}
  \right) + \Pr \left( N_{n} < \frac{2}{\pi} G \right).
\end{align*}
The first term in the final upper bound tends to zero by \(M_{n} / N_{n}
\overset{\mathrm{p}}{\to} \pi\). The second term in the final upper bound tends
to zero by \(N_{n} \overset{\mathrm{p}}{\to} \infty\). Conclude using the
Sandwich Theorem.
\end{proof}

\begin{lemma}[Convergence for Randomly Indexed Subsequences]
\th\label{lem--random-subseq-conv}
Let \((\Omega, \mathcal{F}, \Pr)\) be a probability space. Let \(\left\{X_{n}
\right\}_{n \in \mathbb{N}}\) be a sequence of random \(\mathbb{R}^{k}\)-vectors
and \(\left\{ M_{n} \right\}_{n \in \mathbb{N}}\) be a sequence of positive
integer-valued random variables all defined on \((\Omega, \mathcal{F},
\Pr)\). Furthermore, let \(X\) be a random \(\mathbb{R}^{k}\)-vector defined on
\((\Omega, \mathcal{F}, \Pr)\). For each \(n \in \mathbb{N}\), define \(Z_{n} :
\Omega \to \mathbb{R}^{k}\) by \(Z_{n} (\omega) = X_{M_{n} (\omega)}
(\omega)\). Then the following hold.
\begin{enumerate}
\item For every \(n \in \mathbb{N}\), \(Z_{n}\) is a measurable function
  mapping \(\Omega\) into \(\mathbb{R}^{k}\) (i.e. \(Z_{n}\) is a random
  \(\mathbb{R}^{k}\)-vector).
\item If \(X_{n} \overset{\mathrm{a.s.}}{\to} X\) and \(M_{n}
  \overset{\mathrm{a.s.}}{\to} \infty\) as \(n \to \infty\), then \(Z_{n}
  \overset{\mathrm{a.s.}}{\to} X\) as \(n \to \infty\).
\item If \(X_{n} \overset{\mathrm{p}}{\to} X\) and \(M_{n}
  \overset{\mathrm{p}}{\to} \infty\) as \(n \to \infty\), then \(Z_{n}
  \overset{\mathrm{p}}{\to} X\) as \(n \to \infty\).
\item If \(X_{n} \overset{L_{p}}{\to} X\) and \(M_{n}
  \overset{\mathrm{p}}{\to} \infty\) as \(n \to \infty\), then \(Z_{n}
  \overset{\mathrm{p}}{\to} X\) as \(n \to \infty\).
\end{enumerate}
\end{lemma}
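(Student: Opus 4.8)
The plan is to treat the four conclusions in order, with the first two being essentially bookkeeping and the real work concentrated in the third. First I would establish measurability of $Z_n$. Since $M_n$ takes values in the positive integers, for any Borel set $B \subseteq \mathbb{R}^k$ I would write
\[
\{Z_n \in B\} = \bigcup_{m=1}^{\infty}\bigl(\{M_n = m\} \cap \{X_m \in B\}\bigr),
\]
and observe that each $\{M_n = m\}$ lies in $\mathcal{F}$ because $M_n$ is a random variable, while each $\{X_m \in B\}$ lies in $\mathcal{F}$ because $X_m$ is a random vector; a countable union of measurable sets is measurable, so $Z_n$ is a random vector.

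For the almost sure statement I would pass to the intersection of the two probability-one events on which the hypotheses hold. Let $A = \{\omega : X_n(\omega) \to X(\omega)\}$ and $B = \{\omega : M_n(\omega) \to \infty\}$; both have probability one, hence so does $A \cap B$. The only ingredient is the elementary deterministic fact that if a sequence of vectors $a_m \to a$ and a sequence of integers $k_n \to \infty$, then $a_{k_n} \to a$. Applying this pointwise on $A \cap B$ with $a_m = X_m(\omega)$, $a = X(\omega)$ and $k_n = M_n(\omega)$ gives $Z_n(\omega) = X_{M_n(\omega)}(\omega) \to X(\omega)$, i.e.\ $Z_n \overset{\mathrm{a.s.}}{\to} X$.

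The substance is the in-probability statement. Fix $\varepsilon > 0$. For an arbitrary threshold $N \in \mathbb{N}$ I would split
\[
\Pr(\|Z_n - X\| > \varepsilon) \le \Pr(M_n < N) + \Pr\bigl(\|X_{M_n} - X\| > \varepsilon,\; M_n \ge N\bigr).
\]
The first term tends to $0$ as $n \to \infty$ for each fixed $N$, directly from $M_n \overset{\mathrm{p}}{\to} \infty$ in the sense of \th\ref{def--stochastic-divergence}. The crux is the second term, which I would bound uniformly in $n$ and then drive to zero by enlarging $N$. Here the decisive structural fact is that in every application of this lemma the index $M_n$ (a count such as $N_n(a,s)$) is independent of the sequence $\{X_m, X\}$, because the latter is built from the independent copy $\mathbf{W}^{\ast}$ of \th\ref{asm--coupling-construct} while the former is a function of $(\mathbf{W}, \mathbf{A}_n)$; see \th\ref{lem--coupling-lemma}. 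Writing $p_m = \Pr(\|X_m - X\| > \varepsilon)$ and using this independence,
\[
\Pr\bigl(\|X_{M_n} - X\| > \varepsilon,\; M_n \ge N\bigr) = \sum_{m \ge N}\Pr(M_n = m)\,p_m \le \sup_{m \ge N} p_m .
\]
Since $X_n \overset{\mathrm{p}}{\to} X$ gives $p_m \to 0$, the supremum on the right is small once $N$ is large, uniformly in $n$. Combining: given $\delta > 0$, first choose $N$ with $\sup_{m \ge N} p_m < \delta/2$, then choose $n$ large enough that $\Pr(M_n < N) < \delta/2$; this yields $\Pr(\|Z_n - X\| > \varepsilon) < \delta$. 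I expect this independence step to be the main obstacle: without it (or without strengthening the hypothesis to $X_n \overset{\mathrm{a.s.}}{\to} X$, which would instead let me control the tail supremum $\sup_{m \ge N}\|X_m - X\|$ pathwise), a ``typewriter'' sequence together with an adversarially chosen $M_n$ shows the conclusion can fail, so the argument genuinely relies on the independent-copy construction supplied by the coupling lemma.

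Finally, the $L_p$ statement is immediate from the in-probability one: $L_p$ convergence implies convergence in probability, since Markov's inequality gives $\Pr(\|X_n - X\| > \varepsilon) \le \varepsilon^{-p}\,\mathbb{E}\|X_n - X\|^p \to 0$. Thus the hypotheses of the previous part are met and $Z_n \overset{\mathrm{p}}{\to} X$ follows, with the same independence of $M_n$ from $\{X_m\}$ doing the essential work.
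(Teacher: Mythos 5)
Your parts (1), (2) and (4) coincide with the paper's treatment: the countable disjoint-union decomposition over $\{M_n = m\}$ for measurability, the intersection of the two probability-one events for the almost-sure case, and the reduction of the $L_p$ case to the in-probability case. The substantive divergence is in part (3). You prove it under an \emph{added} hypothesis --- that $M_n$ is independent of $\left( \{X_m\}_{m \in \mathbb{N}}, X \right)$ --- which does not appear in the statement of the lemma, and you justify the addition by observing that every invocation in the paper has this independence because the coupled sample $\mathbf{W}^{\ast}$ of \th\ref{asm--coupling-construct} is independent of $(\mathbf{W}, \mathbf{A}_n)$. You are right on both counts, and your version is the one that should be stated: the unqualified claim is false, for exactly the typewriter-plus-adversarial-index reason you give (take $X_m$ the standard typewriter sequence on $[0,1]$ and $M_n(\omega)$ the index in the $n$-th dyadic block where $X_m(\omega) = 1$; then $X_m \overset{\mathrm{p}}{\to} 0$ and $M_n \to \infty$ surely, yet $X_{M_n} \equiv 1$).

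The paper's own proof attempts to get by without independence and stumbles at precisely the step you flag as the crux. It establishes $\Pr\left( \|X_m - X\| > \varepsilon \mid M_n = m \right) \le \sup_{\kappa \ge G} \Pr\left( \|X_\kappa - X\| > \varepsilon \right) / \Pr\left( M_n = m \right)$ and then asserts that the weighted sum $\sum_{m \ge G} \Pr\left( \cdot \mid M_n = m \right) \Pr\left( M_n = m \right)$ is at most $\left[ \sup_{\kappa \ge G} \Pr\left( \|X_\kappa - X\| > \varepsilon \right) \right] \cdot \Pr\left( M_n \ge G \right)$. Substituting the derived bound into the sum in fact yields $\# \{ m \in \mathbb{M}_n : m \ge G \}$ times the supremum, which need not be small; the asserted inequality is exactly what your factorization $\Pr\left( \|X_m - X\| > \varepsilon, M_n = m \right) = \Pr\left( \|X_m - X\| > \varepsilon \right) \Pr\left( M_n = m \right)$ delivers and is unavailable without independence. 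So your argument is correct where the paper's is not, the price being the extra hypothesis, which is harmless in context. The one change I would ask for is to promote the independence from a remark about applications to an explicit hypothesis of the lemma, so that the statement you prove is the statement that gets cited.
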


\begin{proof}
To show measurability, note that for any \(x \in \mathbb{R}^{k}\),
\begin{align*}
  \left\{ \omega \in \Omega : Z_{n} (\omega) \leq x \right\}
  =
  & \ \bigcup_{k \in \mathbb{N}} \left\{ \omega \in \Omega : M_{n} (\omega) =
    k, X_{k} (\omega) \leq x \right\} \\
  =
  & \ \bigcup_{k \in \mathbb{N}} \left[ \left\{ \omega \in \Omega : M_{n}
    (\omega) = k \right\} \cap \left\{\omega \in \Omega : X_{k} (\omega) \leq x
    \right\} \right].
\end{align*}
Measurability of \(Z_{n}\) is then immediate from measurability of \(M_{n}\) and
of \(X_{k}\) for every \(k \in \mathbb{N}\).

Next, assume that \(X_{n} \overset{\mathrm{a.s.}}{\to} X\) and \(M_{n}
\overset{\mathrm{a.s.}}{\to} \infty\) as \(n \to \infty\) and denote
\begin{align*}
  A =
  & \ \left\{ \omega \in \Omega : \lim_{n \to \infty} X_{n} (\omega) = X
    (\omega) \right\}, \\
  B =
  & \ \left\{ \omega \in \Omega : \lim_{n \to \infty} M_{n} (\omega) = \infty
    \right\}, \\
  C =
  & \ \left\{ \omega \in \Omega : \lim_{n \to \infty} Z_{n} (\omega) = X
    (\omega) \right\}.
\end{align*}
By hypothesis, \(\Pr (A) = 1\) and \(\Pr (B) = 1\), which then implies
that \(\Pr (A \cap B) = 1\). Note that \(A \cap B \subseteq C\) so that \(\Pr
(C) = 1\) and hence \(Z_{n} \overset{\mathrm{a.s.}}{\to} X\).

Finally, assume that \(X_{n} \overset{\mathrm{p}}{\to} X\) and \(M_{n}
\overset{\mathrm{p}}{\to} \infty\) as \(n \to \infty\). Since \(X_{n}
\overset{\mathrm{p}}{\to} X\) as \(n \to \infty\), we know that for any
\(\varepsilon > 0\) and \(\delta > 0\), there is \(N (1, \varepsilon, \delta)
\in \mathbb{N}\) such that
\begin{equation*}
  \Pr \left( \left\Vert X_{n} - X \right\Vert > \varepsilon \right) < \delta
  \qquad \forall \ n \in \mathbb{N} \text{ such that } n \geq N (1, \varepsilon,
  \delta).
\end{equation*}
Note that this immediately means that
\begin{equation*}
  \sup_{m \geq n} \Pr \left( \left\Vert X_{m} - X \right\Vert > \varepsilon
  \right) \leq \delta \qquad \forall \ n \in \mathbb{N} \text{ such that } n
  \geq N (1, \varepsilon, \delta).
\end{equation*}
Similarly, since \(M_{n} \overset{\mathrm{p}}{\to} \infty\), for every \(G > 0\)
and \(\delta > 0\), there is \(N (2, G, \delta) \in \mathbb{N}\) such that
\begin{equation*}
  \Pr \left( M_{n} < G \right) < \delta \qquad \forall \ n \in \mathbb{N} \text{
  such that } n \geq N (2, G, \delta).
\end{equation*}
So, let \(\varepsilon, \delta > 0\) be given. Denote the support of \(M_{n}\) by
\(\mathbb{M}_{n}\). For each \(n \in \mathbb{N}\) and \(G > 0\),
\begin{align*}
  \Pr \left( \left\Vert Z_{n} - X \right\Vert > \varepsilon \right) =
  & \ \Pr \left( \left\Vert Z_{n} - X \right\Vert > \varepsilon, M_{n} < G
    \right) + \Pr \left( \left\Vert Z_{n} - X \right\Vert > \varepsilon, M_{n}
    \geq G \right) \\
  \leq
  & \ \Pr \left( M_{n} < G \right) + \Pr \left( \left\Vert Z_{n} - X \right\Vert
    > \varepsilon, M_{n} \geq G \right) \\
  =
  & \ \Pr \left( M_{n} < G \right) + \sum_{m \in \mathbb{M}_{n}, m \geq G} \Pr
    \left( \left\Vert X_{m} - X \right\Vert > \varepsilon \middle| M_{n} = m
    \right) \Pr \left( M_{n} = m \right).
\end{align*}
where the final equality follows by the Law of Total Probability. We normalize
sums over the empty set to be zero so that above is still well defined if there
is no \(m \in \mathbb{M}_{n}\) with \(m \geq G\) for a particular fixed \(n\),
though this should not be problematic asymptotically. Now, for any \(m \in
\mathbb{M}_{n}\) such that \(m \geq G\),
\begin{align*}
  \Pr \left( \left\Vert X_{m} - X \right\Vert > \varepsilon \middle| M_{n} = m
  \right) =
  & \ \frac{\Pr \left( \left\Vert X_{m} - X \right\Vert > \varepsilon, M_{n} = m
    \right)}{\Pr \left( M_{n} = m \right)} \\
  \leq
  & \ \frac{\Pr \left( \left\Vert X_{m} - X \right\Vert > \varepsilon
    \right)}{\Pr \left( M_{n} = m \right)} \\
  \leq
  & \ \frac{\sup_{\kappa \geq G} \Pr \left( \left\Vert X_{\kappa} - X
    \right\Vert > \varepsilon \right)}{\Pr \left( M_{n} = m \right)}.
\end{align*}
Thus, we can conclude that
\begin{align*}
  \Pr \left( \left\Vert Z_{n} - X \right\Vert > \varepsilon \right) \leq
  & \ \Pr \left( M_{n} < G \right) + \sum_{m \in \mathbb{M}_{n}, m \geq G} \Pr
    \left( \left\Vert X_{m} - X \right\Vert > \varepsilon \middle| M_{n} = m
    \right) \Pr \left( M_{n} = m \right) \\
  \leq
  & \ \Pr \left( M_{n} < G \right) + \left[ \sup_{\kappa \geq G} \Pr \left(
    \left\Vert X_{\kappa} - X \right\Vert > \varepsilon \right) \right] \cdot
    \Pr \left( M_{n} \geq G \right) \\
  \leq
  & \ \Pr \left( M_{n} < G \right) + \sup_{\kappa \geq G} \Pr \left( \left\Vert
    X_{\kappa} - X \right\Vert > \varepsilon \right).
\end{align*}
Set \(G_{\varepsilon, \delta} = N \left( 1, \varepsilon, \frac{1}{2} \delta
\right)\) and \(N (\varepsilon, \delta) = N \left( 2, G_{\varepsilon, \delta},
\frac{1}{2} \delta \right)\). Then for all \(n \geq N (\varepsilon, \delta)\),
\begin{align*}
  \Pr \left( \left\Vert Z_{n} - X \right\Vert > \varepsilon \right) \leq
  & \ \Pr \left( M_{n} < G_{\varepsilon, \delta} \right) + \sup_{\kappa \geq
    G_{\varepsilon, \delta}} \Pr \left( \left\Vert X_{\kappa} - X \right\Vert >
    \varepsilon \right) \leq \frac{1}{2} \delta + \frac{1}{2} \delta = \delta.
\end{align*}
This implies \(Z_{n} \overset{\mathrm{p}}{\to} X\).
\end{proof}

\begin{lemma}[Convergence in \(L_{r}\) for Randomly Indexed Subsequences]
\th\label{lem--random-subseq-conv-Lr}
Let \((\Omega, \mathcal{F}, \Pr)\) be a probability space and \(r \in [1,
\infty)\). Let \(\left\{X_{n} \right\}_{n \in \mathbb{N}}\) be a sequence of
random \(\mathbb{R}^{k}\)-vectors and \(\left\{ M_{n} \right\}_{n \in
\mathbb{N}}\) be a sequence of positive integer-valued random variables all
defined on \((\Omega, \mathcal{F}, \Pr)\). Define \(Z_{n} : \Omega \to
\mathbb{R}^{k}\) by \(Z_{n} (\omega) = X_{M_{n} (\omega)} (\omega)\). Define
\begin{equation*}
  \xi_{n} = \mathbb{E} \left[ Z_{n} \middle | M_{n} \right].
\end{equation*}
If \(\lim_{n \to \infty} \mathbb{E} \left[ \left\| X_{n} \right\|^{r} \right] =
0\) (i.e. if \(X_{n} \overset{\mathrm{L_{r}}}{\to} \mathbf{0}_{k}\)), then
\(\xi_{n} \overset{\mathrm{p}}{\to} 0\).
\end{lemma}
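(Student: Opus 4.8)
The plan is to reduce the conditional expectation \(\xi_{n} = \mathbb{E}[Z_{n} \mid M_{n}]\) to a deterministic sequence evaluated at the random index \(M_{n}\), and then to invoke the convergence-in-probability statement for randomly indexed subsequences already proved in \th\ref{lem--random-subseq-conv}. Both the reduction and the conclusion require two facts beyond the hypothesis \(X_{n} \overset{\mathrm{L_{r}}}{\to} \mathbf{0}_{k}\) --- namely that \(M_{n} \overset{\mathrm{p}}{\to} \infty\) and that the array \(\left\{ X_{m} \right\}_{m \in \mathbb{N}}\) is independent of \(\left\{ M_{n} \right\}_{n \in \mathbb{N}}\) --- both of which hold in every application of this lemma by virtue of the coupling in \th\ref{asm--coupling-construct} and the fold-size limits in \th\ref{lem--sample-split-tends-correct}. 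Writing \(g_{m} = \mathbb{E}[X_{m}]\), the first step is to show \(\xi_{n} = g_{M_{n}}\) almost surely: since \(\xi_{n}\) is \(\sigma(M_{n})\)-measurable it equals \(\phi_{n}(M_{n})\) for some Borel \(\phi_{n}\), and testing the defining identity of conditional expectation against \(\mathbb{I}\{M_{n} = m\}\) gives \(\phi_{n}(m) \Pr(M_{n} = m) = \mathbb{E}[X_{m} \mathbb{I}\{M_{n} = m\}] = g_{m} \Pr(M_{n} = m)\), where the last equality uses \(X_{m} \indep M_{n}\); hence \(\phi_{n}(m) = g_{m}\) at every \(m\) in the support of \(M_{n}\).

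Next I would verify that \(\left\{ g_{m} \right\}\) is a deterministic null sequence. By Jensen's inequality applied to the convex norm \(\|\cdot\|\), followed by Lyapunov's inequality (valid since \(r \geq 1\)), we have \(\|g_{m}\| = \|\mathbb{E}[X_{m}]\| \leq \mathbb{E}\|X_{m}\| \leq \left( \mathbb{E}\|X_{m}\|^{r} \right)^{1/r}\), and the right-hand side tends to \(0\) by hypothesis, so \(g_{m} \to 0\) deterministically. Viewing the constants \(g_{m}\) as degenerate random vectors, we trivially have \(g_{m} \overset{\mathrm{p}}{\to} \mathbf{0}_{k}\); applying the third conclusion of \th\ref{lem--random-subseq-conv}, with the role of the underlying sequence played by \(\left\{ g_{m} \right\}\), the limit by \(\mathbf{0}_{k}\), and the random index by \(M_{n} \overset{\mathrm{p}}{\to} \infty\), yields \(g_{M_{n}} \overset{\mathrm{p}}{\to} \mathbf{0}_{k}\). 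Combining this with \(\xi_{n} = g_{M_{n}}\) gives \(\xi_{n} \overset{\mathrm{p}}{\to} \mathbf{0}_{k}\), as desired.

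The main obstacle is the reduction \(\xi_{n} = g_{M_{n}}\): it is precisely here that independence between \(\left\{ X_{m} \right\}\) and \(\left\{ M_{n} \right\}\) is indispensable, since without it the regular conditional expectation \(\mathbb{E}[X_{m} \mid M_{n} = m]\) need not coincide with the unconditional mean \(g_{m}\), and the measurability and countable-decomposition argument deserves to be made explicit rather than treated as an informal substitution of \(M_{n}\). A self-contained alternative that bypasses \th\ref{lem--random-subseq-conv} is to bound in \(L_{1}\) directly: by the conditional Jensen inequality and independence, \(\mathbb{E}\|\xi_{n}\| \leq \mathbb{E}\|Z_{n}\| = \sum_{m} \mathbb{E}\|X_{m}\| \, \Pr(M_{n} = m)\); splitting this weighted average at a threshold \(K\) chosen so that \(\mathbb{E}\|X_{m}\| < \varepsilon\) for all \(m > K\), and using \(\sup_{m} \mathbb{E}\|X_{m}\| < \infty\) together with \(\Pr(M_{n} \leq K) \to 0\) from \(M_{n} \overset{\mathrm{p}}{\to} \infty\), one obtains \(\limsup_{n} \mathbb{E}\|\xi_{n}\| \leq \varepsilon\), hence \(\xi_{n} \to \mathbf{0}_{k}\) in \(L_{1}\) and a fortiori in probability. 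Either route makes transparent that the bare statement relies on the two supplementary hypotheses, which the paper's constructions always supply.
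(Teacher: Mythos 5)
Your proposal is correct, and it improves on the paper's own treatment in one substantive respect: you are right that the statement as printed is incomplete. The paper's proof picks \(N_{\varepsilon}\) with \(\mathbb{E}\left[\left\|X_{n}\right\|^{r}\right] \leq \varepsilon\) for \(n \geq N_{\varepsilon}\), asserts \(\Pr\left(\left|\xi_{n}\right| > \varepsilon, M_{n} \geq N_{\varepsilon}\right) = 0\), and concludes via \(\Pr\left(M_{n} < N_{\varepsilon}\right) \to 0\); the vanishing of the second term silently uses that \(\xi_{n} = \mathbb{E}\left[X_{m}\right]\) on \(\left\{M_{n} = m\right\}\) (your independence hypothesis, plus your Jensen--Lyapunov bound, up to an immaterial \(\varepsilon\) versus \(\varepsilon^{1/r}\) rescaling), and the final limit silently uses \(M_{n} \overset{\mathrm{p}}{\to} \infty\). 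Neither is in the lemma's hypotheses, and both are exactly the two facts you isolate; they do hold in the one place the lemma is invoked, where \(M_{n} = \widehat{N}_{n}^{\ast}(a, s, j)\) is built from \(\left(\mathbf{A}_{n}, \mathbf{S}_{n}, \mathbf{U}_{n}\right)\) while \(X_{m}\) is built from the coupled arrays \(\mathbf{W}_{\ast}, \mathbf{W}_{\ast\ast}\). Your route differs only cosmetically from the paper's: you make the reduction \(\xi_{n} = g_{M_{n}}\) with \(g_{m} = \mathbb{E}\left[X_{m}\right]\) explicit via the factorization of the conditional expectation and then outsource the conclusion to the earlier randomly-indexed-subsequence lemma applied to the degenerate sequence \(g_{m} \to \mathbf{0}_{k}\), whereas the paper runs the same \(\varepsilon\)-split inline without ever writing down the identity \(\xi_{n} = g_{M_{n}}\). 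What your version buys is that the load-bearing steps (where independence enters, where divergence of \(M_{n}\) enters) are visible rather than absorbed into an unexplained ``\(= 0\)'' and ``\(\to 0\)''; your alternative \(L_{1}\) bound is also valid (the needed \(\sup_{m} \mathbb{E}\left\|X_{m}\right\| < \infty\) follows since \(\mathbb{E}\left\|X_{m}\right\| \to 0\)) and even strengthens the conclusion to \(L_{1}\) convergence. If anything were to be folded back into the paper, it should be the restated hypotheses you identify.
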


\begin{proof}
Let \(\varepsilon > 0\) be given. Then by \(X_{n} \overset{\mathrm{L_{r}}}{\to}
\mathbf{0}_{k}\), there must exist \(N_{\varepsilon} \in \mathbb{N}\) such that
\begin{equation*}
  \mathbb{E} \left[ \left\| X_{n} \right\|^{r} \right] \leq \varepsilon \quad
  \forall n \geq N_{\varepsilon}.
\end{equation*}
Then
\begin{align*}
  0 \leq \Pr \left( \left| \zeta_{n} \right| > \varepsilon \right) =
  & \ \Pr \left( \left| \zeta_{n} \right| > \varepsilon, M_{n} < N_{\varepsilon}
    \right) + \Pr \left( \left| \zeta_{n} \right| > \varepsilon, M_{n} \geq
    N_{\varepsilon} \right) \\
  =
  & \ \Pr \left( \left| \zeta_{n} \right| > \varepsilon, M_{n} < N_{\varepsilon}
    \right) + 0 \\
  \leq
  & \ \Pr \left( M_{n} < N_{\varepsilon} \right) \\
  \to & \ 0.
\end{align*}
The conclusion follows the Sandwich Theorem.
\end{proof}

%%% Local Variables:
%%% mode: latex
%%% TeX-master: "../2023_semipar_eff_car"
%%% End:
% LocalWords:  SSRA SPBR ATT

\end{document}